\newcommand {\R}{\mathbb{R}}
\newcommand {\N}{\mathbb{N}}
\renewcommand {\O}{\mathcal{O}}
\newcommand{\E}{\mathbb{E}}
\newcommand {\cov}{\textrm{Cov}}
\newcommand {\topavg}{\textrm{Ta}}
\newcommand {\var}{\textrm{Var}}
\newcommand{\verti}[1]{{\left\vert\kern-0.25ex\left\vert\kern-0.25ex\left\vert #1 
    \right\vert\kern-0.25ex\right\vert\kern-0.25ex\right\vert}}
\newcommand{\ct}{\,\tilde\otimes\,}
\newcommand{\Tr}[1]{\text{Tr}_{#1}}
\newcommand{\T}[1]{\text{T}_{#1}}
\renewcommand{\vec}[1]{\,\mathrm{vec}( #1 )}
\renewcommand{\t}[1]{#1^\top}
\newcommand{\argmin}{\operatornamewithlimits{arg\,min}}
\DeclareMathOperator{\sign}{sign}
\DeclareMathOperator{\diag}{diag}
\newtheorem{theorem}{Theorem}
\newtheorem{proposition}{Proposition}
\newtheorem{corollary}{Corollary}
\newtheorem{lemma}{Lemma}
\newtheorem{definition}{Definition}
\newtheorem{remark}{Remark}
\newtheorem{example}{Example}
\numberwithin{equation}{section}
\theoremstyle{plain}
\begin{document}

\begin{frontmatter}

\title{{\large Random Surface Covariance Estimation by Shifted Partial Tracing}}

\runtitle{Random Surface Covariance Estimation\\ by Shifted Partial Tracing}

\begin{aug}
\author{\fnms{Tomas} \snm{Masak}\ead[label=e1]{tomas.masak@epfl.ch}} \and
\author{\fnms{Victor M.} \snm{Panaretos}\ead[label=e2]{victor.panaretos@epfl.ch}}

\thankstext{t1}{Research supported by a Swiss National Science Foundation grant.}

\runauthor{T. Masak \& V.M. Panaretos}

\affiliation{Ecole Polytechnique F\'ed\'erale de Lausanne}

\address{Institut de Math\'ematiques\\
Ecole Polytechnique F\'ed\'erale de Lausanne\\
e-mail: \href{mailto:tomas.masak@epfl.ch}{tomas.masak@epfl.ch}, \href{mailto:victor.panaretos@epfl.ch}{victor.panaretos@epfl.ch}
}

\end{aug}

\begin{abstract} 
The problem of covariance estimation for replicated surface-valued processes is examined from the functional data analysis perspective. Considerations of statistical and computational efficiency often compel the use of separability of the covariance, even though the assumption may fail in practice. We consider a setting where the covariance structure may fail to be separable locally -- either due to noise contamination or due to the presence of a~non-separable short-range dependent signal component. That is, the covariance is an additive perturbation of a separable component by a~non-separable but banded component.  We introduce non-parametric estimators hinging on the novel concept of shifted partial tracing, enabling computationally efficient estimation of the model under dense observation. Due to the denoising properties of shifted partial tracing, our methods are shown to yield consistent estimators even under noisy discrete observation, without the need for smoothing. Further to deriving the convergence rates and limit theorems, we also show that the implementation of our estimators, including prediction, comes at no computational overhead relative to a separable model. Finally, we demonstrate empirical performance and computational feasibility of our methods in an extensive simulation study and on a real data set.
\end{abstract}

\begin{keyword}[class=AMS]
\kwd[Primary ]{62G05, 62M40}
\kwd[; secondary ]{15A99}
\end{keyword}

\begin{keyword}
\kwd{Separability}
\kwd{stationarity}
\kwd{covariance operator}
\kwd{bandedness}
\kwd{FDA}
\kwd{non-parametric model}
\end{keyword}

\end{frontmatter}

\tableofcontents

\newpage
\section{Introduction} \label{sec:intro}

\emph{Functional Data Analysis} (FDA, \cite{ramsay2002,hsing2015}) focusses on the problem of statistical inference on the law of a random process $X(u):[0,1]^D\rightarrow\mathbb{R}$ given multiple realisations thereof. The process realisations are treated as elements of a separable Hilbert space $\mathcal H$ of functions on $[0,1]^D$ (e.g.~$\mathcal{L}^2[0,1]^D$). FDA covers the full gamut of statistical tasks, including regression, classification, and testing, to name a few. In any of these problems, the \emph{covariance operator} $C:{\mathcal H}\rightarrow {\mathcal H}$ of the random function $X(u)$ is elemental. This trace-class integral operator with kernel $c(u_1,u_2)=\mbox{cov}\{X(u_1),X(u_2)\}$, encodes the second-order characteristics of $X(u)$ and its associated spectral decomposition
is at the core of many (or even most) FDA inferential methods.
Consequently, the efficient estimation of the covariance operator $C$ (or equivalently its kernel~$c$) associated with $X$ is a fundamental task in FDA, on which further methodology can be based. This is to be done on the basis of $N$ i.i.d. realisations of the random process $X$, say $\{X_1,\ldots,X_N\}$. One wishes to do so \emph{nonparametrically}, since the availability of replicated realisations should allow so. When $D=1$, it is fair to say that this is entirely feasible and well understood, under a broad range of observation regimes (see  \cite{wang2016} for a comprehensive overview).

Though conceptually similar, things are much less straightforward in the case of random surfaces, i.e.~when $D=2$, which is the case we focus on in this paper. In this case, one faces additional challenging limitations to statistical and computational efficiency when attempting to nonparametrically estimate $c:[0,1]^4\rightarrow \mathbb{R}$ on the basis of $N$ replications (see \cite[\S 1]{aston2017} for a detailed discussion). The number of grid points on which $c$ is measured may even exceed $N$, especially in densely observed functional data scenarios. Worse still, one may not be able to even store the empirical covariance, much less invert it. To appreciate this, assume that each of the $N$ i.i.d. surfaces $\{X_n(s,t)\}$ are measured on a common grid of size $K_1 \times K_2$ over $[0,1]^2$. That is, the data corresponding to a single realization $X_n$ form a matrix $\mathbf X_n \in \R^{K_1 \times K_2}$ and the raw empirical covariance is represented by the tensor $\mathbf{C} \in \R^{K_1 \times K_2 \times K_1 \times K_2}$, which is a discretisation of the empirical covariance kernel. If we assume $K_1 = K_2 =: K$, the covariance tensor $\mathbf C$ requires $\mathcal O(N K^4)$ operations to be estimated and $\mathcal{O} (K^4)$ memory to be stored. This becomes barely feasible on a regular computer with $K$ as small as 100. Moreover, as \citep{aston2017} note, the statistical constraints stemming from the need to accurately estimate $\mathcal{O} (K^4)$ parameters contained in $\mathbf C$ from only $N K^2$ measurements are usually even tighter than the computational constraints. 

This dimensionality challenge is often dealt with by imposing additional structure, for example stationarity or separability \citep{genton2007,gneiting2002,gneiting2006}.
Either assumption reduces the four-dimensional nonparametric estimation problem into a two-dimensional one. In the case of a $K \times K$ grid, this reduces the number of parameters from $\O (K^4)$, to $\O (K^2)$. Moreover, both estimation and subsequent manipulation (for example inversion as required in prediction) of the covariance become computationally much simpler, owing to some explicit formulas in case of separability and to the fast Fourier transform in case of stationarity.

Though such assumptions substantially reduce the dimensionality of the problem, the imposed simplicity and structural restrictions are often quite questionable. Stationarity appears overly restrictive when replicated data are available, and indeed is seldom used for functional data. Separability is imposed much more often, despite having shortcomings of its own. A thorough discussion of the implications that separability entails is provided in \cite{rougier2017}. In summary, separable covariances fail to model any space-time interactions whatsoever. Indeed, in recent years, several tests for separability of space-time functional data have been developed and used to demonstrate that for many data sets previously modeled as separable, the separability assumption is distinctly violated \citep{aston2017,bagchi2017,constantinou2017}.

\subsection{Our Contributions}

We propose a more flexible framework than that offered by separability, which allows for mild (non-parametric) deviations from separability while retaining all the computational and statistical advantages that separability offers. In particular, we consider a framework where the target covariance is an additive perturbation of a separable covariance,
\begin{equation}\label{eq:model_kernels_intro}
c(t,s,t',s') = a(t,s,t',s') + b(t,s,t',s')
\end{equation}
where $a(t,s,t',s')=a_1(t,t')a_2(s,s')$ is separable, and $b(t,s,t',s')$ is banded, i.e.~supported on $\{\max(|t-t'|,|s-s'|) \leq \delta\}$ for some $\delta>0$.
Combining the two components results in a non-parametric family of models, which is much richer than the separable class. In particular, the model represents a strict generalisation of separability, reducing to a separable model when $\delta=0$. Intuitively, it postulates that while the global (long-range) characteristics of the process are expected to be separable, there may also be local (short-range) characteristics of the process that may be non-separable. For some practical problems, separability might possibly fail due to some interactions between time and space, which however do not propagate globally. These may be due to (weakly dependent) noise contamination, which can lead to local violations of separability, perturbing the covariance near its diagonal. It could also, however, be due to the presence of signal components that are non-separable and yet weakly dependent.

Heuristically, if we were able to deconvolve the terms $a$ and $b$, then the term $a$ would be easily estimable on the basis of dense observations, exploiting separability. We demonstrate that it actually \emph{is} possible to access a non-parametric estimator of $a$ -- without needing to manipulate or even store the empirical covariance -- by means of a novel device, which we call \emph{shifted partial tracing}. This linear operation mimicks the \emph{partial trace} \cite{aston2017}, but it is suitably modified to allow us to separate the terms $a$ and $b$ in \eqref{eq:model_kernels_intro}. Exploiting this device, we produce a \emph{linear} estimator of $a$ (linear up to scaling, to be precise) that can be computed efficiently, with no computational overhead relative to assuming separability. It is shown to be consistent, with explicit convergence rates, when the processes are observed discretely on a grid, possibly corrupted with measurement error.

The bandwidth $\delta>0$ is assumed constant and non-decreasing in the sample size $N$ or the grid size $K$. Consequently, even though $b$ is banded, it has the same order of entries as $c$ itself, when observed on a grid. Hence if $b$ is also an estimand of interest, and statistical and computational efficiency is sought, an additional structural assumption on $b$ is needed to prevent $b$ from being much more complicated to handle than $a$. We focus on stationarity as a specific assumption, which seems broadly applicable, is interesting from the compuatational perspective, and yields a form of parsimony complementary to separability. Under this additional assumption, we show in detail that both $a$ and $b$ of model \eqref{eq:model_kernels_intro} can be estimated efficiently, and the estimator can be both applied and inverted (numerically), while the computational costs of these operations does not exceed their respective costs in the separable regime. Specifically, we show that all of these operations, i.e.~estimation, application, and inversion of the covariance, can be performed at the same cost as matrix-matrix multiplication between pairs of the sampled observations.

Our methodology is also capable of estimating a separable model under the presence of heteroscedastic noise. When observed on a grid, this leads to a separable covariance superposed with a diagonal structure, which has again the same order of degrees of freedom as the separable part. A heteroscedastic noise may very well arise from a discretization of a random process, which is weakly dependent and potentially even smooth at a finer resolution. In their seminal book \cite{ramsay2005}, Ramsay and Silverman state: \emph{``the functional variation that we choose to ignore is itself probably smooth at a finer scale of resolution.''} In other words, with increasing grid size $K$, a diagonal structure may become a banded structure.  One can thus view our methodology as being able to estimate a separable model observed under heteroscedastic and/or weakly dependent noise. If the degrees of freedom belonging to the noise do not exceed the degrees of freedom of the separable part, we can utilize the noise structure e.g.~for the purposes of prediction with no computational overhead compared to the separable model.

Regardless of whether one views $b$ as an estimand of interest or as a nuisance, the key point of this paper is that the methodology we advocate, and label shifted partial tracing, can be used to estimate the separable part of model \eqref{eq:model_kernels_intro}, provided data are densely observed.

\subsection{Notation}

A real separable Hilbert space $\mathcal H$ is equipped with an inner product $\langle \cdot, \cdot \rangle$ and the induced norm $\| \cdot \|$. The Banach space of operators on $\mathcal H$ is denoted by $\mathcal{S}_\infty(\mathcal H)$.
Using the tensor product notation, we write the SVD of $F \in \mathcal{S}_\infty(\mathcal H)$ as $F = \sum_{j=1}^\infty \sigma_j e_j \otimes f_j$. For $p \geq 1$, $F \in \mathcal{S}_\infty(\mathcal H)$ belongs to $\mathcal{S}_p(\mathcal H)$ if $\verti{F}_p := \big(\sum_{j=1}^\infty \sigma_j^p\big)^{1/p} < \infty$. When equipped with the norm $\verti{\cdot}_p$, $\mathcal{S}_p(\mathcal H)$ is a Banach space. For $F \in \mathcal{S}_1(\mathcal H)$, we define its trace as $\Tr{}(F) = \sum_{j=1} \langle F e_j, e_j \rangle$, where the choice of $\{ e_j \}_{j=1}^\infty$ is immaterial.

The tensor product of two Banach spaces $\mathcal B_1$ and $\mathcal B_2$, denoted by $\mathcal B_1 \otimes \mathcal B_2$, is the completion of the set $\big\{ \sum_{j=1}^N x_j \otimes y_j \,\big|\, x_j \in \mathcal H_1, y_j \in \mathcal H_2, N \in \N \big\}$ (see \cite{weidmann2012}).
We have the isometric isomorphism $\mathcal{S}_p(\mathcal H_1) \otimes \mathcal{S}_p(\mathcal H_2) \simeq \mathcal{S}_p(\mathcal H_1 \otimes \mathcal H_2)$.

For $A_1 \in \mathcal{S}_p(\mathcal H_1)$ and $A_2 \in \mathcal{S}_p(\mathcal H_2)$, we define $A := A_1 \ct A_2$ as the unique operator on $\mathcal{S}_p(\mathcal H) \otimes \mathcal{S}_p(\mathcal H)$ satisfying
$(A_1 \ct A_2)(x \otimes y) = A_1 x \otimes A_2 y$, $\forall x \in \mathcal H_1, y \in \mathcal H_2 $. By the construction above, we have $\verti{A_1 \ct A_2}_p = \verti{A_1}_p \verti{A_2}_p$. 

For a random element $X$ on $\mathcal H$ with $\E \| X \|^2 < \infty$, we denote the mean $m = \E X$ and the covariance $C=\E[ (X - m) \otimes (X-m) ]$ (see \cite{hsing2015}). Covariances are trace-class, and positive semi-definite, i.e.~$C \in \mathcal{S}_1^+(\mathcal H)$. When $H=\mathcal{L}^2[0,1]$, the covariance operator is related to the covariance kernel $c=c(t,s)$ via $ (C f)(t) = \int_0^1 c(t,s) f(s) d s $.

We use capital letters (e.g.~$C \in \mathcal{S}_1(\mathcal L^2[0,1])$) to denote operators, lower-case letters to denote their kernels (e.g.~$c \in \mathcal L^2[0,1]^2$) and bold-face letters to denote discrete objects such as vectors or matrices (e.g.~$\mathbf C \in \R^{K_1 \times K_2}$ when $c$ is measured discretely on a $K_1 \times K_2$ grid in $[0,1]^2$). When $\mathcal{H} = \mathcal{H}_1 \otimes \mathcal{H}_2$, we often think of the first dimension as time, denoted by the variable $t$, and the second dimension as space, denoted by the variable $s$. See Appendix~\ref{app:A} for a more detailed exposition of the notation and background concepts.

\section{Methodology} \label{sec:metho}

\subsection{Separable-plus-Banded Covariance}

\begin{definition}
For $\mathcal H := \mathcal H_1 \otimes \mathcal H_2$, $A \in \mathcal{S}_p(\mathcal H)$ is called separable if $A = A_1 \ct A_2$ for some $A_1 \in \mathcal{S}_p(\mathcal H_1)$ and $A_2 \in \mathcal{S}_p(\mathcal H_2)$. For $\mathcal H = \mathcal{L}^2[0,1]^2$, $B \in \mathcal{S}_2(\mathcal H)$ with kernel $b=b(t,s,t',s')$ is banded by $\delta \in [0,1)$ if $b(t,s,t',s') = 0$ almost everywhere on the set \mbox{$\big\{(t,s,t',s') \in [0,1]^4 \big| \max(|t-t'|,|s-s'|) \geq \delta \big\}$.}
\end{definition}

If $\mathcal H = \mathcal{L}^2[0,1]^2$, operator $A \in \mathcal{S}_2(\mathcal H)$ is separable if and only if its kernel factorizes as
\mbox{$a(t,s,t',s') = a_1(t,t') a_2(s,s')$} almost everywhere for some marginal kernels $a_1$ and $a_2$.

We postulate the following model for the covariance of a random element $X \in \mathcal{L}^2[0,1]^2$:
\begin{equation}\label{eq:model}
C = A_1 \ct A_2 + B ,
\end{equation}
where $A_1, A_2 \in \mathcal{S}_1^+(\mathcal{L}^2[0,1])$ and $B \in \mathcal{S}_1^+(\mathcal{L}^2[0,1]^2)$ is banded by $\delta \in [0,1)$. On the level of kernels, this implies for almost all $t,s,t',s' \in [0,1]$ the decomposition
\begin{equation}\label{eq:model_kernels}
c(t,s,t',s') = a_1(t,t') a_2(s,s') + b(t,s,t',s')\,.
\end{equation}

The covariance structure \eqref{eq:model} can arise for example when $X$ is a superposition of two uncorrelated processes $Y$ and $W$, i.e.~$X(t,s) = Y(t,s) + W(t,s)$, $t,s \in [0,1]$, such that the covariance of $Y$ is separable and the covariance of $W$ is banded (e.g.~$W$ is a moving average process with compactly supported window-width). Note that by choosing $\delta=0$ (leading to $B \equiv 0$), model \eqref{eq:model} contains separability as a sub-model. 

\subsection{Shifted Partial Tracing}

\begin{definition}\label{def:SPT}
Let $C \in \mathcal{S}_1(\mathcal{L}^2[0,1]^2)$ with a continuous kernel $c=c(t,s,t',s')$. Let $\delta \in [0,1)$. We define the $\delta$-shifted trace of $C$ as
\[
\Tr{}^\delta(C) := \int_0^{1-\delta} \int_0^{1-\delta} c(t,s,t+\delta,s+\delta) d t d s \,.
\]
We also define the $\delta-$shifted partial traces of $C$, denoted $\Tr{1}^\delta\left(C\right)$ and $\Tr{2}^\delta\left(C\right)$, as the integral operators with kernels given respectively by
\begin{equation}\label{eq:shifted_partial_trace_integral}
c_1(t,t') := \int_{0}^{1-\delta} c(t,s,t',s+\delta)d s \qquad \& \qquad c_2(s,s') := \int_{0}^{1-\delta} c(t,s,t+\delta,s')d t \,.
\end{equation}
\end{definition}

In the special case of $\delta=0$ the definition of shifted trace corresponds to the standard (non-shifted) trace of a trace-class operator with a continuous kernel. Also, for $\delta=0$, $\delta$-shifted partial tracing corresponds to partial tracing as defined in \cite{aston2017}.

Let us denote by $\mathds{V}_1$, resp. $\mathds{V}_2$, the vector space of trace class operators on $[0,1]$, resp. $[0,1]^2$, with continuous kernels, equipped with the trace norm. We assume continuity here only for the sake of presentation. More general (but less intuitive) development of the methodology is given in Appendix~\ref{app:B}, which also contains the proofs.

\begin{proposition}\label{prop:SPT}
The shifted trace $\Tr{}^\delta: \mathds{V}_1 \to \R$ is a well-defined bounded linear functional, and the shifted partial traces $\Tr{1}^\delta, \Tr{2}^\delta: \mathds{V}_2 \to \mathds{V}_1$ are well-defined bounded linear operators.
\end{proposition}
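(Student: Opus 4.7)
My plan is to reduce each shifted (partial) trace to a composition of familiar operations on trace-class operators and then appeal to ideal properties of $\mathcal{S}_1$. Specifically, introduce the shift operator $S_\delta$, defined on $L^2[0,1]$ by $(S_\delta g)(u) = g(u+\delta)\mathbf{1}_{[0,1-\delta]}(u)$ and analogously on $L^2[0,1]^2$; this is a bounded operator with $\|S_\delta\|_\infty \leq 1$. The key identities I aim to establish are
\[
\Tr{}^\delta(C) = \Tr{}\bigl(C S_\delta^*\bigr), \qquad \Tr{1}^\delta(C) = \Tr{1}\bigl(C (I \otimes S_\delta^*)\bigr), \qquad \Tr{2}^\delta(C) = \Tr{2}\bigl(C (S_\delta^* \otimes I)\bigr),
\]
where $\Tr{1}$ and $\Tr{2}$ denote the ordinary partial traces of \cite{aston2017}.

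Well-definedness and continuity are straightforward once one observes that a continuous kernel $c$ on the compact cube $[0,1]^4$ is automatically bounded and uniformly continuous. This makes the integrand defining $\Tr{}^\delta(C)$ continuous on $[0,1-\delta]^2$, so the integral is finite; and dominated convergence then yields continuity of $c_1(t,t')$ and $c_2(s,s')$ on $[0,1]^2$, so the candidate operators $\Tr{1}^\delta(C)$ and $\Tr{2}^\delta(C)$ lie in $\mathds{V}_1$. Linearity of all three maps is immediate from the linearity of integration. For the key identities, I verify them first on rank-one operators $C = u \otimes v$ with continuous factors by the direct computations $(u \otimes v)S_\delta^* = u \otimes S_\delta v$ (and analogues for the partial-trace versions), extend by linearity to finite-rank operators with continuous factors, and then use a density argument to pass to general $C \in \mathds{V}_1, \mathds{V}_2$ (e.g.\ via SVD truncation combined with mollification of the kernel to preserve continuity).

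Once the identities are in place, boundedness follows at once from trace-ideal inequalities and the fact that ordinary partial tracing is a contraction on $\mathcal{S}_1$:
\begin{align*}
|\Tr{}^\delta(C)| &= |\Tr{}(CS_\delta^*)| \leq \verti{CS_\delta^*}_1 \leq \|S_\delta^*\|_\infty \verti{C}_1 \leq \verti{C}_1,\\
\verti{\Tr{i}^\delta(C)}_1 &\leq \verti{C(I \otimes S_\delta^*)}_1 \leq \verti{C}_1, \quad i = 1,2.
\end{align*}
The step I expect to be the main obstacle is the density argument extending the key identities beyond the finite-rank case, because $\mathds{V}_1$ and $\mathds{V}_2$ are not closed in the trace norm, and SVD expansions of a continuous kernel need not converge uniformly (Mercer's theorem requires positivity). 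Fortunately, the more general development promised in Appendix~\ref{app:B}, which drops the kernel-continuity hypothesis, circumvents this issue since the operator-level identities hold on all of $\mathcal{S}_1$ directly.
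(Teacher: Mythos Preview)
Your approach is essentially the same as the paper's. The paper (Appendix~\ref{app:B}) also introduces the shift as an operator, defines the general shifted (partial) trace on all of $\mathcal{S}_1$ via composition with the shift and the ordinary trace/partial trace, obtains boundedness from the ideal property, and then proves agreement with the integral Definition~\ref{def:SPT} under continuity (Proposition~\ref{prop:equality_of_definitions}). Your formulation $\Tr{1}^\delta(C)=\Tr{1}\bigl(C(I\otimes S_\delta^*)\bigr)$ is a slightly more direct packaging than the paper's dual characterisation \eqref{eq:trace_characterization}, but the substance is identical. The density step you flag is exactly what the paper handles in Proposition~\ref{prop:equality_of_definitions}: rather than SVD truncation plus mollification, it uses Stone--Weierstrass to approximate the continuous kernel by finite sums of separable continuous kernels, and a smoothed shift $S_\tau^\delta$ to cope with the discontinuity the hard cutoff in $S_\delta$ introduces at $s=1-\delta$ (which would otherwise obstruct the identification $\Tr{}(CS_\delta^*)=\int_0^{1-\delta}c(t,t+\delta)\,dt$ via the diagonal-integration trace formula).
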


Shifted partial tracing has the following properties, which will be useful for estimation of model \eqref{eq:model}.

\begin{proposition}\label{prop:PT_properties}
Let $A_1,A_2 \in \mathcal{S}_1(\mathcal{L}^2[0,1])$ and $F = A_1 \ct A_2$. Then \\
$\Tr{1}^\delta\left(F\right) = \Tr{}^\delta\left(A_2\right) A_1$, $\Tr{}^\delta\left(F\right) = \Tr{}^\delta\left(A_1\right) \Tr{}^\delta\left(A_2\right)$, and $\Tr{}^\delta\left(F\right) F = \Tr{1}^\delta\left(F\right) \ct \Tr{2}^\delta\left(F\right).$
\end{proposition}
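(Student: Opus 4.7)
My plan is to argue entirely at the level of kernels, since the shifted (partial) traces are defined via integration of kernel values along shifted diagonals, and the factorization hypothesis $F = A_1 \ct A_2$ implies that $F$ has kernel $f(t,s,t',s') = a_1(t,t')\, a_2(s,s')$, where $a_1, a_2$ are the kernels of $A_1, A_2$. All three identities will then fall out of direct calculations using Fubini's theorem, after which the third identity is obtained by assembling the first two.

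For identity (1), I would substitute the product kernel into the defining formula for $\Tr{1}^\delta$:
\[
\int_0^{1-\delta} f(t,s,t',s+\delta)\, d s = a_1(t,t') \int_0^{1-\delta} a_2(s,s+\delta)\, d s = \Tr{}^\delta(A_2)\, a_1(t,t'),
\]
and identify the right-hand side as the kernel of $\Tr{}^\delta(A_2)\, A_1$; the analogous computation for $\Tr{2}^\delta(F) = \Tr{}^\delta(A_1)\, A_2$ is identical by symmetry. Identity (2) follows in the same spirit: feeding the product kernel into the definition of $\Tr{}^\delta$ and applying Fubini gives
\[
\Tr{}^\delta(F) = \int_0^{1-\delta}\!\!\int_0^{1-\delta} a_1(t,t+\delta)\, a_2(s,s+\delta)\, d t\, d s = \Tr{}^\delta(A_1)\,\Tr{}^\delta(A_2).
\]
For identity (3), I would invoke (1) (and its symmetric counterpart) together with bilinearity of $\ct$ and the fact that $\ct$ pulls scalars out of both slots, to write
\[
\Tr{1}^\delta(F) \ct \Tr{2}^\delta(F) = \big(\Tr{}^\delta(A_2)\, A_1\big) \ct \big(\Tr{}^\delta(A_1)\, A_2\big) = \Tr{}^\delta(A_1)\,\Tr{}^\delta(A_2)\, (A_1 \ct A_2),
\]
and recognize the scalar coefficient as $\Tr{}^\delta(F)$ by (2) and $A_1 \ct A_2$ as $F$ by hypothesis.

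The main obstacle I foresee is not algebraic but rather definitional: one must justify that the product kernel $a_1(t,t') a_2(s,s')$ is the correct kernel of $A_1 \ct A_2$ in the sense required by Definition~\ref{def:SPT}, i.e.~continuous whenever $a_1, a_2$ are continuous (so that the hypotheses of the shifted trace definition are met), and that the applications of Fubini are legitimate. For continuous kernels on the compact set $[0,1]^2$ this is immediate from boundedness, so the proof goes through cleanly within $\mathds{V}_1, \mathds{V}_2$. To extend beyond continuous kernels to arbitrary trace-class $A_1, A_2$, as the proposition is stated, I would use the bounded linearity established in Proposition~\ref{prop:SPT} together with the cross-norm identity $\verti{A_1 \ct A_2}_1 = \verti{A_1}_1 \verti{A_2}_1$ to extend each identity by continuity from a dense class of simple operators (e.g.~finite-rank with smooth kernels) to all of $\mathcal{S}_1$; this is presumably handled by the more general framework deferred to Appendix~\ref{app:B}.
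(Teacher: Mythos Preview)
Your proposal is correct and matches the paper's approach: the paper simply states that the identities ``can be simply checked using the definitions,'' and your kernel-level computation with Fubini is exactly that check in the continuous setting of Definition~\ref{def:SPT}. Your remark about extending to general $\mathcal{S}_1$ via density and the bounded-linearity of Proposition~\ref{prop:SPT} is also on target; in the paper's Appendix~\ref{app:B} framework the first identity $\T{1}^\delta(A_1\ct A_2)=\T{}^\delta(A_2)A_1$ is in fact taken as the \emph{definition} of $\T{1}^\delta$ on separable operators, and the second follows from the shifting-operator formulation $\T{}^\delta(F)=\Tr{}\big[(S^\delta\ct S^\delta)F\big]$ combined with the trace multiplicativity of Lemma~\ref{lem:ct}, so there is nothing further to verify once one adopts those general definitions.
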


\subsection{Estimation}\label{sec:estimation}

We assume throughout the paper the availability of $N$ independent (and w.l.o.g.~zero-mean) surfaces, say $X_1, \ldots, X_N$, with covariance given by \eqref{eq:model}, where $\delta$ is such that $\Tr{}^\delta(A_1)$ and $\Tr{}^\delta(A_2)$ are non-zero. For now, let the surfaces be fully observed; discrete observations are considered in Sections \ref{sec:computational} and \ref{sec:asymptotics}. The following lemma illustrates the importance of shifted partial tracing for the estimation task.

\begin{lemma}\label{lem:banded_SPT}
Let $B \in \mathcal{S}_1(\mathcal{L}^2[0,1]^2)$ be banded by $\delta^\star$. Then for any $\delta > \delta^\star$ we have $\Tr{1}^\delta(B) = \Tr{2}^\delta(B)=0$.
\end{lemma}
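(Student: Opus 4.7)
The proof is essentially immediate from the definition, once one makes the key observation. I would write out the kernel of $\Tr{1}^\delta(B)$ via Definition~\ref{def:SPT} as
\[
b_1(t,t') = \int_0^{1-\delta} b(t,s,t',s+\delta)\,d s,
\]
and note that integration takes place on the affine slice where the second and fourth arguments of $b$ differ by exactly $\delta$. Thus at every point in the domain of integration,
\[
\max\bigl(|t-t'|,\,|s-(s+\delta)|\bigr) \;\geq\; \delta \;>\; \delta^\star,
\]
so the integrand sits inside the set on which $b$ vanishes by the banded assumption. A symmetric argument, with the roles of the time and space coordinates interchanged, handles $\Tr{2}^\delta(B)$.

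In the continuous-kernel setting in which shifted partial traces were introduced (cf.\ Proposition~\ref{prop:SPT}), the a.e.\ zero set of $b$ is a closed set on which $b$ must in fact vanish pointwise, so the integrand above is identically zero and the conclusion is transparent. For the general, non-continuous case treated in Appendix~\ref{app:B}, the one technical point is to push the four-dimensional a.e.\ statement down to an a.e.\ statement on the three-dimensional slice $\{s'=s+\delta\}$. This is a routine Fubini computation: under the measure-preserving translation $(t,s,t',s')\mapsto(t,s,t',s-\delta)$ the null set $\{\max(|t-t'|,|s-s'|)\geq \delta^\star\}$ is transported to a null set on which Fubini yields the $s$-integral equal to zero for a.e.\ $(t,t')$, so the resulting integral operator is zero. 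I expect this bookkeeping to be the only non-trivial part of the proof; there is no analytic obstacle beyond it.
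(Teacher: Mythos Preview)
Your direct kernel argument for the continuous case is correct and is the most transparent proof of the lemma as it is stated in Section~2; nothing more is needed there. The paper, however, proves the statement in the general trace-class setting of Appendix~\ref{app:B}, where the shifted partial trace is defined abstractly (Definition~\ref{def:partial_T}, extended by density) rather than via the integral formula. Its route is operator-theoretic: it uses the duality characterization $\verti{\Tr{1}^\delta(B)}_1 = \sup_{\verti{G}_\infty=1}\big|\Tr{}[(G\ct S^\delta)B]\big|$ from \eqref{eq:shifted_trace_bound}, the factorization $(G\ct S^\delta)=(Id\ct S^\delta)(G\ct Id)$, cyclicity of the trace, and then argues that $(Id\ct S^\delta)B(G\ct Id)$ has kernel vanishing along the diagonal, so its trace is zero by a limiting argument \`a la Gohberg--Krein. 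Your approach is more elementary in the continuous case; the paper's approach is what is needed to cover general $B\in\mathcal{S}_1$.

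Your sketch for the non-continuous case has a genuine gap. First, a terminological slip: $\{\max(|t-t'|,|s-s'|)\geq\delta^\star\}$ is not a null set but the (positive-measure) set on which $b$ vanishes a.e. More substantively, a translation followed by Fubini does not let you pass from a four-dimensional a.e.\ statement to the single three-dimensional slice $\{s'=s+\delta\}$: Fubini controls slice integrals only for \emph{almost every} value of the slicing parameter, not for one fixed value, and an $L^2$ representative of $b$ can be modified arbitrarily on that slice. In the abstract setting the integral formula for $\Tr{1}^\delta$ is not even available, so the argument must proceed through traces and the shifting operator $S^\delta$, as the paper does. If you want to stay closer to your approach in the general case, you would need to invoke the density of operators with continuous kernels and the boundedness of $\Tr{1}^\delta$ (Proposition~\ref{prop:SPT}); but note that bandedness is not preserved under trace-norm approximation by continuous kernels without additional care, which is why the paper's duality route is cleaner.
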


Therefore, shifted partial tracing works around the banded part of the process to enable a direct estimation of the separable part of the covariance.

\begin{figure}[!t]
   \centering \includegraphics[width=\textwidth]{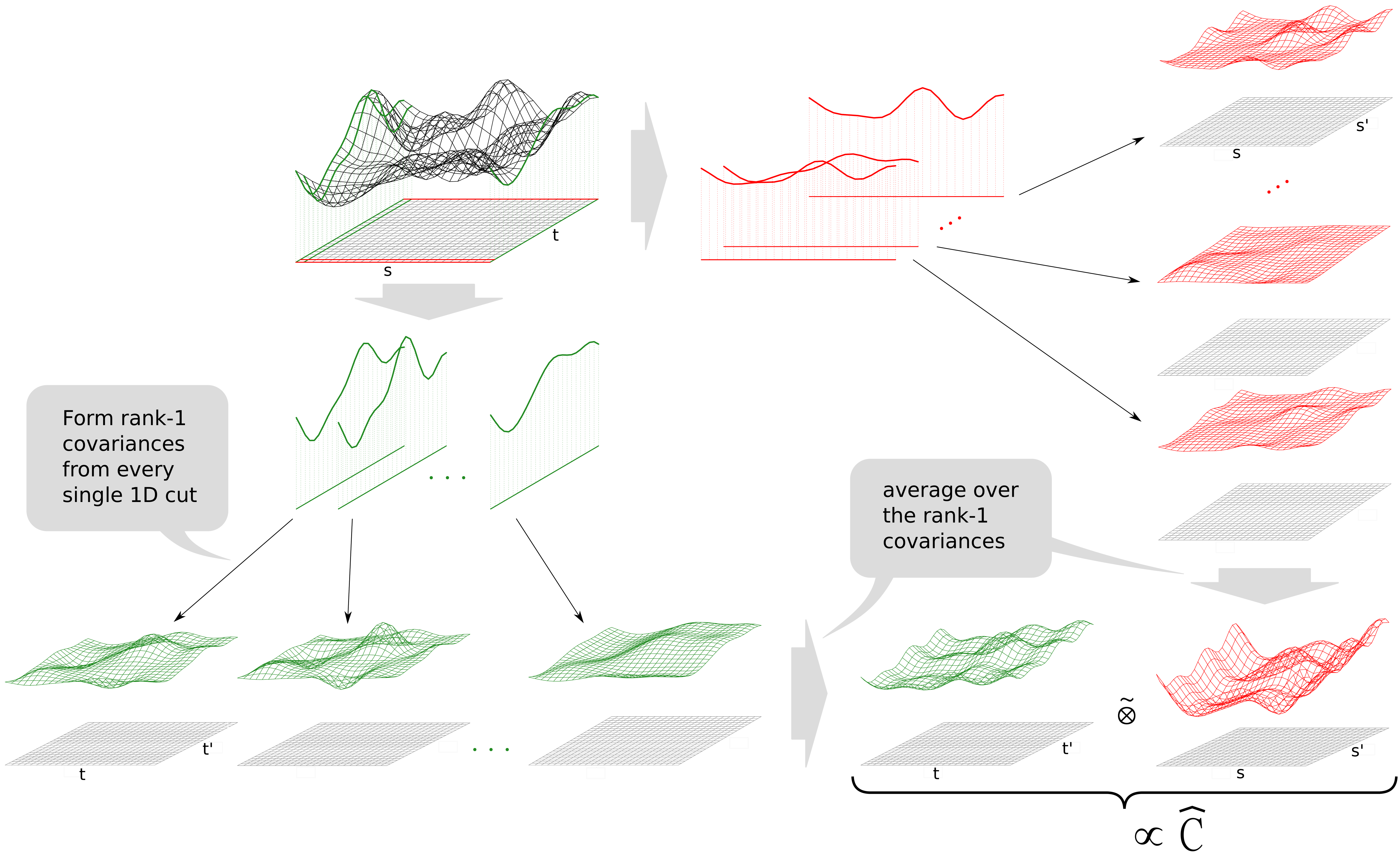}
   \vspace*{-6mm}
   \caption{Visualization of separable model estimation via partial tracing based on a single observation. The observation is cut along the temporal domain to obtain a temporal sample (in green), from which the temporal part of the separable covariance is empirically estimated. Similarly for the spatial part (in red).}
    \label{fig:PT_visualized} 
\end{figure}

\begin{example}\label{ex:PT_visualized}
Assume we have a single continuous observation $X \in \mathcal{L}^2[0,1]^2$ with covariance $C = C_1 \otimes C_2$ with a continuous kernel $c(t,s,t',s') = c_1(t,t') c_2(s,s')$. Assume for simplicity that $\Tr{}(C_1) = \Tr{}(C_2) =1$. Partial tracing (without shifting, i.e.~$\delta=0$) can be used to estimate $C_1$ and $C_2$ in the following way.

The observation $X$ is cut along the temporal axis to form a spatial sample $\{ X^t(s) \}_{t \in [0,1]}$, i.e.~any given time point $t$ provides a single curve $X^t(s)$, $s \in [0,1]$. This spatial sample is used to estimate the spatial covariance $C_2$ in a standard way, i.e.~outer products $X^t \otimes X^t$ are formed and averaged together as
\[
\widehat{C}_2 = \int_0^1 X^t \otimes X^t d t \qquad \text{or equivalently} \qquad \widehat{c}_2(s,s') = \int_0^1 X^t(s) X^t(s') d t \,.
\vspace*{-1mm}
\]
This is a moment estimator in a sense, since $\E(X^t \otimes X^t) = C_2$ for any $t\in [0,1]$.
Similarly for the temporal domain. The process is captured in Figure \ref{fig:PT_visualized}.

When the covariance is instead separable-plus-banded, i.e.~$C = A_1 \otimes A_2 + B$ with $B$ banded by $\delta$, it is no longer true that $\E(X^t \otimes X^t) = A_2$, but it is still true that \mbox{$\E(X^t \otimes X^{t+\delta}) \propto A_2$} for all $t \in [0,1-\delta]$. Hence instead of taking outer products of $X^t$ with itself, we can form outer products $X^t \otimes X^{t+\delta}$ and average over these for $t \in [0,1-\delta]$ to obtain a scaled estimator of $A_2$.
\end{example}

Using the previous lemma together with Proposition \ref{prop:PT_properties}, we obtain the following estimating equation for model \eqref{eq:model}:
\begin{equation}\label{eq:estimating_eq}
\Tr{}^\delta(C) A_1 \ct A_2 = \Tr{1}^\delta(C) \ct \Tr{2}^\delta(C).
\end{equation}

Equation \eqref{eq:estimating_eq} suggests the following estimators for the separable part of the model:
\begin{equation}\label{eq:estimator_A}
\widehat{A}_1 = \Tr{1}^\delta(\widehat{C}_N) \qquad \& \qquad \widehat{A}_2 = \frac{\Tr{2}^\delta(\widehat{C}_N)}{\Tr{}^\delta(\widehat{C}_N)},
\end{equation}
where $\widehat{C}_N = \frac{1}{N} \sum_{n=1}^N (X_n - \bar{X}_N) \otimes (X_n - \bar{X}_N)$ is the empirical estimator of $C$. Of course, we need to assume $\Tr{}^\delta(\widehat{C}_N) \neq 0$. Once the separable part of the model has been estimated, we can define
\begin{equation}\label{eq:estimator_B_non_stationary}
\widehat{B} = \widehat{C}_N - \widehat{A}_1 \ct \widehat{A}_2 \,.
\end{equation}
Optionally, we can set the kernel of $\widehat{B}$ to zero outsize of the band of size $\delta$. Note that none of the estimators defined above is guaranteed to be symmetric or positive semi-definite. However, this is just a technicality, which can be dealt with easily, see Appendix~\ref{app:E}.

If at this point we add the stationarity of $B$ into our assumptions, i.e.~let the kernel $b$ be translation invariant: $b(t,s,t',s') = \varsigma(|t-t'|,|s-s'|)$, $t,t',s,s' \in [0,1]$, where $\varsigma \in \mathcal{L}^2[0,1]^2$ is the \emph{symbol} of $B$. Then we take the following estimator of $B$ instead:
\begin{equation}\label{eq:estimator_B}
\widehat{B} = \topavg(\widehat{C}_N - \widehat{A}_1 \ct \widehat{A}_2) ,
\end{equation}
where $\topavg(\cdot)$ is the ``Toeplitz averaging'' operator, i.e.~the projection onto the stationary operators, defined as follows.

\begin{definition}\label{def:toeplitz_averaging}
For $F \in S_1(\mathcal{L}^2[0,1]^2)$ self-adjoint and $\{ e_j \}_{j=- \infty}^\infty$ the complete orthonormal basis of trigonometric functions in $\mathcal{L}^2[0,1]$, let
\begin{equation}\label{eq:trigonometric_F}
F = \sum_{i,j,k,l \in \mathds{Z}} \gamma_{ijkl} (e_i \otimes e_j) \otimes (e_k \otimes e_l).
\end{equation}
Then we define
\begin{equation}\label{eq:trigonometric_F_projected}
\topavg(F) = \sum_{i,j \in \mathds{Z}} \gamma_{ijij} (e_i \otimes e_j) \otimes (e_i \otimes e_j) .
\end{equation}
\end{definition}

Let us comment on the previous definition. If $\{ e_j \}_{j=-\infty}^\infty$ is the trigonometric basis on $\mathcal{L}^2[0,1]$, then $\{ e_i \otimes e_j \}_{i,j=- \infty}^\infty$ is the trigonometric basis on $\mathcal{L}^2[0,1]^2$, so every compact operator $F$ can be expressed with respect to this basis as in \eqref{eq:trigonometric_F}. For $F$ trace class, the Fourier coefficients $\{ \gamma_{ijkl} \}$ are absolutely summable, leading to $\topavg(F)$ in \eqref{eq:trigonometric_F_projected} being also trace-class. Secondly, a stationary operator has the trigonometric basis as its eigenbasis (see Appendix~\ref{app:C}). Thirdly, $\topavg(\cdot)$ as defined in \eqref{eq:trigonometric_F_projected} is clearly an orthogonal projection. Altogether, $\topavg(\cdot)$ is the orthogonal projection onto the space of stationary operators in $S_1(\mathcal{L}^2[0,1]^2)$, which is itself a Banach space.

The next theorem shows asymptotic behavior of the proposed estimators under the following moment assumption, which ensures that $\sqrt{N}(\widehat{C}_N - C) \stackrel{d}{\longrightarrow} Z$, where $Z$ is a mean-zero Gaussian random element in $\mathcal{S}_1(\mathcal{L}^2[0,1]^2)$, i.e. the convergence is in the trace-norm topology \cite{mas2006}.
\begin{description}
\item[(A1)] Let $\sum_{j=1}^\infty \big( \E \langle X,e_j \rangle^4 \big)^{1/4} < \infty$ for some orthonormal basis $\{ e_j \}_{j=1}^\infty$ in $\mathcal{L}^2[0,1]^2$.
\end{description}

\begin{theorem}\label{thm:asymptotics}
Let $X_1,\ldots,X_N \sim X$ be a (w.l.o.g.~centered) random sample with covariance given by \eqref{eq:model}, where $B$ is stationary and $\delta^\star$-banded. Let $\delta \geq \delta^\star$ such that $\Tr{}^\delta(C) \neq 0$. Let (A1) hold.
Then$\sqrt{N}( \widehat{A}_1 \ct \widehat{A}_2 - A_1 \ct A_2)$ and $\sqrt{N}( \widehat{B} - B )$
converge to mean zero Gaussian random elements of $\mathcal{S}_1(\mathcal{L}^2[0,1]^2)$.
\end{theorem}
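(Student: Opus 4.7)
The plan is to view both $\widehat{A}_1 \ct \widehat{A}_2$ and $\widehat{B}$ as images of $\widehat{C}_N$ under sufficiently smooth maps and to push the CLT for $\widehat{C}_N$ through these maps by the functional delta method. Assumption (A1) together with the result of \cite{mas2006} quoted just above the theorem gives $\sqrt{N}(\widehat{C}_N - C) \xrightarrow{d} Z$ in the trace-norm topology of $\mathcal{S}_1(\mathcal{L}^2[0,1]^2)$, with $Z$ a centred Gaussian element of $\mathcal{S}_1$. Everything else reduces to continuity and differentiability computations built from Propositions \ref{prop:SPT}--\ref{prop:PT_properties} and Lemma \ref{lem:banded_SPT}.

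For the separable part, I would introduce, on the open set $\{F \in \mathcal{S}_1(\mathcal{L}^2[0,1]^2) : \Tr{}^\delta(F) \neq 0\}$, the map
\[
\Phi(F) := \frac{\Tr{1}^\delta(F) \ct \Tr{2}^\delta(F)}{\Tr{}^\delta(F)},
\]
so that $\widehat{A}_1 \ct \widehat{A}_2 = \Phi(\widehat{C}_N)$ by \eqref{eq:estimator_A}. By Proposition \ref{prop:SPT}, $\Tr{1}^\delta, \Tr{2}^\delta$ and $\Tr{}^\delta$ are bounded linear, while $\ct$ is continuous bilinear because $\verti{A_1 \ct A_2}_1 = \verti{A_1}_1 \verti{A_2}_1$; division by the non-vanishing scalar $\Tr{}^\delta(F)$ is smooth. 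Consequently $\Phi$ is Fr\'echet differentiable at $C$ with derivative $D\Phi_C$ given by the usual product/quotient rule, which is itself a bounded linear map $\mathcal{S}_1 \to \mathcal{S}_1$. Combining Lemma \ref{lem:banded_SPT} with Proposition \ref{prop:PT_properties} gives
\[
\Tr{1}^\delta(C) = \Tr{}^\delta(A_2)\, A_1, \qquad \Tr{2}^\delta(C) = \Tr{}^\delta(A_1)\, A_2, \qquad \Tr{}^\delta(C) = \Tr{}^\delta(A_1)\,\Tr{}^\delta(A_2),
\]
so $\Phi(C) = A_1 \ct A_2$, and the delta method yields $\sqrt{N}(\widehat{A}_1 \ct \widehat{A}_2 - A_1 \ct A_2) \xrightarrow{d} D\Phi_C(Z)$, a centred Gaussian in $\mathcal{S}_1$.

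For the banded part, I would use stationarity of $B$, which implies $\topavg(B) = B$, together with linearity of $\topavg$. From $C = A_1 \ct A_2 + B$ we get $B = \topavg(C - A_1 \ct A_2)$, and hence, by \eqref{eq:estimator_B},
\[
\widehat{B} - B \;=\; \topavg\bigl[(\widehat{C}_N - C) \;-\; (\widehat{A}_1 \ct \widehat{A}_2 - A_1 \ct A_2)\bigr].
\]
Since the orthogonal projection $\topavg$ is (as discussed after Definition \ref{def:toeplitz_averaging}) a bounded linear operator on $\mathcal{S}_1(\mathcal{L}^2[0,1]^2)$, combining the display with the previous paragraph gives $\sqrt{N}(\widehat{B} - B) \xrightarrow{d} \topavg\bigl((I - D\Phi_C)(Z)\bigr)$, again centred Gaussian in $\mathcal{S}_1$. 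Joint convergence of the two estimators, if wanted, follows by applying the same delta method to the single map $F \mapsto (\Phi(F),\, \topavg(F - \Phi(F)))$.

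The main obstacle is analytic rather than structural: one has to verify, cleanly in the trace norm, that (i) $\topavg$ is bounded on $\mathcal{S}_1$ (it is naturally a projection in the Hilbert--Schmidt sense, so the trace-class extension used in \eqref{eq:estimator_B} needs justification, presumably delegated to Appendix~\ref{app:C}), and (ii) $\Phi$ is Fr\'echet differentiable at $C$ with respect to $\verti{\cdot}_1$. Both reduce to the multiplicativity $\verti{A_1 \ct A_2}_1 = \verti{A_1}_1 \verti{A_2}_1$ plus the boundedness statements in Proposition~\ref{prop:SPT}; once these are in place, computing $D\Phi_C$ is a routine product-and-quotient rule, and the rest of the proof is mechanical.
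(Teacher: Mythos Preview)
Your proposal is correct and follows essentially the same route as the paper. The paper's proof carries out by hand the algebraic linearization
\[
\sqrt{N}\bigl(\widehat{A}_1\ct\widehat{A}_2 - A_1\ct A_2\bigr)
=\frac{1}{\Tr{}^\delta(\widehat{C}_N)}\Bigl(\Tr{1}^\delta(\widehat{C}_N)\ct\Tr{2}^\delta[\sqrt{N}(\widehat{C}_N-C)]
+\Tr{1}^\delta[\sqrt{N}(\widehat{C}_N-C)]\ct\Tr{2}^\delta(C)
-\Tr{}^\delta[\sqrt{N}(\widehat{C}_N-C)]\,A_1\ct A_2\Bigr),
\]
and then invokes the continuous mapping theorem and Slutsky; this is exactly your Fr\'echet derivative $D\Phi_C$ written out explicitly, so the two arguments differ only in packaging. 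The one place the paper proceeds a little differently from what you sketch is the Gaussianity of the limit: rather than arguing via boundedness of $\Tr{j}^\delta$, $\topavg$, and $F\mapsto F\ct G$ on $\mathcal{S}_1$ (your point (i)--(ii)), the paper proves a short lemma using the dual characterisation ``$Z$ is Gaussian in $\mathcal{S}_1$ iff $\Tr{}(GZ)$ is Gaussian for every $G\in\mathcal{S}_\infty$'' together with the trace identity \eqref{eq:trace_characterization}. This sidesteps the need to check trace-norm boundedness of $\topavg$ directly, which is the cleanest way around the obstacle you flagged.
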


It can be seen from the proof of the theorem (in Appendix \ref{app:G1}) that the the asymptotic distribution of $\widehat{A}_1$ and $\widehat{A}_2$ remains valid even without the stationarity assumption placed on $\widehat{B}$.

\subsection{Choice of Bandwidth}\label{sec:delta}

In order to apply our methodology in practice, it remains to provide means to choose the bandwidth $\delta$. In this Section, we write $\widehat C (\delta) = \widehat A_1(\delta) \ct \widehat A_2(\delta) + \widehat B(\delta)$ to make the dependency on $\delta$ explicit. Even though the separable part of the model $A_1 \ct A_2$ does not depend on $\delta$, its estimator from formula \eqref{eq:estimator_A} does, since shifted partial tracing with a fixed $\delta$ is used. If we actually knew true covariance $C$, we would use it in formulas \eqref{eq:estimator_A} and \eqref{eq:estimator_B} instead of the empirical covariance $\widehat{C}_N$ to obtain a separable-plus-banded proxy of $C$, denoted here as $C(\delta) = A_1(\delta) \ct A_2(\delta) + B(\delta)$. Under the separable-plus-banded model, it is $C(\delta)=C$ for any $\delta$ large enough to eliminate $B$ by $\delta$-shifted partial tracing. However, among all such bandwidths, the smaller ones will lead to better empirical performance.

Let $\Delta := \{ \delta_1, \ldots, \delta_m \}$ be the search grid of candidate values. If we knew $C$, the bandwidth value leading to the best performance of our estimation methodology would be given by
\begin{equation}\label{eq:CV_theoretical}
\delta^\star := \argmin_{\delta \in \Delta} \verti{C(\delta) - C}_2^2.
\end{equation}
Here, $\delta^\star$ is a set. In particular, under model \eqref{eq:model}, $\delta^\star$ contain all such bandwidths $\delta$ that $B$ is banded by $\delta$. We identify $\delta^\star$ with the minimum of this set. This arbitrary choice reflects the fact that $\delta$ is a nuisance parameter, not an estimand of interest. And as suggested by Theorem \ref{thm:asymptotics}, there is a range of valid values, which are asymptotically indistinguishable.

Since we do not know $C$ we cannot evaluate the objective in \eqref{eq:CV_theoretical}. Instead, we propose to approximate the objective by one that is fully calculable:
\begin{equation}\label{eq:CV_empirical}
\widehat{\delta} := \argmin_{\delta \in \Delta} \verti{\widehat{C}(\delta)}_2^2 + \frac{2}{N} \sum_{n=1}^N \langle X_n , \widehat{C}_{-n}(\delta) X_n \rangle,
\end{equation}
where $\widehat{C}_{-n}(\delta)$ is our estimator constructed without the $n$-th observation $X_n$. In Appendix~\ref{app:G2}, we show that \eqref{eq:CV_empirical} is root-$n$ consistent for \eqref{eq:CV_theoretical} up to a constant, and the following theorem provides rates of convergence with the adaptive choice of the bandwidth.

\begin{theorem}\label{thm:rates_adaptive}
Let $X_1,\ldots,X_N \sim X$ be a (w.l.o.g.~centered) random sample with covariance given by \eqref{eq:model}, where $B$ is stationary and $\delta^\star$-banded. Let (A1) hold, and, let $\widehat{\delta}$ be chosen as in \eqref{eq:CV_empirical} from $\Delta$ in which there exists $\delta \geq \delta^\star$ such that $\Tr{}^\delta(C) \neq 0$.
Then $\verti{ \widehat{A}_1(\widehat{\delta}) \ct \widehat{A}_2(\widehat{\delta}) - A_1 \ct A_2 }_2^2 = \O_P(N^{-1})$ and $\verti{ \widehat{B}(\widehat{\delta}) - B }_2^2 = \O_P(N^{-1})$.
\end{theorem}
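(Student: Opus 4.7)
The plan is to exploit the finiteness of the search grid $\Delta$ in order to reduce Theorem~\ref{thm:rates_adaptive} to Theorem~\ref{thm:asymptotics} via a two-stage oracle-inequality argument: (i) show that \emph{every} admissible bandwidth in $\Delta$ yields the root-$N$ rate; (ii) show that the data-driven $\widehat{\delta}$ falls into the admissible subset with probability tending to one.

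For stage (i), I would set $\Delta^\star := \{\delta \in \Delta : \delta \geq \delta^\star \text{ and } \Tr{}^\delta(C) \neq 0\}$, which is nonempty by hypothesis. For each $\delta \in \Delta^\star$, Lemma~\ref{lem:banded_SPT} annihilates the banded component so that the estimating equation \eqref{eq:estimating_eq} is valid at bandwidth $\delta$, and Theorem~\ref{thm:asymptotics} supplies a Gaussian CLT (in trace norm) for the corresponding plug-in errors. Using the domination $\verti{\cdot}_2 \leq \verti{\cdot}_1$, this gives $\verti{\widehat A_1(\delta) \ct \widehat A_2(\delta) - A_1 \ct A_2}_2^2 = \O_P(N^{-1})$ and $\verti{\widehat B(\delta) - B}_2^2 = \O_P(N^{-1})$ for each $\delta \in \Delta^\star$ individually; since $\Delta^\star$ is finite, maximising over it preserves the $\O_P(N^{-1})$ rate.

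For stage (ii), I would invoke the Appendix~\ref{app:G2} claim cited in the excerpt: the empirical criterion \eqref{eq:CV_empirical} approximates \eqref{eq:CV_theoretical} at root-$N$ rate up to a $\delta$-independent additive constant $K$, so that $\mathrm{CV}(\delta) = \verti{C(\delta) - C}_2^2 + K + \O_P(N^{-1/2})$ uniformly over the finite $\Delta$. Since $\verti{C(\delta) - C}_2^2 = 0$ for every $\delta \in \Delta^\star$ while it is a fixed positive constant (or $C(\delta)$ degenerates) for every $\delta \in \Delta \setminus \Delta^\star$, setting $\eta := \min_{\delta \notin \Delta^\star}\verti{C(\delta)-C}_2^2 > 0$, a union bound over the finite grid yields
\[
\Prob\!\left(\widehat{\delta} \notin \Delta^\star\right) \;\leq\; \Prob\!\left(\max_{\delta \in \Delta}\bigl|\mathrm{CV}(\delta) - \verti{C(\delta)-C}_2^2 - K\bigr| \geq \eta/2\right) \;\longrightarrow\; 0.
\]

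Combining the two stages, on the event $\{\widehat{\delta} \in \Delta^\star\}$ one has $\verti{\widehat A_1(\widehat\delta) \ct \widehat A_2(\widehat\delta) - A_1 \ct A_2}_2^2 \leq \max_{\delta \in \Delta^\star} \verti{\widehat A_1(\delta) \ct \widehat A_2(\delta) - A_1 \ct A_2}_2^2 = \O_P(N^{-1})$, and analogously for $\widehat B(\widehat\delta)$; since this event has probability tending to one, the $\O_P(N^{-1})$ conclusion is unconditional. The main obstacle is the uniform-in-$\delta$ consistency of the CV objective at parametric rate, which demands a careful bias-variance analysis of the leave-one-out terms $\langle X_n,\widehat C_{-n}(\delta)X_n\rangle$ and is deferred to Appendix~\ref{app:G2}; once that input is granted, the adaptive-bandwidth argument above is essentially combinatorial.
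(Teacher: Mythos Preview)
Your proposal is correct and follows essentially the same approach as the paper: both arguments (i) establish the root-$N$ rate for every admissible bandwidth $\delta\geq\delta^\star$ via the linearization underlying Theorem~\ref{thm:asymptotics}/Proposition~\ref{prop:consistency}, (ii) show that $\widehat\delta$ lands in the admissible subset with probability tending to one using the CV-objective consistency of Proposition~\ref{prop:obj_consistency}, and (iii) conclude by exploiting the finiteness of $\Delta$. The only cosmetic difference is that the paper splits the error via the intermediate $A(\widehat\delta)$ (triangle inequality) whereas you bound by the maximum over $\Delta^\star$; both are equivalent once (i)--(ii) are in hand.
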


In fact, we show in Appendix~\ref{app:G} that even if model \eqref{eq:model} is not valid, the chosen bandwidth leads to a separable-plus-banded proxy of $C$, denoted $\widehat{C}(\widehat{\delta})$, which is asymptotically optimal in the sense of \eqref{eq:CV_theoretical}. Recall that in this case $\widehat{C}(\widehat{\delta})$ is not equal to $C$ even asymptotically, since it is a biased proxy obtained via the proposed estimation methodology based on shifted partial tracing. We also give in Appendix~\ref{app:G3} a version of Theorem~\ref{thm:asymptotics} with an adaptively chosen bandwidth, providing the limiting law further to the rates of convergence above, which requires a slight (and from the practical point of view unnecessary) modification of the bandwidth selection scheme.

\section{Computational Considerations}\label{sec:computational}

Consider now the practical scenario in which we only have access to a discrete version of $X$ in the form of a random element $\mathbf X \in \R^{K_1 \times K_2}$. The discrete version of the covariance $C$ is the covariance tensor $\mathbf C \in \R^{K_1 \times K_2 \times K_1 \times K_2}$, where $\mathbf C[i,j,k,l] = \cov(\mathbf X[i,j], \mathbf X[k,l])$.  Assuming separability of $\mathbf C$ translates to $\mathbf C = \mathbf C_1 \ct \mathbf C_2$ for some $\mathbf C_1 \in \R^{K_1 \times K_1}$ and $\mathbf C_2 \in \R^{K_2 \times K_2}$, or entry-wise $\mathbf C[i,j,k,l] = \mathbf C_1[i,k] \mathbf C[j,l]$ for $i,k=1,\ldots,K_1$ and $j,l=1,\ldots,K_2$.

Assume that $N$ independent realizations of $\mathbf X \in \R^{K \times K}$ were sampled (let $K_1 = K_2 =: K$ again for simplicity) and denoted as $\mathbf X_1,\ldots,\mathbf X_N$. Firstly, a general covariance tensor $\mathbf C$ has $\O(K^4)$ degrees of freedom, while it only has $\O(K^2)$ degrees of freedom under the separability assumption. In comparison, the observed degrees of freedom are only $N K^2$. Secondly, it takes $\O(N K^4)$ operations to calculate the empirical estimate of the covariance tensor, i.e.~$\widehat{\mathbf C}_N = \frac{1}{N} \sum_{n=1}^N \mathbf X \otimes \mathbf X$, while this will be shown to reduce to $\O(N K^3)$ under separability. We assume throughout the paper that multiplication of two $K \times K$ matrices requires $\O(K^3)$ operations, and we set the cubic order in $K$ as the limit of computational tractability for ourselves, which for example prevents us from ever explicitly calculating the empirical covariance $\widehat{\mathbf{C}}_N$. Also, the degrees of freedom correspond to storage requirements, thus although a general covariance tensor becomes difficult to manipulate on a standard computer for $K$ as low as $100$ (at that point the empirical covariance takes roughly 6 GB of memory), the situation under separability is much more favorable.

The following two properties hold for matrices $\mathbf A$, $\mathbf B$ and $\mathbf X$ of appropriate sizes:
\begin{equation}\label{eq:fast_formulas}
\begin{split}
(\mathbf A \ct \mathbf B) \mathbf{X} &= \mathbf A \mathbf X \mathbf B ,\\
(\mathbf A \ct \mathbf B)^{-1}& = \mathbf A^{-1} \ct \mathbf B^{-1} .
\end{split}
\end{equation}
These two properties are among the core reasons for popularity of the separability assumption in the space-time processes literature \cite{gneiting2006}, because they allow to apply a separable covariance fast ($\O(K^3)$ instead of $\O(K^4)$ operations) and solve an inverse problem involving the covariance fast ($\O(K^3)$ instead of $\O(K^6)$ operations).

\begin{remark}\label{rem:terminology}
The symbol $\otimes$ is commonly overused in the literature. In this paper, we use it as the symbol for the abstract outer product \cite{weidmann2012}. The symbol $\ct$ also denotes a type of abstract outer product, but we emphasize by the tilde that we do not see e.g.~$A \ct B$ as an element of a product Hilbert space $\mathcal{S}_p(\mathcal H_1) \otimes \mathcal{S}_p(\mathcal H_2)$, but rather as an operator acting on a product Hilbert space $\mathcal H = \mathcal H_1 \otimes \mathcal H_2$. The symbol $\otimes$ is used in linear algebra for the Kronecker product, which we denote $\otimes_K$ here. The following relation between the Kronecker product and the abstract outer product holds in the case of finite dimensional spaces: 
\begin{equation}\label{eq:kron}
\vec{(\mathbf A \ct \mathbf B) \mathbf X} = (\mathbf B^\top \otimes_K \mathbf A) \mathbf x,
\end{equation}
where $\mathbf x = \vec{\mathbf X}$ is the vectorization of matrix $\mathbf{X}$, and $\vec{\cdot}$ is the vectorization operator (c.f.~\cite{vanloan1983}). Properties \eqref{eq:fast_formulas} are well known in computational linear algebra, where the Kronecker product is used instead of the abstract outer product. Due to \eqref{eq:kron}, the first formula in \eqref{eq:fast_formulas} can be translated to
$(\mathbf A \otimes_K \mathbf B) \vec{\mathbf{X}}=\vec{\mathbf B^\top \mathbf X \mathbf A}$.
\end{remark}

In summary, separability leads to an increased estimation accuracy, lower storage requirements, and faster computations. We view our separable-plus-banded model as a generalization of separability, and the aim of this section is to show that this generalization \textit{does not} come at the cost of loosing the favorable properties of the separable model described above. In fact, we show in the remainder of this section that model \eqref{eq:model} can be estimated and manipulated under the same computational costs as the separable model.

\subsection{Estimation Complexity}

When working with discrete samples, the shifted partial tracing is defined as before, only with the Lebesque measure replaced by the counting measure (see also Appendix~\ref{app:B}). This means, e.g., that for $\mathbf M \in \R^{K_1 \times K_2 \times K_1 \times K_2}$ and $d \leq \min(K_1, K_2)$, we have 
\begin{equation}\label{eq:discrete_PT_def}
\Tr{1}^d(\mathbf M)[i,k] = \sum_{j=1}^{K_2 - d} \mathbf{M}[i,j,k,j+d] \,, \qquad\qquad i,k=1,\ldots,K_1 \,.
\end{equation}
The situation is more complicated with Toeplitz averaging since, unlike in the continuous case, the discrete Fourier basis is not necessarily the eigenbasis of a stationary operator. However, one can define $\topavg(\mathbf M) \in \R^{K_1 \times K_2 \times K_1 \times K_2}$ directly as the tensor having $\mathbf{S} [h,l] = \frac{1}{K^2} \sum_{i=1}^{K-h} \sum_{j=1}^{K-l} \mathbf M[i,j,i+h-1,j+l-1]$ as its symbol. This justifies the name ``Toeplitz averaging''. The relation to the discrete Fourier basis is discussed in Appendix~\ref{app:C}.

Now, consider the separable-plus-banded model $\mathbf{C} = \mathbf{A}_1 \ct \mathbf{A}_2 + \mathbf B$ with $\mathbf B$ banded by $d$, i.e.~$\mathbf B[i,j,k,l]=0$ whenever $\min(|i-k|,|j-l|) \geq d$. We use $d$ to denote the discrete version of the bandwidth $\delta$; the relation for an equidistant grid of size $K \times K$ is $d = \lceil \delta K \rceil + 1$. It is straightforward to translate Proposition \ref{prop:PT_properties} and Lemma \ref{lem:banded_SPT} to the discrete case to obtain the estimating equation
$\Tr{}^d(\mathbf{C}) \mathbf{A}_1 \ct \mathbf{A}_2 = \Tr{1}^d(\mathbf{C}) \ct \Tr{2}^d(\mathbf{C})$,
suggesting again the plugin estimators
\begin{equation}\label{eq:discrete_separable_estimators}
\widehat{\mathbf{A}}_1 = \Tr{1}^d(\widehat{\mathbf{C}}_N) \quad \text{and} \quad \widehat{\mathbf{A}}_2 = \Tr{2}^d(\widehat{\mathbf{C}}_N)/\Tr{}^d(\widehat{\mathbf{C}}_N) \,.
\end{equation}
It may be useful to revisit Example \ref{ex:PT_visualized} and Figure \ref{fig:PT_visualized} (which is plotted discretely anyway) for intuitive depiction of discrete shifted partial tracing.

Now we are ready to establish the estimation complexity. Firstly, we focus on shifted partial tracing. Due to linearity, $\Tr{1}^d(\widehat{\mathbf C}_N) = \frac{1}{N} \sum_n  \Tr{1}^d(\mathbf X_n \otimes \mathbf X_n)$, and as can be seen from formula \eqref{eq:discrete_PT_def}, only $K^3$ entries of the total of $K^4$ entries of $\mathbf X_n \otimes \mathbf X_n$ are needed to evaluate the shifted partial trace. Moreover, evaluating the shifted partial trace amounts to averaging over one dimension of the relevant 3D object, which does not have to ever be stored, hence the time and memory complexities to estimate the separable part of the model, i.e.~to evaluate \eqref{eq:discrete_separable_estimators}, are $\O(N K^3)$ and $\O(K^2)$, respectively.

To evaluate $\widehat{\mathbf B} = \topavg(\widehat{\mathbf{C}}_N - \widehat{\mathbf{A}}_1 \ct \widehat{\mathbf{A}}_2) = \frac{1}{N} \sum_n \topavg(\mathbf X_n \otimes \mathbf X_n) - \topavg(\widehat{\mathbf A}_1 \ct \widehat{\mathbf A}_2)$, one can utilize the fast Fourier transform (FFT). Every term $\topavg(\mathbf X_n \otimes \mathbf X_n)$ can be evaluated directly on the level of data, without the necessity to form the empirical estimator, in $\O (K^2 \log(K))$. The term $\topavg(\widehat{\mathbf A}_1 \ct \widehat{\mathbf A}_2)$ can be evaluated in $\O(K^3)$ operations, again without explicitly forming the outer product (see Appendix~\ref{app:C}). Hence estimation of the banded part is equally demanding as the estimation of the separable part.

It remains to show that $\widehat{\mathbf C} := \widehat{\mathbf A}_1 \ct \widehat{\mathbf A}_2 + \widehat{\mathbf B}$ can be applied efficiently, that the bandwidth selection strategy is feasible, and that an inverse problem $\widehat{\mathbf C} \mathbf X = \mathbf{Y}$ can be solved efficiently. The application of $\widehat{\mathbf C}$ is simple due to the additive structure: one applies the separable part using the first formula in \eqref{eq:fast_formulas}, the banded part using the FFT, and sums the two, leading to the desired complexities. For bandwidth selection, one needs to evaluate the objective of \eqref{eq:CV_empirical} for all $\delta \in \Delta$. The norm in \eqref{eq:CV_empirical} can be calculated fast using the separable-plus-banded form of the estimator; see Appendix~\ref{app:D}. The inner products take $\O(K^2)$ operations each after a fast application $\widehat{\mathbf C}_{-n}(\delta)$. However, it is wasteful to re-estimate $\widehat{\mathbf C}_{-n}(\delta)$, holding out one observation at a time. Instead, one can split the data into e.g.~10 folds, and hold out each fold as a whole. Evaluation for a single fold then takes $\O(N K^3)$, since estimation is the dominating operation. Hence, overall, bandwidth selection is computationally tractable. Finally, the inverse problem is non-trivial, since it is not possible to express the inverse of a sum of two operators in terms of inverses of the two summands. This problem is dealt with in the following section.

\newpage
\subsection{Inverse Problem}\label{sec:inverse_alg}

We need a fast solver for the linear system coming from a discretization of model \eqref{eq:model}, i.e.
\begin{equation}\label{eq:inverse_problem_tensor}
(\mathbf A_1 \ct \mathbf A_2 + \mathbf B) \mathbf X = \mathbf Y ,
\end{equation}
where $\mathbf B \in \R^{K \times K \times K \times K}$ is stationary. Equation \eqref{eq:inverse_problem_tensor} can be rewritten in the matrix-vector form as 
\begin{equation}\label{eq:inverse_problem_matrix}
(\mathbf A + \mathbf B) \mathbf x = \mathbf y  ,
\end{equation}
where $\mathbf A = \mathbf A_2 \otimes_K \mathbf A_1$ (see Remark \ref{rem:terminology}), $\mathbf x = \vec{\mathbf X}$, $\mathbf y = \vec{\mathbf Y}$, and $\mathbf B \in \R^{K^2 \times K^2}$ is a two-level Toeplitz matrix (i.e.~a Toeplitz block matrix with Toeplitz blocks).

The naive solution to system \eqref{eq:inverse_problem_matrix} would require $\O(K^6)$ operations. Since the estimation of model \eqref{eq:model} takes $\O(N K^3)$, we are looking for a solver for \eqref{eq:inverse_problem_matrix} with a complexity close to $\O(K^3)$. We will develop an Alternating Direction Implicit (ADI, c.f.~\cite{young2014}) solver with the per-iteration cost of $\O(K^3)$ and rapid convergence.

The system \eqref{eq:inverse_problem_matrix} can be transformed into either of the following two systems:
\begin{equation}\label{eq:ADI_preparation}
\begin{split}
(\mathbf A + \rho \mathbf I) \mathbf x &= \mathbf y - \mathbf B \mathbf x + \rho \mathbf x ,\\
(\mathbf B + \rho \mathbf I) \mathbf x &= \mathbf y - \mathbf A \mathbf x + \rho \mathbf x ,
\end{split}
\end{equation}
where $\mathbf I \in \R^{K^2 \times K^2}$ is the identity matrix and $\rho \geq 0$ is arbitrary. The idea of the ADI method is to start from an initial solution $\mathbf x^{(0)}$, and form a sequence $\{\mathbf x^{(k)} \}_{k; 2k \in \N}$ by alternately solving the linearized systems stemming from \eqref{eq:ADI_preparation} until convergence, specifically:
\begin{equation}\label{eq:ADI}
\begin{split}
(\mathbf A + \rho \mathbf I) \mathbf x^{(k+1/2)} &= \mathbf y - \mathbf B \mathbf x^{(k)} + \rho \mathbf x^{(k)} ,\\
(\mathbf B + \rho \mathbf I) \mathbf x^{(k+1)} &= \mathbf y - \mathbf A \mathbf x^{(k+1/2)} + \rho \mathbf x^{(k+1/2)} .
\end{split}
\end{equation}
The acceleration parameter $\rho$ is allowed to vary between iterations. The optimal choice of $\rho$ based on the spectral properties of $\mathbf A$ and $\mathbf B$, guaranteeing a fixed number of iterations, can be made in some model examples (e.g.~when $\mathbf A$ and $\mathbf B$ commute). Interestingly, numerical studies suggest that the ADI method exhibits excellent performance on a large class of linear systems of the type \eqref{eq:inverse_problem_matrix} with the model choice of $\rho$, as long as matrices $\mathbf A$~and~$\mathbf B$ are real with real spectra \citep{young2014}. Hence we also choose $\rho$ as suggested by the model examples and, in order to boost the convergence speed, we gradually decrease its value as
\mbox{$\rho^{(k)} = \min(\rho^{(k-1)}, \frac{\|\mathbf x^{k+1} - \mathbf x^{k} \|_2}{\| \mathbf x^{k} \|_2})$, $k \in \N$, with $\rho^{(0)} = \sqrt{\max(\alpha_{\text{max}} \alpha_{\text{min}}, \beta_{\text{max}} \beta_{\text{min}})}+\epsilon$,}
where $\alpha_{\text{max}}$~and $\alpha_{\text{min}}$ (resp. $\beta_{\text{max}}$ and $\beta_{\text{min}}$) are maximum and minimum eigenvalues of $\mathbf{A}$~(resp.~$\mathbf{B}$), and $\epsilon$ is a small positive constant (by default the desired precision). Recall that $\mathbf A$ and $\mathbf B$ are positive semi-definite.

Now it remains to show how to efficiently solve the linear sub-problems \eqref{eq:ADI}. The first sub-problem has an analytic solution given in the matrix form by
\[
\mathbf X^{(k+1/2)} =  \mathbf V \big[\mathbf G \odot \t{\mathbf U} (\mathbf Y - \mathbf B \mathbf X^{(k)} + \rho \mathbf X^{(k)} ) \mathbf V\big] \t{\mathbf U}\,,
\]
where $\mathbf G$ depends on the eigenvalues of $\mathbf A_1$ and $\mathbf A_2$, and $\odot$ denotes the Hadamard (element-wise) product. This solution is computable in $\mathcal{O}(K^3)$ operations. The second sub-problem in \eqref{eq:ADI} involves a two-level Toeplitz matrix as its left-hand side and can be solved iteratively via preconditioned conjugate gradient, with a single-step complexity of $\O(K^2 \log(K))$. See Appendix~\ref{app:F} for details.

In summary, we devised a doubly iterative algorithm to solve inverse problems in the context of the separable-plus-stationary model. The outer iterative scheme requires solution of two linear systems, one solvable in $\O(K^3)$ iterations, the other in $\O(\eta_{pcg}K^2 \log(K))$, where $\eta_{pcg}$ is the number of the iterations of the inner scheme. In Section \ref{sec:simulation}, we demonstrate empirically that $\eta_{pcg}$ does not increase with increasing $K$, and hence the overall complexity of the algorithm is $\O(\eta_{adi}K^3)$, where $\eta_{adi}$ is the number of outer iterations. As demonstrated again in Section \ref{sec:simulation}, $\eta_{adi}$ also does not depend on $K$, leading to an overall complexity $\O(K^3)$. Hence we have a tractable inversion algorithm for the separable-plus-stationary model.

Note that stationarity of $\mathbf B$ is used at two instances: in the top right-hand side of \eqref{eq:ADI}, $\mathbf B$ needs to be applied fast, and $(\mathbf B + \rho \mathbf I) \mathbf x = \mathbf y$ needs to be solved fast. Both of these are easy if for example $\mathbf B$ is diagonal. Hence we also have inversion algorithm when a separable covariance is observed under heteroscedastic noise.

\section{Asymptotics under Discrete Noisy Measurements}\label{sec:asymptotics}

While Theorem \ref{thm:asymptotics} establishes the asymptotic behavior of our estimators under complete observations, in practice one only observes discrete and potentially noisy samples, which are the topic of this section.
Let $[0,1]^2 = \bigcup_{i=1}^{K} \bigcup_{j=1}^{K} I_{i,j}^K$, where $I_{i,j}^K$ is a Cartesian product of two sub-intervals of $[0,1]$, $I_{i,j}^K \cap I_{i',j'}^K = \emptyset$ for $(i,j)\neq(i',j')$, and $|I_{i,j}^K|=K^{-2}$ for all $i,j=1,\ldots,K$. 
The observations are assumed to be of the form 
\begin{equation}\label{eq:discrete_noisy_observations}\widetilde{\mathbf X}_n^K[i,j] = \mathbf X_n^K[i,j] + \mathbf{E}_n^K[i,j] \,, \qquad i=1,\ldots,K \,, \; j=1,\ldots,K \,,
\end{equation}
where the matrices $\mathbf X_1,\ldots,\mathbf X_N \in \R^{K\times K}$ are discretely measured versions of the fully observed data $X_1,\ldots,X_N \in \mathcal{L}^2[0,1]^2$, and $\mathbf{E}_n^K$ are measurement errors. 
We will consider two types of sampling schemes, which relate the fully observed data $X_1,\ldots,X_N \in \mathcal{L}^2[0,1]^2$ to their discrete versions $\mathbf X_1,\ldots,\mathbf X_N \in \R^{K\times K}$:
\begin{description}
\item[(S1)] $X_n$, $n=1,\ldots,N$, are observed pointwise on a grid, i.e.~there exist $t_1^K,\ldots,t_{K}^K \in [0,1]$ and $s_1^K,\ldots,s_{K_2}^K \in [0,1]$ such that $(t_i^K,s_j^K) \in I_{i,j}^K$
\[
\mathbf X_n^K[i,j] = X_n(t_i^K,s_j^K) \,, \qquad i=1,\ldots,K \,, \; j=1,\ldots,K \,.
\]
Note that to make such point evaluations of $X$ meaningful, we have to assume that realizations of $X$ are continuous (cf. \cite{hsing2015}).
\item[(S2)] The average value of $X_n$ on the pixel $I_{i,j}^K$ is observed for every pixel, i.e.
\[
\mathbf X_n^K[i,j] = \frac{1}{|I_{i,j}^K|} \int_{I_{i,j}^K} X_n(t,s) d t d s \,, \qquad i=1,\ldots,K \,, \; j=1,\ldots,K \,.
\]
\end{description}

As for the measurement error arrays $\big( \mathbf E_n^K[i,j] \big)_{i,j=1}^K$, these are assumed to be  i.i.d. (with respect to $n$) and uncorrelated with $\mathbf X_n$, satisfying the following 4-th order moment conditions for $i,j,k,l,i',j',k',l'=1,\ldots,K$ and $n=1,\ldots,N$:
\[
\begin{split}
\E\big(\mathbf{E}_n^K[i,j]\big) &= 0 \,,  \\
\exists\, \sigma^2 < \infty : \E\big(\mathbf{E}_n^K[i,j] \mathbf{E}_n^K[k,l]\big) &\leq \sigma^2 \mathds{1}_{[i=k,j=l]} \,, \\
\E\big( \mathbf{E}_n^K[i,j] \mathbf{E}_n^K[k,l] \mathbf{X}_n^K[i',j'] \mathbf{X}_n^K[k',l'] \big) &= \E\big( \mathbf{E}_n^K[i,j] \mathbf{E}_n^K[k,l] \big) \, \E \big( \mathbf{X}_n^K[i',j'] \mathbf{X}_n^K[k',l'] \big)  \,.
\end{split}
\]
The previous inequality allows for heteroscedasticity of the noise process: we allow the variance to change with location, but assume it is bounded over the domain by an unknown constant $\sigma^2$. Note that under the sampling scheme (S1) and homoscedasticity (constraining the inequality to equality), equation \eqref{eq:discrete_noisy_observations} corresponds to the commonly adopted errors-in-measurements model \cite{yao2005,zhang2016}.

Let us denote $X^K(t,s) = \sum_{i=1}^{K} \sum_{j=1}^{K} \mathbf{X}^K[i,j] \mathds{1}_{[(t,s) \in I_{i,j}^K]}$, i.e.~$X^K$ is the piecewise constant continuation of $\mathbf X^K$. One can readily verify that pointwise sampling (scheme S1) corresponds to pointwise evaluations of the covariance, i.e.~$\var(X^K) = C^K$, where $C^K$ has kernel
\[
c^K(t,s,t',s') =  \sum_{i,j,k,l=1}^K c(t_i,s_j,t_k,s_l) \mathds{1}_{[(t,s) \in I_{i,j}^K]} \mathds{1}_{[(t',s') \in I_{k,l}^K]} \,,
\]
while pixel-wise sampling (scheme S2) corresponds in turn to pixelization of the covariance. Namely, if we denote $g_{i,j}^K(t,s) = K \mathds{1}_{[(t,s) \in I_{i,j}^K]}$ then we have $\var(X^K) = C^K$ with
\begin{equation}\label{eq:g_functions}
X^K = \sum_{i=1}^K \sum_{j=1}^K \langle X, g_{i,j}^K \rangle g_{i,j}^K
\,, \qquad 
C^K = \sum_{i,j,k,l=1}^K \langle C , g_{i,j}^K \otimes g_{k,l}^K \rangle g_{i,j}^K \otimes g_{k,l}^K 
\end{equation}
In the same spirit, $C^K$ is the piecewise constant continuation of $\mathbf C^K = \E ( \mathbf X^K \otimes \mathbf X^K)$.

If we constrain ourselves to the noiseless multivariate setting and consider the discrete version of the covariance to be the ground truth, it is straightforward to obtain the multivariate version of Theorem \ref{thm:asymptotics}, regardless of the sampling scheme. When both $N$ and $K$ diverge, Theorem \ref{thm:asymptotics} does not apply, but we can still obtain convergence rates. To this aim, we first ought to clarify how bandedness of $B$, $B^K$ and $\mathbf B^K$ are related. It can be seen that if $B$ is banded by $\delta$, then $\mathbf B^K$ is banded by \mbox{$d_K = \lceil \delta K \rceil + 1$}, while $B^K$ is banded by $\delta_K = d_K/K$, which decreases monotonically down to $\delta$ for $K \to \infty$. In the following theorem, $\widehat{A}_1^K$ and $\widehat{A}_2^K$ denote piecewise constant continuations of \mbox{$\widehat{\mathbf A}_1^K = \Tr{1}^{d_K}(\widehat{\mathbf{C}}_N^K)$} and $\widehat{\mathbf A}_2^K = \Tr{2}^{d_K}(\widehat{\mathbf{C}}_N^K)/ \Tr{}^{d_K}(\widehat{\mathbf{C}}_N^K)$, where $\widehat{\mathbf{C}}_N^K = \frac{1}{N} \sum_{n=1}^N \widetilde{\mathbf X}_n^K \otimes \widetilde{\mathbf X}_n^K$ is the empirical covariance based on the observed (noisy) data \eqref{eq:discrete_noisy_observations}. 

\begin{theorem}\label{thm:rates}
Let $X_1, \ldots, X_N$ be i.i.d. copies of $X \in \mathcal{L}^2[0,1]^2$, which has (w.l.o.g.~mean zero and) covariance given by \eqref{eq:model}, where the the separable part $A := A_1 \ct A_2$ has kernel $a(t,s,t',s')$, which is Lipschitz continuous on $[0,1]^4$ with Lipshitz constant $L>0$. Let $\E \| X \|^4 < \infty$ and $\delta \in [0,1)$ be such that $B$ from \eqref{eq:model} is banded by $\delta$ and $\Tr{}^\delta(A) \neq 0$. Let the samples come from \eqref{eq:discrete_noisy_observations} via measurement scheme (S1) or (S2) with $\var(\mathbf{E}_n^K[i,j]) \leq \sigma^2 = \mathcal{O}(\sqrt{K})$. Then we have
\begin{equation}\label{eq:rates}
\verti{\widehat{A}_1^K \ct \widehat{A}_2^K - A_1 \ct A_2}_2^2  = \mathcal{O}_P(N^{-1}) + 2 K^{-2} L^2,
\end{equation}
where the $\mathcal{O}_P(N^{-1})$ term is uniform in $K$, for all $K \geq K_0$ for a certain $K_0 \in \N$. Furthermore, if $\widehat{A}_1^K = \sum_{j \in \N} \widehat{\lambda}^K_j \widehat{e}_j^K \otimes \widehat{e}_j^K$, $\widehat{A}_2^K = \sum_{j \in \N} \widehat{\rho}^K_j \widehat{f}_j^K \otimes \widehat{f}_j^K$, $A_2 = \sum_{j \in \N} \lambda_j e_j \otimes e_j$, and $A_2 = \sum_{j \in \N} \rho_j f_j \otimes f_j$ are eigendecompositions, then $|\widehat{\lambda}_i^K \widehat{\rho}_j^K - \lambda_i \rho_j|^2$ follows the rate given in \eqref{eq:rates}, and if the eigensubspace associated with $e_j$ is one-dimensional, then also $\| \widehat{e}_j^K - \sign(\langle \widehat{e}_j^K, e_j \rangle)e_j\|_2^2$ follows the rate given in \eqref{eq:rates}.
\end{theorem}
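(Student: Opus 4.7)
The plan is to split the squared error into a discretization bias and a stochastic fluctuation, using the shift $d_K \geq 1$ to decouple both the banded perturbation $B$ and the measurement noise from the shifted partial traces. Let $\mathbf{C}^K = \E(\mathbf{X}_n^K \otimes \mathbf{X}_n^K)$ denote the noise-free discretized covariance, so that $\E(\widetilde{\mathbf{X}}_n^K \otimes \widetilde{\mathbf{X}}_n^K) = \mathbf{C}^K + \mathbf{N}^K$, where $\mathbf{N}^K$ is supported on $\{(i,j,k,l): i=k,\,j=l\}$ by the uncorrelated-noise assumption. Because $d_K \geq 1$, the sums defining $\Tr{\ell}^{d_K}$ and $\Tr{}^{d_K}$ bypass this diagonal, giving $\Tr{\ell}^{d_K}(\mathbf{N}^K) = 0$ and hence $\E\,\Tr{\ell}^{d_K}(\widehat{\mathbf{C}}_N^K) = \Tr{\ell}^{d_K}(\mathbf{C}^K)$. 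The discrete analogue of Lemma~\ref{lem:banded_SPT} further annihilates the banded component, so $\Tr{\ell}^{d_K}(\mathbf{C}^K) = \Tr{\ell}^{d_K}(\mathbf{A}^K)$, where $\mathbf{A}^K$ is the discretization of $A := A_1 \ct A_2$ under the chosen sampling scheme.

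For the bias, I would exploit that the piecewise constant continuation $A^K$ of $\mathbf{A}^K$ is itself separable. Under either (S1) or (S2), the Lipschitz assumption gives the pointwise bound $|a^K - a| \leq \sqrt{2}\,L\,K^{-1}$ (Lipschitz in each coordinate, pixels of sidelength $K^{-1}$), hence $\verti{A^K - A}_2^2 \leq 2 L^2 K^{-2}$. Using that the continuous shifted partial traces agree with their discrete counterparts when evaluated on piecewise constant continuations, together with the estimating identity $\Tr{}^{\delta_K}(A^K)\,A_1^K \ct A_2^K = \Tr{1}^{\delta_K}(A^K) \ct \Tr{2}^{\delta_K}(A^K)$ from Proposition~\ref{prop:PT_properties}, this transfers to $\verti{A_1^K \ct A_2^K - A}_2^2 \leq 2 L^2 K^{-2}$, the deterministic term in (\ref{eq:rates}).

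For the stochastic part, I would bound $\E \verti{\widehat{\mathbf{C}}_N^K - \E\widehat{\mathbf{C}}_N^K}_2^2 = \mathcal{O}(N^{-1})$ uniformly in $K$: the signal contribution via $\E\|X^K\|^4 \lesssim \E\|X\|^4 < \infty$ (by Parseval for (S2) and by path continuity for (S1)), and the noise contribution via the observation that the shift $d_K \geq 1$ admits only a single off-diagonal slice of noise-noise cross-terms to survive, an effect calibrated by the scaling $\sigma^2 = \mathcal{O}(\sqrt K)$. Because $\Tr{}^{d_K}(\widehat{\mathbf{C}}_N^K)$ concentrates around $\Tr{}^\delta(A_1)\Tr{}^\delta(A_2) \neq 0$, a delta-method linearization of the ratio $\widehat{A}_1^K \ct \widehat{A}_2^K = \Tr{1}^{d_K}(\widehat{\mathbf{C}}_N^K) \ct \Tr{2}^{d_K}(\widehat{\mathbf{C}}_N^K) / \Tr{}^{d_K}(\widehat{\mathbf{C}}_N^K)$ yields $\verti{\widehat{A}_1^K \ct \widehat{A}_2^K - A_1^K \ct A_2^K}_2^2 = \mathcal{O}_P(N^{-1})$, uniformly for $K \geq K_0$. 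Combining with the bias estimate via the triangle inequality proves (\ref{eq:rates}).

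For the eigen-statements, the separable structure of both $\widehat{A}_1^K \ct \widehat{A}_2^K$ and $A_1 \ct A_2$ makes their eigenvalues tensor products of the factor eigenvalues, so Weyl's inequality gives $|\widehat{\lambda}_i^K \widehat{\rho}_j^K - \lambda_i \rho_j|^2 \leq \verti{\widehat{A}_1^K \ct \widehat{A}_2^K - A_1 \ct A_2}_\infty^2 \leq \verti{\cdot}_2^2$, inheriting (\ref{eq:rates}); the Davis–Kahan $\sin\theta$ theorem yields the corresponding eigenfunction rate on one-dimensional eigenspaces (eigenfunctions of a separable operator are themselves tensor products $e_i \otimes f_j$). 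The main obstacle I anticipate is the uniformity in $K$ of the $\mathcal{O}_P(N^{-1})$ term: one must verify that the noise scaling $\sigma^2 = \mathcal{O}(\sqrt K)$ exactly matches the number of noise-noise cross-terms surviving the shift, secure a uniform lower bound on the denominator $\Tr{}^{d_K}(\widehat{\mathbf{C}}_N^K)$ (hence the threshold $K_0$), and, under (S1), control pointwise fourth moments of $X$, which do not follow directly from $\E\|X\|^4 < \infty$ and will likely require a separate path-regularity argument.
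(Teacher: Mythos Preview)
Your overall architecture (bias--variance split, Lipschitz control of the bias, linearization of the ratio, then perturbation bounds for the eigenstructure) coincides with the paper's. The bias computation and the eigen-transfer are essentially right, modulo the scale-indeterminacy of the individual factors $A_1,A_2$: you cannot apply Davis--Kahan to $\widehat A_1^K$ versus $A_1$ directly, because $\widehat A_1^K$ need not converge to $A_1$ (only the product $\widehat A_1^K \ct \widehat A_2^K$ converges). One must work on the tensor level and then extract the marginal eigenvector rate, which is what the paper does.

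The substantive gap is in your variance control. You propose to establish
\[
\E\,\verti{\widehat{\mathbf C}_N^K - \E\,\widehat{\mathbf C}_N^K}_2^2 = \mathcal{O}(N^{-1}) \quad\text{uniformly in } K,
\]
and then push this through the bounded operators $\Tr{\ell}^{d_K}$. This fails under the allowed noise scaling $\sigma^2 = \mathcal O(\sqrt K)$. The noise--noise contribution to the full Hilbert--Schmidt fluctuation is, in the continuous normalization,
\[
\frac{1}{N K^4}\sum_{i,j,k,l}\var\big(\mathbf E^K[i,j]\,\mathbf E^K[k,l]\big)\;\asymp\;\frac{\sigma^4}{N}\;=\;\mathcal O\!\left(\frac{K}{N}\right),
\]
and the mixed signal--noise terms contribute $\mathcal O(\sigma^2/N)=\mathcal O(\sqrt K/N)$. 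Neither is uniform in $K$. Your remark that ``the shift $d_K\ge 1$ admits only a single off-diagonal slice of noise--noise cross-terms to survive'' is the right intuition, but it has no bearing on $\verti{\widehat{\mathbf C}_N^K - \E\,\widehat{\mathbf C}_N^K}_2$; the shift only enters \emph{after} applying $\Tr{\ell}^{d_K}$.

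What the paper does instead is bound each of $\verti{\Tr{1}^{d_K}(\widehat C_N^K - C^K)}_2$, $\verti{\Tr{2}^{d_K}(\widehat C_N^K - C^K)}_2$ and $|\Tr{}^{d_K}(\widehat C_N^K - C^K)|$ \emph{directly}, expanding the noisy observations and computing second moments term by term. The shifted partial trace sums over one of the $K$-directions, so in the noise--noise block only the $K$ ``diagonal'' pairs (rather than $K^2$) survive, producing the extra factor $K^{-1}$:
\[
\E\,\verti{\Tr{1}^{d_K}(\widehat C_N^K - C^K)}_2^2 \;\lesssim\; \frac{1}{N}\Big(S_1 + K^{-1}\sigma^2 S_2 + K^{-1}\sigma^4\Big),
\]
which is exactly what makes the condition $\sigma^2=\mathcal O(\sqrt K)$ sharp. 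To repair your argument you must carry out this direct computation on the shifted partial traces themselves; the operator bound $\verti{\Tr{1}^\delta(F)}\le\verti{F}$ discards precisely the averaging that the theorem relies on.
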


\begin{remark}
Since the roles of $A_1$ and $A_2$ are symmetric, one naturally obtains the rates for $\widehat{f}_j^K$ as well. Secondly, under slightly stricter assumptions, one can show that the rate in \eqref{eq:rates} is valid also in the uniform norm, see Theorem \ref{thm:rates_uniform} in Appendix~\ref{app:G5}. Finally, a version of Theorem \ref{thm:rates} with the bandwidth chosen adaptively as in Section \ref{sec:delta} is also valid, see Theorem \ref{thm:rates_adaptive_discrete} in Appendix~\ref{app:G6}.
\end{remark}

The proofs are postponed to Appendix~\ref{app:G}, but we make several comments here.
Firstly, there is a concentration in $K$ due to shifted partial tracing (recall Figure \ref{fig:PT_visualized}), hence the variance of the errors is allowed to grow with $K$ as stated in Theorem \ref{thm:rates}. 
Secondly, the estimators $\widehat{A}_1^K$ and $\widehat{A}_2^K$ are only defined if $\Tr{}^{d_K}(\widehat{\mathbf{C}}_N^K) \neq 0$. Since $\widehat{\mathbf{C}}_N^K \to \mathbf{C}^K$ for $N \to \infty$ entry-wise apart from the diagonal, we have $\Tr{}^{d_K}(\widehat{\mathbf{C}}_N^K) \to \Tr{}^{d_K}(\mathbf{C}^K)$, so we require $\Tr{}^{d_K}(\mathbf{C}^K) \neq 0$. Due to continuity of the kernel $c$ and the fact that $d_K \to \delta$ for $K \to \infty$, the assumption $\Tr{}^\delta(A) \neq 0$ implies $\Tr{}^{d_K}(\mathbf{C}^K) \neq 0$ for a sufficiently large $K$. This is the only reason why we require $K$ larger than a certain $K_0$ in order for the $\mathcal{O}_P(N^{-1})$ term to be uniform in $K$.
Finally, the Lipschitz continuity assumption allows us to bound the bias while the fourth-order moment condition on data allows us to bound the variance. The bulk of the proof has to do with controlling the variance, and doing so uniformly in the grid size. Also, the Lipschitz continuity assumption can be weakened. For example, continuity almost everywhere is sufficient for the bias to converge to zero, though without an explicit rate in~$K$.

In case the banded part of the covariance is also of interest, the same rates can be achieved in the noiseless setting ($\sigma^2=0$) under smoothness assumptions on the banded part. Without the assumption of stationarity on $B$, i.e.~without Toeplitz \mbox{averaging, one has:}
\[
\begin{split}
\verti{\widehat{B}^K - B}_2 &\leq \verti{\widehat{B}^K - B^K}_2 + \verti{B^K - B}_2 \\
&\leq \verti{\widehat{C}_N^K - \widehat{A}_1^K \ct \widehat{A}_2^K - (C^K - A_1^K \ct A_2^K)}_2 + \verti{B^K - B}_2 \\
&\leq \verti{\widehat{C}_N^K - C^K}_2 + \verti{ \widehat{A}_1^K \ct \widehat{A}_2^K - A_1^K \ct A_2^K}_2 + \verti{B^K - B}_2
\end{split}
\]
where the separable term can be treated as before and $\verti{\widehat{C}_N^K - C^K}_2$ can be bounded similarly. When Toeplitz averaging is used, nothing essential changes in the noiseless case.

The noisy case ($\sigma^2>0$) is trickier however, because we cannot estimate the diagonal of $B$. In such a case, one would need to smooth the estimated symbol of $B$ as in \cite{yao2005}. We omit the details here. However, we note that full covariance smoothing is obviously not computationally tractable, hence any smoothing should either be applied on the level of data (pre-smoothing) or on the level of the estimated 2D parts of the covariance (post-smoothing). Nonetheless, as exemplified by the previous theorem, the mere presence of noise does not call for smoothing when the target of inference is the separable component.

\begin{remark}
In the noiseless case, the convergence rates in Theorem \ref{thm:rates} are immediately applicable to the special case of a separable model and standard (non-shifted) partial tracing, as used by \cite{aston2017}. In the noisy case, however, shifted partial tracing (with an arbitrarily small shift) is needed to remove the noise. Due to continuity, a small shift should have a small impact on the quality of the estimator. Hence it might be recommended to always use shifted partial tracing with the minimal possible shift instead of the standard (non-shifted) partial tracing.
\end{remark}

\section{Empirical Demonstration}\label{sec:empirical}

In this section, we demonstrate how our methodology can be used to estimate a covariance from surface data observed on a grid, and how it compares to the empirical covariance estimator and the separable model, estimated via partial tracing \cite{aston2017} or as the nearest Kronecker product \cite{genton2007}. We begin with simulated data in Section \ref{sec:simulation}, where we focus on weakly dependent contamination of separability, and then move on to real data in Section \ref{sec:mortality_rates}, where we find evidence for heteroscedastic white noise contamination.

\subsection{Simulation Study}\label{sec:simulation}

The data generation procedure is as follows. Firstly, we create covariances $\mathbf A_1, \mathbf A_2 \in \R^{K \times K}$ and draw $\mathbf Y_1, \ldots, \mathbf Y_N$ independently from the matrix-variate Gaussian distribution with mean zero and covariance $\mathbf A = \mathbf A_1 \ct \mathbf A_2$. Secondly, we draw enough $\mathcal{N}(0,1)$ entries (independent of everything), arrange them on a grid, and perform space-time averaging using a window of size $d \in \{1,3,\ldots,19\}$ to obtain a sample $\mathbf W_n$ for every $n=1,\ldots,N$. This sample is drawn from a distribution with mean zero and covariance $\mathbf B \in \R^{K \times K \times K \times K}$, which is by construction stationary and banded by $d$. We set the sample size $N=300$ and the grid size $K=100$, so the discrete bandwidth $d$ approximately corresponds to the continuous bandwidth $\delta$ in percentages. Finally, we form our data set $\mathbf X_1, \ldots, \mathbf X_N \in \R^{K \times K}$ as
$\mathbf X_{n} = \sqrt{\tau} \mathbf Y_n + \mathbf W_n$, $n=1,\ldots,N$,
where $\tau \geq 1$. Thus $\mathbf X_1, \ldots, \mathbf X_N \in \R^{K \times K}$ are drawn from a zero-mean distribution with a separable-plus-banded covariance $\mathbf C = \tau \mathbf A_1 \ct \mathbf A_2 + \mathbf B$. Since $\mathbf A_1$, $\mathbf A_2$ and $\mathbf B$ are standardized to have norm one, $\tau$ can be understood as signal-to-noise ratio. The separable constituents $\mathbf A_1$ and $\mathbf A_2$ are chosen both as rank-7 covariances with linearly decaying eigenvalues and shifted Legendre polynomials as the eigenvectors, while $\mathbf B$ corresponds to the covariance of a spatio-temporal moving average process, as per the construction above. See Appendix~\ref{app:H} for a detailed description and for other simulation results in additional setups.

Note that our methodology based on shifted partial tracing first estimates the separable part of the model, and subsequently estimates $\mathbf B$ using the estimates for the separable part. Therefore the signal-to-noise ratio $\tau$ naturally influences difficulty of the estimation problem. The second parameter governing the difficulty of the estimation problem is the bandwidth $d$. However, the effect of $d$ is discontinuous: a small $d$ does not correspond to a nearly separable model; only $d=0$ formally leads to separable model with no contamination.

The following methods were used to estimate $\mathbf C$:
SPT-$d$ -- shifted partial tracing, the proposed methodology of Section \ref{sec:estimation}, provided with the true bandwidth $d$;
SPT-CV -- shifted partial tracing with $\delta$ chosen as in \eqref{eq:CV_empirical};
PT -- partial tracing \cite{aston2017}, an approach assuming separability;
NKP -- nearest Kronecker product \cite{genton2007}, another approach assuming separability;
ECE -- the standard empirical covariance estimator.
For several different settings, we calculate the relative estimation error $\| \mathbf C - \widehat{\mathbf C} \|_F/\| \mathbf C \|_F$,
where $\widehat{\mathbf C}$ is an estimator computed by one of the above-listed methods. The plots also show the \emph{bias} of a separable estimator, calculated as the best separable approximation to the true covariance $\mathbf C$ \cite{vanloan1993}.

\begin{figure}[!t]
   \advance\leftskip-0.3cm
   \begin{tabular}{ccc}
   \includegraphics[width=0.32\textwidth]{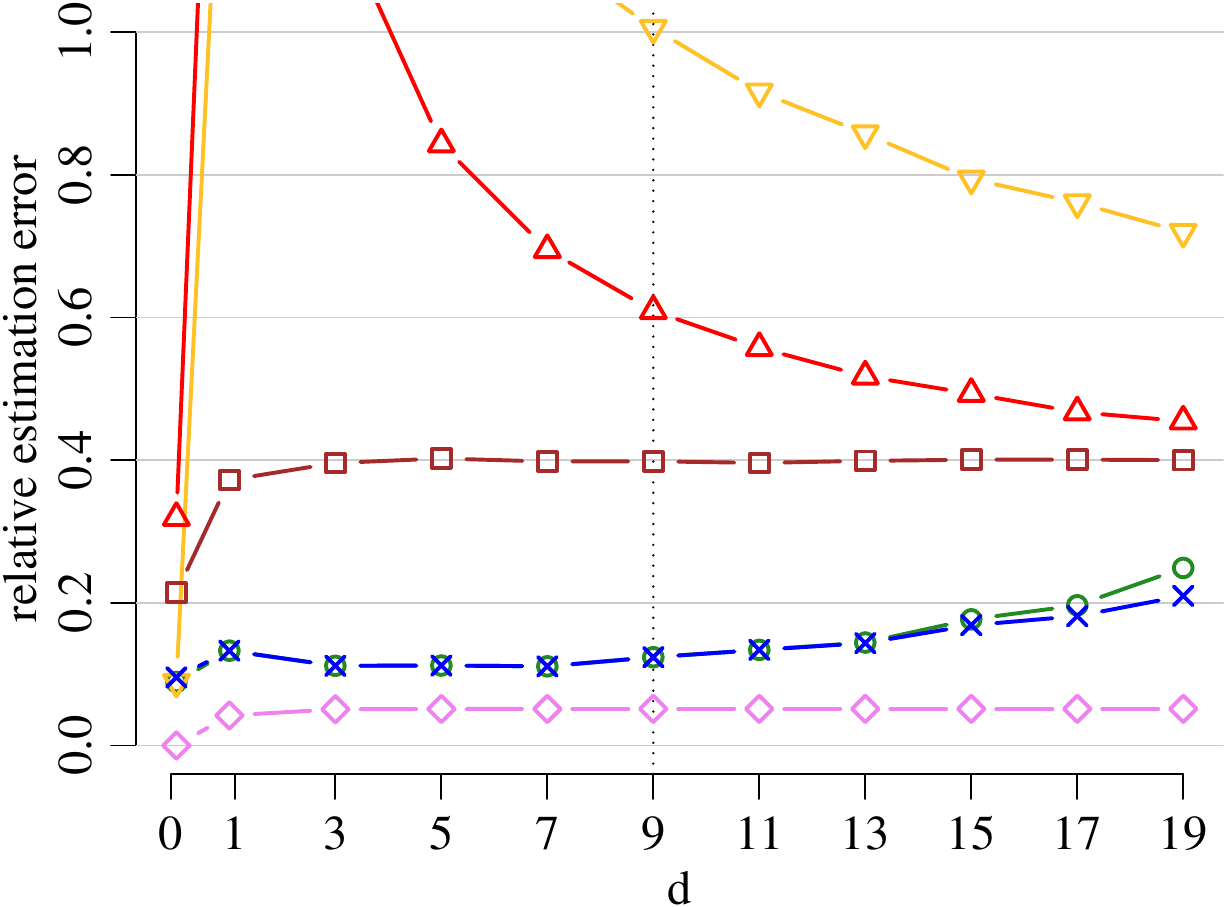} &
   \includegraphics[width=0.32\textwidth]{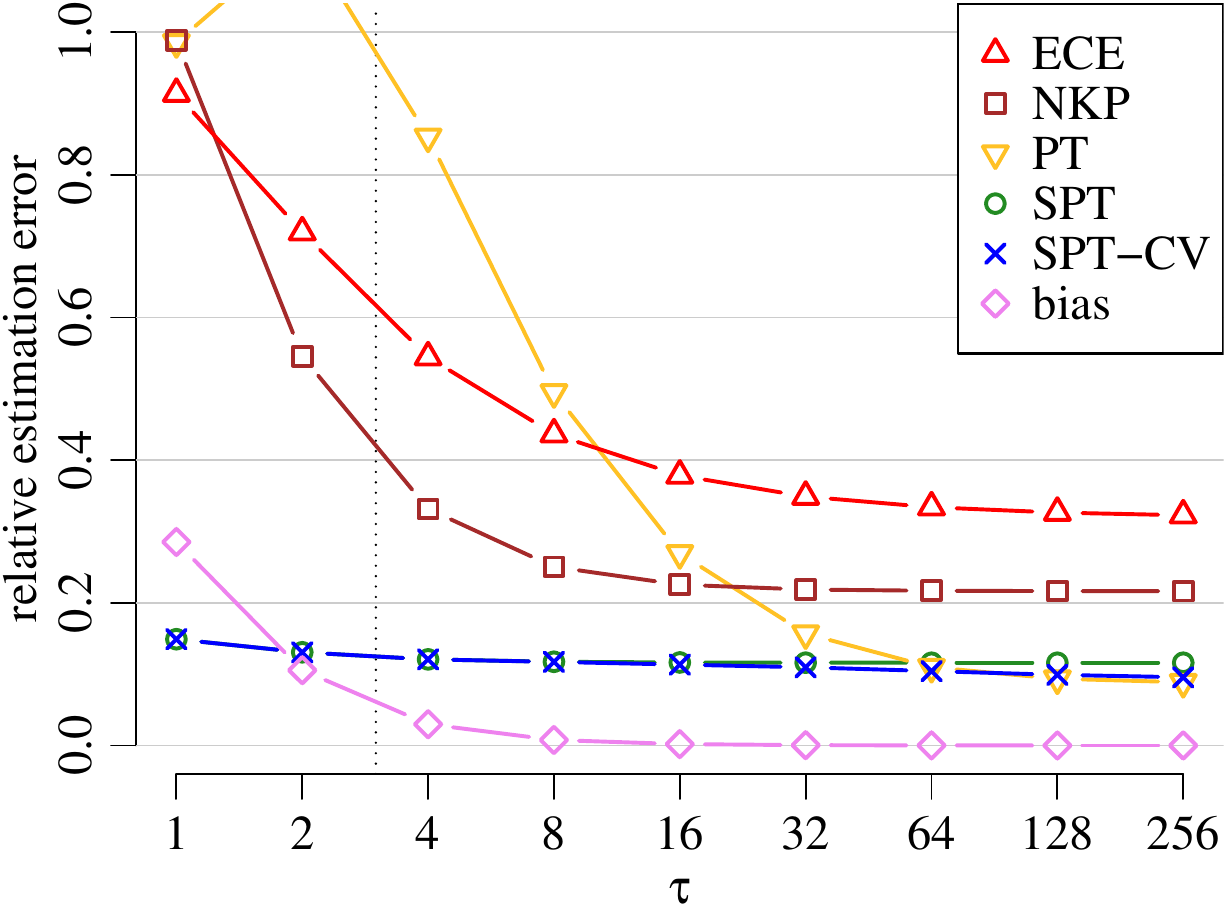} &
   \includegraphics[width=0.32\textwidth]{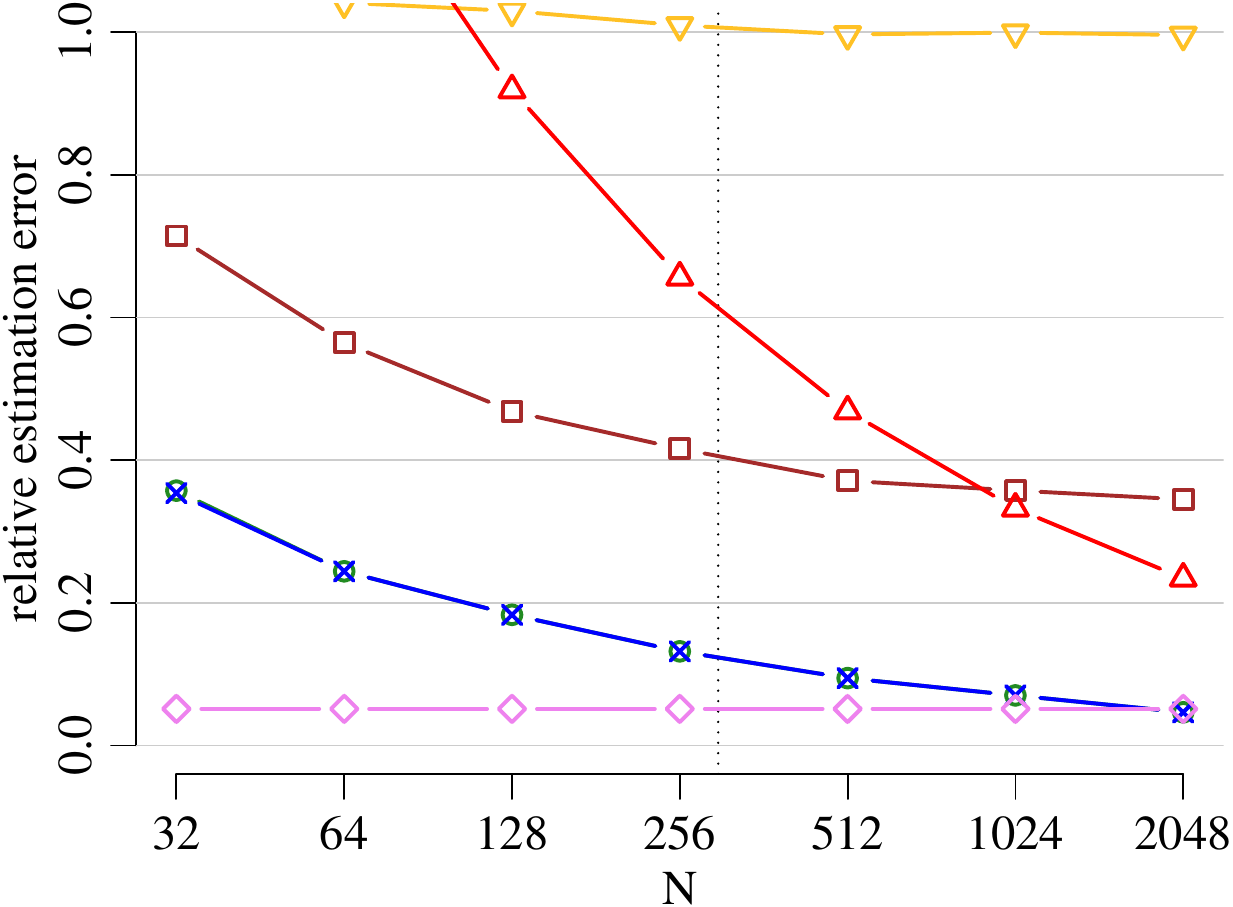} 
   \end{tabular}  
   \caption{Estimation errors for several competing methods with changing bandwidth $d$ (\emph{left}), signal to noise ratio $\tau$ (\emph{middle}), and sample size $N$ (\emph{right}). The vertical dotted lines show where every parameter is fixed for the remaining two plots (e.g. for the left plot, it is $\tau=3$ and $N=300$).}
    \label{fig:buf} 
\end{figure}

Figure \ref{fig:buf} depicts how the estimation error evolves when one of the three difficulty-governing parameters (bandwidth $d$, signal-to-noise ration $\tau$, and sample size $N$) varies, while the remaining two parameters are held fixed at any given plot (at $d=9$, $\tau=3$ or $N=300$). There are several remarks to be made about the results in Figures \ref{fig:buf}:
\begin{enumerate}
\item Shifted partial tracing outperforms both the separable model (estimated either by partial tracing or as the nearest Kronecker product) and the empirical covariance.
\item Bandwidth selection works well, leading to the same or even better performance than with known $\delta$ (see right end of the left and middle plots in Figure \ref{fig:buf}). This is because the banded part $B$ decays away from the diagonal, and sometimes choosing a smaller bandwidth than the true one can lead to a better bias-variance trade-off.
\item When the truth is separable (i.e. $d=0$) or nearly separable (i.e. $\tau$ large), partial tracing leads to the best results. In these cases, the bandwidth selection strategy correctly chooses a very small bandwidth, and hence the performance of SPT-CV matches the one of PT.
\item Note the extreme rise at the beginning of the error curves belonging to the empirical or the separable estimators in Figure \ref{fig:buf} (left). While $d=0$ corresponds to a separable model, $d=1$ is already quite non-separable. Even though the amount of non-separability (c.f.~the \emph{bias} curve) is rather low, it  is enough to substantially deteriorate performance of the separable estimators or the empirical covariance, while performance of the proposed methodology does not suffer too much.
\item The previous point is manifested again for large sample sizes $N$ (see Figure \ref{fig:buf}, right). While the amount of non-separability of $\mathbf C$ is still low and one would expect the performance of the separable estimators to be quite good, this is not the case. Altogether, we can say that presence of noise strikingly obstructs separable estimation.
\end{enumerate}

In the remainder of this section, we examine the functional nature of our problem, behavior of the ADI algorithm of Section \ref{sec:computational}, and the number of iterations needed by the algorithm to converge. We simulate data as described before in the Legendre case, but now we vary the grid size $K \in \{10(2j+1) ; j=1,\ldots,10 \}$, fix $\delta$ at 10 \% (i.e.~$d = K/10$), and we keep $\tau=3$ and $N=300$ for all the grid sizes.

\begin{figure}[!t]
   \centering
   \begin{tabular}{ccc}
   \includegraphics[width=0.35\textwidth]{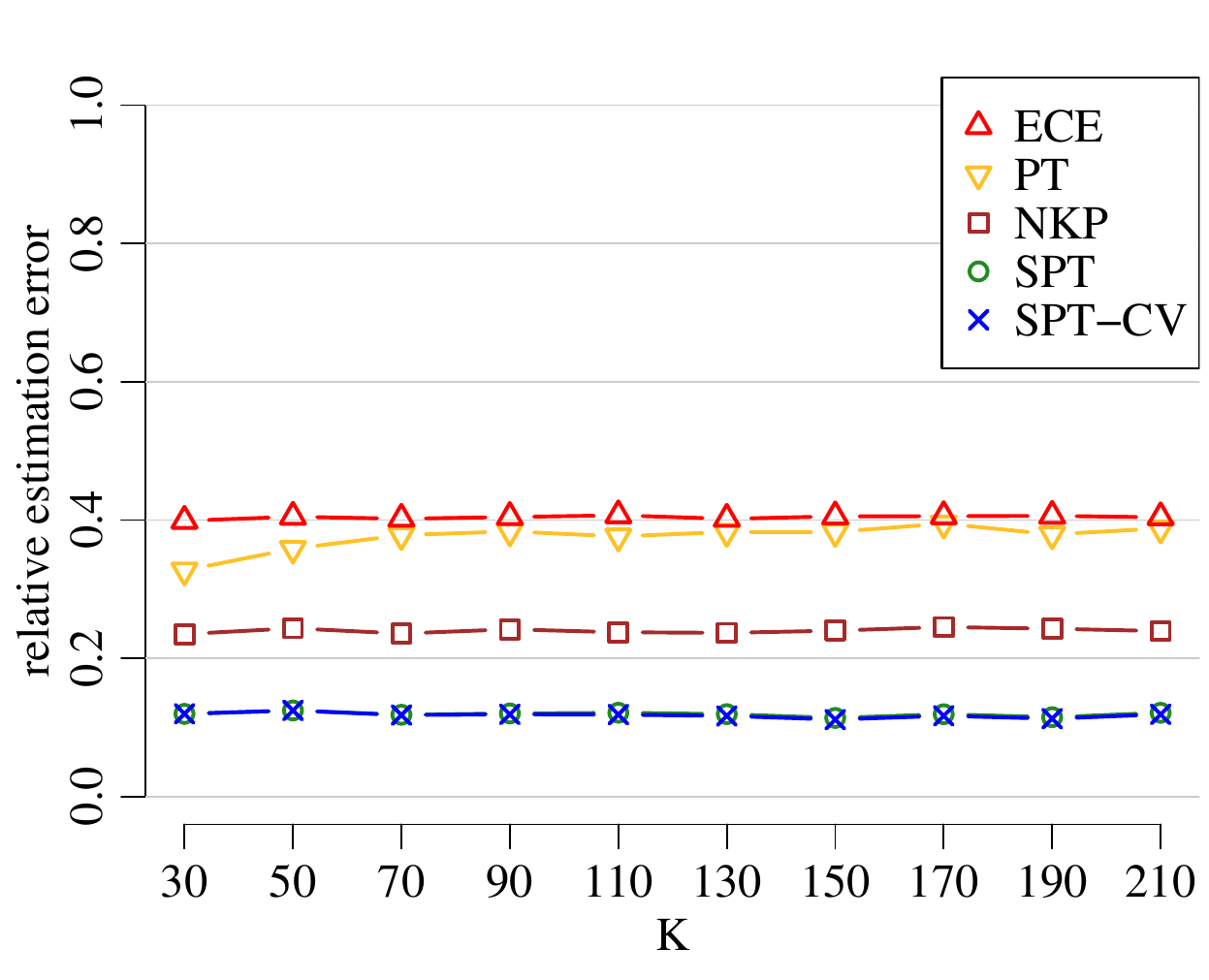} &
   \includegraphics[width=0.35\textwidth]{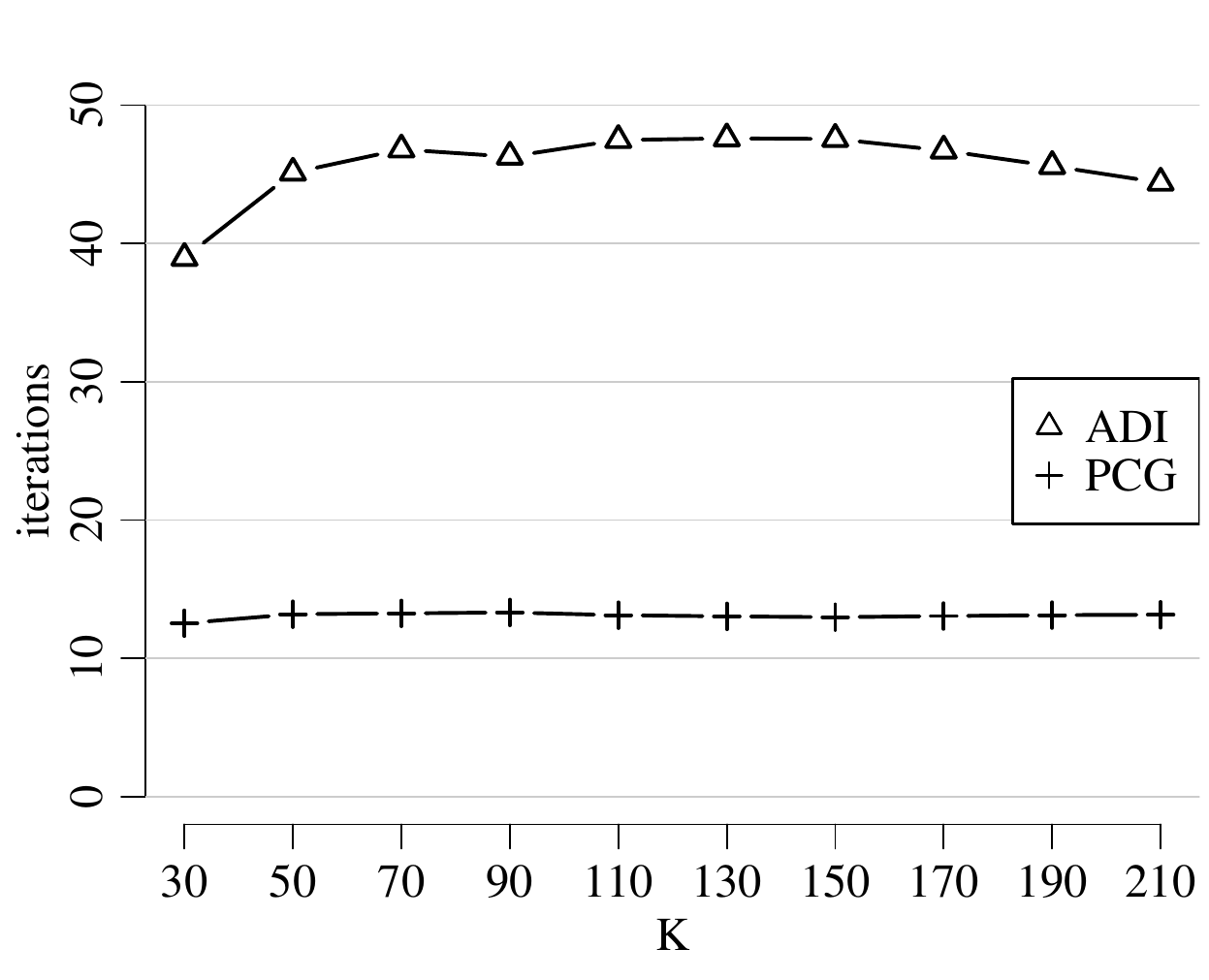}
   \end{tabular}  
   \caption{\emph{Left:} Estimation errors for several competing methods depending on the grid size $K$ with the bandwidth fixed at $d=K/10$. \emph{Right:} Number of iterations needed by the outer iteration scheme (ADI) and the inner iteration scheme (PCG) of the inversion algorithm of Section \ref{sec:inverse_alg}.}
    \label{fig:adi} 
\end{figure}

Let $\widehat{\mathbf C} = \widehat{\mathbf A}_1 \ct \widehat{\mathbf A}_2 + \widehat{\mathbf B}$ denote the estimator obtained by shifted partial tracing. $\widehat{\mathbf A}_1,\widehat{\mathbf A}_2$ and $\widehat{\mathbf B}$ are subsequently projected onto positive semi-definite matrices, as described in Appendix~\ref{app:E}. Also, a ridge regularization of order $10^{-5}$ is added to $\widehat{\mathbf C}$. This is not necessary, because $\widehat{\mathbf B}$ is positive definite, and thus the problem is well defined even without any ridge regularization. However, the performance of the ADI method heavily depends on the condition number of the system matrix, as is the case for any numerical method. Adding the ridge regularization ensures that the condition number stays roughly the same, regardless of $K$. Then, a random $\mathbf X \in \R^{K\times K}$ is generated, and we set $\mathbf Y= \widehat{\mathbf C} \mathbf X$. Subsequently, the ADI algorithm is called on the inverse problem $\widehat{\mathbf C} \mathbf X = \mathbf Y$ with $\widehat{\mathbf C}$ and $\mathbf Y$ given. The desired relative accuracy for the ADI scheme is set to $10^{-6}$. We do not report the relative reconstruction errors of $\mathbf X$, because these varied between $10^{-7}$ and $10^{-11}$ for every single run, leaving no doubt that the ADI scheme always converged to the truth with the desired precision. Instead, we report estimation errors and number of iterations needed by the ADI scheme in Figure \ref{fig:adi}.

\enlargethispage{2mm}
As suggested by our theoretical results, the relative estimation error does not depend on the grid size (see Figure \ref{fig:adi}, left). Additionally, both the number of outer iterations (ADI) and the number of inner iterations (PCG) does not seem to increase with the grid size (see Figure \ref{fig:adi}, right). This suggests super-linear convergence of the algorithm.

\subsection{Real Data}\label{sec:mortality_rates}

We analyze a data set $\mathbf X \in \R^{N \times K_1 \times K_2}$, where $\mathbf X[n,k_1,k_2]$ denotes the mortality rate for the $n$-th country, on the $k_1$-th calendar year and for subjects of age $k_2$. We consider the same set of 32 countries as \cite{chen2012,chen2017}, with $k_1$ ranging in the 50 year span 1964 -- 2014, and we too focus on the mortality rates of older individuals aged between $60 \leq k_2 < 100$. Hence $\mathbf X \in \R^{32 \times 50 \times 40}$. For a single country, we thus have a mortality rate surface of two arguments: the calendar year and the age of subjects in the population. This surface is observed discretely since both the calendar year and age are integers. Figure~\ref{fig:raw_mortalities} shows the raw mortality surfaces for two sample countries. The underlying continuous surfaces for different countries are assumed to be i.i.d. functional observations. The data were obtained from the Human Mortality Database \cite[www.mortality.org, downloaded on 12/4/2019]{wilmoth2007}.

\begin{figure}[!t]
\centering
   \begin{tabular}{cc}
   \includegraphics[width=0.47\textwidth]{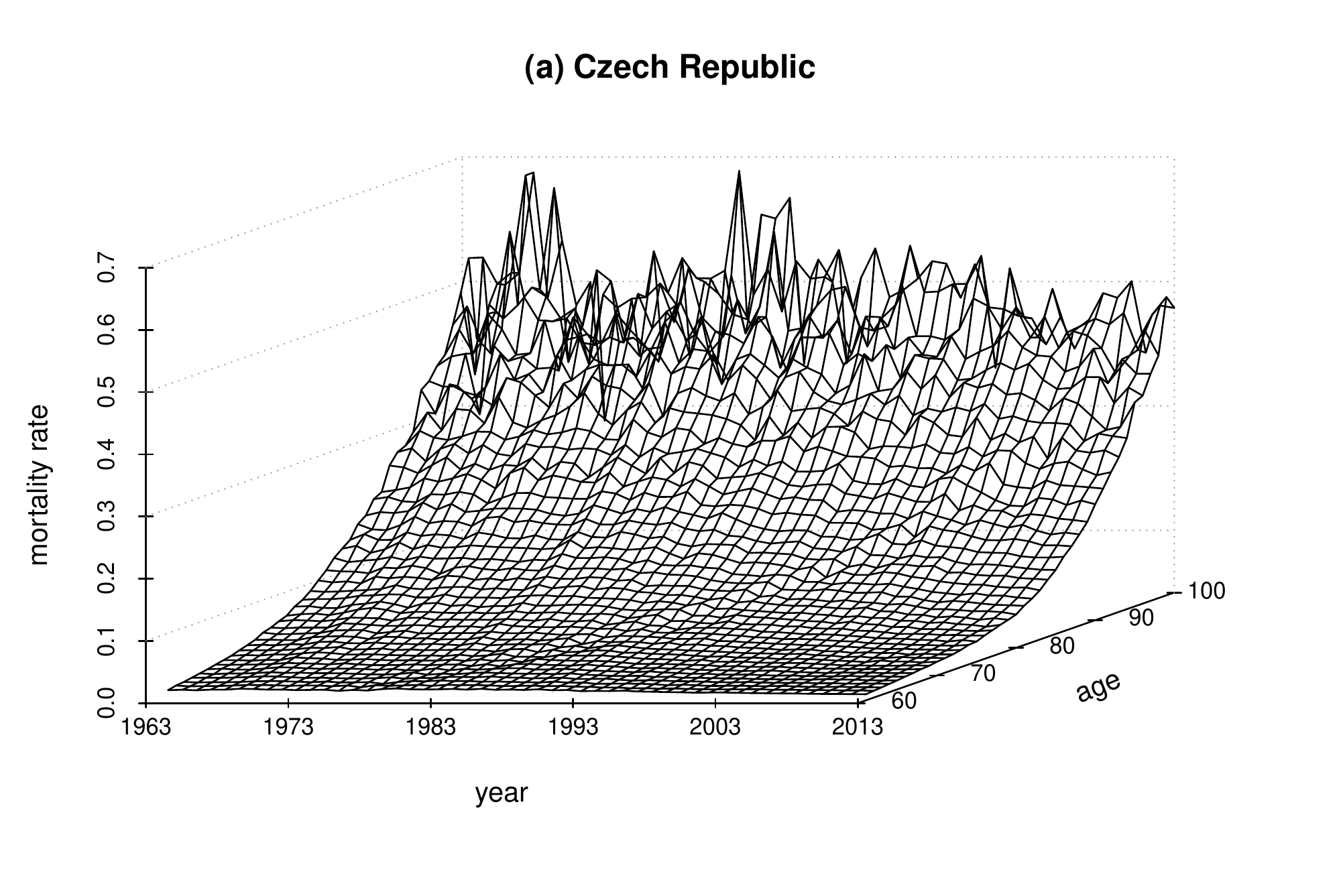} &
   \includegraphics[width=0.47\textwidth]{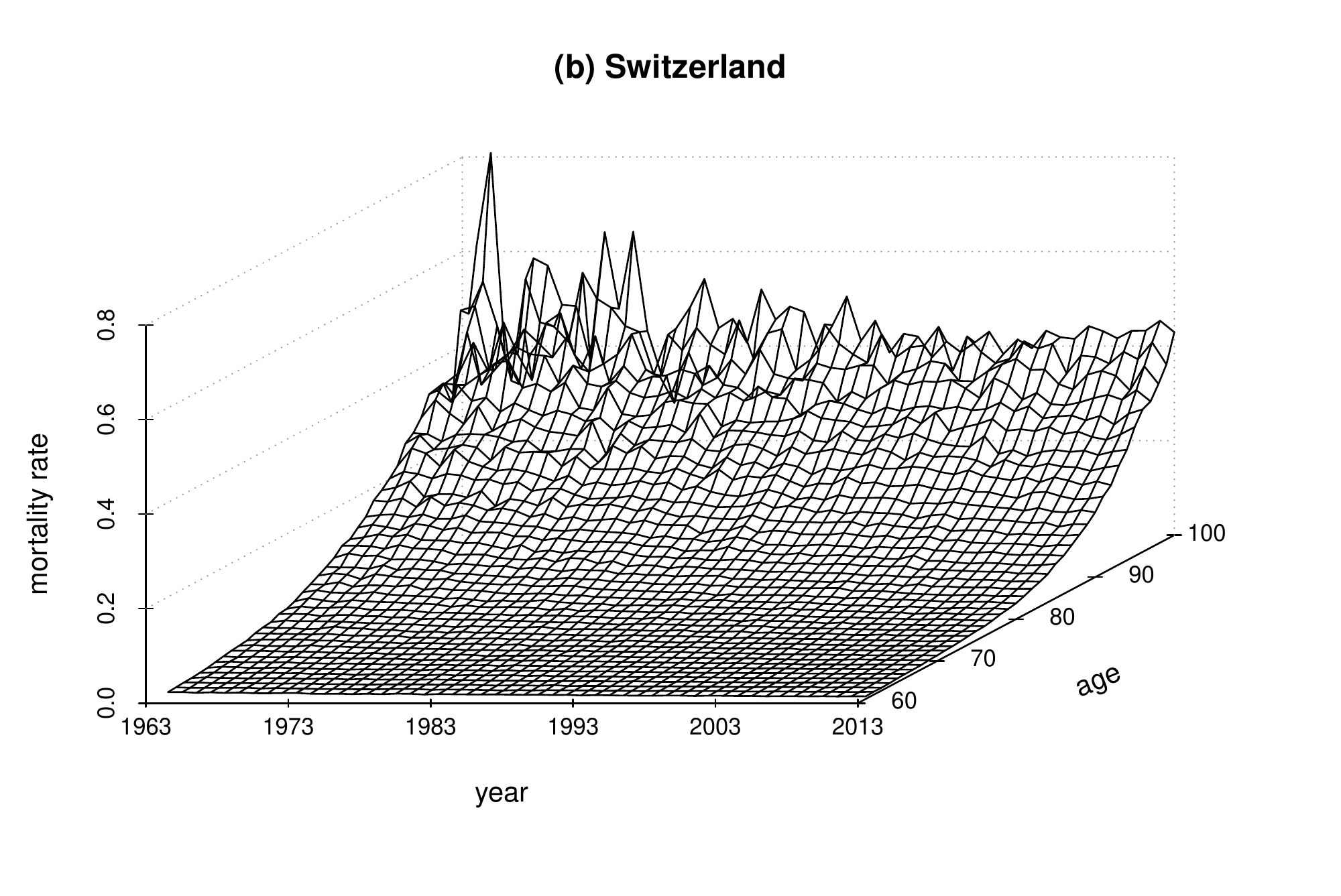}
   \end{tabular}  
   \caption{Raw mortality rate surfaces for the Czech Republic and Switzerland.}  
    \label{fig:raw_mortalities} 
\end{figure}

An in-depth analysis of mortality surfaces was provided in \cite{chen2012}. Mortality surfaces were also considered by the authors of \cite{chen2017}, who -- presumably motivated by \cite{aston2017} and aiming for computational efficiency -- calculated the so-called \textit{marginal kernels} $\Tr{1}(\widehat{\mathbf C}_N)$ and $\Tr{2}(\widehat{\mathbf C}_N)$, found the leading eigenfunctions of these marginal kernels, say $\{\widehat{\boldsymbol \phi}_i\}_{i=1}^I$ and $\{\widehat{\boldsymbol \psi}_j\}_{j=1}^J$, and used the tensor product approximation
$\widehat{\mathbf C}_N \approx \sum_{i=1}^I \sum_{j=1}^J \widehat{\gamma}_{ij} (\widehat{\boldsymbol \phi}_i \otimes \widehat{\boldsymbol \psi}_j) \otimes (\widehat{\boldsymbol \phi}_i \otimes \widehat{\boldsymbol \psi}_j)$,
where $\widehat{\gamma}_{ij} = \langle \widehat{\mathbf C}_N , (\widehat{\boldsymbol \phi}_i \otimes \widehat{\boldsymbol \psi}_j) \otimes (\widehat{\boldsymbol \phi}_i \otimes \widehat{\boldsymbol \psi}_j) \rangle$. Indeed, we highlight that using the marginal eigenfunctions as building blocks for a low-rank approximation of the empirical covariance can be meaningful even if the covariance $C$ is not separable \cite{lynch2018}.

\begin{figure}[!t]
   \centering
   \begin{tabular}{ccc}
   \raisebox{0.1cm}{\includegraphics[width=0.3\textwidth]{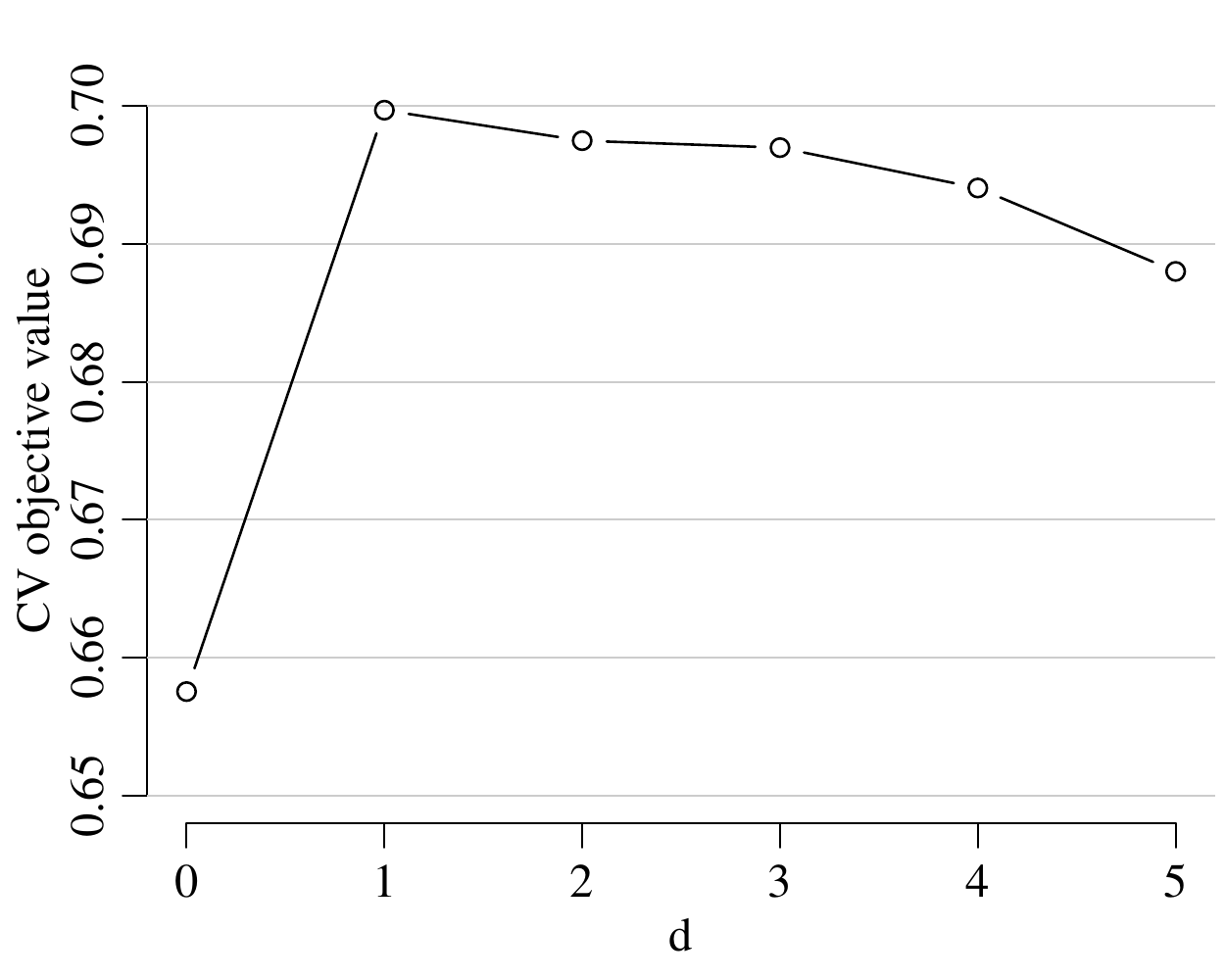}} & &
   \includegraphics[width=0.31\textwidth]{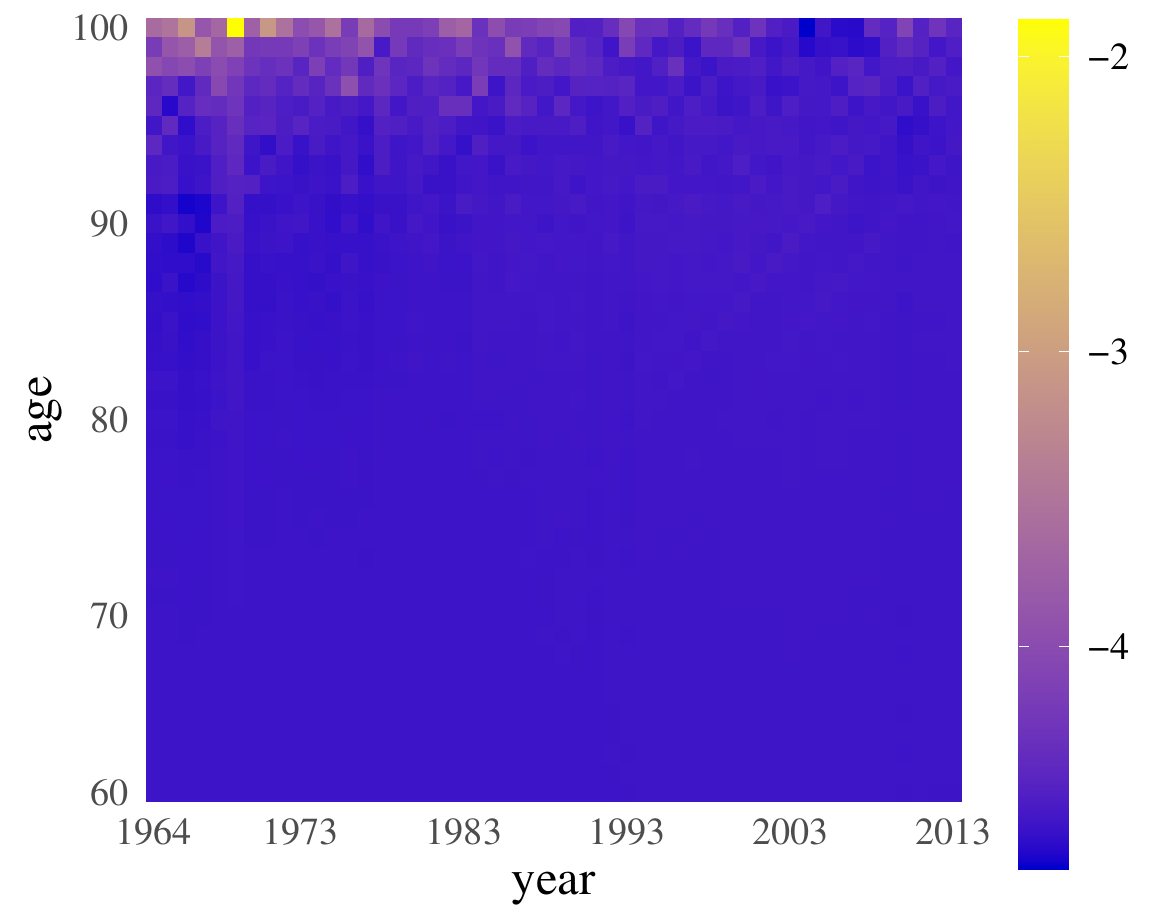}
   \end{tabular}   
   \caption{Cross-validation objective for the mortality data (left) and log-heatmap of the heteroscedastic white noise's variance (right).} 
    \label{fig:cv_B} 
\end{figure}

Compared to \cite{chen2012,chen2017}, we consider the mortality data with a slightly larger span of calendar years (the maximal span in which no data are missing). Our aim here is not to provide a novel analysis of the mortality dataset, but merely to illustrate the usefulness of shifted partial tracing.

Firstly, when investigating the sample curves in Figure \ref{fig:raw_mortalities}, it seems that the discrete observations of the mortality rate surfaces are observed with additional noise, which is likely heteroscedastic with variance increasing with the age of the subjects. This is presumably due to the fact that the size of the population of subjects of a given age decreases fast with increasing age. To probe whether the (most likely heteroscedastic) noise disrupts separability, we can use the bandwidth selection procedure. We do not assume stationarity, and we set the the estimator of the banded part \eqref{eq:estimator_B_non_stationary} to zero outside of the current bandwidth in every step. The objective of \eqref{eq:CV_empirical} is maximized at $\widehat d=1$. We plot the objective curve in Figure \ref{fig:cv_B}, providing a strong evidence for presence of noise. Since $\widehat d=1$, we are in the separable-plus-noise regime, which is computationally feasible even under heteroscedasticity. Figure~\ref{fig:cv_B} also shows a heatmap of the estimated variance (or rather its logarithm, for visualisation purposes) of the noise depending on the location. The heatmap is in alignment with the conjecture that the noise variance is increasing with age.

Secondly, we compare spectra of the \textit{marginal kernels} $\Tr{1}(\widehat{\mathbf C}_N)$ and $\Tr{2}(\widehat{\mathbf C}_N)$ to their shifted counterparts $\Tr{1}^1(\widehat{\mathbf C}_N)$ and $\Tr{2}^1(\widehat{\mathbf C}_N)$. When partial tracing is used to obtain the \textit{marginal kernels}, one has to keep 16 and 4 eigenfunctions, respectively, to capture 90~\% of the marginal variance (c.f.~\cite{lynch2018}) in both dimensions. When shifted partial tracing is used instead, one only needs to retain 4 and 2 eigenfunctions, respectively. Hence shifted partial tracing offers a more parsimonious representation. 

Thirdly, the empirical bootstrap test of \cite{aston2017} with 4 and 2 marginal eigenfunctions (which seems to be the most reasonable choice, also used by \cite{lynch2018}) leads to a borderline $p$-value of 0.06. The test of \cite{aston2017} can be generalized to testing separable-plus-banded model instead, see Appendix~\ref{app:I}. In comparison, the $p$-value for this test is over 0.4, suggesting that the separable-plus-banded model cannot be rejected for this data set.

\begin{figure}[!b]
   \centering
   \begin{tabular}{cccc}
   (a) $\widehat{\boldsymbol \phi}_1$ & (b) $\widehat{\boldsymbol \phi}_2$ & (c) $\widehat{\boldsymbol \psi}_1$ & (d) $\widehat{\boldsymbol \psi}_2$ \\
   \includegraphics[width=0.225\textwidth]{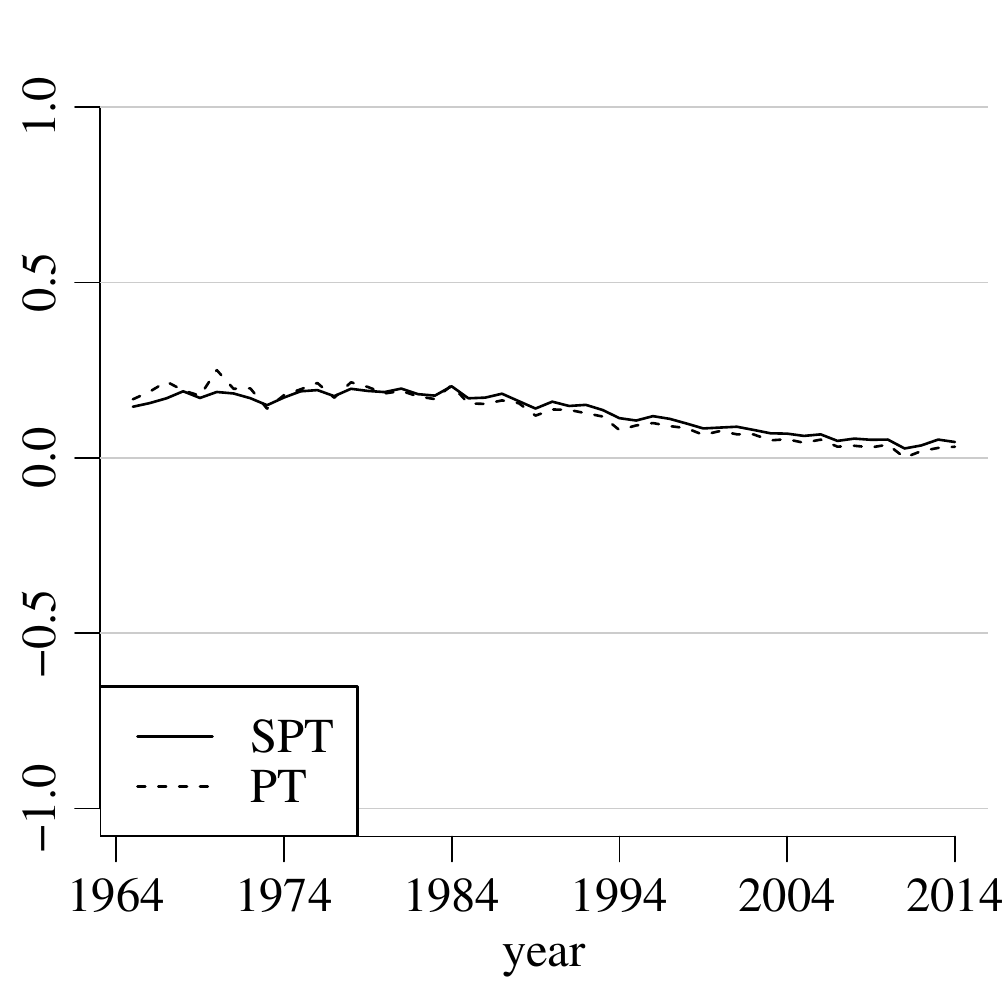} &
   \includegraphics[width=0.225\textwidth]{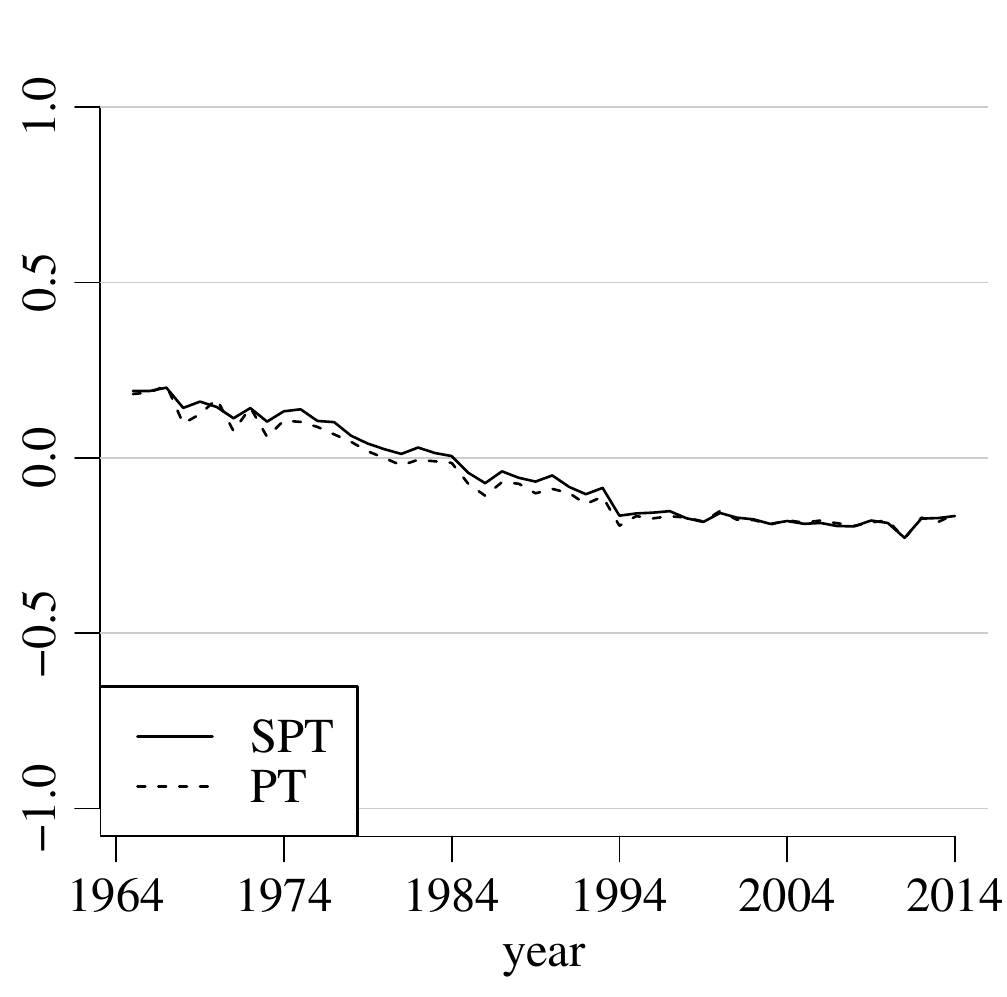} &
   \includegraphics[width=0.225\textwidth]{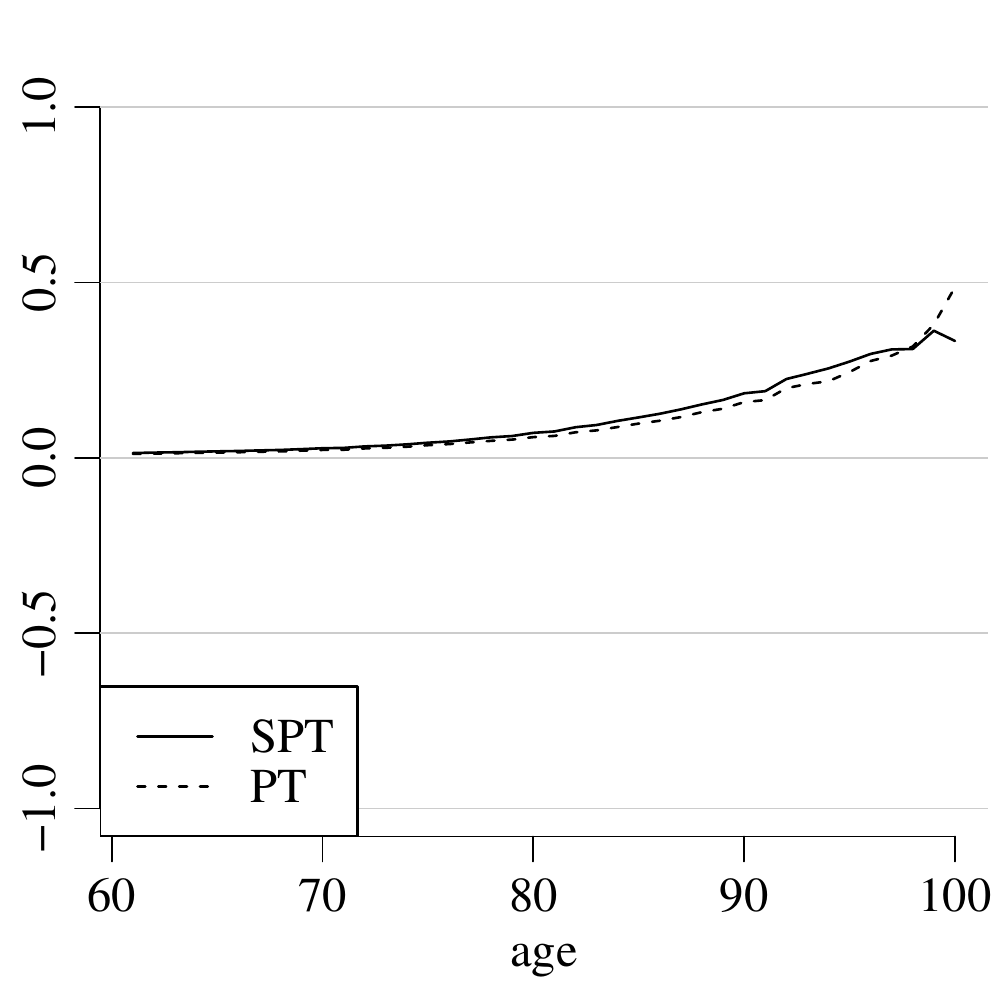} &
   \includegraphics[width=0.225\textwidth]{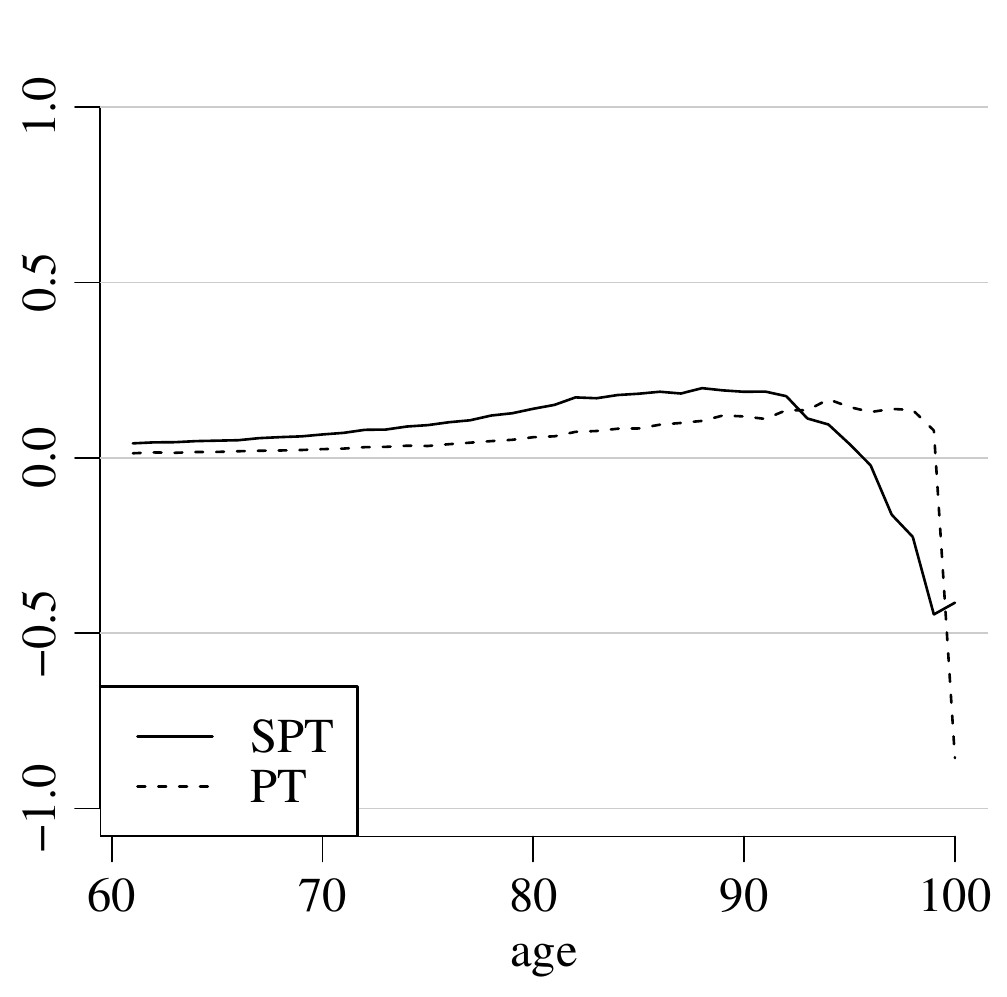}
   \end{tabular}    
   \caption{First two eigenfunctions of the marginal kernels obtained by partial tracing (PT) and shifted partial tracing (SPT).}
    \label{fig:eigenfunctions} 
\end{figure}

Finally, keeping only 2 eigenfunctions in both dimensions (explaining 83~\% and 96~\%  of the variance, respectively) leads to a plausible interpretation, when shifted partial tracing is used. The eigenfunctions are plotted in Figure \ref{fig:eigenfunctions}. The first eigenfunctions in both dimensions capture the overall trend: $\widehat{\boldsymbol \phi}_1$ captures the decreasing variance in calendar years (the first dimension) and $\widehat{\boldsymbol \psi}_1$ the increasing variance in age (the second dimension). The second eigenfunction in the first dimension $\widehat{\boldsymbol \phi}_2$ distinguishes between countries having either a ``U-shape'' (the Czech Republic, for example) or reversed ``U-shape'' in calendar years (Switzerland, for example). This ``U-shape'' is more prominent in older ages, but it is too subtle to be visible by eye in the raw data plotted in Figure \ref{fig:raw_mortalities}. Finally, the second eigenfunction in the second dimension $\widehat{\boldsymbol \psi}_2$ contrasts the old age (around 85) and the oldest age (post 90) mortalities. However, this is only the case if shifted partial tracing is used. The eigenfunction $\widehat{\boldsymbol \psi}_2$ obtained from $\Tr{2}(\widehat{\mathbf C}_N)$ does not have this interpretation; it is in fact not interpretable. Interestingly, the same qualitative conclusions as those drawn here by using shifted partial tracing were drawn in \cite{chen2012} based on different methodology.

\section{Discussion}

The immense popularity of separability stems mainly from the computational advantages it entails. The separable-plus-banded model we propose is an additive generalization of separability. To retain the computational advantages, many natural operations (such as forming the empirical covariance estimator or naively inverting the estimated model) are prohibited. Efficient estimation of the separable-plus-banded model can be achieved with shifted partial tracing -- a novel methodology for estimation of the covariance of surface-valued processes, working on the level of data.

From another point of view, shifted partial tracing can be used to estimate a separable model, even when data are corrupted by heteroscedastic and/or weakly dependent noise. The noise can be incorporated for tasks such as prediction with no computational overhead compared to the simple separable model, whenever it (or equivalently the banded part of the model) does not exceed the separable model in terms of the degrees of freedom it possesses. The latter is true for example when the noise is heteroscedastic and white (as was the case in the real data analysis) or when the noise is weakly dependent but stationary (as was the case in the simulation study). If the noise is both heteroscedastic and weakly dependent, our methodology can still be used. However, one has to pay extra computational costs (compared to just a noiseless separable model), if one wishes to explicitly work with a~noise structure, which has higher complexity than that of the separable model.

Following \cite{aston2017}, partial tracing has become the method of choice for calculation of the marginal kernels, i.e.~for calculating a separable proxy of the covariance. However, given the theoretical development here and the practical evidence found in the mortality dataset, it seems that shifted partial tracing should in general be preferred, due to its denoising properties.

A straightforward extension of our work would be to consider other forms of parsimony than stationarity, which can be imposed on the banded part of the covarince to maintain the computational advantages of our model. In principle, any form of parsimony, which reduces the complexity of storing $\mathbf B$ to $\O(K^2)$ and the number of flops required to apply $\mathbf B$ to $\O(K^3)$, may be considered. The precise form of parsimony capable of complementing separability well in the additive model will depend on the specific application.

\subsubsection*{Acknowledgements}

We thank Prof. John Aston (University of Cambridge), Prof. Daniel Kressner (EPFL), and Dr. Shahin Tavakoli (University of Geneva) for very fruitful discussions.

\newpage

\section{Appendices}
\addtocontents{toc}{\protect\setcounter{tocdepth}{1}}
\renewcommand{\thesubsection}{\Alph{subsection}}
\numberwithin{lemma}{subsection}
\numberwithin{definition}{subsection}
\numberwithin{proposition}{subsection}
\numberwithin{equation}{subsection}
\numberwithin{corollary}{subsection}
\numberwithin{remark}{subsection}
\numberwithin{theorem}{subsection}
\numberwithin{example}{subsection}
\numberwithin{figure}{subsection}

The appendices generalizes the development of shifted partial tracing (Appendix~\ref{app:A}) and provides some computational details (Appendices~\ref{app:B}-\ref{app:E}) and proofs of the asymptotic results (Appendix~\ref{app:G}). Furthermore, additional simulation results are provided (Appendix~\ref{app:H}), and testing for the validity of the separable-plus-banded model is discussed (Appendix~\ref{app:I}). References to the main body of the paper are made in the standard way, while equations, lemmas, etc., that are new to the appendices are labelled and referred to within sub-sections.

\subsection{Background Concepts}\label{app:A}

Let $\mathcal H_1$ and $\mathcal H_2$ be real, complete, separable Hilbert spaces equipped with inner products $\langle \cdot, \cdot \rangle_{\mathcal H_1}$ and $\langle \cdot, \cdot \rangle_{\mathcal H_2}$, and corresponding norms $\| \cdot \|_{\mathcal H_1}$ and $\| \cdot \|_{\mathcal H_2}$, respectively. A linear transformation $F: \mathcal H_1 \to \mathcal H_2$ is \emph{bounded} if 
\[
\verti{F}_\infty := \displaystyle\sup_{\| x\|_{\mathcal H_1}=1} \| F x \|_{\mathcal H_2} < \infty \,.
\]
Bounded linear transformations are called \emph{operators}. The set of operators from $\mathcal H_1$ to $\mathcal H_2$ equipped with the \emph{operator norm} $\verti{\cdot}_\infty$ is a Banach space, denoted by $\mathcal{S}_\infty(\mathcal H_1, \mathcal H_2)$. Operator $F \in \mathcal{S}_\infty(\mathcal H_1, \mathcal H_2)$ is \emph{compact} if its action on an arbitrary $x \in \mathcal H_2$ can be written as
\begin{equation}\label{eq:SVD}
F x = \sum_{j=1}^\infty \sigma_j \langle e_j, x \rangle f_j\,,
\end{equation}
where $\{\sigma_j\}_{j=1}^\infty$ is a non-negative and non-increasing sequence of singular values, and $\{ e_j \}_{j=1}^\infty$ and $\{ f_j \}_{j=1}^\infty$ are orthonormal bases of $\mathcal H_1$ and $\mathcal H_2$, and the series converges in the operator norm. For $p \in [1,\infty)$, the set of all such compact operators $F: \mathcal H_1 \to \mathcal H_2$ such that $\verti{F}_p := (\sum_{j=1}^\infty \sigma_j^p)^{1/p} < \infty$ is denoted by $\mathcal{S}_p(\mathcal H_1,\mathcal H_2)$, and it is a Banach space, when equipped with the \emph{Schatten-$p$ norm} $\verti{\cdot}_p$. It holds $\mathcal{S}_p(\mathcal H_1,\mathcal H_2) \subset \mathcal{S}_q(\mathcal H_1,\mathcal H_2)$ for $p < q$. We further abbreviate $\mathcal{S}_p(\mathcal H_1,\mathcal H_1) =: \mathcal{S}_p(\mathcal H_1)$ and denote $\mathcal{S}_p^+(\mathcal H_1)$ the set of all \emph{positive semi-definite} operators that belong to  $\mathcal{S}_p(\mathcal H_1)$.

We are particularly interested in the case $p=1$. $\mathcal{S}_1(\mathcal H_1, \mathcal H_2)$ is the space of \emph{trace-class} operators, and $\verti{\cdot}_1$ is called \emph{trace norm} or \emph{nuclear norm}. For $F \in \mathcal{S}_1(\mathcal H_1, \mathcal H_2)$ we define its \emph{trace} as $\Tr{}(F) :=  \sum\nolimits_{j=1}^\infty \langle T e_j, e_j \rangle_{\mathcal H_2}$.

The tensor product space of $\mathcal H_1$ and $\mathcal H_2$, denoted by $\mathcal H := \mathcal H_1 \otimes \mathcal H_2$, is defined as the completion of the set of finite linear combinations of abstract tensor products (c.f. \cite{weidmann2012})
\begin{equation}\label{eq:setOfTensors}
\Big\{ \sum_{j=1}^N x_j \otimes y_j \,;\, x_j \in \mathcal H_1, y_j \in \mathcal H_2, N \in \N \Big\}
\end{equation}
under the inner product $\langle x_1 \otimes y_1, x_2 \otimes y_2 \rangle_H := \langle x_1, x_2\rangle_{\mathcal H_1} \langle y_1, y_2\rangle_{\mathcal H_2}$, for all $x_1,x_2 \in \mathcal H_1$ and $y_1,y_2 \in \mathcal H_2$. More precisely, set \eqref{eq:setOfTensors} is a vector space equipped with the inner product $\langle \cdot, \cdot \rangle_H$. Thus its completion $\mathcal H$ is a Hilbert space. If $\{ e_j \}$ and $\{f_j\}$ are orthonormal bases in $\mathcal H_1$ and $\mathcal H_2$, then $\{ e_i \otimes f_j \}_{i,j=1}^\infty$ is an orthonormal basis of $\mathcal H$.
$\mathcal H$ is isometrically isomorphic to $\mathcal{S}_2(\mathcal H_1,\mathcal H_2)$ and also to $\mathcal{S}_2(\mathcal H_2,\mathcal H_1)$.

The previous construction of product Hilbert spaces can be generalized to Banach spaces $\mathcal B_1$ and $\mathcal B_2$. That is, one can define $B := \mathcal B_1 \otimes \mathcal B_2$ in a similar way. The only difference is, that the completion is done under the product norm $\| x \otimes y \|_B := \| x \|_{\mathcal B_1} \| y \|_{\mathcal B_2}$, for $x \in \mathcal B_1$ and $y \in \mathcal B_2$.

Consider now $\mathcal B_1 := \mathcal{S}_p(\mathcal H_1)$ and $\mathcal B_2 := \mathcal{S}_p(\mathcal H_2)$. We construct the tensor product space $\mathcal{S}_p(\mathcal H_1) \otimes \mathcal{S}_p(\mathcal H_2)$ as described above. Now, consider a linear mapping $\Phi: \mathcal{S}_p(\mathcal H_1) \otimes \mathcal{S}_p(\mathcal H_2) \to \mathcal{S}_p(\mathcal H)$ defined on the abstract tensor products as
\[
\Phi(A_1 \otimes A_2) = A_1 \ct A_2 \,, \quad A_1 \in \mathcal{S}_p(\mathcal H_1), A_2 \in \mathcal{S}_p(\mathcal H_2)\,,
\]
where $A_1 \ct A_2: \mathcal H \to \mathcal H$ is the linear operator defined on the abstract tensor products in $\mathcal H$ as
\begin{equation}\label{eq:operatorTensorProduct}
(A_1 \ct A_2)(x \otimes y) = A_1 x \otimes A_2 y\,, \quad x \in \mathcal H_2, y \in \mathcal H_2.
\end{equation}
$\Phi$ is the isomorphism between $\mathcal{S}_p(\mathcal H_1) \otimes \mathcal{S}_p(\mathcal H_2)$ and $\mathcal{S}_p(H)$. Thus we showed that the space of Schatten-$p$ operators on an abstract tensor product space is isometrically isomorphic to the abstract tensor product space of two Schatten-$p$ operator spaces. We prefer the former point of view and, for \mbox{$A_1 \in \mathcal{S}_p(\mathcal H_1)$} and $A_2 \in \mathcal{S}_p(\mathcal H_2)$, $A_1 \ct A_2 \in \mathcal{S}_p(\mathcal H)$ is the unique operator satisfying \eqref{eq:operatorTensorProduct}. By the abstract construction, we also have \mbox{$\verti{A_1 \ct A_2}_p = \verti{A_1}_p \verti{A_2}_p$}.

\begin{lemma}\label{lem:ct}
Let $A_1 \in \mathcal{S}_p(\mathcal H_1)$ and $A_2 \in \mathcal{S}_p(\mathcal H_2)$ are self-adjoint with eigenvalue-eigenvector pairs $\{ (\lambda_j, e_j) \}$ and $\{ (\rho_j, f_j) \}$. Then $A_1 \ct A_2$ is self-adjoint with eigenvalue-eigenvector pairs $\{ (\lambda_i \rho_j, e_i \otimes f_j) \}_{i,j=1}^\infty$. Furthermore, for $p=1$, it holds $\Tr{}(A_1 \ct A_2) = \Tr{}(A_1)\Tr{}(A_2)$.
\end{lemma}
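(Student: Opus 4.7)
The plan is to leverage the defining identity \eqref{eq:operatorTensorProduct} together with the density of finite linear combinations of simple tensors in $\mathcal H = \mathcal H_1 \otimes \mathcal H_2$. First, I would verify self-adjointness. For simple tensors $x_1 \otimes y_1$ and $x_2 \otimes y_2$ we have
\[
\langle (A_1 \ct A_2)(x_1 \otimes y_1),\, x_2 \otimes y_2 \rangle_{\mathcal H} = \langle A_1 x_1, x_2\rangle_{\mathcal H_1} \langle A_2 y_1, y_2\rangle_{\mathcal H_2},
\]
directly from \eqref{eq:operatorTensorProduct} and the definition of the inner product on $\mathcal H$. Self-adjointness of $A_1$ and $A_2$ lets me transpose each factor, recovering $\langle x_1 \otimes y_1,\, (A_1 \ct A_2)(x_2 \otimes y_2)\rangle_{\mathcal H}$. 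Extending by bilinearity to finite sums and invoking continuity of $A_1 \ct A_2$ (which is a bounded operator, since $\verti{A_1 \ct A_2}_p = \verti{A_1}_p \verti{A_2}_p < \infty$) I would pass to the closure, obtaining self-adjointness on all of $\mathcal H$.

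For the spectral description, I would apply $A_1 \ct A_2$ directly to the basis vectors $e_i \otimes f_j$ of $\mathcal H$:
\[
(A_1 \ct A_2)(e_i \otimes f_j) = A_1 e_i \otimes A_2 f_j = \lambda_i e_i \otimes \rho_j f_j = \lambda_i \rho_j\, (e_i \otimes f_j),
\]
so $(\lambda_i \rho_j, e_i \otimes f_j)$ is an eigenvalue-eigenvector pair. Because $\{e_i\}$ and $\{f_j\}$ are complete orthonormal systems of $\mathcal H_1$ and $\mathcal H_2$ (by self-adjointness of $A_1$ and $A_2$ combined with the spectral theorem), the collection $\{e_i \otimes f_j\}_{i,j}$ is a complete orthonormal basis of $\mathcal H$ diagonalizing $A_1 \ct A_2$; hence no eigenvalues are missed.

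For the trace identity, when $p=1$ we have $A_1 \ct A_2 \in \mathcal{S}_1(\mathcal H)$ with $\verti{A_1 \ct A_2}_1 = \verti{A_1}_1 \verti{A_2}_1 < \infty$, so the trace is well-defined and independent of the orthonormal basis used. Computing in the basis $\{e_i \otimes f_j\}$ gives
\[
\Tr{}(A_1 \ct A_2) = \sum_{i,j} \langle (A_1 \ct A_2)(e_i \otimes f_j),\, e_i \otimes f_j\rangle_{\mathcal H} = \sum_{i,j} \lambda_i \rho_j.
\]
Since $\sum_i |\lambda_i| = \verti{A_1}_1 < \infty$ and $\sum_j |\rho_j| = \verti{A_2}_1 < \infty$, the double sum converges absolutely, which justifies factoring it as $\bigl(\sum_i \lambda_i\bigr)\bigl(\sum_j \rho_j\bigr) = \Tr{}(A_1)\,\Tr{}(A_2)$. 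No substantial obstacles are anticipated; the only mild subtleties are (i) pushing the self-adjointness identity from simple tensors to $\mathcal H$ via density and boundedness, and (ii) invoking absolute convergence to rearrange and factor the double sum in the trace computation.
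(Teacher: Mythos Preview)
Your proposal is correct and follows essentially the same approach as the paper: apply $A_1 \ct A_2$ to the tensor basis $e_i \otimes f_j$ via \eqref{eq:operatorTensorProduct} to read off the eigenpairs, then factor the trace as a product of absolutely convergent series (the paper phrases this last step as Fubini). Your explicit verification of self-adjointness on simple tensors is a slight addition---the paper instead infers it implicitly from the resulting spectral decomposition---but this is a cosmetic difference, not a different route.
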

\begin{proof}
Let $x \in \mathcal H_1$ and $y \in \mathcal H_2$, then
\[
\begin{split}
(A_1 \ct A_2)(x \otimes y) &= \Big[\Big( \sum_{j=1}^\infty \lambda_j e_j \otimes e_j \Big) \ct \Big( \sum_{j=1}^\infty \rho_j f_j \otimes f_j \Big)\Big](x \otimes y) \\
&= \Big( \sum_{j=1}^\infty \lambda_j e_j \otimes e_j \Big) x \otimes \Big( \sum_{j=1}^\infty \rho_j f_j \otimes f_j \Big)y \\
&=\Big[\Big( \sum_{j=1}^\infty \lambda_j \langle e_j, x \rangle_{\mathcal H_1} e_j \Big) \otimes \Big( \sum_{j=1}^\infty \rho_j \langle f_j, y \rangle_{\mathcal H_1} f_j \Big)\Big] \,.
\end{split}
\]
For the choice of $x = e_k$ and $y = f_l$ for $k,l \in \N$ we have
\[
(A_1 \ct A_2)(e_k \otimes f_l) = \lambda_k e_k \otimes \rho_l f_l = \lambda_k \rho_l (e_k \otimes f_l)\,,
\] 
which shows that $e_k \otimes f_l$ is an eigenvector of $A_1 \ct A_2$ associated with the eigenvalue $\lambda_k \rho_l$.

The additional part follows from the previous one, since by Fubini's theorem
\[
\sum_{i,j=1}^\infty \lambda_i \rho_j = \Big(\sum_{i=1}^\infty \lambda_i \Big) \Big(\sum_{j=1}^\infty \rho_j \Big) \,.
\]
\end{proof}

The previous lemma can be naturally extended to singular values and singular vectors of operators, that are not self-adjoint, only the notation gets little more complicated. Moreover, we have the following characterization result.

\begin{corollary}\label{cor:PSD}
$A_1 \ct A_2$ is self-adjoint (resp. positive semi-definite, resp. positive definite) if and only if both $A_1$ and $A_2$ are self-adjoint (resp. positive semi-definite, resp. positive definite).
\end{corollary}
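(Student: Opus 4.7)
The plan is to dispatch the forward (``if'') direction as an immediate consequence of Lemma~\ref{lem:ct}, and the reverse (``only if'') direction by combining two facts: the commutation of the Hilbert-space adjoint with $\ct$, namely $(A_1 \ct A_2)^* = A_1^* \ct A_2^*$, together with the uniqueness (up to a nonzero scalar) of the factorization $A_1 \ct A_2$ when neither factor is zero. The PSD and PD cases are then read off from a direct evaluation of the quadratic form on decomposable tensors.

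For the forward direction, suppose $A_1$ and $A_2$ are self-adjoint with spectral resolutions $A_1 = \sum_i \lambda_i e_i \otimes e_i$ and $A_2 = \sum_j \rho_j f_j \otimes f_j$. Lemma~\ref{lem:ct} identifies the eigenvalue--eigenvector pairs of $A_1 \ct A_2$ as $\{(\lambda_i \rho_j, e_i \otimes f_j)\}_{i,j}$. This exhibits $A_1 \ct A_2$ as a self-adjoint operator, and if additionally $\lambda_i, \rho_j \geq 0$ (resp.\ $>0$), then every $\lambda_i \rho_j \geq 0$ (resp.\ $>0$), giving PSD (resp.\ PD) of $A_1 \ct A_2$.

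For the reverse direction, I first verify $(A_1 \ct A_2)^* = A_1^* \ct A_2^*$ by computing on decomposable elements using \eqref{eq:operatorTensorProduct} and extending by linearity and continuity in Schatten-$p$ norm. Assuming (trivially without loss) that $A_1$ and $A_2$ are nonzero, self-adjointness of $A_1 \ct A_2$ then yields $A_1^* \ct A_2^* = A_1 \ct A_2$; by uniqueness of the tensor factorization up to a nonzero scalar, there exists $c \in \R \setminus \{0\}$ with $A_1^* = c A_1$ and $A_2^* = c^{-1} A_2$. Applying the adjoint once more forces $c^2 = 1$, hence $c = \pm 1$. For the PSD case, the identity
\[
\langle (A_1 \ct A_2)(x \otimes y), x \otimes y \rangle = \langle A_1 x, x\rangle_{\mathcal H_1} \langle A_2 y, y\rangle_{\mathcal H_2}
\]
together with the assumed PSD of $A_1 \ct A_2$ forces the two scalar quadratic forms on the right to be of the same sign throughout their respective domains. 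This rules out the skew-adjoint branch $c=-1$ (both quadratic forms would vanish identically, contradicting nontriviality) and determines, after at most a joint sign flip in the representation (which leaves $A_1 \ct A_2$ unchanged), that both $A_1$ and $A_2$ are PSD. The PD case is identical with strict inequalities.

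The main obstacle I anticipate is bookkeeping around the scalar ambiguity in the uniqueness of the tensor factorization and the potential skew-adjoint branch $c=-1$; both are cleanly resolved by the quadratic-form check above in the PSD/PD cases, and are part of the standard convention ``pick the representative in which $A_1, A_2$ are self-adjoint'' in the merely self-adjoint case. Once these are handled, the argument consists of a few lines per direction and produces the three parallel equivalences simultaneously.
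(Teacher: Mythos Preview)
Your argument is correct and takes a genuinely different route from the paper's. For the reverse self-adjoint direction, the paper works directly with the bilinear identity
\[
\langle A_1 x_1, x_2\rangle_{\mathcal H_1}\,\langle A_2 y_1, y_2\rangle_{\mathcal H_2}
=\langle x_1, A_1 x_2\rangle_{\mathcal H_1}\,\langle y_1, A_2 y_2\rangle_{\mathcal H_2}
\]
and then freezes the $\mathcal H_2$-arguments at a singular-vector pair of $A_2$ so that the $A_2$-factors become equal nonzero constants, forcing $A_1=A_1^*$; the PSD and PD parts are then deferred to Lemma~\ref{lem:ct}. You instead combine the structural identity $(A_1\ct A_2)^*=A_1^*\ct A_2^*$ with the uniqueness (up to a nonzero scalar) of the factorization, obtain $A_1^*=cA_1$, $A_2^*=c^{-1}A_2$ with $c=\pm1$, and then settle the sign via the decomposable quadratic form. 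The paper's approach is lighter on prerequisites (no uniqueness-of-factorization lemma), whereas yours is more conceptual and, importantly, makes the skew-adjoint branch $c=-1$ explicit rather than implicit; the statement of the corollary silently absorbs this branch into the scalar non-uniqueness of the representation $A_1\ct A_2=(cA_1)\ct(c^{-1}A_2)$, and your write-up is clearer about that.

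One phrase deserves a line of expansion: when you exclude $c=-1$ in the PSD case by saying ``both quadratic forms would vanish identically, contradicting nontriviality,'' note that a nonzero skew-adjoint operator does have an identically vanishing quadratic form in a real Hilbert space, so nontriviality of $A_1$ and $A_2$ alone is not contradicted. The contradiction is with nontriviality of $A_1\ct A_2$: vanishing of $\langle (A_1\ct A_2)(x\otimes y),x\otimes y\rangle$ on all decomposable tensors, together with positive semi-definiteness, forces $(A_1\ct A_2)(x\otimes y)=0$ for every $x,y$, and since such tensors are total this gives $A_1\ct A_2=0$. With that one sentence added, your PSD and PD reverse directions are complete.
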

\begin{proof}
It is trivial that $A_1 \ct A_2$ must be self-adjoint for $A_1$ and $A_2$ both self-adjoint.

In the other direction, assume that both $A_1$ and $A_2$ are non-zero, otherwise $A_1 \ct A_2$ is zero (because $\langle (A_1 \ct A_2) (u \otimes v), u \otimes v \rangle=0$ for all $u,v$) and the conclusion is trivial. Note that self-adjointness of $A_1 \ct A_2$ gives us
\[
\begin{split}
\langle (A_1 \ct A_2) (x_1 \otimes y_1), x_2 \otimes y_2 \rangle &= \langle x_1 \otimes y_1, (A_1 \ct A_2) (x_2 \otimes y_2) \rangle \\
\Rightarrow \quad 
\langle A_1 x_1, y_1 \rangle \langle A_2 x_2, y_2 \rangle &= \langle x_1, A_1 y_1 \rangle \langle x_2, A_2 y_2 \rangle 
\end{split}
\]
for $x_1,x_2 \in \mathcal H_1$ and $y_1,y_2 \in \mathcal H_2$ arbitrary. Now choose $(x_2,y_2)$ to be the left-right eigenvector pair of $A_2$ associated with a non-negative eigenvalue. We immediately see that $A_1$ must be self-adjoint, and similarly for $A_2$.

The parts about positive semi-definiteness and positive definiteness follow easily from the previous lemma.
\end{proof}

\subsection{General Definition of Shifted Partial Tracing}\label{app:B}

To ease the exposition, we assumed continuity in the definition of shifted partial tracing given in the main paper. But in order to prove the asymptotic results of Section \ref{sec:asymptotics}, it is necessary to generalize the notions of shifted (partial) tracing to general trace-class operators on $\mathcal{L}^2[0,1]^2$, i.e. to covariances of random elements on $\mathcal{L}^2[0,1]^2$, which are not necessarily continuous or have continuous sample paths. We do this by providing alternative definitions of the shifted (partial) traces, which neither require continuity nor positive semi-definiteness. These will be denoted by ``$\T{}$'', replacing ``$\Tr{}$'', to make the distinction. It will be shown subsequently that, under continuity, they coincide with Definition \ref{def:SPT}.

Note that the definition of the shifted partial tracing is not symmetric, meaning that the result of the shifted partial trace is not necessarily self-adjoint. We could define a~symmetrized shifted partial trace instead, but this is (due to linearity of shifted partial tracing and symmetry of the kernel $k$) equivalent to symmetrizing the result. The latter is used in practice for its computational convenience, while the former is hypothetically done in theory, but we avoid it without loss of generality to ease the presentation (see Appendix~\ref{app:E}). Also, note that any trace-class operator on $L^2[0,1]^D$ can be represented as a superposition of two trace-class self-adjoint operators, and any trace-class self-adjoint operator is in turn the difference between two trace-class positive semi-definite operators \cite{beauzamy1988}. Hence we can assume positive semi-definiteness without loss of generality, wherever all operations being performed are linear.

\begin{definition}\label{def:shifting_operator}
We define the shifting operator $\mathrm{S}^\delta: \mathcal{S}_1(\mathcal L^2[0,1]) \to \mathcal{S}_1(\mathcal L^2[0,1])$ by its action on kernels. For $F \in \mathcal{S}_1(\mathcal L^2[0,1])$ with a kernel $k=k(t,s)$, $\mathrm{S}^\delta(F)$ have kernel
\begin{equation}\label{eq:shifted_kernel}
k^\delta(t,s) = \begin{cases}
k(t,s+\delta) , \quad s < 1-\delta , \\
0 , \qquad \qquad \quad \text{otherwise} .
\end{cases}
\end{equation}
\end{definition}

It is straightforward to check that $\mathrm{S}^\delta$ is well-defined linear operator on $\mathcal{S}_1(\mathcal L^2[0,1])$. To check boundedness, let $F = \sum_j \sigma_j g_j \otimes h_j$ be the SVD of $F$. Subsequently we have $k^\delta(t,s) = \sum_j \sigma_j g_j(t) h_j^\delta(s)$, where the equality is understood in the $\mathcal{L}^2$-sense, and
$h_j^\delta(s) = h_j(s +\delta)$ for $s \leq 1-\delta$, and $h_j^\delta(s) = 0$ otherwise. Then
\begin{align*}
\verti{\mathrm{S}^\delta(F)}_1 &= \verti{\sum_{j=1}^\infty \sigma_j \mathrm{S}^\delta(g_j \otimes h_j)}_1 \leq \sum_{j=1}^\infty \sigma_j \verti{g_j \otimes h_j^\delta}_1 \\
&= \sum_{j=1}^\infty \sigma_j \|g_j\| \|h_j^\delta\| \leq \sum_{j=1}^\infty \sigma_j = \verti{F}_1
\end{align*}
where we used the triangle inequality in the first inequality and the fact that $\|h_j^\delta\| \leq \| h_j \|$ in the second inequality.

\begin{samepage}
\begin{definition}\label{def:T}
\begin{enumerate}
\item For $F \in \mathcal{S}_1(\mathcal L^2[0,1])$, we define $\T{}^\delta(F) = \Tr{}(S^\delta F)$.
\item For $F \in \mathcal{S}_1(\mathcal L^2[0,1]^2)$ we define $\T{}^\delta(F) = \Tr{}\big[(S^\delta \otimes S^\delta) F\big]$.
\end{enumerate}
\end{definition}
\end{samepage}

$\T{}^\delta$ is clearly well defined bounded linear functional on $\mathcal{S}_1(\mathcal L^2[0,1])$ and $\mathcal{S}_1(\mathcal L^2[0,1]^2)$, respectively. 

\begin{definition}\label{def:partial_T}
For $F \in \mathcal{S}_1(\mathcal L^2[0,1]^2)$ separable, i.e. of the form $F = A_1 \ct A_2$, we define $\T{1}^\delta(F) = \T{}^\delta(A_2) A_1$.
\end{definition}

The proofs of the following two propositions borrow ideas from \cite{aston2017}.

\begin{proposition}
Let $\delta \geq 0$, then $\T{1}^\delta: \mathcal{S}_1(\mathcal L^2[0,1]^2) \to \mathcal{S}_1(\mathcal L^2[0,1])$ is well defined, linear, and bounded. Moreover, for $F \in \mathcal{S}_1(\mathcal L^2[0,1]^2)$ we have
\begin{equation}\label{eq:trace_characterization}
\Tr{}(G \T{1}^\delta(F)) = \Tr{}([S^\delta \ct G] F), \quad \forall G \in \mathcal{S}_1(\mathcal L^2[0,1]).
\end{equation}
\end{proposition}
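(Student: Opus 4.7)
My plan is to extend the provisional definition of $\T{1}^\delta$ (given on separable operators in Definition B.3) to all of $\mathcal{S}_1(\mathcal{L}^2[0,1]^2)$ by exploiting the isomorphism $\mathcal{S}_1(\mathcal{L}^2[0,1]^2) \simeq \mathcal{S}_1(\mathcal{L}^2[0,1]) \otimes \mathcal{S}_1(\mathcal{L}^2[0,1])$ recorded in Appendix~\ref{app:A}. By the projective tensor-product structure implicit in this isomorphism, every $F \in \mathcal{S}_1(\mathcal{L}^2[0,1]^2)$ admits a representation $F = \sum_{j=1}^\infty A_j^{(1)} \ct A_j^{(2)}$ with $\sum_j \verti{A_j^{(1)}}_1 \verti{A_j^{(2)}}_1 < \infty$ (and with this sum attaining $\verti{F}_1$ in the infimum). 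This suggests taking $\T{1}^\delta(F) := \sum_{j} \T{}^\delta(A_j^{(2)}) A_j^{(1)}$ as the candidate definition, and checking well-definedness in tandem with the trace identity.

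Convergence of the candidate series in $\mathcal{S}_1(\mathcal{L}^2[0,1])$ is immediate from the bound $|\T{}^\delta(A)| = |\Tr{}(S^\delta A)| \leq \verti{A}_1$, which follows from $\verti{S^\delta}_\infty \leq 1$ (established right after Definition~\ref{def:shifting_operator}). The same inequality yields $\verti{\T{1}^\delta(F)}_1 \leq \sum_j \verti{A_j^{(1)}}_1 \verti{A_j^{(2)}}_1$; taking the infimum over admissible representations gives $\verti{\T{1}^\delta(F)}_1 \leq \verti{F}_1$, so the operator is bounded with norm at most one. Linearity is inherited from the construction, once well-definedness is secured.

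For the trace identity, I would first verify it on separable operators $F = A_1 \ct A_2$, where Definition~\ref{def:partial_T} and the multiplicativity of the trace on $\ct$-products (Lemma~\ref{lem:ct}) give
\[
\Tr{}\bigl(G\,\T{1}^\delta(F)\bigr) = \T{}^\delta(A_2)\,\Tr{}(G A_1) = \Tr{}(S^\delta A_2)\,\Tr{}(G A_1) = \Tr{}\bigl([S^\delta \ct G](A_1 \ct A_2)\bigr),
\]
where $S^\delta \ct G$ is interpreted so that the shifting factor acts on the component being integrated out. Both sides are trace-norm continuous linear functionals of $F$ for each fixed $G \in \mathcal{S}_1(\mathcal{L}^2[0,1])$ (with bounds of order $\verti{G}_1 \verti{F}_1$), so the identity propagates from the separable case to all of $\mathcal{S}_1(\mathcal{L}^2[0,1]^2)$ by the density of finite linear combinations of separable operators guaranteed by the projective-tensor isomorphism. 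Independence of representation now follows for free: if two decompositions of $F$ produced candidates $T_1, T_2$ for $\T{1}^\delta(F)$, then both would satisfy $\Tr{}(G T_i) = \Tr{}([S^\delta \ct G] F)$ against every $G \in \mathcal{S}_1(\mathcal{L}^2[0,1])$; testing against rank-one $G = u \otimes v$ forces $\langle (T_1 - T_2) v, u \rangle = 0$ for all $u, v \in \mathcal{L}^2[0,1]$, hence $T_1 = T_2$.

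The principal obstacle is justifying the passage from the separable case to the general case cleanly: one needs to interchange the trace with the infinite projective sum defining $F$, which boils down to dominated convergence in the trace class powered by the absolute summability $\sum_j \verti{A_j^{(1)}}_1 \verti{A_j^{(2)}}_1 < \infty$. Once this bookkeeping is in hand, boundedness, linearity, well-definedness, and the trace characterization all fall out of the same projective representation.
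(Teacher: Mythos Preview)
Your argument is correct and arrives at the same conclusions, but the route differs from the paper's in a notable way. The paper works only with \emph{finite} linear combinations $F=\sum_{r=1}^R A_r\ct B_r$, verifies the trace identity there, and then bounds $\verti{\T{1}^\delta(F)}_1$ via the duality characterization $\verti{F}_1=\sup_{\verti{G}_\infty=1}|\Tr{}(GF)|$: since $\verti{G\ct S^\delta}_\infty\le 1$ whenever $\verti{G}_\infty\le 1$, the trace identity immediately yields $\verti{\T{1}^\delta(F)}_1\le\verti{F}_1$ on the dense subspace, after which the bounded-linear-extension theorem finishes the job. You instead invoke the projective tensor structure of $\mathcal{S}_1(\mathcal{L}^2[0,1])\otimes\mathcal{S}_1(\mathcal{L}^2[0,1])$ to write a general $F$ as an absolutely convergent infinite sum of separables, define $\T{1}^\delta$ termwise, and obtain the norm bound by taking the infimum over representations. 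Your approach is direct and handles well-definedness elegantly through the trace identity, but it leans on the nontrivial fact that the projective norm on $\mathcal{S}_1\otimes\mathcal{S}_1$ coincides with the trace norm on $\mathcal{S}_1$ of the product space (which the paper's Appendix~\ref{app:A} asserts only informally). The paper's duality route sidesteps this: it never needs an explicit projective representation of $F$, only density of finite sums and the $\mathcal{S}_1$--$\mathcal{S}_\infty$ pairing, making it slightly more self-contained.
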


\begin{proof}
Let $F = \sum_{r=1}^R A_r \ct B_r$. Then for any $G \in \mathcal{S}_1(\mathcal L^2[0,1])$ we have
\begin{equation}\label{eq:trace_characterization_proof}
\begin{split}
\Tr{}(G \; \T{1}^\delta (F)) &= \sum_{r=1}^R \Tr{}(S^\delta B_r) \Tr{}(G A_r) = \sum_{r=1}^R \Tr{}\big (G A_r) \ct (S^\delta B_r) \big] \\
&= \sum_{r=1}^R \Tr{}\big[ (G \ct S^\delta) (A_r \ct B_r) \big] = \Tr{}\big[(G \ct S^\delta) F \big]
\end{split}
\end{equation}

By Lemma 1.6 of the supplementary material of \citep{aston2017}, the space
\[
\mathcal{X} := \Big\{ \sum_{r=1}^R A_r \ct B_r \;\Big|\; A_r,B_r \in \mathcal{S}_1(\mathcal L^2[0,1]), r \in \N \Big\}
\]
is dense in $ \mathcal{S}_1(\mathcal L^2[0,1]^2)$. Using the following characterization of the trace norm,
\[
\verti{F}_1 = \sup_{\verti{G}_\infty=1} | \Tr{}(G F) | ,
\]
we obtain from \eqref{eq:trace_characterization_proof} that
\begin{equation}\label{eq:shifted_trace_bound}
\begin{split}
\verti{T_{1}^\delta(F)}_1 &= \sup_{\verti{G}_\infty=1} | \Tr{}(G T_{1}^\delta(F)) | = \sup_{\verti{G}_\infty=1} |\Tr{}\big[ (G \ct S^\delta) F \big]| \\
&\leq \sup_{\verti{U}_\infty=1} | \Tr{}(U F)| = \verti{F}_1 .
\end{split}
\end{equation}

Hence $\T{1}$ can be extended continuously to $ \mathcal{S}_1(\mathcal L^2([0,1]^2))$. Equation \eqref{eq:trace_characterization} now follows from \eqref{eq:trace_characterization_proof} also by continuity.
\end{proof}

The following proposition states that the functional specified in Definition \ref{def:T} and the operator specified in Definition \ref{def:partial_T} correspond under the continuity assumption to the shifted trace and the shifted partial trace, respectively.

\begin{proposition}\label{prop:equality_of_definitions}
Let $A \in \mathcal{S}_1(\mathcal L^2[0,1])$ and $F \in \mathcal{S}_1(\mathcal L^2([0,1]^2))$ have continuous kernels $a=a(t,s)$ and $k=k(t,s,t',s')$. Then $\T{}^\delta(A) = \Tr{}^\delta(A)$ and $\T{1}^\delta(F) = \Tr{1}^\delta(F)$.
\end{proposition}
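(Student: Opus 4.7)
The plan is to verify both identities on a dense subclass of finite sums of elementary tensors with continuous factors, and then extend by linearity and continuity to all of $\mathds{V}_1$ and $\mathds{V}_2$.

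For the shifted trace, I would start with $A = g \otimes h$ where $g, h \in \mathcal L^2[0,1]$ are continuous. By Definition~\ref{def:shifting_operator}, $\mathrm S^\delta(g \otimes h) = g \otimes h^\delta$ where $h^\delta(s) = h(s+\delta)\mathds{1}_{s<1-\delta}$, so
\[
\T{}^\delta(g \otimes h) \;=\; \Tr{}(g \otimes h^\delta) \;=\; \langle g, h^\delta\rangle \;=\; \int_0^{1-\delta} g(t) h(t+\delta)\, dt,
\]
which coincides with $\Tr{}^\delta(g \otimes h)$ from Definition~\ref{def:SPT}. By linearity of both $\T{}^\delta$ and $\Tr{}^\delta$, the identity extends to finite sums $\sum_r g_r \otimes h_r$ with continuous factors.

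For the shifted partial trace, I would work on a separable $F = A_1 \ct A_2$ with continuous marginal kernels $a_1, a_2$. Definition~\ref{def:partial_T} combined with the shifted-trace identity above gives $\T{1}^\delta(F) = \T{}^\delta(A_2) A_1 = \Tr{}^\delta(A_2) A_1$, whose kernel is $\Tr{}^\delta(A_2)\, a_1(t,t')$. Definition~\ref{def:SPT} yields the same kernel directly via
\[
\int_0^{1-\delta} a_1(t,t') a_2(s, s+\delta)\, ds \;=\; a_1(t,t') \Tr{}^\delta(A_2).
\]
Linearity transports the identity to the set $\mathcal X = \{\sum_r A_{1,r} \ct A_{2,r}\}$, which is dense in $\mathcal S_1(\mathcal L^2[0,1]^2)$ by Lemma~1.6 of the supplementary material of \cite{aston2017} (already invoked in the proof of Proposition~\ref{prop:SPT}).

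The main obstacle is the last step: passing from the dense subclass to all of $\mathds{V}_1, \mathds{V}_2$. Here $\T{}^\delta$ and $\T{1}^\delta$ are continuous in trace norm (Proposition~\ref{prop:SPT} and inequality \eqref{eq:shifted_trace_bound}), but the integral-based definitions $\Tr{}^\delta, \Tr{1}^\delta$ depend pointwise on the kernel along the shifted diagonal $\{s = t+\delta\}$ and are not trace-norm continuous in general; an SVD truncation $A_N \to A$ in trace norm does not force uniform convergence of $a_N$ to $a$. I would resolve this by first decomposing $A = \tfrac12(A + A^*) + \tfrac12(A - A^*)$ into self-adjoint and anti-self-adjoint parts, both of which inherit continuous kernels from $a$, and then writing the self-adjoint piece as a difference of two trace-class positive semi-definite operators, exactly the decomposition mentioned in the paragraph preceding Definition~\ref{def:shifting_operator}. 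Mercer's theorem supplies, for each positive piece, an absolutely and uniformly convergent eigenfunction expansion on $[0,1]^2$, which legitimizes the Fubini interchange of sum and shifted-diagonal integral; the anti-self-adjoint piece is handled analogously upon complexification. The same argument applied to the factors $A_1, A_2$ in the bivariate setting completes the extension for the shifted partial trace.
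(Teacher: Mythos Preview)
Your verification on elementary tensors and finite sums is correct and matches the paper's treatment of the middle term in its triangle-inequality split. The gap is in your extension step. When you write the self-adjoint part of $A$ as a difference $A_{s,+} - A_{s,-}$ of positive semi-definite trace-class operators, you implicitly assume that $A_{s,+}$ and $A_{s,-}$ again have continuous kernels so that Mercer applies. That implication is not available: continuity of $a$ guarantees that the eigenfunctions $e_j$ are continuous, but the sub-series $\sum_{\lambda_j>0}\lambda_j\, e_j(t)e_j(s)$ need not converge uniformly (Mercer's uniform convergence is itself a consequence of positive semi-definiteness, so it cannot be used to bootstrap the decomposition). The paragraph you cite before Definition~\ref{def:shifting_operator} only asserts the decomposition within $\mathcal S_1$, not within $\mathds V_1$. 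The same objection applies in the bivariate case, and the complexification remark for the anti-self-adjoint piece does not repair it.

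The paper avoids this obstacle by working with approximations that control the kernel uniformly rather than in trace norm. For $\T{}^\delta(A)=\Tr{}^\delta(A)$ it replaces the shift $\mathrm S^\delta$ by a mollified version $\mathrm S^\delta_\tau$ whose output has a continuous kernel, so that $\Tr{}(\mathrm S^\delta_\tau A)$ can be read off as a diagonal integral; then it lets $\tau\downarrow 0$ and uses continuity of $a$ to identify both limits. For $\T{1}^\delta(F)=\Tr{1}^\delta(F)$ it invokes Stone--Weierstrass to approximate the continuous kernel $k$ uniformly by a finite sum $k_R=\sum_r a_r(t,t')b_r(s,s')$ with continuous factors; the uniform bound $\|k-k_R\|_\infty<\epsilon$ is precisely what makes both $\verti{\T{1}^\delta(F-F_R)}_1$ and $\verti{\Tr{1}^\delta(F-F_R)}_1$ small, the latter because the integral definition~\eqref{eq:shifted_partial_trace_integral} is manifestly sup-norm continuous in the kernel. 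Your trace-norm density argument via $\mathcal X$ cannot substitute for this, since---as you correctly note---$\Tr{1}^\delta$ is not trace-norm continuous on all of $\mathcal S_1$.
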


\begin{proof}
We begin by showing the assertion for the shifted trace. We define the continuous version of the shifting operator $\mathrm{S}^\delta$, denoted as $\mathrm{S}^\delta_\tau$. It is defined by Definition \ref{def:shifting_operator} with $\mathrm{S}^\delta$ replaced by $\mathrm{S}^\delta_\tau$ and $k^\delta$ replaced by
\[
k^\delta_\tau(t,s) = \begin{cases}
k(t,s+\delta) , \hfill s < 1-\delta-\tau , \\
(s + \delta + \tau)k(t,1-\delta-\tau) + (s + \delta - \tau)k(t,1-\delta+\tau) , \quad |s-(1-\delta)|\leq \tau, \\
0 , \hfill \text{otherwise} .
\end{cases}
\]
Then by continuity, $\Tr{}(S^\delta_\tau F) \stackrel{\tau \to 0_+}{\longrightarrow} \T{}^\delta(F)$ and at the same time $\Tr{}(S^\delta_\tau F) \stackrel{\tau \to 0_+}{\longrightarrow} \Tr{}^\delta(F)$, implying the equality of the limits.

We now proceed to the shifted partial trace. Note that it follows from the Stone-Weierstrass approximation theorem that for any $\epsilon > 0$ there exist $R \in \N$ and a set of continuous univariate functions on [0,1] $\{ u_r, v_r, x_r, y_r \}_{r=1}^R$ such that $\| k - k_R \|_\infty < \epsilon$ for
\[
k_R(t,s,t',s') = \sum_{r=1}^R u_r(t) x_r(s) v_r(t') y_r(s') .
\]
Grouping together $a_r(t,t') := u_r(t) v_r(t')$ and $b_r(t,t') := x_r(t) y_r(t')$, it follows that for any $\epsilon > 0$ there exists a finite rank operator $F_R = \sum_{r=1}^R A_r \ct B_r$ such that $A_r$ and $B_r$ are rank one operators with continuous kernels ($a_r$ and $b_r$, respectively) and $\verti{k - k_R}_\infty < \epsilon$.

Let us fix $\epsilon > 0$. Then by the triangle inequality we have
\[
\begin{split}
\verti{ \T{1}^\delta(F) - \Tr{1}^\delta(F) }_1 \leq
\verti{ \T{1}^\delta(F) - \T{1}^\delta(F_R) }_1
&+ \verti{ \T{1}^\delta(F_R) - \Tr{1}^\delta(F_R)}_1 \\
&+ \verti{\Tr{1}^\delta(F_R) - \Tr{1}^\delta(F)}_1
\end{split}
\]
The middle term is zero, which follows from linearity of the operators and the first half of this proof. The first and the third terms can be both bounded by 
\[
\begin{split}
\verti{F - F_R}_1 &\leq \verti{F - F_R}_2 = \left( \int_0^1 \int_0^1 \big[k(t,s,t,s) - k_R(t,s,t,s)\big]^2 d t d s \right)^{1/2} \\
&\leq \| k - k_R \|_\infty \leq \epsilon.
\end{split}
\]
Altogether, we have that $\verti{ \T{1}^\delta(F) - \Tr{1}^\delta(F) }_1 < 2\epsilon$. Since $\epsilon$ was arbitrarily small, the proof is complete.

\end{proof}

The development of shifted partial tracing with respect to to the second argument can be done similarly. 
Proposition \ref{prop:SPT} now follows directly from Proposition \ref{prop:equality_of_definitions}. Also, Proposition \ref{prop:PT_properties} holds with the general definitions of the shifted (partial) traces, which can be simply checked using the definitions. It thus remains to show validity of Lemma \ref{lem:banded_SPT}.

\begin{proof}[Proof of Lemma \ref{lem:banded_SPT}]
Firstly, it holds for any operators $A$ and $B$ that
\begin{equation}\label{eq:distributivity}
(A \ct B) = (A \ct Id)(Id \ct B) ,
\end{equation}
which can be verified on the rank one elements:
\[
(A \ct Id)(Id \ct B)(x \otimes y) = (A \ct Id)(x \otimes B y) = A x \otimes B y = (A \ct B) (x \otimes y).
\]

Secondly, we know from equation \eqref{eq:shifted_trace_bound} that
\begin{equation}\label{eq:shifted_trace_bound_alone}
\verti{\Tr{1}^\delta(F)}_1 = \sup_{\verti{G}_\infty=1} \big|\Tr{}\big[ (G \ct S^\delta) F \big]\big|.
\end{equation}

Now, it is enough to show that $\Tr{}\big[ (G \ct S^\delta) B \big]=0$ for any $G \in \mathcal{S}_\infty$. Let $B=\sum_r \sigma_r \widetilde{U}_r \ct V_r$ be the SVD of $B$. Note that $V_r$ in particular is banded by $\delta^\star$. Denoting $U_r = \sigma_r U_r$, we have
\[
\begin{split}
\Tr{}\big[ (G \ct S^\delta) B \big] = \Tr{}\big[ (Id \ct S^\delta) B (G \ct Id) \big]
\end{split}
\]
where we used \eqref{eq:distributivity} and cyclicity of trace. Now, $B (G \ct Id) = \sum_r (G U_r) \ct V_r$, which is still banded in the dimensions corresponding to $V_r$'s. Therefore $(Id \ct S^\delta) B (G \ct Id)$ has a kernel which is 0 along the diagonal and hence its trace is 0 by the limiting argument of \cite{gohberg1978}.
\end{proof}

We have just shown that the conclusions of the paper stand still even without the assumption of continuity.

Shifted partial tracing could still have been defined in slightly greater generality. However, the definition requires the notion of a ``shift'' and hence it requires an explicit set to act on. We could instead of $\mathcal{L}^2([0,1]^2)$ take $\mathcal{L}^2(\Omega)$ with $(\Omega, \mathcal{A}, \mu)$ a measure space with $\Omega$ a linearly ordered metric space and $\mu$ a finite measure. The specific choice of $\Omega = \{1, \ldots, K_1 \} \times \{1, \ldots, K_2 \}$ and $\mu$ being the counting measure would then lead to formula \eqref{eq:discrete_PT_def}. We have not gone down this path since this formalism would not be particularly useful in practice anyway. Note, however, that Definition \ref{def:SPT} and formula \eqref{eq:discrete_PT_def} are compatible in this way, with the difference between them stemming from the change of measure, as depicted in the following lemma.

\begin{lemma}\label{lem:continuous_discrete_traces}
Let $\mathbf M \in \R^{K_1 \times K_2 \times K_1 \times K_2}$. Let $F \in \mathcal{S}_2(L^2[0,1]^2)$ be the pointwise continuation of $\mathbf M$, i.e. the kernel $k$ of $F$ is given by
\[
k(t,s,t',s') = \sum_{i=1}^{K_1} \sum_{j=1}^{K_2} \sum_{k=1}^{K_1} \sum_{l=1}^{K_2} \mathbf M[i,j,k,l] \mathds{1}_{[(t,s) \in I_{i,j}^K]} \mathds{1}_{[(t',s') \in I_{k,l}^K]} \,,
\]
where $I_{j,k} = \left[\frac{i-1}{K_1}, \frac{i}{K_1}\right) \times \left[\frac{i-1}{K_2}, \frac{i}{K_2}\right)$. Then
\begin{enumerate}
\item For $\delta \in [0,1)$ such that $\delta K_2 \in \N_0$ we have $\verti{\Tr{1}^\delta(F) }_2 = K_1^{-1} K_2^{-1} \left\| \Tr{1}^\delta(\mathbf M) \right\|_F$.
\item For $\delta \in [0,1)$ such that $\delta K_1 \in \N_0$ we have $\verti{\Tr{2}^\delta(F) }_2 = K_1^{-1} K_2^{-1} \left\| \Tr{2}^\delta(\mathbf M) \right\|_F$.
\item For $\delta \in [0,1)$ such that $\delta K_1 \in \N_0$ and $\delta K_2 \in \N_0$ we have $\Tr{}^\delta(F) = K_1^{-1} K_2^{-1} \Tr{}^\delta(\mathbf M)$.
\end{enumerate}
\end{lemma}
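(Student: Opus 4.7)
The plan is to exploit the fact that the kernel of $F$ is piecewise constant on the rectangular partition $\{I_{i,j}^K \times I_{k,l}^K\}_{i,j,k,l}$, so that every integral defining a shifted (partial) trace collapses to a finite sum over pixels. The integer-alignment hypotheses on $\delta$ are crucial: they guarantee that the shifts $t \mapsto t+\delta$ and $s \mapsto s+\delta$ send a whole pixel onto a whole pixel, so no pixel is partially covered by the shifted integration domain.

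First I would dispatch Part 3, which is the simplest and is a clean template for the other two. Starting from Definition~\ref{def:SPT},
\begin{equation*}
\Tr{}^\delta(F) = \int_0^{1-\delta}\!\!\int_0^{1-\delta} k(t,s,t+\delta,s+\delta)\, dt\, ds,
\end{equation*}
I would partition $[0,1-\delta]^2$ into the pixels $I_{i,j}^K$ with $i \le K_1 - \delta K_1$ and $j \le K_2 - \delta K_2$; on each such pixel the integrand is constant and equal to $\mathbf M[i,j,i+\delta K_1,j+\delta K_2]$, while the pixel has area $K_1^{-1}K_2^{-1}$. Summing over $i,j$ recovers $K_1^{-1}K_2^{-1}\Tr{}^\delta(\mathbf M)$.

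Next I would treat Part 1 (Part 2 is symmetric and requires no new ideas). Fixing $t \in I_i^{K_1}$ and $t' \in I_k^{K_1}$, I would compute the kernel $k_1$ of $\Tr{1}^\delta(F)$ from the integral in \eqref{eq:shifted_partial_trace_integral} and, using the alignment of $\delta$ with the $K_2$-grid, identify
\begin{equation*}
k_1(t,t') = K_2^{-1} \sum_{j=1}^{K_2 - \delta K_2} \mathbf M[i,j,k,j+\delta K_2] = K_2^{-1}\,\Tr{1}^{\delta}(\mathbf M)[i,k],
\end{equation*}
interpreting the discrete shift as $\delta K_2$. Thus $k_1$ is piecewise constant on $\{I_i^{K_1} \times I_k^{K_1}\}$, and computing the Hilbert--Schmidt norm $\verti{\Tr{1}^\delta(F)}_2^2 = \iint k_1(t,t')^2\,dt\,dt'$ yields a sum of $K_1^{-2}$-area pixels weighted by the squared entries of $K_2^{-1}\Tr{1}^\delta(\mathbf M)$, giving $K_1^{-2}K_2^{-2}\|\Tr{1}^\delta(\mathbf M)\|_F^2$. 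Taking square roots finishes this part.

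The only real obstacle is the bookkeeping: correctly identifying the admissible index ranges $i \le K_1 - \delta K_1$, $j \le K_2 - \delta K_2$, and verifying under each alignment hypothesis that no pixel is cut by the shifted integration domain. Once this is in place, all three identities follow at once from integrating piecewise constant functions, without any need for approximation or limiting arguments.
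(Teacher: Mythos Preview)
Your approach is correct and more direct than the paper's. You work straight from the integral formulas in Definition~\ref{def:SPT}, partition the integration domain into pixels, and read off the sums. The paper instead expands $F$ in the orthonormal system $\{g_{i,j}^K\}$ of normalized pixel indicators, writes $F = K_1^{-1}K_2^{-1}\sum_{i,j,k,l}\mathbf M[i,j,k,l]\,g_{i,j}\otimes g_{k,l}$, and then uses linearity together with the abstract Definition~\ref{def:partial_T} to compute $\Tr{1}^\delta$ on each rank-one piece $g_{i,j}\otimes g_{k,l}$ via the factorization $g_{i,j} = g_i^{(1)}\otimes g_j^{(2)}$. Your route is shorter and needs no operator-theoretic machinery; the paper's route is more in keeping with the framework of Appendix~\ref{app:B} and is slightly more careful about one point you gloss over: Definition~\ref{def:SPT} is stated only for \emph{continuous} kernels, whereas the piecewise-constant kernel of $F$ is discontinuous, so strictly speaking you are invoking the integral formula outside its stated domain. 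This is harmless---the integrals are well-defined for bounded piecewise-constant kernels and coincide with the general definition---but the paper's use of Definition~\ref{def:partial_T} sidesteps the issue entirely.
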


\begin{proof}
We only show the first part, since the other two parts are similar.

Let $g_{i,j}(t,s) = \sqrt{K_1 K_2} \mathds{1}_{[(t,s) \in I_{i,j}^K]}$. Since
\[
k(t,s,t',s') = K_1^{-1} K_2^{-1} \sum_{i=1}^{K_1} \sum_{j=1}^{K_2} \sum_{k=1}^{K_1} \sum_{l=1}^{K_2} \mathbf M[i,j,k,l] g_{i,j}(t,s)  g_{k,l}(t',s')\,,
\]
we can express $F$ as
\[
F = K_1^{-1} K_2^{-1} \sum_{i=1}^{K_1} \sum_{j=1}^{K_2} \sum_{k=1}^{K_1} \sum_{l=1}^{K_2} \mathbf M[i,j,k,l] g_{i,j} \otimes g_{k,l} \,.
\]
It now follows from linearity of shifted partial tracing that
\begin{equation}\label{eq:blb}
\Tr{1}^\delta(F) = K_1^{-1} K_2^{-1} \sum_{i=1}^{K_1} \sum_{j=1}^{K_2} \sum_{k=1}^{K_1} \sum_{l=1}^{K_2} \mathbf M[i,j,k,l] \Tr{1}^\delta( g_{i,j} \otimes g_{k,l} ) \,.
\end{equation}

Since $I_{i,j}^K$ is a cartesian product of two intervals, we can write $I_{i,j}^K = I_i^K \times I_j^K$. Then $g_{i,j} = g^{(1)}_i \otimes g^{(2)}_j$ with $g^{(1)}_i(t) = \sqrt{K_1} \mathds{1}_{[t \in I_{i}^K]}$ and $g^{(2)}_j(s) = \sqrt{K_1} \mathds{1}_{[s \in I_{j}^K]}$. Furthermore,
\[
g_{i,j} g_{k,l} = g^{(1)}_i \otimes g^{(2)}_i \otimes g^{(1)}_k \otimes g^{(2)}_l = (g^{(1)}_i \otimes g^{(1)}_k) \ct (g^{(2)}_j \otimes g^{(2)}_l) \,
\]
and hence by Definition \ref{def:partial_T} we have $\Tr{1}^\delta( g_{i,j} \otimes g_{k,l} ) = \Tr{}^\delta(g^{(2)}_j \otimes g^{(2)}_l) g^{(1)}_i \otimes g^{(1)}_k $. Note that $\Tr{}^\delta(g^{(1)}_j \otimes g^{(1)}_l) = \mathds{1}_{[j=l+\delta K_1]}$, hence from \eqref{eq:blb} we have
\begin{equation}\label{eq:blbb}
\Tr{1}^\delta(F) = K_1^{-1} K_2^{-1} \sum_{i=1}^{K_1} \sum_{k=1}^{K_1} \left( \sum_{j=1}^{(1-\delta)K_2} \mathbf M[i,j,k,j + \delta K] \right) g^{(2)}_i \otimes g^{(2)}_k \,.
\end{equation}
Thus it is $\verti{\Tr{1}^\delta(F)}_2 = K_1^{-1} K_2^{-1} \left[ \sum_{i=1}^{K_1} \sum_{k=1}^{K_1} \left( \sum_{j=1}^{(1-\delta)K_2} \mathbf M[i,j,k,l] \right)^2\right]^{1/2}$,
while in the discrete case we have
$\left\| \Tr{1}^\delta(\mathbf M) \right\|_F = \left[ \sum_{i=1}^{K_1} \sum_{k=1}^{K_1} \left( \sum_{j=1}^{(1-\delta)K_2} \mathbf M[i,j,k,l] \right)^2\right]^{1/2}$ from by formula \eqref{eq:discrete_PT_def}.
\end{proof}

Note that we actually proved something more general. We can write from \eqref{eq:blbb} that the kernel of $\Tr{1}^\delta(F)$ is
\[
k_1(t,t') = \sum_{i=1}^{K_1} \sum_{k=1}^{K_1} \left( \frac{1}{K_2} \sum_{j=1}^{(1-\delta)K_2} \mathbf M[i,j,k,j+\delta K] \right) \mathds{1}_{t \in I_i} \otimes \mathds{1}_{t \in I_k} \,,
\]
where the term inside the parentheses is almost the $(i,k)$-th element of discrete partial tracing, but instead of summing in the discrete case we have to average in the continuous case. This corresponds to the difference between the Lebesque measure on piecewise constant function on $[0,1]$ with at most $K$ jumps and the counting measure on the set $\{ 1, \ldots, K \}$.

\subsection{Toeplitz Averaging, Circulant Matrices and Fourier Transform}\label{app:C}

We begin this section by showing that a self-adjoint stationary integral operator on $\mathcal L^2[0,1]$ has the Fourier basis as its eigenbasis. We work with $\mathcal L^2[0,1]$ for simplicity, the argument translates easily to higher dimensions.

Let $F$ be a stationary integral operator on $\mathcal L^2[0,1]$ with kernel $k=k(t,s)$, i.e. $k(t,s) = h(t-s)$, $t,s \in [0,1]$, for a symmetric function $h: [-1,1] \to \R$. We expand $h$ into its Fourier series as $h(x) = \sum_{j \in \mathds{Z}} \phi_j e^{-2 \pi i j x}$. 
Thus we have
\[
k(t,s) = \sum_{j \in \mathds{Z}} \phi_j e^{-2 \pi i j t} e^{2 \pi i j s} .
\]
To see that the previous expansion is in fact an eigen-decomposition, note that we have
\[
\int_0^1 k(t,s) e^{-2 \pi i l s} d s = \sum_{j \in \mathds{Z}} \phi_j e^{-2 \pi i j t} \int_0^1  e^{-2 \pi i (l-j) s} d s = \theta_l e^{-2 \pi i l t},
\]
for \mbox{$l = 0,1, \ldots$} and similarly for $-l \in \N$ due to self-adjointness.

The previous justifies the definition of the Toeplitz averaging operator in the continuous case. In the discrete case, there is also a relation between stationary operators and the Fourier transform. It is a well known fact in the time series literature that the periodogram is both the real part of the DFT of the autocovariance function, i.e. of the first row of the (Toeplitz) covariance matrix, and the squared DFT of the data \cite{brockwell1991}. This is a consequence of the Wiener-Khinchin theorem, and it allows one to compute the autocovariance function fast using the FFT. It is straightforward to show that the previous generalizes to the case of 2D data, which is done next for completeness.

Note that in the case of a 1D time series, the 2D covariance operator is captured by the 1D autocovariance. In the case of a 2D datum $\mathbf X \in \R^{K_1 \times K_2}$, the 4D covariance operator $\mathbf C \in \R^{K_1 \times K_2 \times K_1 \times K_2}$ will be captured by the 2D \emph{symbol} $\boldsymbol \Gamma \in \R^{K_1 \times K_2}$. The latter is defined as
\[
\boldsymbol \Gamma[h_1,h_2] = \frac{1}{K_1 K_2} \sum_{k_1=1}^{K_1} \sum_{k_2=1}^{K_2} \mathbf{X}[k_1, k_2] \mathbf{X}^*[k_1 + h_1, k_2 +h_2] .
\]
The DFT of $\mathbf{X}$, denoted as $\mathbf{Z}$, is defined by
\[
\mathbf{X}[k_1,k_2] = \frac{1}{\sqrt{K_1 K_2}} \sum_{a=1}^{K_1} \sum_{b=1}^{K_2} \mathbf{Z}[a,b] e^{-i \omega k_1 a} e^{-i \theta k_2 b},
\]
where $\omega = 2 \pi/K_1$ and $\theta = 2 \pi/K_2$. Thus plugging the DFT of $\mathbf X$ into $\boldsymbol \Gamma$, we obtain
\[
\begin{split}
\boldsymbol \Gamma[h_1,h_2] &= \frac{1}{(K_1 K_2)^2} \sum_{k_1=1}^{K_1} \sum_{k_2=1}^{K_2}
\sum_{a=1}^{K_1} \sum_{b=1}^{K_2} \sum_{t=1}^{K_1} \sum_{s=1}^{K_2}
 \mathbf{Z}[a,b] \mathbf{Z}^*[t,s] e^{-i \omega k_1 a} e^{-i \theta k_2 b} e^{i \omega (k_1+h_1) t} e^{i \theta (k_2+h_2) s}\\
&= \frac{1}{(K_1 K_2)^2} \sum_{a=1}^{K_1} \sum_{b=1}^{K_2} \sum_{t=1}^{K_1} \sum_{s=1}^{K_2}
\mathbf{Z}[a,b] \mathbf{Z}^*[t,s] e^{i \omega h_1 t} e^{i \theta h_2 s}
\underbrace{\Big[ \sum_{k_1=1}^{K_1} e^{-i \omega k_1 (a-t)} \Big]}_{= K_1 \mathds{1}_{[a=t]}}
\underbrace{\Big[ \sum_{k_2=1}^{K_2} e^{-i \theta k_2 (b-s)} \Big]}_{= K_2 \mathds{1}_{[b=s]}} \\
&= \frac{1}{K_1 K_2} \sum_{a=1}^{K_1} \sum_{b=1}^{K_2} \mathbf{Z}[a,b] \mathbf{Z}^*[a,b] e^{i \omega h_1 a} e^{i \theta h_2 b} = \mathbf{W}[h_1,h_2] ,
\end{split}
\]
where $\mathbf{W}$ is the inverse DFT applied to the DFT of $\mathbf{X}$ squared element-wise. Symbolically $\mathbf{W} = \mathrm{ifft}\big(|\mathrm{fft}(\mathbf{X})|^2\big)$, where $| \cdot |^2$ is applied element-wise. This shows that $\topavg(\mathbf X \otimes \mathbf X)$ can be calculated fast using the FFT.

The operator $\topavg(\cdot)$ is linear. Hence, with regards to the tractability of the estimator \eqref{eq:estimator_B}, it remains now to show that $\topavg(\mathbf F \ct \mathbf G)$ can be evaluated efficiently for $\mathbf F \in \R^{K_1 \times K_1}$ and $\mathbf G \in \R^{K_2 \times K_2}$. This is straightforward. For example, $\topavg(\mathbf F \ct \mathbf G)[1,1]$ is the average of the diagonal elements of $\mathbf F \ct \mathbf G$, which can be calculated as a product of the average diagonal element of $\mathbf F$ and average diagonal element of $\mathbf G$. Also, only the symbol of $\topavg(\mathbf F \ct \mathbf G) \in \R^{K_1 \times K_2 \times K_1 \times K_2}$ as an element of $\R^{K_1 \times K_2}$. Altogether, the memory complexity and the number of operations needed for computing the estimator \eqref{eq:estimator_B} in the case of $K_1 = K_2 = K$ is $\O(K^2)$ and $\O(N K^2 \log K + K^3)$, respectively.

The remainder of the section is devoted to showing that a matrix-vector product involving a Toeplitz matrix can be calculated efficiently. For this, we need \textit{circulant matrices} \cite{davis2013}. Recall that matrix $\mathbf Q \in \R^{m \times n}$ is circulant if $\mathbf Q=(q_{ij}) = (q_{j-i+1 \mod n})$, where $\mathbf q \in \R^n$ is the \textit{symbol} of the matrix, i.e. $\mathbf q^\top$ is the first row of $\mathbf Q$. Every circulant matrix is obviously a Toeplitz matrix. Contrarily, every Toeplitz matrix can be embedded into a larger circulant matrix (note that this embedding is not unique). For example, a symmetric Toeplitz matrix $\mathbf{T} \in \R^{n \times n}$ with symbol $\mathbf{t} \in \R^n$ can be embedded into a symmetric circulant matrix $\mathbf{Q} \in \R^{(2n -1) \times (2n -1)}$ with symbol $\mathbf{q} = (t_{1}, \ldots, t_{n},t_{n},\ldots,t_{2})$. In the case of $n=3$, we have
\[
\mathbf Q = \left( \begin{array}{ccc|cc}
t_{1} & t_{2} & t_{3} & t_{3}& t_{2} \\
t_{2} & t_{1} & t_{2} & t_{3}& t_{3} \\
t_{3} & t_{2} & t_{1} & t_{2}& t_{3} \\
\hline
t_{3} & t_{3} & t_{2} & t_{1}& t_{2} \\
t_{2} & t_{3} & t_{3} & t_{2}& t_{1} \\
\end{array}\right)
= \left( \begin{array}{c|c}
\mathbf{T} & \;\cdot \\
\hline
\cdot & \;\cdot
\end{array}\right) \,.
\]
This embedding is useful due to the well known fact that circulant matrices are diagonalizable by the DFT, hence $\mathbf Q = \mathbf E^* \diag(\boldsymbol \lambda) \mathbf E$, where $\mathbf E$ is matrix with the discrete Fourier basis in its columns, i.e. \mbox{$\mathbf E[j,k] = \frac{1}{\sqrt{n}} e^{2 \pi i j k / n}$}. Hence the eigenvalues of $\mathbf{Q}$ can be calculated as the FFT of the symbol $\mathbf{q}$, namely $\boldsymbol \lambda = \mathrm{fft}(\mathbf{q})$. This implies that a matrix-vector product involving a circulant matrix can be calculated in $\O(n \log n)$ as
\begin{equation}\label{eq:fft_multiplication}
\mathbf Q \mathbf v = \mathbf E^* \diag(\boldsymbol \lambda) \mathbf E \mathbf v = \mathrm{ifft}\Big( \boldsymbol \lambda \odot \mathbf E \mathbf v \Big) = \mathrm{ifft}\Big( \mathrm{fft}(\mathbf{q}) \odot \mathrm{fft} (\mathbf v) \Big),
\end{equation}
where $\mathrm{ifft}(\cdot)$ is the inverse FFT and $\odot$ denotes the Hadamard (element-wise) product.
Thus using the circulant embedding, the product of a Toeplitz matrix $\mathbf{T} \in \R^{n \times n}$ with a vector $\mathbf v \in \R^n$ can also be calculated in $\O(n \log n)$:
\begin{equation} \label{eq:fast_multiplication}
\mathbf{Q} \begin{pmatrix}
\mathbf{v} \\
\mathbf 0
\end{pmatrix}
 = \left( \begin{array}{c|c}
\mathbf{T} & \cdot \\
\hline
\cdot & \cdot
\end{array}\right)
\left( \begin{array}{c}
\mathbf{v} \\
\hline
\mathbf 0
\end{array}\right) = \left( \begin{array}{c}
\mathbf T \mathbf{v} \\
\hline
\cdot
\end{array}\right).
\end{equation}

The previous machinery can be naturally extended to higher dimensions, using two-level Toeplitz (resp. circulant) matrices, i.e. Toeplitz (resp. circulant) block matrices with Toeplitz (resp. circulant) blocks. For example, the tensor-matrix product $\widehat{\mathbf B} \mathbf X$ can be written as $\widehat{\mathbf B}_{\mathrm{mat}} \vec{ \mathbf X}$, where $\widehat{\mathbf B}_{\mathrm{mat}}$ is the matricization of $\widehat{\mathbf B}$, which is a two-level Toeplitz matrix. This product can be calculated by embedding $\widehat{\mathbf B}_{\mathrm{mat}}$ into a two-level circulant matrix $\mathbf Q_{\mathrm{mat}}$ and using analogs of \eqref{eq:fft_multiplication} and \eqref{eq:fast_multiplication}. Notably, equation \eqref{eq:fft_multiplication} becomes
\[
\mathbf Q_{\mathrm{mat}} \mathbf X = \mathrm{i2Dfft}\Big( \mathrm{2Dfft}(\boldsymbol \Gamma) \odot \mathrm{2Dfft} (\mathbf X) \Big),
\]
where $\mathrm{2Dfft}$ is the 2D DFT, $\mathrm{i2Dfft}$ is its inverse counterpart, and $\boldsymbol \Gamma \in \R^{(2K_1-1) \times (2K_2-1)}$ is the symbol of $\mathbf Q$, which is the tensorization of $\mathbf Q_{\mathrm{mat}}$. Note that the $K_1 \times K_2$ top-left sub-matrix of $\boldsymbol \Gamma$ is the symbol of $\widehat{\mathbf B}$.

\subsection{Fast Norm Calculation}\label{app:D}

The bandwidth selection strategy discussed in Section 3.4 of the main paper requires calculations of norms of separable-plus-stationary covariances. More generally, norms of the following form need to be calculated:
\begin{equation}\label{eq:norm}
\verti{ \mathbf A_1 \ct \mathbf A_2 + \mathbf B - \mathbf C_1 \ct \mathbf C_2 - \mathbf D}_2 .
\end{equation}
Assuming we work on a $K \times K$ grid, we have $\mathbf A_1, \mathbf A_2,\mathbf C_1,\mathbf C_2 \in \R^{K \times K}$ and $\mathbf B,\mathbf D \in \R^{K \times K \times K \times K}$ (being stationary) in the previous formula. A naive calculation of the norm then requires $\O(K^4)$ flops. In this section, we show that the special structure can be used to reduce the complexity to $\O(K^3)$.

One only needs to realize, that both separable tensors and stationary tensors of size $K \times K \times K \times K$ can be re-arranged into a matrix of size $K^2 \times K^2$ with $K \times K$ blocks such that every block is a rank-one matrix. For example, the diagonal entries of $\mathbf A_1 \ct \mathbf A_2 + \mathbf B - \mathbf C_1 \ct \mathbf C_2 - \mathbf D$ are also entries of
\begin{equation}\label{eq:diagonal}
\diag(\mathbf A_1) \diag(\mathbf A_2)^\top + \mathbf B[1,1] \cdot \mathbf{1} \mathbf{1}^\top - \diag(\mathbf C_1) \diag(\mathbf C_2)^\top - \mathbf D[1,1] \cdot \mathbf{1} \mathbf{1}^\top ,
\end{equation}
where $\mathbf{1} \in \R^K$ is vector of ones. The matrix \eqref{eq:diagonal} is of size $K \times K$, and it is rank-3. The squared Frobenius norm of this rank-3 matrix can be calculated using Gram-Schmidt orthogonalization in only $\O(K)$ flops. Summing together the total of $K^2$ of these blocks, we can calculate the square of \eqref{eq:norm} in $\O(K^3)$ flops. Therefore the norm calculation is within our computational limits.

\subsection{Ensuring Symmetry and Positive Semi-definiteness}\label{app:E}

Among other things, the assumption of separability induces extra symmetry. Every covariance $C$ is symmetric in the sense that $c(t,s,t',s')=c(t',s',t,s)$ for any $t,s,t',s' \in [0,1]$. If $c(t,s,t',s')=c_1(t,t')c_2(s,s')$, it is easy to see that it must be
\[
c(t,s,t',s')= c(t',s,t,s') = c(t,s',t',s) =c(t',s',t,s), \qquad t,s,t',s' \in [0,1].
\]
When we wish to ensure that results of shifted partial tracing are symmetric, we have several options:
\begin{enumerate}
\item symmetrizing the results of shifted partial tracing, for example setting
\[
\widehat{A}_1 = \frac{1}{2} \big[\Tr{1}^\delta(\widehat{ C}_N) + (\Tr{1}^\delta(\widehat{ C}_N))^*\big],
\]
\item inducing the extra symmetry of the covariance, for example $\widehat{A}_1 = \Tr{1}^\delta(\widetilde{C}_N)$ with $\widetilde{c}_N(t,s,t',s') = \frac{1}{2}[ \widehat{c}_N(t,s,t',s') + \widehat{c}_N(t,s',t',s)]$,
\item defining shifted partial tracing in a symmetric manner by replacing \eqref{eq:shifted_kernel} with
\[
k^\delta(t,s) = \begin{cases}
\frac{1}{2} \big[k(t,s+\delta) + k(t+\delta,s)\big], \quad s < 1-\delta , \\
0 , \qquad \qquad \qquad \qquad \qquad \qquad \text{otherwise},
\end{cases}
\]
and developing shifted partial tracing from there, which would ultimately lead to the first formula in \eqref{eq:shifted_partial_trace_integral} replaced by
\[
k_1(t,t') = \int_0^{1-\delta} \frac{1}{2}\big[ k(t,s,t',s+\delta) + k(t,s+\delta,t',s) \big] d s
\]
\end{enumerate}
These options are equivalent due to symmetry of $\widehat{C}_N$ and the fact that adjoining commutes with any linear operator, hence also with shifted partial tracing.

Developing our theory as suggested by option 3 above is straightforward, merely lengthening all the calculations. In practice, option 1 is preferable for computational reasons.

Shifted partial tracing (even the symmetrized one) applied to a positive semi-definite (PSD) operator does not necessarily lead to a PSD operator. In the case of the original operator $C$ being separable, it is easy to see that either $\Tr{1}^\delta{C} \succeq 0$ or $-\Tr{1}^\delta{C} \succeq 0$, so a potential sign flip is enough to ensure PSD. However, $\widehat{C}_N$ is usually not separable even when the original covariance $C$ is. Nonetheless, $\widehat{C}_N$ is still a natural estimator of $C$ and, from our experience, the potential sign flip usually solves the problem. If need be, the eigendecomposition can be calculated and negative eigenvalues set to zero. In the discrete case, this requires $\O(K^3)$ operations and thus it is computationally feasible.

Let us now focus on Toeplitz averaging. Since the argument in \eqref{eq:estimator_B} is symmetric, and since the symmetry is obviously preserved, we only have to discuss positive semi-definiteness. Unfortunately, the argument $\widehat{C}_N - \widehat{A}_1 \ct \widehat{A}_2$ is not necessarily PSD and thus $\widehat{B}$ may also not be. However, using Bochner's theorem the same way as in \cite{hall1994}, the positive semi-definite projection of $\widehat{B}$ can be found. In the discrete case, the matricization of $\widehat{\mathbf B}$ can be embedded into a two-level circulant matrix with symbol $\boldsymbol \Gamma$ (see Appendix~\ref{app:C}). Subsequently, the DFT is applied to $\boldsymbol \Gamma$ to obtain the eigenvalues, negative eigenvalues are set to zero, and the result is transformed back via the inverse DFT, giving the positive part of $\widehat{\mathbf B}$. This procedure requires $\O(K^2 \log K)$ operations when the FFT is used.

If we do not assume that $B$ is stationary and use estimator \eqref{eq:estimator_B_non_stationary}, we can make it positive directly via eigendecomposition.

\subsection{Sub-Problems of the Inverse Algorithm}\label{app:F}

Here we describe how to solve the two linear sub-problems appearing in the ADI scheme. The first sub-problem has the form
\[
(\mathbf A + \rho \mathbf I) \mathbf x = \mathbf y\,,
\]
where $\mathbf A = \mathbf A_2 \otimes_K \mathbf A_1$. Even though $\mathbf A$ has the Kronecker structure, $\mathbf A + \rho \mathbf I$ generally does not. Still, the system can be rewritten in the matrix form as
\begin{equation}\label{eq:Stein}
\mathbf A_2 \mathbf X \mathbf A_1 + \rho \mathbf X = \mathbf Y \,,
\end{equation}
which is the well-known discrete Stein's equation. Although there exist specialized solvers for this particular equation (see \cite{simoncini2016} for an overview), they are not suitable here due to the fact that $\rho$ is usually very small. Instead of using these specialized solvers, we show that, in our case of $\mathbf A_1$ and $\mathbf A_2$ being positive semi-definite, equation \eqref{eq:Stein} has in fact an analytic solution computable in $\mathcal{O}(K^3)$ operations.

We compute the eigendecompositions $\mathbf A_1 = \mathbf U \diag(\boldsymbol \phi) \t{\mathbf U}$ and $\mathbf A_2 = \mathbf V \diag(\boldsymbol \psi) \t{\mathbf{V}}$. Then, using the knowledge of the spectra of Kronecker products (cf. \cite{vanloan1993}), system \eqref{eq:Stein} can be vectorized as
\[
(\mathbf U \otimes_K \mathbf V) \diag\Big[\vec{\boldsymbol \phi \t{\boldsymbol \psi}} \Big] \t{(\mathbf U \otimes_K \mathbf V)} \mathbf x + \rho \mathbf x = \mathbf y \,,
\]
where $\boldsymbol \phi \t{\boldsymbol \psi}$ is a matrix corresponding to the vector of eigenvalues of $\mathbf A$, which is subsequently rearranged into a large diagonal matrix by the $\diag[\cdot]$ operator. Secondly, utilizing the fact that $\mathbf U \otimes_K \mathbf V$ is an orthonormal basis, we can write
\[
(\mathbf U \otimes_K \mathbf V) \diag\Big[\vec{\mathbf H} \Big] \t{(\mathbf U \otimes_K \mathbf V)} \mathbf x = \mathbf y \,, 
\]
where we denote $\mathbf H := \boldsymbol \phi \t{\boldsymbol \psi} + \rho \mathbf{1}$, with $\mathbf{1}$ being a matrix with all entries equal to 1. Finally, one can express the solution as
\[
x = (\mathbf U \otimes_K \mathbf V) \diag\Big[\vec{\mathbf H} \Big]^{-1} \t{(\mathbf U \otimes_K \mathbf V)} \mathbf y \,.
\]
Using property \eqref{eq:kron}, this can be matricized back to
\[
\mathbf X =  \mathbf V(\mathbf G \odot \t{\mathbf U} \mathbf Y \mathbf V) \t{\mathbf U}\,,
\]
where $\mathbf G$ is the element-wise inverse of $\mathbf H$ and $\odot$ denotes the Hadamard (element-wise) product. Hence we found a solution, which is computable in $\mathcal{O}(K^3)$ operations.

The second sub-problem has the following form:
\[
(\mathbf B + \rho \mathbf I) \mathbf x = \mathbf y \,,
\]
where $\mathbf{B}$ is a two-level Toeplitz matrix, and this structure is preserved when a diagonal matrix is added to $\mathbf{B}$. Hence we only need to devise a solver for $\mathbf{B} \mathbf x = \mathbf y$, where $\mathbf B$ is positive definite. Again, even though specialized solvers for this structured linear system exist, provably providing a solution in $\O(K^2 \log^2(K))$, they are not easily accessible, and they are focused on cases when $\mathbf B$ is not symmetric. The latter is likely the case because Preconditioned Conjugate Gradient (PCG) is the method of choice, when positive definiteness is granted.

We do not describe the PCG method here, as it is a classical optimization method. Notably, \citep{shewchuk1994} provides both rigorous proofs and informal geometrical arguments for the fact that CG converges faster if the eigenvalues of $B$ are clustered, which can be ensured by preconditioning. One CG step takes $\O(K^2 \log(K))$ operations, and this complexity is retained if a suitable preconditioning is used. Moreover, under mild assumptions and with a convenient preconditioner, the convergence rate of the PCG is super-linear, which means only a constant number of iterations is needed to attain a prescribed accuracy \citep{chan2007}. Even though we cannot guarantee these mild assumptions, the second choice of preconditioning described in Chapter 5 of \citep{chan2007} was shown to ensure the fixed number of iterations for problems structurally very similar to ours. Hence we use this preconditioning, and show empirically in Section \ref{sec:simulation} that convergence is truly rapid.

\subsection{Proofs of Asymptotic Results}\label{app:G}

\subsubsection{Proof of Theorem \ref{thm:asymptotics}}\label{app:G1}

We begin with an auxiliary result that will be used to prove Theorem \ref{thm:asymptotics}.

\begin{lemma}\label{lem:gaussianity}
\begin{enumerate}
\item Let $Z \in \mathcal{S}_1(\mathcal{L}^2[0,1]^2)$ be a Gaussian random element. Then $\Tr{1}^\delta(Z)$ and $\Tr{2}^\delta(Z)$ are Gaussian random elements of $\mathcal{S}_1(\mathcal{L}^2[0,1]^2)$.
\item Let $Z \in \mathcal{S}_1(\mathcal{L}^2[0,1]^2)$ be a Gaussian random element. Then $\topavg(Z)$ is a Gaussian random elements of $\mathcal{S}_1(\mathcal{L}^2[0,1]^2)$.
\item Let $Z \in \mathcal{S}_1(\mathcal{L}^2[0,1])$ be a Gaussian random element and $F \in \mathcal{S}_1(\mathcal{L}^2[0,1])$. Then $Z \ct F$ and $F \ct Z$ are Gaussian random elements in $\mathcal{S}_1(\mathcal{L}^2[0,1]^2)$.
\end{enumerate}
\end{lemma}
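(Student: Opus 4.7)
The plan is to reduce all three parts to a single general principle: if $Z$ is a Gaussian random element in a real separable Banach space $\mathcal{B}_1$ and $T:\mathcal{B}_1 \to \mathcal{B}_2$ is a bounded linear map into another separable Banach space, then $T(Z)$ is a Gaussian random element in $\mathcal{B}_2$. This is immediate from the definition: Gaussianity of $Z$ means $\phi(Z)$ is a real Gaussian variable for every $\phi \in \mathcal{B}_1^*$; for any $\psi \in \mathcal{B}_2^*$, the composition $\psi \circ T$ lies in $\mathcal{B}_1^*$, so $\psi(T(Z)) = (\psi \circ T)(Z)$ is Gaussian, giving the claim. So the entire lemma reduces to verifying boundedness and linearity of the three maps involved, acting on the appropriate trace-class spaces.

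For part~(1), linearity of $\Tr{1}^\delta$ and $\Tr{2}^\delta$ on $\mathcal{S}_1(\mathcal{L}^2[0,1]^2)$ is established in Appendix~\ref{app:B} (Definition~\ref{def:partial_T} and the subsequent density argument), where boundedness with norm at most $1$ follows from the estimate
\[
\verti{\T{1}^\delta(F)}_1 = \sup_{\verti{G}_\infty = 1}\bigl|\Tr{}\bigl[(G \ct S^\delta) F\bigr]\bigr| \leq \verti{F}_1,
\]
proved in \eqref{eq:shifted_trace_bound}. The statement for $\Tr{2}^\delta$ is analogous. Invoking the general principle yields the Gaussianity of $\Tr{1}^\delta(Z)$ and $\Tr{2}^\delta(Z)$.

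For part~(2), Definition~\ref{def:toeplitz_averaging} exhibits $\topavg$ as the orthogonal projection in $\mathcal{S}_2(\mathcal{L}^2[0,1]^2)$ onto the closed subspace of stationary operators, and the argument given after the definition shows that this projection maps $\mathcal{S}_1$ into $\mathcal{S}_1$ with $\verti{\topavg(F)}_1 \leq \verti{F}_1$ (the Fourier coefficients $\gamma_{ijij}$ form a subsequence of the absolutely summable family $\{\gamma_{ijkl}\}$). Hence $\topavg$ is a bounded linear operator on $\mathcal{S}_1(\mathcal{L}^2[0,1]^2)$, and the principle applies. For part~(3), the map $Z \mapsto Z \ct F$ is manifestly linear in $Z$, and from the isometry identity $\verti{Z \ct F}_1 = \verti{Z}_1 \verti{F}_1$ recalled after \eqref{eq:operatorTensorProduct}, it is a bounded operator $\mathcal{S}_1(\mathcal{L}^2[0,1]) \to \mathcal{S}_1(\mathcal{L}^2[0,1]^2)$ of norm $\verti{F}_1$; the same holds for $Z \mapsto F \ct Z$. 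Applying the principle one last time completes the proof.

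The only subtle point, and the place where a careful reader might object, is the invocation of the general principle itself, namely that a bounded linear image of a Banach-space-valued Gaussian element is again Gaussian. I would state this as a one-line lemma or cite a standard reference (e.g., Bogachev's \emph{Gaussian Measures}); everything else is a routine verification of boundedness that is in fact already scattered through Appendices~\ref{app:A}--\ref{app:B}.
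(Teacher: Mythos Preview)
Your proposal is correct and takes essentially the same approach as the paper. The paper also reduces each part to the preservation of Gaussianity under bounded linear maps, except that it works concretely with the trace pairing $\Tr{}(G\,\cdot\,)$ (using $\mathcal{S}_\infty$ as the dual of $\mathcal{S}_1$) rather than stating the abstract Banach-space principle; for part~(1) it invokes the identity \eqref{eq:trace_characterization}, for part~(2) it passes to the adjoint $\topavg^\star$, and for part~(3) it simply cites \cite{aston2017supplement}.
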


\begin{proof}
Firstly, note that a random element $Z \in \mathcal{S}_1(\mathcal{L}^2[0,1]^2)$ is Gaussian if, for any $G \in \mathcal{S}_\infty(\mathcal{L}^2[0,1]^2)$, $\Tr{}(GZ)$ is Gaussian \cite{bosq2012}.

Secondly, for an operator $F: \mathcal B_1 \to \mathcal B_2$, its adjoint $F^\star: \mathcal B_2^\star \to \mathcal B_1^\star$ is defined so for any $G \in \mathcal B_2^\star$ we have $F^\star G = G F$.
 
\begin{enumerate}

\item This follows immediately from the above and formula \eqref{eq:trace_characterization}.

\item For $\topavg: \mathcal{S}_1(\mathcal{L}^2[0,1]^2) \to \mathcal{S}_1(\mathcal{L}^2[0,1]^2)$, the adjoint $\topavg^\star: \mathcal{S}_\infty(\mathcal{L}^2[0,1]^2) \to \mathcal{S}_\infty(\mathcal{L}^2[0,1]^2)$ satisfies $\topavg^\star(G) = G \, \topavg\;$ for any $G \in S_\infty(\mathcal{L}^2[0,1]^2)$. Hence we have $\Tr{}(G \, \topavg(Z)) = \Tr{}(\topavg^\star(G) Z)$, where $\topavg^\star(G) \in S_\infty(\mathcal{L}^2[0,1]^2)$.

\item This is Proposition 1.2 in \citep{aston2017supplement}. A proof can be found there.
\end{enumerate}
\end{proof}

\begin{proof}[Proof of Theorem \ref{thm:asymptotics}]
Recall that in our model it holds $A_1 \ct A_2 = \frac{\Tr{1}^\delta(C) \ct \Tr{2}^\delta(C)}{\Tr{}^\delta(C)}$. Hence we have
\begin{equation}\label{eq:linearization_distr}
\begin{split}
\sqrt{N}( \widehat{A}_1 \ct \widehat{A}_2 - A_1 \ct A_2 ) &= \sqrt{N} \Bigg( \frac{\Tr{1}^\delta(\widehat{C}_N) \ct \Tr{2}^\delta(\widehat{C}_N)}{\Tr{}^\delta(\widehat{C}_N)} - \frac{\Tr{1}^\delta(C) \ct \Tr{2}^\delta(C)}{\Tr{}^\delta(C)} \\
&\qquad\qquad \pm \frac{\Tr{1}^\delta(\widehat{C}_N) \ct \Tr{2}^\delta(C)}{\Tr{}^\delta(\widehat{C}_N)} \pm \frac{\Tr{1}^\delta(C) \ct \Tr{2}^\delta(C)}{\Tr{}^\delta(\widehat{C}_N)} \Bigg) \\
&= \frac{1}{\Tr{}^\delta(\widehat{C}_N)} \Big( \Tr{1}^\delta(\widehat{C}_N) \ct \Tr{2}^\delta\big[\sqrt{N}\big(\widehat{C}_N - C \big) \big] \\
&\qquad\qquad\quad\;+ \Tr{1}^\delta\big[\sqrt{N}\big(\widehat{C}_N - C \big)\big] \ct \Tr{2}^\delta(C) \\
&\qquad\qquad\quad\;- \Tr{}^\delta\big[\sqrt{N}\big(\widehat{C}_N - C \big) \big] \big( A_1 \ct A_2 \big) \Big).
\end{split}
\end{equation}
Since $\sqrt{N}\big(\widehat{C}_N - C \big)$ converges in distribution to a Gaussian random element $Z$ by the central limit theorem of \cite{mas2006}, we obtain from the continuous mapping theorem in metric spaces \cite{billingsley2013} and Slutsky's theorem that the limiting distribution of $\sqrt{N}( \widehat{A}_1 \ct \widehat{A}_2 - A_1 \ct A_2 )$ is the law of
\[
\frac{1}{\Tr{}^\delta(C)} \Big( \Tr{1}^\delta(C) \ct \Tr{2}^\delta(Z) + \Tr{1}^\delta(Z) \ct \Tr{2}^\delta(C) - \Tr{}^\delta(Z) \big( A_1 \ct A_2 \big) \Big),
\]
which is Gaussian by Lemma \ref{lem:gaussianity}.

Finally, we turn our attention to $\widehat{B}$:
\begin{equation}\label{eq:asymptotics_B}
\begin{split}
\sqrt{N}(\widehat{B} - B) &= \sqrt{N}\left( \topavg(\widehat{C}_N - \widehat{A}_1 \ct \widehat{A}_2) - \topavg(C - A_1 \ct A_2) \right) \\
&= \topavg\left( \sqrt{N}(\widehat{C}_N - C) - \sqrt{N}( \widehat{A}_1 \ct \widehat{A}_2 - A_1 \ct A_2 ) \right) .
\end{split}
\end{equation}
Plugging formula \eqref{eq:linearization_distr} into \eqref{eq:asymptotics_B} and using the CMT again, we obtain
\[
\sqrt{N}(\widehat{B} - B) \stackrel{d}{\longrightarrow} \topavg \left( Z - \frac{\Tr{1}^\delta(C) \ct \Tr{2}^\delta(Z)}{\Tr{}^\delta(C)} - \frac{\Tr{1}^\delta(Z) \ct \Tr{2}^\delta(C)}{\Tr{}^\delta(C)} + \frac{\Tr{}^\delta(Z)}{\Tr{}^\delta(C)} \big( A_1 \ct A_2 \big) \right).
\]
The right-hand side before Toeplitz averaging is Gaussian again due to the reasons above. And by the previous lemma it remains Gaussian after Toeplitz averaging.
\end{proof}

\subsubsection{Proof of Theorem \ref{thm:rates_adaptive}}\label{app:G2}

Next, we provide the rates of convergence for the adaptive bandwidth choice of Section \ref{sec:delta}. For that, we first need to study the behavior of the empirical objective function in \eqref{eq:CV_empirical}.

Let us denote the empirical objective as
\[
\widehat{\Xi}(\delta) = \verti{\widehat{C}(\delta)}_2^2 - \frac{2}{N} \sum_{n=1}^N \langle X_n, \widehat{C}_{-n}(\delta) X_n \rangle
\]
and the theoretical objective as
\[
\Xi(\delta) = \verti{C(\delta) - C}_2^2.
\]
Recall that $\widehat{C}(\delta) = \widehat{A}(\delta) + \widehat{B}(\delta)$ is our separable-plus-banded estimator, while $C(\delta)$ is its limit version with infinite number of samples, i.e. $C(\delta) = A(\delta) + B(\delta)$ with
\[
A(\delta)= \frac{\Tr{1}^\delta(C) \ct \Tr{2}^\delta(C)}{\Tr{}^\delta(C)}, \quad B(\delta) = \topavg\big(C - A_1(\delta) \ct A_2(\delta)\big).
\]

The following linearization of our estimators will often allow us to develop suitable bounds:
\begin{equation}\label{eq:linearization}
\begin{split}
\widehat{A}(\delta_1) - A(\delta_2) &= \frac{\Tr{1}^{\delta_1}(\widehat{C}_N)}{\Tr{}^{\delta_1}(\widehat{C}_N)} \ct \left[ \Tr{2}^{\delta_1}(\widehat{C}_N) - \Tr{2}^{\delta_2}(C) \right] + \left[ \Tr{1}^{\delta_1}(\widehat{C}_N) - \Tr{1}^{\delta_2}(C) \right] \ct \frac{\Tr{2}^{\delta_2}(C)}{\Tr{}^{\delta_1}(\widehat{C}_N)} \\
&\hspace*{4.5cm}+ \frac{\Tr{1}^{\delta_2}(C) \ct \Tr{2}^{\delta_2}(C)}{\Tr{}^{\delta_2}(C) \Tr{}^{\delta_1}(\widehat{C}_N)} \left[ \Tr{}^{\delta_2}(C) - \Tr{}^{\delta_1}(\widehat{C}_N) \right].
\end{split}
\end{equation}

\begin{proposition}\label{prop:consistency}
Let $\delta$ be such that $\Tr{}^\delta(C) \neq 0$ and let assumption (A1) hold, then $\verti{\widehat{C}(\delta) - C(\delta)}_2^2 = \O_P(N^{-1})$.
\end{proposition}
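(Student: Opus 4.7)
The strategy is to bound $\verti{\widehat{A}(\delta)-A(\delta)}_2$ and $\verti{\widehat{B}(\delta)-B(\delta)}_2$ separately at rate $\O_P(N^{-1/2})$, then combine by the triangle inequality and square.

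First, assumption (A1) together with the central limit theorem of \cite{mas2006} yields $\sqrt{N}(\widehat C_N - C) \stackrel{d}{\to} Z$ in the trace-norm topology; in particular $\verti{\widehat C_N - C}_1 = \O_P(N^{-1/2})$, and a fortiori $\verti{\widehat C_N - C}_2 = \O_P(N^{-1/2})$. By Proposition \ref{prop:SPT}, $\Tr{}^\delta$ is a bounded linear functional and $\Tr{1}^\delta,\Tr{2}^\delta$ are bounded linear operators on the trace class, so
\[
\big|\Tr{}^\delta(\widehat C_N)-\Tr{}^\delta(C)\big| = \O_P(N^{-1/2}), \qquad \verti{\Tr{i}^\delta(\widehat C_N)-\Tr{i}^\delta(C)}_1 = \O_P(N^{-1/2}),\; i=1,2.
\]
Since $\Tr{}^\delta(C)\neq 0$, the event $\{|\Tr{}^\delta(\widehat C_N)| \geq \tfrac12 |\Tr{}^\delta(C)|\}$ has probability tending to one, so $1/\Tr{}^\delta(\widehat C_N) = \O_P(1)$.

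Next, specialize the linearization formula \eqref{eq:linearization} to $\delta_1=\delta_2=\delta$: it writes $\widehat A(\delta)-A(\delta)$ as a sum of three terms, each of which is a product of (i) factors that are $\O_P(1)$ in the appropriate Schatten-$p$ norm (by the previous paragraph, together with $\verti{\Tr{i}^\delta(\widehat C_N)}_1 \le \verti{\Tr{i}^\delta(C)}_1 + \O_P(N^{-1/2})$) with (ii) a single ``error'' factor of the form $\Tr{i}^\delta(\widehat C_N)-\Tr{i}^\delta(C)$ or $\Tr{}^\delta(\widehat C_N)-\Tr{}^\delta(C)$, which is $\O_P(N^{-1/2})$. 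Using the isometry $\verti{F_1 \ct F_2}_2 = \verti{F_1}_2\verti{F_2}_2$ and the inclusion $\verti{\cdot}_2 \leq \verti{\cdot}_1$, each term is $\O_P(N^{-1/2})$ in the Hilbert--Schmidt norm, hence $\verti{\widehat A(\delta)-A(\delta)}_2 = \O_P(N^{-1/2})$.

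For the banded part, write
\[
\widehat B(\delta) - B(\delta) = \topavg\bigl(\widehat C_N - C\bigr) - \topavg\bigl(\widehat A(\delta)-A(\delta)\bigr).
\]
Since $\topavg(\cdot)$ is an orthogonal projection on $\mathcal S_2$ (Definition \ref{def:toeplitz_averaging} and the comments following), it is norm-nonincreasing in $\verti{\cdot}_2$, so
\[
\verti{\widehat B(\delta)-B(\delta)}_2 \leq \verti{\widehat C_N - C}_2 + \verti{\widehat A(\delta)-A(\delta)}_2 = \O_P(N^{-1/2}).
\]
Combining the two bounds by the triangle inequality gives $\verti{\widehat C(\delta)-C(\delta)}_2 = \O_P(N^{-1/2})$, and squaring yields the claim. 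The only substantive point is the control of the random denominator $\Tr{}^\delta(\widehat C_N)$, which is handled by the hypothesis $\Tr{}^\delta(C)\neq 0$ and the boundedness of $\Tr{}^\delta$ on $\mathcal S_1$; everything else is a routine application of boundedness/isometry of the operators involved together with the trace-norm CLT.
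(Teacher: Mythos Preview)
Your proof is correct and follows essentially the same approach as the paper: both arguments invoke the trace-norm CLT of \cite{mas2006} under (A1), use the boundedness of $\Tr{}^\delta,\Tr{1}^\delta,\Tr{2}^\delta$ on $\mathcal S_1$, apply the linearization \eqref{eq:linearization} with $\delta_1=\delta_2=\delta$ to control $\widehat A(\delta)-A(\delta)$, control the random denominator via $\Tr{}^\delta(C)\neq 0$, and use the non-expansiveness of $\topavg$ on $\mathcal S_2$ for the banded part. The only cosmetic difference is that the paper writes the initial triangle inequality as $\verti{\widehat C(\delta)-C(\delta)}_2 \le \verti{\widehat A(\delta)-A(\delta)}_2 + \verti{\topavg(\widehat A(\delta)-A(\delta))}_2 + \verti{\widehat C_N - C}_2$ directly, whereas you separate $\widehat B(\delta)-B(\delta)$ explicitly first; these are the same decomposition.
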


\begin{proof}
Firstly, note that
\[
\verti{\widehat{C}(\delta) - C(\delta)}_2 \leq \verti{\widehat{A}(\delta) - A(\delta)}_2 + \verti{\topavg\big(\widehat{A}(\delta) - A(\delta)\big)}_2 + \verti{\widehat{C}_N - C}_2.
\]
Note that $\verti{\topavg\big(\widehat{A}(\delta) - A(\delta)\big)}_2 \leq \verti{\widehat{A}(\delta) - A(\delta)}_2$
due to Toeplitz averaging being a linear projection. Hence we only need to bound the difference between the separable parts, for which we use the linearization formula \eqref{eq:linearization}:
\begin{equation}\label{eq:proof11}
\begin{split}
\verti{\widehat{A}(\delta) - A(\delta)}_2 &\leq \verti{\widehat{A}(\delta) - A(\delta)}_1 \\
&\leq \frac{\verti{\Tr{1}^{\delta}(\widehat{C}_N)}_1}{|\Tr{}^{\delta}(\widehat{C}_N)|} \verti{ \Tr{2}^{\delta}(\widehat{C}_N - C)}_1 + \verti{\Tr{1}^{\delta}(\widehat{C}_N - C)}_1 \frac{\verti{\Tr{2}^{\delta}(C)}_1}{|\Tr{}^{\delta}(\widehat{C}_N)|} \\
&\hspace*{4.5cm}+ \frac{\verti{\Tr{1}^{\delta}(C)}_1 \verti{\Tr{2}^{\delta}(C)}_1}{|\Tr{}^{\delta}(C) \Tr{}^{\delta}(\widehat{C}_N)|} \left| \Tr{}^{\delta}(\widehat{C}_N - C) \right|.
\end{split}
\end{equation}
From \eqref{eq:shifted_trace_bound}, we have
\[
\verti{ \Tr{2}^{\delta}(\widehat{C}_N - C)}_1 \leq \verti{\widehat{C}_N - C}_1 = \O_P(N^{-1/2})
\]
since the CLT for $\widehat{C}_N$ holds \cite{mas2006}, and similarly for $\verti{\Tr{1}^{\delta}(\widehat{C}_N - C)}_1$ and $\big|\Tr{}^{\delta}(\widehat{C}_N - C)\big|$. The statement then follows upon noticing that the numerators on the right hand side of \eqref{eq:proof11} are obviously bounded while the denominators are bounded away from zero for $N$ large enough.
\end{proof}

\begin{proposition}\label{prop:obj_consistency}
Let $\delta$ be such that $\Tr{}^\delta(C) \neq 0$ and let assumption (A1) hold, then $\widehat{\Xi}(\delta) = \Xi(\delta) - \verti{C}_2^2 + \O_P(N^{-1/2})$.
\end{proposition}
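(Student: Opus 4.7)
The plan is to expand both the empirical and theoretical objectives and then show that each term differs by $\mathcal{O}_P(N^{-1/2})$. Writing $\Xi(\delta) = \verti{C(\delta)}_2^2 - 2 \langle C(\delta), C \rangle + \verti{C}_2^2$, the target inequality is equivalent to
\[
\widehat{\Xi}(\delta) - \big( \verti{C(\delta)}_2^2 - 2 \langle C(\delta), C \rangle \big) = \mathcal{O}_P(N^{-1/2}),
\]
so I would split the left-hand side into the deterministic piece $\verti{\widehat{C}(\delta)}_2^2 - \verti{C(\delta)}_2^2$ and the leave-one-out piece $-2\big[ \frac{1}{N} \sum_n \langle X_n, \widehat{C}_{-n}(\delta) X_n\rangle - \langle C(\delta), C \rangle \big]$, and bound each separately.

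For the first piece, I would use the reverse triangle inequality to write $\big|\verti{\widehat{C}(\delta)}_2^2 - \verti{C(\delta)}_2^2\big| \leq \verti{\widehat{C}(\delta) - C(\delta)}_2 \cdot \big(\verti{\widehat{C}(\delta)}_2 + \verti{C(\delta)}_2\big)$. By Proposition~\ref{prop:consistency} the first factor is $\mathcal{O}_P(N^{-1/2})$, and since $\verti{C(\delta)}_2$ is bounded deterministically while $\verti{\widehat{C}(\delta)}_2 = \verti{C(\delta)}_2 + \mathcal{O}_P(N^{-1/2}) = \mathcal{O}_P(1)$, the product is $\mathcal{O}_P(N^{-1/2})$.

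For the leave-one-out piece, I would first use the Hilbert–Schmidt identity $\langle X_n, F X_n\rangle = \langle F, X_n \otimes X_n \rangle_{\mathrm{HS}}$ (valid since $F = \widehat{C}_{-n}(\delta)$ is self-adjointizable and Hilbert–Schmidt) and decompose
\[
\tfrac{1}{N}\sum_n \langle \widehat{C}_{-n}(\delta), X_n \otimes X_n\rangle - \langle C(\delta), C\rangle = \underbrace{\tfrac{1}{N}\sum_n \langle \widehat{C}_{-n}(\delta) - C(\delta), X_n \otimes X_n\rangle}_{(\mathrm{I})} + \underbrace{\langle C(\delta), \widehat{C}_N - C\rangle}_{(\mathrm{II})}.
\]
Term (II) is $\mathcal{O}_P(N^{-1/2})$ by the CLT of \cite{mas2006} (applied to $\widehat{C}_N$) combined with boundedness of $\verti{C(\delta)}_2$. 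For term (I), I would further split $X_n \otimes X_n = (X_n \otimes X_n - C) + C$. The piece involving $C$ reduces to $\langle \frac{1}{N}\sum_n \widehat{C}_{-n}(\delta) - C(\delta), C\rangle$, which is $\mathcal{O}_P(N^{-1/2})$ by applying Proposition~\ref{prop:consistency} uniformly to the leave-one-out estimators (the fact that $\widehat{C}_{-n}(\delta)$ is built from $N-1$ samples instead of $N$ does not alter the rate, and the linearization \eqref{eq:linearization} makes this quantitative). The remaining piece I would bound by Cauchy–Schwarz,
\[
\Big| \tfrac{1}{N}\sum_n \langle \widehat{C}_{-n}(\delta) - C(\delta), X_n \otimes X_n - C\rangle \Big| \leq \Big( \tfrac{1}{N}\sum_n \verti{\widehat{C}_{-n}(\delta) - C(\delta)}_2^2\Big)^{1/2} \Big( \tfrac{1}{N}\sum_n \verti{X_n \otimes X_n - C}_2^2\Big)^{1/2}.
\]
The first factor is $\mathcal{O}_P(N^{-1/2})$ uniformly in $n$ by Proposition~\ref{prop:consistency}, and the second factor is $\mathcal{O}_P(1)$ by the law of large numbers using assumption (A1), which provides the required fourth-moment control on $X$.

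The main obstacle is controlling term (I), since $\widehat{C}_{-n}(\delta) - C(\delta)$ is not independent of $X_n \otimes X_n - C$ in a way that makes the sum a martingale (the dependence is only avoided between $\widehat{C}_{-n}(\delta)$ and $X_n$, but the sum mixes different $n$). The clean way around this, as above, is the Cauchy–Schwarz bound together with a uniform-in-$n$ application of Proposition~\ref{prop:consistency}; the uniformity itself hinges on the fact that the linearization \eqref{eq:linearization} depends on $n$ only through empirical averages of $N-1$ summands, whose fluctuations are governed by the same CLT-type bound. All other steps are routine.
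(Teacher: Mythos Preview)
Your approach is correct, but it takes a somewhat different route from the paper's proof. The paper first replaces the leave-one-out estimator $\widehat{C}_{-n}(\delta)$ in $\widehat{\Xi}(\delta)$ by the full-sample estimator $\widehat{C}(\delta)$, yielding an auxiliary objective $\widetilde{\Xi}(\delta)$. The point of this substitution is that the sum $\frac{1}{N}\sum_n \langle X_n, \widehat{C}(\delta) X_n\rangle$ collapses to the single inner product $\langle \widehat{C}(\delta), \widehat{C}_N\rangle$, after which comparing $\widetilde{\Xi}(\delta)$ with $\Xi(\delta) - \verti{C}_2^2$ requires only one Cauchy--Schwarz bound and one mean-value argument, both involving a single estimator rather than $N$ of them. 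The discrepancy $|\widehat{\Xi}(\delta) - \widetilde{\Xi}(\delta)|$ is then bounded separately---and in fact shown to be $\mathcal{O}_P(N^{-1})$, one order smaller than needed---via the explicit identity $\widehat{C}_N - \widehat{C}_{N,-n} = \tfrac{1}{N}X_n\otimes X_n + \tfrac{1}{N(N-1)}\sum_{j\neq n} X_j\otimes X_j$ combined with the linearization~\eqref{eq:linearization}.

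Your decomposition instead works with the $\widehat{C}_{-n}(\delta)$ throughout, which forces the Cauchy--Schwarz across $n$ and, as you note, the uniform-in-$n$ control of $\verti{\widehat{C}_{-n}(\delta) - C(\delta)}_2$. To make this uniformity rigorous you should spell out that $\widehat{C}_{N,-n} - C = \tfrac{N}{N-1}(\widehat{C}_N - C) - \tfrac{1}{N-1}(X_n\otimes X_n - C)$, so that $\tfrac{1}{N}\sum_n \verti{\widehat{C}_{N,-n} - C}_1^2$ splits into an $\mathcal{O}_P(N^{-1})$ term (from the full-sample CLT) plus a term of order $N^{-2}\cdot\tfrac{1}{N}\sum_n\|X_n\|^4 = \mathcal{O}_P(N^{-2})$; the linearization then transfers this to $\widehat{C}_{-n}(\delta) - C(\delta)$, once you also check that the denominators $\Tr{}^\delta(\widehat{C}_{N,-n})$ stay bounded away from zero uniformly in $n$ (which follows from the same identity). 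With this made explicit your argument goes through; the paper's approach simply sidesteps the uniformity issue by peeling off the leave-one-out correction as a separate, higher-order term.
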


\begin{proof}
Instead of the empirical objective, we will first work with a slightly modified, biased version of it:
\[
\widetilde{\Xi}(\delta) = \verti{\widehat{C}(\delta)}_2^2 - \frac{2}{N} \sum_{n=1}^N \langle X_n, \widehat{C}(\delta) X_n \rangle
\]
By adding and subtracting $\verti{\widehat{C}(\delta) - C}_2^2 = \verti{\widehat{C}(\delta)}_2^2 - 2 \langle \widehat{C}(\delta), C \rangle - \verti{C}_2^2$, we obtain
\[
\begin{split}
\left| \widetilde{\Xi}(\delta) + \verti{C}_2^2 - \Xi(\delta) \right| = \Bigg| \frac{2}{N}\sum_{n=1}^N \langle \widehat{C}(\delta), X_n \otimes X_n \rangle &- 2 \langle \widehat{C}(\delta), C \rangle \\&+ \verti{\widehat{C}(\delta) - C}_2^2 - \verti{C(\delta) - C}_2^2 \Bigg|,
\end{split}
\]
and from the triangle inequality we now have
\[
\begin{split}
\left| \widetilde{\Xi}(\delta) + \| C \|_2 - \Xi(\delta) \right| \leq &\left| \frac{2}{N}\sum_{n=1}^N \langle \widehat{C}(\delta), X_n \otimes X_n \rangle - 2 \langle \widehat{C}(\delta), C \rangle \right| \\
&+\left|\verti{\widehat{C}(\delta) - C}_2^2 - \verti{C(\delta) - C}_2^2 \right| =: (I) + (II).
\end{split}
\]
The first term can be bounded by Cauchy-Schwartz inequality
\[
(I) = 2 \left| \langle \widehat{C}(\delta), \widehat{C}_N - C \rangle \right| \leq 2 \verti{\widehat{C}(\delta)}_2 \verti{\widehat{C}_N - C}_2 = \O_P(N^{-1/2}),
\]
whereas the second term can be bounded similarly after using the mean value theorem:
\[
(II) = 2 \langle \Gamma - C, \widehat{C}(\delta) - C(\delta) \rangle \leq 2 \verti{\Gamma - C}_2 \verti{\widehat{C}(\delta) - C(\delta)},
\]
where $\Gamma$ is between $\widehat{C}(\delta)$ and $C(\delta)$. Hence the term $(II)$ is also $\O_P(N^{-1/2})$ according to the previous proposition.

Now it remains to show that the bias introduced by working with $\widetilde{\Xi}(\delta)$ instead of $\widehat{\Xi}(\delta)$ is asymptotically negligible. For that, it suffices to show that
\begin{equation}\label{eq:obj_bias}
\left| \frac{1}{N} \sum_{n=1}^N \langle \widehat{C}(\delta), X_n \otimes X_n \rangle - \frac{1}{N} \sum_{n=1}^N \langle \widehat{C}_{-n}(\delta), X_n \otimes X_n \rangle \right| = \O_P(N^{-1}).
\end{equation}
The previous expression can be bounded as
\[
\begin{split}
\left| \frac{1}{N} \sum_{n=1}^N \langle \widehat{C}(\delta) - \widehat{C}_{-n}(\delta), X_n \otimes X_n \rangle \right| &\leq \frac{1}{N} \sum_{n=1}^N \left| \langle \widehat{C}(\delta) - \widehat{C}_{-n}(\delta), X_n \otimes X_n \rangle \right| \\
&\leq \frac{1}{N} \sum_{n=1}^N \verti{\widehat{C}(\delta) - \widehat{C}_{-n}(\delta)}_2 \| X_n \|_2^2.
\end{split}
\]
Using the linearization argument \eqref{eq:linearization} again like in the proof of the previous proposition, we obtain
\[
\verti{\widehat{C}(\delta) - \widehat{C}_{-n}(\delta)}_2 \leq \big[const + \O_P(N^{-1/2})\big] \verti{\widehat{C}_N - \widehat{C}_{-n}}_1
\]
where $\widehat{C}_{-n}$ is the empirical covariance estimator without the $n$-th observation. Since
\[
\widehat{C}_N - \widehat{C}_{-n} = \frac{1}{N} X_n \otimes X_n + \frac{1}{N(N-1)} \sum_{j \neq n} X_j \otimes X_j
\]
for any $n=1,\ldots,N$, we have from the triangle inequality that
\[
\verti{\widehat{C}_N - \widehat{C}_{-n}}_1 \leq \frac{1}{N} \| X_n \|_2^2 + \frac{1}{N(N-1)} \sum_{j \neq n} \|X_j\|_2^2.
\]
Overall, we have that the left-hand size of \eqref{eq:obj_bias} is bounded by
$\big[ const + \O_P(N^{-1/2}) \big] \frac{1}{N} D_N$,
where
\[
D_N = \frac{1}{N} \sum_{n=1}^N \|X_n\|_2^4 + \frac{1}{N(N-1)} \sum_{n=1}^N \sum_{j\neq N} \|X_n\|_2^2 \|X_j\|_2^2
\]
which is $o_P(1)$ from the law of large numbers. Hence we get \eqref{eq:obj_bias} and the proof is complete.
\end{proof}

According to the previous proposition, the empirical objective is consistent for the theoretical objective up to a constant. And the constant, though unknown, does not affect the bandwidth choice. This leads to the rates of convergence of our estimators with adaptively chosen bandwidth as stated in Theorem \ref{thm:rates_adaptive} of the main paper.

\begin{proof}[Proof of Theorem \ref{thm:rates_adaptive}]
We have from the triangle inequality
\[
\verti{\widehat{A}(\widehat{\delta}) - A}_2 \leq \verti{\widehat{A}(\widehat{\delta}) - A(\widehat{\delta})}_2 + \verti{A(\widehat{\delta}) - A}_2.
\]

Due to our assumptions the separable-plus-banded model holds with a certain $\delta^\star$ and there exists at least one $\delta \in \Delta$ such that the separable-plus-banded model holds with $\delta$. On the other hand, for any $\delta < \delta^\star$ the separable-plus-banded model does not hold and hence $\Xi(\delta) > \Xi(\delta^\star)$. Therefore, due to Proposition \ref{prop:obj_consistency}, there exists $N_0$ such that for all $N \geq N_0$ we have $\widehat{\delta} \geq \delta^\star$. Thus $\verti{A(\widehat{\delta}) - A}_2 = 0$ for all $N \geq N_0$.

Secondly, we observe from the proof of Proposition \ref{prop:consistency} that $\verti{\widehat{A}(\delta) - A(\delta)}_2 = \O_p(N^{-1/2})$ for any $\delta \in \Delta$ such that $\delta \geq \delta^\star$, hence also for $\widehat{\delta}$, and the proof is complete.

The assertion for the banded part follows easily using the previous part of the proof and triangle inequalities:
\[
\verti{\widehat{B}(\widehat{\delta}) - B}_2 \leq \verti{\widehat{B}(\widehat{\delta}) - B(\widehat{\delta})}_2 + \verti{B(\widehat{\delta}) - B}_2,
\]
where 
\[
\begin{split}
\verti{\widehat{B}(\widehat{\delta}) - B(\widehat{\delta})}_2 &= \verti{\topavg\big(\widehat{C}_N - \widehat{A}(\widehat{\delta}) - C + A(\widehat{\delta})\big)}_2 \leq \verti{\widehat{C}_N - \widehat{A}(\widehat{\delta}) - C + A(\widehat{\delta})}_2 \\
&\leq \verti{ \widehat{C}_N - C}_2 + \verti{\widehat{A}(\widehat{\delta}) - A(\widehat{\delta})}_2 = \O_P(N^{-1/2}),
\end{split}
\]
and similarly
\[
\verti{B(\widehat{\delta}) - B}_2 \leq \verti{\widehat{C}_N - C}_2 + \verti{A(\widehat{\delta}) - A}_2 = \O_P(N^{-1/2}).
\]
\end{proof}

In case that the separable-plus-banded model does not hold, i.e. $C$ does not posses the separable-plus-banded structure for any $\delta$, there still exists $\delta_0 \in \Delta$ such that
\[
\Xi(\delta_0) = \min_{\delta \in \Delta} \Xi(\delta),
\]
and the same argument as the one in the previous proof yields that $\widehat{C}(\widehat{\delta})$ is root-$n$ consistent for $C(\delta_0)$. In this instance, $C(\delta_0) \neq C$, but $C(\delta_0)$ is the best separable-plus-banded proxy to $C$, which can be obtained by the proposed estimation methodology based on shifted partial tracing.

\subsubsection{Asymptotic Distribution under Adaptive Bandwidth Choice}\label{app:G3}

It is clear from the previous section, that the adaptively chosen bandwidth itself is not consistent. This is because there is nothing to be consistent for: under the separable-plus-banded model, there is a whole range of valid bandwidths, which are asymptotically indistinguishable. Still, all those bandwidths lead asymptotically to the same estimator, and hence we are able to show consistency of $\widehat{C}(\widehat{\delta})$ for $C(\delta^\star)$ even if $\widehat{\delta}$ itself is not consistent for $\delta^\star$. This reflects that $\delta$ is merely a nuisance parameter.

However, the development in the previous section only provides rates of convergence, not the asymptotic distribution. The latter can be obtained under a slight modification of our bandwidth selection scheme. For $\tau \geq 0$, we define
$\Xi_\tau(\delta) := \Xi(\delta) + \tau\delta$, $\widehat{\Xi}_\tau(\delta) := \widehat{\Xi}(\delta) + \tau\delta$ and
\begin{equation}\label{eq:CV_tau}
\widetilde{\delta} := \argmin_{\delta \in \Delta} \widehat{\Xi}_\tau(\delta).
\end{equation}
A new parameter $\tau$ has been introduced into the objective to discriminate between equally good choices of $\delta$. The theorem below shows that under the modified scheme, $\widehat{C}(\widetilde{\delta})$ is asymptotically Gaussian, when $\tau>0$ is small enough.

On one hand, $\tau$ is a new nuisance parameter that needs to be chosen instead of $\delta$. On the other hand, it is easier to choose it (it just needs to be small enough). Moreover, the previous section shows that choosing $\tau=0$ provides the correct rates of convergence.

\begin{theorem}\label{thm:asymptotics_adaptive}
Let $X_1,\ldots,X_N \sim X$ be a (w.l.o.g.~centered) random sample with covariance given by \eqref{eq:model}, where $B$ is stationary and $\delta^\star$-banded. Let (A1) hold for some orthonormal basis $\{ e_j \}_{j=1}^\infty$ in $\mathcal{L}^2[0,1]^2$. Let $\Delta = \{ \delta_1,\ldots, \delta_m\}$ be such that $\Tr{}^\delta(C) \neq 0$ for any $\delta \in \Delta$ of which at least one is larger that $\delta^\star$. Finally, let $\widehat{\delta}$ be chosen in \eqref{eq:CV_tau} with $\tau < \min_{\delta \in \Delta, \delta < \delta^\star} |\Xi(\delta) - \Xi(\delta^\star) |$.
Then $\sqrt{N}( \widehat{A}_1(\widetilde{\delta}) \ct \widehat{A}_2(\widetilde{\delta}) - A_1 \ct A_2)$ and $\sqrt{N}( \widetilde{B}(\widehat{\delta}) - B )$
converge to mean zero Gaussian random elements.
\end{theorem}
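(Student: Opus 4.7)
My plan is to reduce the claim to Theorem~\ref{thm:asymptotics} by showing that the adaptively chosen bandwidth $\widetilde{\delta}$ equals, with probability tending to one, the deterministic value $\delta^\star_\Delta := \min\{\delta \in \Delta : \delta \geq \delta^\star\}$. This minimum exists by the hypothesis that $\Delta$ contains some $\delta \geq \delta^\star$ with $\Tr{}^\delta(C)\neq 0$, so the assumptions of Theorem~\ref{thm:asymptotics} are satisfied at $\delta^\star_\Delta$. The role of the penalty $\tau\delta$ in $\widehat{\Xi}_\tau$ is precisely to break the asymptotic tie among all $\delta \geq \delta^\star$ (for each of which $\Xi(\delta)=0$) and pin $\widetilde{\delta}$ down to a single, analysable limit value; without the penalty there is a range of asymptotically equivalent choices and no single limit law to speak of.

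The first step is to establish that $\delta^\star_\Delta$ is the unique minimiser of $\Xi_\tau$ on $\Delta$. Under the separable-plus-banded model, Lemma~\ref{lem:banded_SPT} together with the estimating equation \eqref{eq:estimating_eq} gives $C(\delta) = C$ for every $\delta \geq \delta^\star$, hence $\Xi(\delta)=0$ throughout $\{\delta\in\Delta:\delta\geq\delta^\star\}$, on which $\Xi_\tau(\delta)=\tau\delta$ is uniquely minimised at $\delta^\star_\Delta$. For $\delta\in\Delta$ with $\delta<\delta^\star$, using $\delta^\star_\Delta-\delta\leq 1$ and the assumed bound on $\tau$,
\[
\Xi_\tau(\delta) - \Xi_\tau(\delta^\star_\Delta) \;=\; \Xi(\delta) - \tau(\delta^\star_\Delta - \delta) \;\geq\; \Xi(\delta) - \tau \;>\; 0 .
\]
Proposition~\ref{prop:obj_consistency} yields $\widehat{\Xi}_\tau(\delta) = \Xi_\tau(\delta) - \verti{C}_2^2 + O_P(N^{-1/2})$ pointwise in $\delta$, and finiteness of $\Delta$ makes this error uniform; hence the strict ordering of $\Xi_\tau$ over $\Delta$ is preserved by $\widehat{\Xi}_\tau$ with probability tending to one, and therefore $P(\widetilde{\delta}=\delta^\star_\Delta)\to 1$.

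The limit laws then follow by a ``same-with-high-probability'' argument. On the event $\{\widetilde{\delta}=\delta^\star_\Delta\}$ the random elements at the adaptive and deterministic bandwidths coincide exactly, so for any Borel set $U\subset\mathcal{S}_1(\mathcal{L}^2[0,1]^2)$,
\[
\bigl| P\bigl(\sqrt{N}(\widehat{A}_1(\widetilde{\delta})\ct \widehat{A}_2(\widetilde{\delta}) - A_1\ct A_2)\in U\bigr) - P\bigl(\sqrt{N}(\widehat{A}_1(\delta^\star_\Delta)\ct \widehat{A}_2(\delta^\star_\Delta) - A_1\ct A_2)\in U\bigr) \bigr| \;\leq\; P(\widetilde{\delta}\neq \delta^\star_\Delta) \;\to\; 0,
\]
and analogously for $\widehat{B}$. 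Theorem~\ref{thm:asymptotics} applied at the fixed bandwidth $\delta^\star_\Delta$ furnishes the Gaussian limits for the deterministic-bandwidth estimators, which are therefore inherited by the adaptive versions. The only real obstacle is the calibration of $\tau$: the assumed upper bound ensures that under-smoothed bandwidths $\delta<\delta^\star$ are asymptotically never preferred, while the penalty itself, however small, is what forces selection of the smallest valid bandwidth so that a single limit distribution exists.
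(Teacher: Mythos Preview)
Your proposal is correct and follows essentially the same route as the paper: both proofs identify the smallest valid bandwidth $\delta^\star_\Delta$ in $\Delta$ as the unique minimiser of the penalised objective $\Xi_\tau$, invoke Proposition~\ref{prop:obj_consistency} to conclude $P(\widetilde{\delta}=\delta^\star_\Delta)\to 1$, and then transfer the Gaussian limit from Theorem~\ref{thm:asymptotics} at the fixed bandwidth $\delta^\star_\Delta$ to the adaptive estimator. Your ``same-with-high-probability'' bound on the difference of probabilities is a clean equivalent of the paper's Slutsky-type decomposition $\sqrt{N}(\widehat{A}(\widetilde{\delta})-A)=\sqrt{N}(\widehat{A}(\widetilde{\delta})-\widehat{A}(\delta^\star_\Delta))+\sqrt{N}(\widehat{A}(\delta^\star_\Delta)-A)$.
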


\begin{proof}[Proof of Theorem \ref{thm:asymptotics_adaptive}]
We begin with the asymptotic distribution for $\widehat{A}(\widehat{\delta})$, modifying the proof of Theorem \ref{thm:asymptotics}.

Let us denote by $\delta_m^\star$ the smallest of such bandwidths in $\Delta$ which is larger than $\delta^\star$. By Theorem \ref{thm:asymptotics}, we know that $\sqrt{N} \big( \widehat{A}(\delta_m^\star) - A \big)$ is asymptotically Gaussian and mean-zero. Since
\[
\sqrt{N}\big(\widehat{A}(\widehat{\delta}) - A\big) = \sqrt{N}\big(\widehat{A}(\widehat{\delta}) - \widehat{A}(\delta_m^\star)\big) + \sqrt{N}\big(\widehat{A}(\delta_m^\star) - A\big),
\]
we only need to show that $\sqrt{N}\big(\widehat{A}(\widehat{\delta}) - \widehat{A}(\delta_m^\star)\big)$ converges to zero in probability.

Let us denote $\widetilde{\Delta} := \{\delta \in \Delta; \delta < \delta_m^\star\}$. Since the separable-plus-banded model holds, and $\delta_m^\star$ is the smallest bandwidth in $\Delta$ such that $B$ is banded by this bandwidth, it must be $\verti{C(\delta) - C}_2 > 0$ for all $\delta \in \widetilde{\Delta}$.

Now, fix any $\epsilon > 0$, and observe that
\begin{equation}\label{eq:probability}
P\left( \left| \sqrt{N}\big(\widehat{A}(\widehat{\delta}) - \widehat{A}(\delta_m^\star)\big) \right| > \epsilon \right) \leq P\left(\widehat{\delta} \neq \delta_m^\star\right) = P\left(\argmin_{\delta \in \Delta} \widehat{\Xi}_\tau(\delta) \neq \argmin_{\delta \in \Delta} \Xi_\tau(\delta)\right).
\end{equation}
Let $\alpha >0$ be arbitrary. For any $j$ such that $\delta_j \neq \delta_m^\star$ there exists $N_j \in \N$ such that for all $N \geq N_j$ we have
\[
\left| \widehat{\Xi}_\tau(\delta_j) + \verti{C}_2^2 - \Xi_\tau(\delta_j) \right| < \alpha.
\]
Taking $N_0 := \max N_j$ and $\alpha := \tau$ we obtain that the probability in \eqref{eq:probability} is equal to zero for any $N \geq N_0$ and the proof is thus complete.

The proof for $\widehat{B}(\widehat{\delta})$ is an equivalent modification of the proof of Theorem \ref{thm:asymptotics}.
\end{proof}

\subsubsection{Proof of Theorem \ref{thm:rates}}\label{app:G4}

Now we move our attention to Theorem \ref{thm:rates}, providing rates of convergence under discrete and noisy observations. The following auxiliary result will be needed.

\begin{lemma}\label{lem:trivial_inequality}
Let $Z_1,\ldots,Z_K$ be i.i.d. random variables with finite second moments. Then
\[
\E\Big( \sum_{k=1}^K Z_k \Big)^2 \leq K \sum_{k=1}^K \E Z_k^2 \,.
\]
\end{lemma}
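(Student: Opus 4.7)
The plan is very short because the inequality is a pointwise consequence of Cauchy--Schwarz and the i.i.d.\ hypothesis is not really needed (only the existence of second moments is). First, I would apply the Cauchy--Schwarz inequality in $\R^K$ to the vectors $(1,\ldots,1)$ and $(Z_1,\ldots,Z_K)$. This gives the pointwise (almost sure) bound
\[
\Big(\sum_{k=1}^K Z_k\Big)^2 = \Big(\sum_{k=1}^K 1\cdot Z_k\Big)^2 \leq \Big(\sum_{k=1}^K 1^2\Big)\Big(\sum_{k=1}^K Z_k^2\Big) = K\sum_{k=1}^K Z_k^2.
\]
The finite second moment assumption ensures that both sides are integrable (in fact, in the i.i.d.\ case the right-hand side has expectation $K^2\,\E Z_1^2<\infty$).

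Next, I would take expectations on both sides, using monotonicity and linearity of $\E$, to obtain
\[
\E\Big(\sum_{k=1}^K Z_k\Big)^2 \leq K\sum_{k=1}^K \E Z_k^2,
\]
which is the claim. There is no genuine obstacle: the only thing to verify is that the Cauchy--Schwarz step is justified pathwise (it is, since it holds for any real numbers) and that the expectation passes through the inequality (it does, as both sides are non-negative and the right-hand side is integrable under the stated moment condition). The i.i.d.\ hypothesis plays no role in this argument and could be weakened to mere integrability of the squares, but we state and prove the result in the form needed for the applications later in Appendix~\ref{app:G}.
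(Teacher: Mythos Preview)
Your proof is correct. The paper also invokes Cauchy--Schwarz, but applies it at a different stage: it first expands $\E(\sum_k Z_k)^2=\sum_{k,l}\E Z_kZ_l$, then bounds each cross term by $\sqrt{\E Z_k^2}\sqrt{\E Z_l^2}$ via Cauchy--Schwarz in $L^2$, and finally uses the AM--GM inequality $\sqrt{\E Z_k^2}\sqrt{\E Z_l^2}\le\tfrac12(\E Z_k^2+\E Z_l^2)$ to sum up. Your route---pointwise Cauchy--Schwarz in $\R^K$ followed by monotonicity of expectation---is shorter and avoids the extra AM--GM step; both arguments make no use of the i.i.d.\ hypothesis beyond finiteness of second moments, as you correctly observe.
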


\begin{proof}
The claim follows from the Cauchy-Schwartz inequality followed by the arithmetic-geometric mean inequality:
\[
\begin{split}
\E\Big( \sum_{k=1}^K Z_k \Big)^2 &= \sum_{k=1}^K \sum_{l=1}^K \E Z_k Z_l \leq \sum_{k=1}^K \sum_{l=1}^K \sqrt{\E Z_k^2} \sqrt{\E Z_l^2} \\
&\leq \sum_{k=1}^K \sum_{l=1}^K \frac{\E Z_k^2 + \E Z_l^2}{2} = K \sum_{k=1}^K \E Z_k^2 \,.
\end{split}
\]
\end{proof}

\begin{proof}[Proof of Theorem \ref{thm:rates}, pointwise sampling scheme S1.]
We begin with the bias-variance decomposition
\[
\verti{\widehat{A}_2^K \ct \widehat{A}_1^K - A_1 \ct A_2}_2^2 = 2 \verti{\widehat{A}_1^K \ct \widehat{A}_2^K - A_1^K \ct A_2^K}_2^2 + 2 \verti{A_1^K \ct A_2^K - A_1 \ct A_2}_2^2 \,.
\]
For the bias term, we first distribute the norm calculation over the grid:
\[
\begin{split}
\verti{A_1^K \ct A_2^K - A_1 \ct A_2}_2^2  &= \sum_{i,j,k,l=1}^K \int_{I_{i,j}^K \times I_{k,l}^K} \Big[a_1^K(t,t')a_2^K(s,s') - a_1(t,t')a_2(s,s') \Big]^2 d t d s d t' d s' .
\end{split}
\]
Since $a_1^K(t,s)a_2^K(s,s') = a_1(t_i,t_k)a_2(t_{j},s_{l})$ on $I_{i,j}^K \times I_{k,l}^K$, it follows from Lipschitz continuity that
\[
\begin{split}
|a_1^K(t,t')a_2^K(s,s') - a_1(t,t')a_2(s,s')| &\leq L \sup_{(t,s,t',s') \in I_{i,j}^K \times I_{k,l}^K} \| (t,s,t',s') - (t_i,s_j,t_k,s_l) \|_2 \\
&\leq 4^{1/2} K^{-1} L \,,
\end{split}
\]
which implies the bound for the bias term. It remains to show that the variance term is $\mathcal{O}_P(N^{-1})$ uniformly in $K$.

Since $\Tr{}^\delta(A) > 0$ and $\delta_K = \lceil \delta K \rceil/K \searrow \delta$, due to continuity of kernel $a$ there exist $K_0 \in \N$ such that $\Tr{}^{\delta_K}(A) > 0$ for any $K\geq K_0$. Assume from now on that $K \geq K_0$.

Using that $A_1^K \ct A_2^K = \frac{\Tr{1}^{\delta_K}(C^K) \ct \Tr{2}^{\delta_K}(C^K)}{\Tr{}^{\delta_K}(C^K)}$ in our model, it follows from the triangle inequality that
\begin{equation}\label{eq:three_term_bound}
\begin{split}
\verti{\widehat{A}_1^K \ct \widehat{A}_2^K - A_1^K \ct A_2^K}_2 &= \verti{\frac{\Tr{1}^{\delta_K}(\widehat{C}_N^K) \ct \Tr{2}^{\delta_K}(\widehat{C}_N^K)}{ \Tr{}^{\delta_K}(\widehat{C}_N^K) } -  \frac{\Tr{1}^{\delta_K}(C^K) \ct \Tr{2}^{\delta_K}(C^K)}{ \Tr{}^{\delta_K}(C^K) }}_2 \\
&\leq \frac{\verti{\Tr{1}^{\delta_K}(\widehat{C}_N^K)}_2}{\left|\Tr{}^{\delta_K}(\widehat{C}_N^K)\right|} \verti{\Tr{2}^{\delta_K}(\widehat{C}_N^K - C^K)}_2 \\
&\quad+\frac{\verti{\Tr{2}^{\delta_K}(C^K)}_2}{\left|\Tr{}^{\delta_K}(\widehat{C}_N^K)\right|} \verti{\Tr{1}^{\delta_K}(\widehat{C}_N^K - C^K)}_2 \\
&\quad + \frac{\verti{A_1^K \ct A_2^K}_2}{\left|\Tr{}^{\delta_K}(\widehat{C}_N^K)\right|} \left|\Tr{}^{\delta_K}(\widehat{C}_N^K - C^K)\right| .
\end{split}
\end{equation}

Now we treat different terms separately. The numerators will be shown to be $\mathcal{O}_P(1)$, as well as $1/\left|\Tr{}^{\delta_K}(\widehat{C}_N^K)\right|$, while the remaining terms will be shown to be $\mathcal{O}_P(N^{-1/2})$; all these rates being uniform in $K$. To simplify the notation, we denote $\widetilde{k} := k + \delta_K K$ and $\widetilde{K} := (1- \delta_K) K$.

Firstly, we show that $\verti{\Tr{1}^{\delta_K}(\widehat{C}_N^K - C^K)}_2 = \mathcal{O}_P(N^{-1/2})$ uniformly in $K$. To that end, since $\mathbf{C}^K = \E ( \mathbf{X}^K \otimes \mathbf{X}^K )$, using Lemma \ref{lem:continuous_discrete_traces}, and denoting $d = \delta_K K$, we have
\[
\begin{split}
\E \verti{\Tr{1}^{\delta_K}(\widehat{C}_N^K - C^K)}_2^2 &= K^{-4} \E \left\| \Tr{1}^{d}(\widehat{\mathbf C}_N^K - \mathbf C^K)\right\|_F^2 \\
&= K^{-4} \sum_{i=1}^K \sum_{j=1}^K \E \left| \Tr{1}^{d}(\widehat{\mathbf C}_N^K - \mathbf C^K)[i,j] \right|^2 \\
&= K^{-4} \sum_{i=1}^K \sum_{j=1}^K \E \Bigg| \frac{1}{N} \sum_{n=1}^N \sum_{k=1}^{\widetilde{K}} \Big( \widetilde{\mathbf X}_n^K[i,k] \widetilde{\mathbf X}_n^K[j,\widetilde{k}] \\
&\hspace*{4cm}- \E \mathbf X_n^K[i,k] \mathbf X_n^K[i,\widetilde k] \Big) \Bigg|^2 \,.
\end{split}
\]

If we denote
\[
\begin{split}
Z_{n,i,j} &:= \sum_{k=1}^{\widetilde{K}} \Big( \widetilde{\mathbf X}_n^K[i,k] \widetilde{\mathbf X}_n^K[j,\widetilde{k}] - \E \mathbf X_n^K[i,k] \mathbf X_n^K[i,\widetilde k] \Big) \\
& = \sum_{k=1}^{\widetilde{K}} \Big( \mathbf X_n^K[i,k] \mathbf X_n^K[j,\widetilde{k}] - \E \mathbf X_n^K[i,k] \mathbf X_n^K[i,\widetilde k] \\
&\qquad\qquad+ \mathbf E_n^K[i,k] \mathbf X_n^K[j,\widetilde{k}] + \mathbf X_n^K[i,k] \mathbf E_n^K[j,\widetilde{k}] + \mathbf E_n^K[i,k] \mathbf E_n^K[j,\widetilde{k}] \Big) \,,
\end{split}
\]
we see that, for any $i,j=1,\ldots,K$, $\big\{ Z_{n,i,j} \big\}_{n=1}^N$ is a set of mean zero and i.i.d. random variables and thus
\[
E \verti{\Tr{1}^{\delta_K}(\widehat{C}_N^K - C^K)}_2^2 = \frac{1}{N} K^{-4} \sum_{i=1}^K \sum_{j=1}^K \E \left| Z^K_{\cdot,i,j} \right|^2
\]
which can be bounded, using the parallelogram law, by
\begin{equation}\label{eq:four_term_bound}
\begin{split}
&\frac{4}{N} K^{-4} \sum_{i=1}^K \sum_{j=1}^K \Bigg\{
\E \left| \sum_{k=1}^{\widetilde{K}}  \mathbf X^K[i,k] \mathbf X^K[j,\widetilde{k}] - \E \mathbf X^K[i,k] \mathbf X^K[i,\widetilde k]  \right|^2 \\
&\qquad\quad + \E \left| \sum_{k=1}^{\widetilde{K}}  \mathbf E^K[i,k] \mathbf X^K[j,\widetilde{k}]  \right|^2
 + \E \left| \sum_{k=1}^{\widetilde{K}}  \mathbf X^K[i,k] \mathbf E^K[j,\widetilde{k}]  \right|^2
 + \E \left| \sum_{k=1}^{\widetilde{K}}  \mathbf E^K[i,k] \mathbf E^K[j,\widetilde{k}]  \right|^2 \Bigg\} \,.
\end{split}
\end{equation}
The four terms in the parentheses will be treated separately.

For the first term, it follows from Lemma \ref{lem:trivial_inequality} that
\[
\E \left| \sum_{k=1}^{\widetilde{K}}  \mathbf X^K[i,k] \mathbf X^K[j,\widetilde{k}] - \E \mathbf X^K[i,k] \mathbf X^K[j,\widetilde k]  \right|^2 \leq \widetilde{K} \sum_{k=1}^{\widetilde{K}} \var\Big( \mathbf X^K[i,k] \mathbf X^K[j,\widetilde{k}] \Big) \leq S_1 \widetilde{K}^2 \,,
\]
where
\[
\begin{split}
\var\big( \mathbf X^K[i,k] \mathbf X^K[j,\widetilde{k}] \big) &= \var \big( X(t_i^K,s_k^K) X(t_j,s_{\widetilde{k}}) \big)  \\
&\leq \sup_{t,s,t',s' \in [0,1]} \var\big( X(t,s) X(t',s') \big) =: S_1 < \infty.
\end{split}
\]
Note that $S_1$ is finite, since $X$ has finite fourth moment and continuous sample paths. Also, $S_1$ is uniform in $K$.

For the second term, we have (denoting $\widetilde{l} = l + \delta_k K$)
\[
\begin{split}
\E \left| \sum_{k=1}^{\widetilde{K}}  \mathbf E^K[i,k] \mathbf X^K[j,\widetilde{k}]  \right|^2 &= \sum_{k=1}^{\widetilde{K}} \sum_{l=1}^{\widetilde{K}} \E \Big( \mathbf E^K[i,k] \mathbf X^K[j,\widetilde{k}] \mathbf E^K[i,l] \mathbf X^K[j,\widetilde{l}] \Big)  \\
& = \sum_{k=1}^{\widetilde{K}} \sum_{l=1}^{\widetilde{K}} \E \Big( \mathbf E^K[i,k] \mathbf E^K[i,l] \Big) \E \Big( \mathbf X^K[j,\widetilde{k}] \mathbf X^K[j,\widetilde{l}] \Big) \,.
\end{split}
\]
Since $\E \big( \mathbf E^K[i,k] \mathbf E^K[i,l] \big) = \sigma^2 \mathds{1}_{[k=l]}$, one of the sums vanishes, while $\E\big| \mathbf X^K[j,\widetilde{k}] \big|^2$ is bounded uniformly in $K$ by $S_2 := \sup_{t,s \in [0,1]} \E \big| X(t,s) \big|^2 \leq \infty$. Hence the second term is bounded by $\widetilde{K} S_2 \sigma^2$. The third term is dealt with similarly.

For the fourth and final term, we have
\[
\begin{split}
\E \left| \sum_{k=1}^{\widetilde{K}}  \mathbf E^K[i,k] \mathbf E^K[j,\widetilde{k}]  \right|^2 &= \sum_{k=1}^{\widetilde{K}} \sum_{l=1}^{\widetilde{K}} \E \Big( \mathbf E^K[i,k] \mathbf E^K[j,\widetilde{k}] \mathbf E^K[i,l] \mathbf E^K[j,\widetilde{l}]\Big) \\
&= \sum_{k=1}^{\widetilde{K}} \sum_{l=1}^{\widetilde{K}} \sigma^4 \mathds{1}_{[k=l]} = \widetilde{K} \sigma^4 \,.
\end{split}
\]

Upon collecting the bounds for the four terms and importing them back to bound \eqref{eq:four_term_bound}, we obtain
\begin{equation}\label{eq:bound_SPT}
\E \verti{\Tr{1}^{\delta_K}(\widehat{C}_N^K - C^K)}_2^2 \leq \frac{4}{N}\Big[ S_1 + S_2 K^{-1} \sigma^2 + K^{-1} \sigma^4 \Big] \,.
\end{equation}
This shows that if $\sigma^2=\mathcal{O}(\sqrt{K})$, $\verti{\Tr{1}^{\delta_K}(\widehat{C}_N^K - C^K)}_2 = \mathcal{O}_P(N^{-1/2})$ uniformly in $K$.

The term $\verti{\Tr{2}^{\delta_K}(\widehat{C}_N^K - C^K)}_2$ from bound \eqref{eq:three_term_bound} can be treated similarly. Now we focus on the final stand-alone term $\left|\Tr{}^{\delta_K}(\widehat{C}_N^K - C^K)\right|$:
\[
\begin{split}
\E \left|\Tr{}^\delta(\widehat{C}_N^K - C^K)\right|^2 &= K^{-4} \E \left| \Tr{}^\delta(\widehat{\mathbf C}_N^K - \mathbf C^K)\right|^2 \\
&= K^{-4} \E \left| \frac{1}{N} \sum_{n=1}^N \sum_{i=1}^{\widetilde{K}} \sum_{j=1}^{\widetilde{K}} \Big(\widetilde{\mathbf X}_n^K[i,j] \widetilde{\mathbf X}_n^K[\widetilde{i},\widetilde{j}] - \E \mathbf X_n^K[i,j] \mathbf X_n^K[\widetilde{i},\widetilde{j}] \Big)  \right|^2 \\
&= \frac{1}{N} K^{-4} \E \left| \sum_{i=1}^{\widetilde{K}} \sum_{j=1}^{\widetilde{K}} \Big(\widetilde{\mathbf X}^K[i,j] \widetilde{\mathbf X}^K[\widetilde{i},\widetilde{j}] - \E \mathbf X^K[i,j] \mathbf X^K[\widetilde{i},\widetilde{j}] \Big)  \right|^2 \\
\end{split}
\]
From the parallelogram law we have
\[
\begin{split}
\E \left|\Tr{}^\delta(\widehat{C}_N^K - C^K)\right|^2 &\leq \frac{4}{N} K^{-4} \Bigg\{ \E \left| \sum_{i=1}^{\widetilde{K}} \sum_{j=1}^{\widetilde{K}} \Big(\mathbf X^K[i,j] \mathbf X^K[\widetilde{i},\widetilde{j}] - \E \mathbf X^K[i,j] \mathbf X^K[\widetilde{i},\widetilde{j}] \Big)  \right|^2 \\
&\qquad\qquad\quad + \E \left| \sum_{i=1}^{\widetilde{K}} \sum_{j=1}^{\widetilde{K}} \mathbf E^K[i,j] \mathbf X^K[\widetilde{i},\widetilde{j}]  \right|^2 \\
&\qquad\qquad\quad + \E \left| \sum_{i=1}^{\widetilde{K}} \sum_{j=1}^{\widetilde{K}} \mathbf X^K[i,j] \mathbf E^K[\widetilde{i},\widetilde{j}]  \right|^2 \\
&\qquad\qquad\quad + \E \left| \sum_{i=1}^{\widetilde{K}} \sum_{j=1}^{\widetilde{K}} \mathbf E^K[i,j] \mathbf E^K[\widetilde{i},\widetilde{j}]  \right|^2 \Bigg\} \,.
\end{split}
\]
Using Lemma \ref{lem:trivial_inequality} to take the sums out of the expectation, the first term in the parentheses is again bounded by $K^4 S_1$. For the second term,
\[
\begin{split}
\E \left| \sum_{i=1}^{\widetilde{K}} \sum_{j=1}^{\widetilde{K}} \mathbf E^K[i,j] \mathbf X^K[\widetilde{i},\widetilde{j}]  \right|^2 &= \sum_{i,j,k,l=1}^{\widetilde{K}} \E \Big( \mathbf E^K[i,j] \mathbf X^K[\widetilde{i},\widetilde{j}] \mathbf E^K[k,l] \mathbf X^K[\widetilde{k},\widetilde{l}] \Big) \\
&= \sigma^2 \sum_{i=1}^{\widetilde{K}} \sum_{j=1}^{\widetilde{K}} \E \left| X^K[\widetilde{i},\widetilde{j}] \right|^2 \leq K^2 \sigma^2 S_2 \,.
\end{split}
\]
The third term can be treated similarly, while for the fourth and final term we have
\[
\begin{split}
\E \left| \sum_{i=1}^{\widetilde{K}} \sum_{j=1}^{\widetilde{K}} \mathbf E^K[i,j] \mathbf E^K[\widetilde{i},\widetilde{j}]  \right|^2  &= \sum_{i,j,k,l=1}^{\widetilde{K}} \E \Big( \mathbf E^K[i,j] \mathbf E^K[\widetilde{i},\widetilde{j}] \mathbf E^K[k,l] \mathbf E^K[\widetilde{k},\widetilde{l}] \Big) \\
&= \sum_{i=1}^{\widetilde{K}} \sum_{j=1}^{\widetilde{K}} \E \left|\mathbf E^K[i,j] \right|^2 \E \left| \mathbf E^K[\widetilde{i},\widetilde{j}] \right|^2
\leq K^2 \sigma^4 \,.
\end{split}
\]
Hence we obtain
\begin{equation}\label{eq:bound_ST}
\E \left|\Tr{}^\delta(\widehat{C}_N^K - C^K)\right|^2 \leq \frac{4}{N}\Big[ S_1 + S_2 K^{-2} \sigma^2 + K^{-2} \sigma^4 \Big]\,.
\end{equation}

Note the different powers of $K$ in \eqref{eq:bound_SPT} and \eqref{eq:bound_ST}. This reflects that the concentration of measurement error is weaker when averaging is performed over both time and space (when shifted tracing is used) in comparison to averaging only over either time or space (when shifted partial tracing is used).

Now let us focus on the numerators in \eqref{eq:three_term_bound}, for example:
\[
\verti{\Tr{1}^{\delta_K}(\widehat{C}_N^K)}_2 \leq \verti{\Tr{1}^{\delta_K}(C^K)}_2 + \verti{\Tr{1}^{\delta_K}(\widehat{C}_N^K - C^K)}_2,
\]
where the second term is $\mathcal{O}_P(N^{-1/2})$ uniformly in $K$, while the first term is clearly bounded by\\ $\sup_{t,s,t',s' \in [0,1]} c(t,s,t',s') < \infty$, hence $\verti{\Tr{1}^{\delta_K}(\widehat{C}_N^K)}_2$ is $\mathcal{O}_P(1)$ uniformly in $K$. Similarly for the other two numerator terms.

Finally, we consider the denominators in \eqref{eq:three_term_bound}. The reverse triangle inequality implies
\[
\left|\Tr{}^{\delta_K}(\widehat{C}_N^K)\right| \geq \left|\Tr{}^{\delta_K}(C^K)\right| - \left|\Tr{}^{\delta_K}(\widehat{C}_N^K - C^K)\right|,
\]
where the second term is again $\mathcal{O}_P(N^{-1/2})$ uniformly in $K$ as shown above, and the first term is bounded away from 0 uniformly in $K$ (for large enough $K$) due to continuity of the kernel $a$ of the separable part $A$ and the assumption $\Tr{}^{\delta_K}(A) > 0$, because $\Tr{}^{\delta_K}(A) = \Tr{}^{\delta_K}(C)$. Hence $1/\left|\Tr{}^{\delta_K}(\widehat{C}_N^K)\right|$ is $\mathcal{O}_P(1)$ uniformly in $K$.

The proof of the rates for the separable estimator is complete upon collecting the rates for the different terms in \eqref{eq:three_term_bound}.

The rate for the eigenvalues follows from the perturbation bounds \cite[Lemma 4.2]{bosq2012}:
\[
|\widehat{\lambda}_i^K \widehat{\rho}_j^K - \lambda_i \rho_j|^2 \leq \verti{\widehat{A}_1^K \ct \widehat{A}_2^K - A_1 \ct A_2}_2^2.
\]
To show the rates for the eigenvectors, we will use again the perturbation bounds \cite[Lemma 4.3]{bosq2012}:
\[
\| \widehat{e}_j^K - \sign(\langle \widehat{e}_j^K, e_j \rangle e_j) \|_2 \leq \alpha \| \widehat{A}_1^K - A_1 \|_2,
\]
where $\alpha$ is a constant depending on spacing between the eigenvalues. We cannot use this result directly, since we do not have consistency of $\widehat{A}_1^K$ (this is because of the scaling issues: $A_1 \ct A_2 = (\alpha A_1) \ct (A_2)/\alpha$ for any $\alpha$). Hence similar bounds always have to be used in the product space. However, this poses no issues due to Lemma \ref{lem:ct}. We have
\[
\begin{split}
\| \widehat{e}_j^K - \sign(\langle \widehat{e}_j^K, e_j \rangle e_j) \|_2 &= \| f_j \|_2 \| \widehat{e}_j^K - \sign(\langle \widehat{e}_j^K, e_j \rangle e_j) \|_2 \\
&= \| \widehat{e}_j^K \otimes f_j - \sign(\langle \widehat{e}_j^K, e_j \rangle) e_j \otimes f_j \|_2 \\
&\leq \| \widehat{e}_j^K \otimes \widehat{f}_j^K - \sign(\langle \widehat{e}_j^K, e_j \rangle) \sign(\langle \widehat{f}_j^K, f_j \rangle) e_j \otimes f_j \|_2.
\end{split}
\]
The previous inequality follows from the Cauchy-Schwartz inequality and the fact that the left-hand side of the inequality equal to $2 - 2 \sign(\langle \widehat{e}_j^K, e_j \rangle) \langle \widehat{e}_j^K, e_j \rangle e_j \rangle \langle f_j, f_j \rangle $ while the right-hand side is equal to $2 - 2 \sign(\langle \widehat{e}_j^K, e_j \rangle) \sign(\langle \widehat{f}_j^K, f_j \rangle) \langle \widehat{e}_j^K, e_j \rangle e_j \rangle \langle \widehat{f}_j^K, f_j \rangle$. Altogether, the rate for $\widehat{A}_1^K \ct \widehat{A}_2^K$ translates to the eigenvectors of $\widehat{A}_1^K$, and similarly for the eigenvectors of $\widehat{A}_2^K$.
\end{proof}

Regarding the eigenvalues, we cannot bound $|\widehat{\lambda}_j^K - \lambda_j|$ since our estimators $A_1$ and $A_2$ can be re-scaled versions of $A_1$ and $A_2$. However, our estimators of the product $A_1 \ct A_2$ are consistent, and we know from Lemma \ref{lem:ct} that eigenvectors of $A_1 \ct A_2$ are given as $\lambda_i \rho_j$, i.e. products of the eigenvalues of $A_1$ and $A_2$. Those products are still estimated with the same rates, similarly as the eigenvectors, using Lemma 4.2 of \cite{bosq2012}.

The proof of the theorem in the case of pixel-wise sampling scheme (S2) is in many regards similar, but some arguments are slightly more subtle.

\begin{proof}[Proof of Theorem \ref{thm:rates}, pixel-wise sampling scheme S2.]

We begin again the by the bias-variance decomposition and bound the bias term in the same manner. For the variance term, we use the triangle inequality treat all the terms in \eqref{eq:three_term_bound} separately. The fractions are also treated the same way as before and the conclusion of the proof will follow similarly, once it is established that $\verti{\Tr{1}^{\delta_K}(\widehat{C}_N^K - C^K)}_2$, $\verti{\Tr{2}^{\delta_K}(\widehat{C}_N^K - C^K)}_2$ and $\left|\Tr{}^{\delta_K}(\widehat{C}_N^K - C^K)\right|$ are all $\mathcal{O}_P(N^{-1/2})$ uniformly in $K$. Establishing these rates for the pointwise sampling scheme (S1) was the bulk of the previous proof, and now we will establish the same for the pixel-wise sampling scheme (S2).

We begin with $\verti{\Tr{1}^{\delta_K}(\widehat{C}_N^K - C^K)}_2$. Exactly as in the previous proof, we obtain the bound \eqref{eq:four_term_bound} here as well:
\[
\begin{split}
\E \verti{\Tr{1}^{\delta_K}(\widehat{C}_N^K - C^K)}_2^2 &\leq \frac{4}{N} \Bigg\{ K^{-4} \sum_{i=1}^K \sum_{j=1}^K \E \left| \sum_{k=1}^{\widetilde{K}} \mathbf X^K[i,k] \mathbf X^K[j,\widetilde{k}] \right|^2 \\
&\qquad\quad+ K^{-4} \sum_{i=1}^K \sum_{j=1}^K \E \left| \sum_{k=1}^{\widetilde{K}} \mathbf E^K[i,k] \mathbf X^K[j,\widetilde{k}] \right|^2\\
&\qquad\quad  + K^{-4} \sum_{i=1}^K \sum_{j=1}^K \E \left| \sum_{k=1}^{\widetilde{K}} \mathbf X^K[i,k] \mathbf E^K[j,\widetilde{k}] \right|^2 \\
&\qquad\quad+ K^{-4} \sum_{i=1}^K \sum_{j=1}^K \E \left| \sum_{k=1}^{\widetilde{K}} \mathbf E^K[i,k] \mathbf E^K[j,\widetilde{k}] \right|^2 \Bigg\} \\ &=: \frac{4}{N} \Bigg\{ (I) + (II) + (III) + (IV) \Bigg\} ,
\end{split}
\]
and again we treat the four terms in the parentheses (labeled by Roman numbers) separately.

For the first term, we drop the inner expectation only increasing the term and obtaining
\[
\begin{split}
(I) &= K^{-4} \sum_{i=1}^K \sum_{j=1}^K \E \left| \sum_{k=1}^{\widetilde{K}} \mathbf X^K[i,k] \mathbf X^K[j,\widetilde{k}] \right|^2 \\
&= K^{-4} \sum_{i=1}^K \sum_{j=1}^K \sum_{k=1}^{\widetilde{K}} \sum_{l=1}^{\widetilde{K}} \E \Big( \mathbf X^K[i,k] \mathbf X^K[j,\widetilde{k}] \mathbf X^K[i,l] \mathbf X^K[j,\widetilde{l}] \Big) \\
&=\sum_{i=1}^K \sum_{j=1}^K \sum_{k=1}^{\widetilde{K}} \sum_{l=1}^{\widetilde{K}} \E  \langle X, g_{i,k}^K \rangle \langle X, g_{j,\widetilde{k}}^K \rangle \langle X, g_{i,l}^K \rangle \langle X, g_{j,\widetilde{l}}^K \rangle \,,
\end{split}
\]
where we used that $\mathbf X^K[i,j] = K \langle X, g_{i,j}^K \rangle$ for the function $g_{i,j}$ defined in \eqref{eq:g_functions}. If we now denote $\Gamma = \E X \otimes X \otimes X \otimes X$, it follows from the outer product algebra (or can be verified explicitly using integral representations) that (recall that we denote $\widetilde{k} = k+\delta_K K$ and $\widetilde{l} = l+\delta_K K$)
\[
\begin{split}
\E  \langle X, g_{i,k}^K \rangle \langle X, g_{j,\widetilde{k}}^K \rangle \langle X, g_{i,l}^K \rangle \langle X, g_{j,\widetilde{l}}^K \rangle &= \E \langle X \otimes X \otimes X \otimes X, g_{i,k}^K \otimes g_{j,\widetilde{k}}^K \otimes g_{i,l}^K \otimes g_{j,\widetilde{l}}^K \rangle \\
&= \langle \Gamma, g_{i,k}^K \otimes g_{j,\widetilde{k}}^K \otimes g_{i,l}^K \otimes g_{j,\widetilde{l}}^K \rangle \\
&= \langle \Gamma(g_{i,k}^K \otimes g_{j,\widetilde{l}}^K),  g_{j,\widetilde{k}}^K \otimes g_{i,l}^K \rangle \,.
\end{split}
\]
Due to positive semi-definiteness of $\Gamma$, the last expression is bounded by
\[
\frac{1}{2} \Big[ \langle \Gamma(g_{i,k}^K \otimes g_{j,\widetilde{l}}^K), g_{i,k}^K \otimes g_{j,\widetilde{l}}^K \rangle + \langle \Gamma(g_{j,\widetilde{k}}^K \otimes g_{i,l}^K, g_{j,\widetilde{k}}^K \otimes g_{i,l}^K \rangle \Big]
\]
which gives us the bound
\[
(I) \leq \frac{1}{2} \sum_{i=1}^K \sum_{j=1}^K \sum_{k=1}^{(1-\delta_K)K} \sum_{l=1}^{(1-\delta_K)K} \Big[ \langle \Gamma(g_{i,k}^K \otimes g_{j,\widetilde{l}}^K), g_{i,k}^K \otimes g_{j,\widetilde{l}}^K \rangle + \langle \Gamma(g_{j,\widetilde{k}}^K \otimes g_{i,l}^K, g_{j,\widetilde{k}}^K \otimes g_{i,l}^K \rangle \Big] \,.
\]
Since $\Gamma$ is positive semi-definite, we can add terms into the bound to symmetrize it:
\[
(I) \leq \sum_{i=1}^K \sum_{j=1}^K \sum_{k=1}^K \sum_{l=1}^K \langle \Gamma(g_{i,k}^K \otimes g_{j,l}^K), g_{i,k}^K \otimes g_{j,l}^K \rangle \,.
\]
Finally, note that $\langle g_{i,j}^k, g_{k,l} \rangle = \mathds{1}_{[i=k,j=l]}$ for $i,j,k,l=1,\ldots,K$, hence $\{ g_{i,j}^K \}_{i,j=1}^K$ can be completed to an orthonormal basis of $\mathcal L^2[0,1]^2$ denoted as $\{ g_{i,j}^K \}_{i,j=1}^\infty$. We can add some more extra terms due to positive semi-definiteness of $\Gamma$ to obtain
\[
(I) \leq \sum_{i=1}^\infty \sum_{j=1}^\infty \sum_{k=1}^\infty \sum_{l=1}^\infty \langle \Gamma(g_{i,k}^K \otimes g_{j,l}^K), g_{i,k}^K \otimes g_{j,l}^K \rangle = \verti{\Gamma}_1 \,.
\]
Note that even though the orthonormal basis used changes with every $K$, the final equality holds for any orthonormal basis \cite[p. 114]{hsing2015}, and hence we obtain uniformity in $K$.

The strategy is similar for the remaining terms $(II)$, $(III)$ and $(IV)$. For the second one:
\begin{align*}
(II) &= K^{-4} \sum_{i=1}^K \sum_{j=1}^K \E \left| \sum_{k=1}^{\widetilde{K}} \mathbf E^K[i,k] \mathbf X^K[j,\widetilde{k}] \right|^2 = K^{-4} \sum_{i=1}^K \sum_{j=1}^K \sum_{k=1}^{\widetilde{K}} \E \big|\mathbf E^K[i,k] \big|^2 \E \big|\mathbf X^K[j,\widetilde{k}] \big|^2 \\
&= K^{-3} \sigma^2 \sum_{j=1}^K \sum_{k=1}^{\widetilde{K}} \E |\mathbf X^K[j,\widetilde{k}]|^2 = K^{-1}\sigma^2 \sum_{j=1}^K \sum_{k=1}^{\widetilde{K}} \E \langle X, g_{j,\widetilde{k}}^K \rangle^2 \\
&= K^{-1}\sigma^2 \sum_{j=1}^K \sum_{k=1}^{\widetilde{K}} \langle C(g_{j,\widetilde{k}}^K), g_{j,\widetilde{k}}^K \rangle^2 \leq K^{-1} \sigma^2 \verti{C}_1 \,.
\end{align*}
The third term can be treated exactly like the second one, and for the final term we have
\begin{align*}
(IV) &= K^{-4} \sum_{i=1}^K \sum_{j=1}^K \E \left| \sum_{k=1}^{\widetilde{K}} \mathbf E^K[i,k] \mathbf E^K[j,\widetilde{k}] \right|^2 \\
&= K^{-4} \sum_{i=1}^K \sum_{j=1}^K \sum_{k=1}^{\widetilde{K}} \sum_{l=1}^{\widetilde{K}} \E \Big( \mathbf E^K[i,k] \mathbf E^K[j,\widetilde{k}] \mathbf E^K[i,l] \mathbf E^K[j,\widetilde{l}] \Big) \\
&= K^{-4} \sum_{i=1}^K \sum_{j=1}^K \sum_{k=1}^{\widetilde{K}} \E \Big| \mathbf E^K[i,k] \Big|^2 \E \Big| \mathbf E^K[j,\widetilde{k}] \Big| \leq K^{-1} \sigma^4 \,,
\end{align*}
Piecing things together, we have
\[
\E \verti{\Tr{1}^{\delta_K}(\widehat{C}_N^K - C^K)}_2^2 \leq \frac{4}{N} \Big[ \verti{\Gamma}_1 + 2 K^{-1} \sigma^2 \verti{C}_1 + K^{-1} \sigma^4 \Big] \,.
\]

Thus we have shown that $\verti{\Tr{1}^{\delta_K}(\widehat{C}_N^K - C^K)}_2 = \mathcal{O}_P(N^{-1/2})$ uniformly in $K$, since $\sigma^2 = \O(\sqrt{K})$. It can be shown in an analogous way that $\verti{\Tr{2}^{\delta_K}(\widehat{C}_N^K - C^K)}_2 = \O_P(N^{-1/2})$ uniformly in $K$, and it remains to show the same for $\left|\Tr{}^{\delta_K}(\widehat{C}_N^K - C^K)\right|$.

Similarly to before we obtain the following bound:
\begin{align*}
\E \left|\Tr{}^{\delta_K}(\widehat{C}_N^K - C^K)\right|^2 &\leq \frac{4}{N} \Bigg\{ K^{-4} \E\left| \sum_{i=1}^{\widetilde{K}} \sum_{j=1}^{\widetilde{K}} \mathbf X^K[i,j] \mathbf X^K[\widetilde{i},\widetilde{j}] \right|^2 \\
&\qquad \quad+ K^{-4} \E\left| \sum_{i=1}^{\widetilde{K}} \sum_{j=1}^{\widetilde{K}} \mathbf E^K[i,j] \mathbf X^K[\widetilde{i},\widetilde{j}] \right|^2 \\
&\qquad \quad + K^{-4} \E\left| \sum_{i=1}^{\widetilde{K}} \sum_{j=1}^{\widetilde{K}} \mathbf X^K[i,j] \mathbf E^K[\widetilde{i},\widetilde{j}] \right|^2 \\
&\qquad \quad+ K^{-4} \E\left| \sum_{i=1}^{\widetilde{K}} \sum_{j=1}^{\widetilde{K}} \mathbf E^K[i,j] \mathbf E^K[\widetilde{i},\widetilde{j}] \right|^2 \Bigg\} 
\\ &=: \frac{4}{N} \Bigg\{ (I) + (II) + (III) + (IV) \Bigg\} \,,
\end{align*}
in which we will treat again the four terms separately.

For the first term:
\[
\begin{split}
(I) &= K^{-4} \sum_{i,j,k,l=1}^{\widetilde{K}} \E \Big( \mathbf X^K[i,j] \mathbf X^K[\widetilde{i},\widetilde{j}] \mathbf X^K[k,l] \mathbf X^K[\widetilde{k},\widetilde{l}] \Big) \\
&= \sum_{i,j,k,l=1}^{\widetilde{K}} \E \Big( \langle X, g_{i,j}^K \rangle \langle X, g_{\widetilde{i},\widetilde{j}}^K \rangle \langle X, g_{k,l}^K \rangle \langle X, g_{\widetilde{k},\widetilde{l}}^K \rangle \Big) \\
& = \sum_{i,j,k,l=1}^{\widetilde{K}} \E \langle X \otimes X \otimes X \otimes X , g_{i,j}^K \otimes g_{\widetilde{i},\widetilde{j}}^K \otimes g_{k,l}^K \otimes g_{\widetilde{k},\widetilde{l}}^K \rangle  \\
&= \sum_{i,j,k,l=1}^{\widetilde{K}} \langle \Gamma( g_{i,j}^K \otimes  g_{k,l}^K) , g_{\widetilde{i},\widetilde{j}}^K \otimes g_{\widetilde{k},\widetilde{l}}^K \rangle \leq \verti{\Gamma}_1 \,.
\end{split}
\]

For the second term,
\[
\begin{split}
(II) &= K^{-4} \sum_{i,j,k,l=1}^{\widetilde{K}} \E \Big( \mathbf X^K[i,j] \mathbf E^K[\widetilde{i},\widetilde{j}] \mathbf X^K[k,l] \mathbf E^K[\widetilde{k},\widetilde{l}] \Big) \\
&= K^{-4} \sum_{i,j,k,l=1}^{\widetilde{K}} \E \Big( \mathbf X^K[i,j] \mathbf X^K[k,l] \ \Big) \E \Big( \mathbf E^K[\widetilde{i},\widetilde{j}] \mathbf E^K[\widetilde{k},\widetilde{l}] \Big)
\end{split}
\]
and since $\E \big( \mathbf E^K[\widetilde{i},\widetilde{j}] \mathbf E^K[\widetilde{k},\widetilde{l}] \big) = \sigma^2 \mathds{1}_{[i=k,j=l]}$, we have
\[
(II) = \sigma^2 K^{-4} \sum_{i,j=1}^{\widetilde{K}} \E \Big| \mathbf X^K[i,j] \Big|^2 = \sigma^2 K^{-2} \sum_{i,j=1}^{\widetilde{K}} \E \langle X , g_{i,j}^K \rangle^2 \leq \sigma^2 K^{-2} \verti{C}_1 \,.
\]

The third term is bounded similarly, and for the final term:
\[
\begin{split}
(IV) &= K^{-4} \sum_{i,j,k,l=1}^{\widetilde{K}} \E \Big( \mathbf E^K[i,j] \mathbf E^K[\widetilde{i},\widetilde{j}] \mathbf E^K[k,l] \mathbf E^K[\widetilde{k},\widetilde{l}] \Big) \\
&= K^{-4} \sum_{i=1}^{\widetilde{K}} \sum_{j=1}^{\widetilde{K}} \E \big| \mathbf E^K[i,j] \big|^2 \E \big| \mathbf E^K[\widetilde{i},\widetilde{j}] \big|^2 \leq K^{-2} \sigma^4
\end{split}
\]

In summary, we have obtained the following bound:
\[
\E \left|\Tr{}^{\delta_K}(\widehat{C}_N^K - C^K)\right|^2 = \frac{4}{N} \Big[ \verti{\Gamma}_1 + 2 K^{-2} \sigma^2 \verti{C}_1 + K^{-2} \sigma^4 \Big] \,.
\]

The proof for the eigenvalues and eigenvectors remains the same as with sampling scheme S1.
\end{proof}

\subsubsection{Uniform Rates}\label{app:G5}

Next, we provide uniform rates of convergence in the following proposition. Note that compared to Theorem \ref{thm:rates}, noise variance is allowed to grow with the grid size at a slower rate, and Lipschitz assumption is put on the sample path. The latter is for the CLT to work 

\begin{theorem}\label{thm:rates_uniform}
Let $X_1, \ldots, X_N$ be i.i.d. copies of $X \in \mathcal{L}^2[0,1]^2$, which has (w.l.o.g.~mean zero and) covariance given by \eqref{eq:model}, where the the separable part $A := A_1 \ct A_2$ has kernel $a(t,s,t',s')$, which is Lipschitz continuous on $[0,1]^4$ with Lipshitz constant $L>0$.  Let $\E \| X \|^4 < \infty$ and $\delta \in [0,1)$ be such that $B$ from \eqref{eq:model} is banded by $\delta$ and $\Tr{}^\delta(A) \neq 0$. Let the samples come from \eqref{eq:discrete_noisy_observations} via measurement scheme (S1) or (S2) with $\var(\mathbf{E}_n^K[i,j]) \leq \sigma^2 = \mathcal{O}(\sqrt{K})$. Then we have
\[
\sup_{t,s,t',s' \in [0,1]}|\widehat{a}_1^K(t,t')\widehat{a}_2^K(s,s') - a_1(t,t') a_2(s,s')|  = \mathcal{O}_P(N^{-1/2}) + 2 K^{-1} L,
\]
where the $\mathcal{O}_P(N^{-1})$ term is uniform in $K$, for all $K \geq K_0$ for a certain $K_0 \in \N$.
\end{theorem}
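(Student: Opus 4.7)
The plan is to mirror the architecture of the proof of Theorem \ref{thm:rates}, replacing the Hilbert--Schmidt norm with the supremum norm throughout, and using the elementary fact that for a separable kernel, $\sup_{t,s,t',s'} |f(t,t') g(s,s')| = (\sup|f|)(\sup|g|)$. First I would split
\[
\sup|\widehat a_1^K \widehat a_2^K - a_1 a_2| \;\leq\; \sup|\widehat a_1^K \widehat a_2^K - a_1^K a_2^K| \;+\; \sup|a_1^K a_2^K - a_1 a_2|.
\]
The bias is immediate: on each pixel $I_{i,j,k,l}^K$ the product $a_1^K a_2^K$ takes the constant value $a(t_i,s_j,t_k,s_l)$, and every point of the pixel lies within Euclidean distance at most $2K^{-1}$ of that grid point, so Lipschitz continuity of $a$ yields $\sup|a_1^K a_2^K - a_1 a_2| \leq 2K^{-1} L$.

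For the variance I would apply the three-term linearisation \eqref{eq:three_term_bound} of the proof of Theorem \ref{thm:rates} and take the sup-norm factor-wise. Writing $c_F$ for the kernel of an operator $F$, this yields a bound by $|\Tr{}^{\delta_K}(\widehat C_N^K)|^{-1}$ times three summands, each a product of a ``base'' sup-norm (of $c_{\Tr{1}^{\delta_K}\widehat C_N^K}$, $c_{\Tr{2}^{\delta_K} C^K}$, or $c_{A_1^K\ct A_2^K}$) with a ``fluctuation'' sup-norm of $c_{\Tr{j}^{\delta_K}(\widehat C_N^K - C^K)}$ or with the scalar $|\Tr{}^{\delta_K}(\widehat C_N^K - C^K)|$. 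The denominator is bounded away from zero in probability for $K\geq K_0$ exactly as in Theorem \ref{thm:rates}, and the base sup-norms are $\mathcal O_P(1)$ or $\mathcal O(1)$ uniformly in $K$ because the true kernels are continuous on the compact set $[0,1]^4$ and the fluctuations are controlled below. The scalar shifted-trace fluctuation $|\Tr{}^{\delta_K}(\widehat C_N^K - C^K)|$ has already been shown to be $\mathcal O_P(N^{-1/2})$ uniformly in $K$ inside the proof of Theorem \ref{thm:rates}, so the whole problem reduces to establishing
\[
\sup_{t,t'} \bigl|\, c_{\Tr{1}^{\delta_K}(\widehat C_N^K - C^K)}(t,t')\,\bigr| \;=\; \mathcal O_P(N^{-1/2})
\]
uniformly in $K$, together with the symmetric statement for $\Tr{2}^{\delta_K}$.

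This last bound is where the main obstacle lies. Because $\Tr{1}^{\delta_K}(\widehat C_N^K - C^K)$ has a piecewise-constant kernel on the $K\times K$ grid (with the $K^{-1}$ scaling coming from Lemma \ref{lem:continuous_discrete_traces}), its supremum equals $K^{-1}\max_{i,k}\bigl|\Tr{1}^{d_K}(\widehat{\mathbf C}_N^K - \mathbf C^K)[i,k]\bigr|$, i.e.\ a maximum over $K^2$ empirical averages of $N$ i.i.d., mean-zero random variables. The per-pixel second-moment computation already performed in the proof of Theorem \ref{thm:rates} gives each such $K^{-1}|\cdot|_{i,k}$ a standard deviation of order $N^{-1/2}$ uniformly in $K$, but a bare union bound over $K^2$ pixels would inflate this by a factor $\sqrt{\log K}$ and destroy uniformity. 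The ``slightly stricter assumptions'' flagged in the preceding remark serve precisely to eliminate this logarithm: either a higher-moment or sub-Gaussian tail bound on the linear functionals $\langle X, g_{i,j}^K\rangle$, combined with Markov's inequality applied to a sufficiently high power (so that the $K^2$ in the union bound is absorbed into the rate), or a H\"older/Lipschitz regularity assumption on the sample paths of $X$ that allows a chaining argument comparing neighbouring pixels through the Lipschitz modulus of their common mean.

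Once the uniform-in-$K$ control of the maximum is in place, plugging it into the linearisation together with the base sup-norm bounds yields $\sup|\widehat a_1^K \widehat a_2^K - a_1^K a_2^K| = \mathcal O_P(N^{-1/2})$ uniformly in $K$ for $K\geq K_0$, and adding the bias bound $2K^{-1}L$ from the first paragraph completes the proof.
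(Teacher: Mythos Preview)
Your architecture is exactly that of the paper: bias via Lipschitz continuity, variance via the three-term linearisation \eqref{eq:three_term_bound} in sup-norm, and reduction to a uniform-in-$K$ bound on $\sup_{t,t'}|c_{\Tr{1}^{\delta_K}(\widehat C_N^K-C^K)}(t,t')|$. Where you diverge is precisely at the step you flag as the main obstacle. The paper does \emph{not} introduce the ``slightly stricter assumptions'' advertised in the remark, nor does it invoke a union bound, a maximal inequality, or chaining. Instead it writes
\[
\E\bigl\|\Tr{1}^{\delta_K}(\widehat C_N^K-C^K)\bigr\|_\infty^2
\;=\; K^{-2}\sup_{i,j}\,\E\bigl|\Tr{1}^{d_K}(\widehat{\mathbf C}_N^K-\mathbf C^K)[i,j]\bigr|^2,
\]
i.e.\ it interchanges expectation and supremum, and then bounds the right-hand side pixelwise exactly as in Theorem~\ref{thm:rates}. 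As an equality this is not valid (the inequality $\sup\E\le\E\sup$ runs the wrong way for an upper bound), so your concern is legitimate: the paper's proof, as written, leaves precisely the gap you identified. Your proposed remedies---a high-moment Markov bound that swallows the $K^2$ from a union bound, or a Lipschitz sample-path assumption enabling a chaining argument---are the natural ways to make the step rigorous, and either would complete the proof under genuinely stricter hypotheses than those stated.
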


\begin{proof}
The proof is similar to the one of Theorem \ref{thm:rates}. To save space, we will use the notation $\| \cdot \|_\infty$ for the uniform norm, i.e. $\| C \|_\infty := \sup_{t,s,t',s'} | c(t,s,t',s') |$. This is not to be confused with the operator norm of $C$ denoted as $\verti{C}_\infty$.

We begin with the triangle inequality separating the bias and the variance:
\[
\| \widehat{A}^K - A \|_\infty \leq \| \widehat{A}^K - A^K \|_\infty + \| A^K - A \|_\infty,
\]
and we bound the bias first.

Under (S1), we have
\[
\begin{split}
\| A^K - A \|_\infty &= \sup_{i,j,k,l=1}^K \sup_{(t,s,t',s') \in I_{i,j}^K \times I_{k,l}^K} \left| a_1^K(t_i,t_k) a_2^K(s_j,s_l) - a_1(t,t') a_2(s,s') \right| \\
&\leq \sup_{i,j,k,l=1}^K 2 L K^{-1} = 2 L K^{-1},
\end{split}
\]
where we used the Lipschitz property of $A$. On the other hand, under (S2), we have
\[
\begin{split}
&\| A^K - A \|_\infty =\\
&= \sup_{i,j,k,l=1}^K \sup_{(t,s,t',s') \in I_{i,j}^K \times I_{k,l}^K} \left| \frac{1}{| I_{i,j}^K |} \frac{1}{|I_{k,l}^K|} \int_{I_{i,j}^K \times I_{k,l}^K} \big[ a_1(u,v) a_2(x,y) -  a_1(t,s)a_2(t',s') \big] d u d v d x d y  \right|\\
&\leq \sup_{i,j,k,l=1}^K \sup_{(t,s,t',s') \in I_{i,j}^K \times I_{k,l}^K} K^4 \int_{I_{i,j}^K \times I_{k,l}^K} \big| a_1(u,v) a_2(x,y) - a_1(t,t') a_2(s,s') \big| d u d v d x d y \\
&\leq \sup_{i,j,k,l=1}^K \sup_{(t,s,t',s') \in I_{i,j}^K \times I_{k,l}^K} K^4 \int_{I_{i,j}^K \times I_{k,l}^K} 2LK^{-1} \leq 2LK^{-1}.
\end{split}
\]

Similarly to \eqref{eq:three_term_bound} in the proof of Theorem \ref{thm:rates}, we obtain
\[
\begin{split}
\| \widehat{A}^K - A^K \|_\infty &\leq \frac{\left\|\Tr{1}^{\delta_K}(\widehat{C}_N^K)\right\|_\infty}{\left|\Tr{}^{\delta_K}(\widehat{C}_N^K)\right|} \left\|\Tr{2}^{\delta_K}(\widehat{C}_N^K - C^K)\right\|_\infty \\
&\quad+\frac{\left\|\Tr{2}^{\delta_K}(C^K)\right\|_\infty}{\left|\Tr{}^{\delta_K}(\widehat{C}_N^K)\right|} \left\|\Tr{1}^{\delta_K}(\widehat{C}_N^K - C^K)\right\|_\infty \\
&\quad + \frac{\left\|A_1^K \ct A_2^K\right\|_\infty}{\left|\Tr{}^{\delta_K}(\widehat{C}_N^K)\right|} \left|\Tr{}^{\delta_K}(\widehat{C}_N^K - C^K)\right| ,
\end{split}
\]
and we will again show that the numerators and denominators are $\O_P(1)$, while the remaining terms are $\O_P(N^{-1/2})$. In fact, the term that has to be treated is $\left\|\Tr{1}^{\delta_K}(\widehat{C}_N^K - C^K)\right\|_\infty$. Once we show that this term is $\O_P(N^{-1/2})$, exactly the same arguments like in the proof of Theorem \ref{thm:rates} can be used to conclude.

We calculate
\[
\begin{split}
\E \verti{\Tr{1}^{\delta_K}(\widehat{C}_N^K - C^K)}_\infty^2 &= K^{-2} \E \left\| \Tr{1}^{d_K}(\widehat{\mathbf C}_N^K - \mathbf C^K)\right\|_\infty^2 \\
&= K^{-2} \sup_{i,j=1}^K \E \left| \Tr{1}^{d_K}(\widehat{\mathbf C}_N^K - \mathbf C^K)[i,j] \right|^2 \\
&= K^{-2} \sup_{i,j=1}^K \E \Bigg| \frac{1}{N} \sum_{n=1}^N \underbrace{\sum_{k=1}^{\widetilde{K}} \Big( \widetilde{\mathbf X}_n^K[i,k] \widetilde{\mathbf X}_n^K[j,\widetilde{k}] - \E \mathbf X_n^K[i,k] \mathbf X_n^K[i,\widetilde k] \Big)}_{=: Z_{n,ij}} \Bigg|^2 \,.
\end{split}
\]
Again, for a fixed $i,j$, $Z_{n,ij}$ is a set of mean-zero i.i.d. random variables and hence
\[
\E \verti{\Tr{1}^{\delta_K}(\widehat{C}_N^K - C^K)}_\infty^2 = N^{-1} K^{-2} \sup_{i,j=1}^K \E \left| Z_{n,ij} \right|^2,
\]
so it suffices to show that $K^{-2} \E \left| Z_{n,ij} \right|^2$ is uniformly bounded.

Under (S1), we proceed similarly as in \eqref{eq:four_term_bound}:
\[
\begin{split}
K^{-2} \E \left| Z_{n,ij} \right|^2 \leq K^{-2} &\Bigg\{
\E \left| \sum_{k=1}^{\widetilde{K}}  \mathbf X^K[i,k] \mathbf X^K[j,\widetilde{k}] - \E \mathbf X^K[i,k] \mathbf X^K[i,\widetilde k]  \right|^2 \\
& + \E \left| \sum_{k=1}^{\widetilde{K}}  \mathbf E^K[i,k] \mathbf X^K[j,\widetilde{k}]  \right|^2
 + \E \left| \sum_{k=1}^{\widetilde{K}}  \mathbf X^K[i,k] \mathbf E^K[j,\widetilde{k}]  \right|^2\\
 &+ \E \left| \sum_{k=1}^{\widetilde{K}}  \mathbf E^K[i,k] \mathbf E^K[j,\widetilde{k}]  \right|^2 \Bigg\}.
\end{split}
\]
The first term in the parentheses is bounded again by $S_1 K^2$, the second term is bounded by $K S_2 \sigma^2$, and the third term by $K \sigma^4$. Collecting the bounds together, we obtain under (S1) that
\[
\E \verti{\Tr{1}^{\delta_K}(\widehat{C}_N^K - C^K)}_\infty^2 \leq \frac{4}{N}\big[ S_1 + S_2 K^{-1} \sigma^2 + K^{-1} \sigma^4 \big],
\]
from which the claim of follows.

Under (S2), the proof is an equivalent modification to the proof of Theorem \ref{thm:rates}.
\end{proof}

While the rate for the eigenvalues remain the same as in Theorem \ref{thm:rates} (simply because eigenvalues are just numbers), uniform rates for the eigenfunctions are a bit trickier. The simple perturbations bound cannot be used anymore, and it does not seem possible to separate the effect of the grid size from the effect of the sample size. But if we assume e.g. that $K \asymp \sqrt{N}$, the corresponding rate holds for the eigenfunctions as well.

Again, let $\widehat{A}_1^K = \sum_{j \in \N} \widehat{\lambda}^K_j \widehat{e}_j^K \otimes \widehat{e}_j^K$, $\widehat{A}_2^K = \sum_{j \in \N} \widehat{\rho}^K_j \widehat{f}_j^K \otimes \widehat{f}_j^K$, $A_2 = \sum_{j \in \N} \lambda_j e_j \otimes e_j$, and $A_2 = \sum_{j \in \N} \rho_j f_j \otimes f_j$ be eigendecompositions. We show below that for $j$ such that the corresponding eigensubspace is one-dimensional it holds
\[
\| \widehat{e}_j^K - \sign(\langle \widehat{e}_j^K,e_j \rangle) e_j \|_\infty = \O_P(N^{-1/2}) \quad \& \quad \| \widehat{f}_j^K - \sign(\langle \widehat{f}_j^K,f_j \rangle) f_j \|_\infty = \O_P(N^{-1/2}).
\]

For simplicity, we assume that the signs are estimated correctly. Firstly, note that we have from the triangle inequality followed by the Cauchy-Schwartz inequality that
\[
\begin{split}
\Big| \widehat{\alpha}_j^K \widehat{\beta}_1^K \widehat{e}_j^K &- \alpha_j \beta_1 e_j \Big| = \\
&= 
\Bigg| \int \int \int \widehat{a}^K(t,s,t',s') \widehat{e}_j^K(t') \widehat{f}_1^K(s) \widehat{f}_1^K(s') d t d s d t' d s' \\
&\qquad - \int \int \int a(t,s,t',s') e_j(t') f_1(s) f_1(s') d t d s d t' d s' \Bigg| \\
&\leq \Bigg| \int \int \int \big[ \widehat{a}^K(t,s,t',s') - a(t,s,t',s')\big] e_j(t') f_1(s) f_1(s') d t d s d t' d s' \Bigg| \\
&\quad + \Bigg| \int \int \int a(t,s,t',s') \big[ \widehat{e}_j^K(t') \widehat{f}_1^K(s) \widehat{f}_1^K(s') - e_j(t') f_1(s) f_1(s') \big] d t d s d t' d s' \Bigg|\\
&\leq \| \widehat{A}^K - A \|_\infty \cdot 1 + \| A \|_\infty \| \widehat{e}_j^K \otimes \widehat{f}_1 \otimes \widehat{f}_1 - e_j \otimes f_1 \otimes f_1  \|_2.
\end{split}
\]
The first term on the previous line is bounded by the previous theorem, while the second term is bounded from the triangle inequality and the $\mathcal{L}^2$ rate for the product eigenfunctions given Theorem \ref{thm:rates} (the rate for the product eigenfunctions can be found in the proof of Theorem \ref{thm:rates}):
\[
\| \widehat{e}_j^K \otimes \widehat{f}_1 \otimes \widehat{f}_1 - e_j \otimes f_1 \otimes f_1  \|_2 \leq \|\widehat{f}_1 - f \|_2 + \| \widehat{e}_j^K \otimes \widehat{f}_1 - e_j \otimes f_1 \| = \mathcal{O}_P(N^{-1/2}) + 2 K^{-1} L.
\]

Now, from the triangle inequality, we have for any $t$ that
\[
\widehat{\alpha}_j^K \widehat{\beta}_1^K \Big| \widehat{e}_j^K(t) - e_j(t) \Big| \leq \Big| \widehat{\alpha}_j^K \widehat{\beta}_1^K \widehat{e}_j^K - \alpha_j \beta_1 e_j \Big| + \Big| \widehat{\alpha}_j^K \widehat{\beta}_1^K - \alpha_j \beta_1 \Big| \| e_j \|_\infty .
\]
Since $e_j$ is continuous and $\widehat{\alpha}_j^K \widehat{\beta}_1^K$ converges to $alpha_j \beta_1 \neq 0$, we obtain the rates for $\widehat{e}_j^K$. The rates for $\widehat{f}_j^K$ are obtained similarly.

\subsubsection{Discrete Rates under Adaptive Bandwidth}\label{app:G6}

And finally, we discuss what happens to the rates in Theorem \ref{thm:rates} when the bandwidth is chosen adaptively. There are several difficulties that need to be addressed in this case.

Firstly, the candidate values for $\delta$ can now depend on the grid size as described in the main paper, so we should denote the set of candidate values as $\Delta^K$. However, let us assume for simplicity that the candidate values do not depend on the grid size for $K \geq K_0$, i.e. starting from some critical resolutions. If this is not true, one needs to take a similar care like in the proof of Theorem \ref{thm:rates}. A natural question arises whether it is reasonable to have a fixed set of candidate values $\Delta$, should we not allow the number of candidate bandwidth values increase with increasing grid size $K$? The answer is negative simply because there is a whole range of equally good candidate values (large enough to eliminate the banded part) among which to pick. This range does not depend on $K$, we only need $K$ large enough such that at least one candidate discrete bandwidth falls inside this range. At the same time, $\Delta$ needs to contain this suitable candidate. However, this is always satisfied for a finite grid size $K$ and a finite cardinality of $\Delta$. For example, when when $\Tr{}^\delta(C) \neq 0$ for all $\delta \in (0,1)$ and the true bandwidth $\delta^\star$ is smaller than 0.5, then for an equidistant grid of size $K \geq 2$ it is enough to choose $\Delta = \{ 1/3, 2/3 \}$. Of course, in practice, $\delta^\star$ is unknown, usually much smaller, and we would like to approximate it more closely, so we choose a larger set of candidate values $\Delta$. However, it is clear that the cardinality of $\Delta$ should not depend on the grid size $K$.

Secondly, and more importantly, while Theorem \ref{thm:rates} establishes that our estimation methodology is robust against noise, this is not the case for the bandwidth selection procedure of Section \ref{sec:delta}. Here, we will change the bandwidth selection procedure to one that is robust against noise. We should, however, note that the new bandwidth selection procedure should rarely be used in practice. This goes back to whether we see a banded part of the model as a nuisance or as a signal to be estimated. For example, in the case of the mortality data analysis, one can either use (the discrete version of) the bandwidth selection procedure of Section \ref{sec:delta} to choose $\delta_D$ among $\{ 0, 1, 2, \ldots \}$ for the discrete bandwidths, and when $\delta_D=1$ is chosen, one takes $B$ to be a diagonal structure corresponding to heteroscedastic white noise, i.e. one directly models the noise structure. Alternatively, we could use the bandwidth selection procedure below to choose $\delta_D$ among $\{ 1, 2, \ldots \}$, and when $\delta_D=1$ would be chosen, we would obtain the same estimators for the separable part, but we would not estimate the banded part of the model; there would be no banded part in this case. Moreover, we should point out that in our mortality data analysis, we actually have a reason to believe that $\delta_D=1$ should be used, and then we test the validity of the separable-plus-diagonal model, so the bandwidth choice on this data set should rather be taken as an illustration of the bandwidth selection procedure. On the other hand, the development below can be taken simply as a complementary evidence that we truly obtain the correct estimators of the separable part, even under discrete noisy measurements and when the bandwidth is unknown.

For $K \in \N$, $\mathbf F, \mathbf G \in \R^{K \times K \times K \times K}$, and $F^K, G^K \in \mathcal{S}_2(L^2[0,1]^2)$ the piece-wise constant continuations of $\mathbf F$ and $\mathbf{G}$, respectively, we define $\| F^K \|_\star$ via
\[
\verti{F^K}_\star^2 = \verti{ F^K }_2^2 - \frac{1}{K^2}\| \diag(\mathbf F) \|_2^2.
\]
We also define $\langle \cdot , \cdot \rangle_\star$ as
\[
\langle F^K, G^K \rangle_\star = \langle F^K, G^K \rangle - \frac{1}{K^2}\langle \diag(\mathbf F) , \diag(\mathbf G) \rangle .
\]
Finally, recall that $\widehat{X}_n^K$ are the discrete noisy samples (or  rather piece-wise constant continuation thereof), and define
\[
\Xi^K(\delta) := \verti{C^K(\delta) - C^K}_\star^2 \quad \& \quad \widehat{\Xi}^K(\delta) := \verti{\widehat{C}^K(\delta)}_\star^2 - \frac{2}{N} \sum_{n=1}^N \langle \widetilde{X}_n^K, \widehat{C}^K_{-n}(\delta) \widetilde{X}_n^K \rangle_\star,
\]
and
\begin{equation}\label{eq:delta_choice_discrete}
\widehat{\delta} := \argmin_{\delta \in \Delta} \widehat{\Xi}^K(\delta) \quad \& \quad \delta_\star := \argmin_{\delta \in \Delta} \Xi^K(\delta).
\end{equation}
With these definitions, we are trying to bypass the effect of noise on the bandwidth selection procedure, to obtain an adaptive version of Theorem \ref{thm:rates}. Since $\| \cdot \|_\star$ is clearly a semi-norm and $\langle \cdot , \cdot \rangle_\star$ is the corresponding semi-inner-product \cite{conway2019}, we will be able to combine the continuous-domain proof of Theorem \ref{thm:rates_adaptive} with the discrete-domain proof of Theorem \ref{thm:rates} to obtain the following result.

\begin{theorem}\label{thm:rates_adaptive_discrete}
Let $X_1, \ldots, X_N$ be i.i.d. copies of $X \in \mathcal{L}^2[0,1]^2$, which has (w.l.o.g.~mean zero and) covariance given by \eqref{eq:model}, where the the separable part $A := A_1 \ct A_2$ has kernel $a(t,s,t',s')$, which is Lipschitz continuous on $[0,1]^4$ with Lipshitz constant $L>0$. Let $\E \| X \|^4 < \infty$ and $\delta^\star \in [0,1)$ be such that $B$ from \eqref{eq:model} is banded by $\delta^\star$. Let $\Delta$ be such that $\Tr{}^{\delta}(A) \neq 0$ for all $\delta \in \Delta$ of which at least one is larger than $\delta^\star$, and let $\widehat{\delta}$ be chosen from $\Delta$ as in \eqref{eq:delta_choice_discrete}. Let the samples come from \eqref{eq:discrete_noisy_observations} via measurement scheme (S1) or (S2) with $\var(\mathbf{E}_n^K[i,j]) \leq \sigma^2$. Then we have
\begin{equation}\label{eq:rates_discrete}
\verti{\widehat{A}_1^K(\widehat{\delta}) \ct \widehat{A}_2^K(\widehat{\delta}) - A_1 \ct A_2}_2^2  = \mathcal{O}_P(N^{-1}) + 2 K^{-2} L^2,
\end{equation}
where the $\mathcal{O}_P(N^{-1})$ term is uniform in $K$, for all $K \geq K_0$ for a certain $K_0 \in \N$.
\end{theorem}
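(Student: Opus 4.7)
The plan is to parallel the proof of Theorem \ref{thm:rates_adaptive}, but inside the discrete-noisy framework of Theorem \ref{thm:rates}, using the noise-robust semi-inner-product $\langle \cdot,\cdot\rangle_\star$ as a replacement for $\langle\cdot,\cdot\rangle$ in the cross-validation step. The starting point is the same bias-variance split used in the proof of Theorem \ref{thm:rates}:
\[
\verti{\widehat{A}_1^K(\widehat{\delta}) \ct \widehat{A}_2^K(\widehat{\delta}) - A_1 \ct A_2}_2^2 \leq 2\verti{\widehat{A}_1^K(\widehat{\delta}) \ct \widehat{A}_2^K(\widehat{\delta}) - A_1^K \ct A_2^K}_2^2 + 2\verti{A_1^K \ct A_2^K - A_1 \ct A_2}_2^2.
\]
The second term is bounded by $2K^{-2}L^2$ exactly as before via Lipschitz continuity. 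For the first, once I can show that $P(\widehat{\delta} \geq \delta^\star) \to 1$ uniformly in $K \geq K_0$, the bound \eqref{eq:three_term_bound} together with the moment estimates \eqref{eq:bound_SPT} and \eqref{eq:bound_ST} from the proof of Theorem \ref{thm:rates} applies to the estimator at bandwidth $\widehat{\delta}$, yielding the $\O_P(N^{-1})$ rate uniform in $K$.

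Hence the core of the argument is to establish a noise-robust, discrete analogue of Proposition \ref{prop:obj_consistency}, namely
\[
\widehat{\Xi}^K(\delta) = \Xi^K(\delta) - \verti{C^K}_\star^2 + \O_P(N^{-1/2}), \qquad \text{uniformly in } \delta \in \Delta \text{ and in } K \geq K_0.
\]
Mirroring that proof, I would introduce the non-leave-one-out proxy $\widetilde{\Xi}^K(\delta) := \verti{\widehat{C}^K(\delta)}_\star^2 - \tfrac{2}{N}\sum_n \langle \widetilde{X}_n^K, \widehat{C}^K(\delta) \widetilde{X}_n^K \rangle_\star$ and expand using the identity $\verti{\widehat{C}^K(\delta) - C^K}_\star^2 = \verti{\widehat{C}^K(\delta)}_\star^2 - 2\langle \widehat{C}^K(\delta), C^K\rangle_\star + \verti{C^K}_\star^2$. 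The essential point is that $\tfrac{1}{N}\sum_n \widetilde{X}_n^K \otimes \widetilde{X}_n^K$ is unbiased for $C^K$ in the $\langle \cdot,\cdot\rangle_\star$ pairing: the noise cross term $\E\langle \mathbf E_n^K, \widehat{C}^K(\delta) \mathbf E_n^K\rangle$ is $\sum_{i,j} \var(\mathbf E_n^K[i,j])\,\widehat{\mathbf C}^K(\delta)[i,j,i,j]$, which lives entirely on the diagonal and is exactly what the $-\tfrac{1}{K^2}\langle \diag(\cdot),\diag(\cdot)\rangle$ subtraction removes. The residual Cauchy-Schwartz bounds and the leave-one-out correction (analogue of \eqref{eq:obj_bias}) then follow by combining the linearisation \eqref{eq:linearization} with the uniform-in-$K$ second-moment bounds already proved for Theorem \ref{thm:rates}.

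Once the objective is consistent up to an additive constant, uniformly in $K$, the selection step is standard: for any $\delta < \delta^\star$ in $\Delta$, there is a strictly positive, $K$-uniform gap $\Xi^K(\delta) - \Xi^K(\delta^\star) \geq c > 0$ for all $K \geq K_0$, because $B$ is $\delta^\star$-banded but not banded by any smaller $\delta$ and the pixelisation $B^K$ retains this property for large enough $K$ by continuity. Choosing $N$ large enough for the $\O_P(N^{-1/2})$ fluctuations to be smaller than $c/2$ gives $P(\widehat{\delta} \geq \delta^\star) \to 1$, and the theorem follows.

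I expect the main obstacles to be twofold. First, verifying rigorously that the $\langle\cdot,\cdot\rangle_\star$-pairing cancels the noise bias \emph{exactly}, which requires the white-noise structure of $\mathbf E_n^K$ and the independence of $\widehat{C}^K_{-n}$ from $\mathbf E_n^K$ (the leave-one-out is crucial here); any correlation in the noise across pixels would leak off the diagonal and spoil this cancellation. Second, pinning down a uniform-in-$K$ lower bound on the gap $\min_{\delta < \delta^\star}[\Xi^K(\delta) - \Xi^K(\delta^\star)]$ for $K$ above a threshold depending only on $\Delta$ and on how $B$ approaches the diagonal; this is where one actually uses that $B^K \to B$ in a sufficiently strong sense, and that the candidate grid $\Delta$ has been fixed independently of $K$.
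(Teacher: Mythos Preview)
Your proposal is correct and follows essentially the same route as the paper's proof: the paper organises the argument into four explicit claims (Claim 1: $\verti{\widehat{C}^K_N - C^K}_\star^2 = \O_P(N^{-1})$; Claim 2: $\verti{\widehat{A}^K(\delta) - A^K(\delta)}_2^2 = \O_P(N^{-1})$; Claim 3: $\verti{\widehat{C}^K(\delta) - C^K(\delta)}_\star^2 = \O_P(N^{-1})$; Claim 4: the objective consistency $\widehat{\Xi}^K(\delta) = \Xi^K(\delta) - \verti{C^K}_\star^2 + \O_P(N^{-1/2})$), then splits $\verti{\widehat{A}^K(\widehat{\delta}) - A}_2^2$ into three pieces via the parallelogram law and handles them exactly as you describe. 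Your identification of the diagonal-removal as the mechanism making the noise bias vanish, and your flagging of the uniform-in-$K$ gap as a point requiring care, are both on target; the paper's proof in fact glosses over the latter by simply referring back to the continuous-domain argument of Theorem~\ref{thm:rates_adaptive}.
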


\begin{proof}
We begin by proving several claims. All of the four claims below hold uniformly in $K$ for any $\delta$ such that $\Tr{}^{\delta}(A) \neq 0$, and are proven sequentially.
\begin{description}
\item[Claim 1:] $\verti{\widehat{C}^K_N - C^K}_\star^2 = \O_P(N^{-1})$

Similarly to the proof of Theorem \ref{thm:rates}, we calculate
\[
\E \verti{\widehat{C}^K_N - C^K}_\star^2 = \frac{1}{K^4} \sum_{(i,j) \neq (k,l)} \E \Big| \frac{1}{N} \sum_{n=1}^N \big( \underbrace{\widetilde{\mathbf X}_n^K[i,j] \widetilde{\mathbf X}_n^K[k,l] - \E \mathbf X_n^K[i,j] \mathbf X_n^K[k,l]}_{=: Z_{n,ijkl}}  \big) \Big|^2 .
\]
For a fixed $i,j,k,l$, $Z_{n,ijkl}$ are zero-mean (this is the reason why we need to remove the diagonal from the norm) i.i.d. random variables and thus
\[
\E \verti{\widehat{C}^K_N - C^K}_\star^2 =\frac{1}{N} \frac{1}{K^4} \sum_{(i,j) \neq (k,l)} \E \left| Z_{n,ijkl} \right|^2,
\]
which is from the parallelogram law equal to
\[
\begin{split}
&\frac{4}{N} \frac{1}{K^4} \sum_{(i,j) \neq (k,l)} \Bigg\{
\E \left| \mathbf X^K[i,j] \mathbf X^K[k,l] - \E \mathbf X^K[i,j] \mathbf X^K[k,l]  \right|^2 \\
&\qquad\quad + \E \left| \mathbf E^K[i,j] \mathbf X^K[k,l]  \right|^2
 + \E \left|  \mathbf X^K[i,j] \mathbf E^K[k,l]  \right|^2
 + \E \left|  \mathbf E^K[i,j] \mathbf E^K[k,l]  \right|^2 \Bigg\} \,.
\end{split}
\]
The first term in the parentheses is bounded by $S_1$, the second and third are bounded by $\sigma^2 S_2$, and the final term is bounded by $\sigma^4$, which yields the claim.

\item[Claim 2:] $\verti{\widehat{A}^K(\delta) - A^K(\delta)}_2^2 = \O_P(N^{-1})$

Here we use the linearization argument \eqref{eq:linearization} and proceed exactly like in Theorem \ref{thm:rates}.

\item[Claim 3:] $\verti{\widehat{C}^K(\delta) - C^K(\delta)}_\star^2 = \O_P(N^{-1})$

We have
\[
\begin{split}
\verti{\widehat{C}^K(\delta) - C^K(\delta)}_\star &\leq \verti{\widehat{A}^K(\delta) - A^K(\delta)}_\star + \verti{\topavg \left(\widehat{C}_N^K - \widehat{A}^K(\delta) - C^K + A^K(\delta) \right)}_\star \\
&\leq \verti{\widehat{A}^K(\delta) - A^K(\delta)}_\star + \verti{\widehat{C}_N^K - \widehat{A}^K(\delta) - C^K + A^K(\delta)}_\star \\
&\leq 2 \verti{\widehat{A}^K(\delta) - A^K(\delta)}_\star + \verti{\widehat{C}_N^K - C^K}_\star,
\end{split}
\]
where we utilized the triangle inequality in the first and last inequality, while the second inequality follows from $\topavg(\cdot)$ being a linear projection. Now, the second term is bounded by Claim 2, while the first term is bounded by Claim 1 and the fact that
\[
\verti{\widehat{A}^K(\delta) - A^K(\delta)}_\star \leq \verti{\widehat{A}^K(\delta) - A^K(\delta)}_2 .
\]

\item[Claim 4:] $\widehat{\Xi}^K(\delta) = \Xi^K(\delta) - \verti{C^K}_\star^2 + \O_P(N^{-1/2})$

We first work with a biased version of the empirical objective $\widehat{\Xi}^K$, i.e.
\[
\widetilde{\Xi}^K(\delta) = \verti{\widehat{C}^K(\delta)}_\star^2 - \frac{2}{N} \sum_{n=1}^N \langle X_n^K, \widehat{C}^K(\delta) X_n^K \rangle_\star,
\]
and show Claim 4 with $\widehat{\Xi}^K$ replaced by $\widetilde{\Xi}^K$. For this, we bound similarly to the proof of Proposition \ref{prop:obj_consistency}:
\[
\begin{split}
\left| \widetilde{\Xi}^K(\delta) + \verti{C^K}_\star^2 - \Xi^K(\delta) \right| &\leq 2 \left| \langle \widehat{C}^K(\delta), \widehat{C}_N^K - C \rangle_\star \right| \\
&\kern5ex + 
\left| \verti{\widehat{C}^K(\delta) - C^K }_\star - \verti{C^K(\delta) - C^K }_\star \right|
\end{split}
\]
The Cauchy-Schwartz inequality still holds for the semi-inner-product \cite{conway2019}, which allows us to bound the first term using Claim 1. For the second term, note that the mean value theorem can still be used, since the Fr\'{e}chet derivative is a linear operation and the semi-norm is \emph{consistent} with the semi-inner-product. Hence using the mean value theorem, and the Cauchy-Schwartz inequality, we have
\[
\begin{split}
\left| \verti{\widehat{C}^K(\delta) - C^K }_\star - \verti{\widehat{C}^K(\delta) - C^K }_\star \right| &= 2 \langle \Gamma - C^K, \widehat{C}^K(\delta) - C^K(\delta) \rangle_\star \\&\leq 2 \verti{ \Gamma - C^K}_\star \verti{ \widehat{C}^K(\delta) - C^K(\delta) }_\star.
\end{split}
\]
Hence the bound follows from Claim 3.

It remains to show that the introduced bias is asymptotically negligible, i.e.~to bound $|\widehat{\Xi}^K(\delta) - \widetilde{\Xi}^K(\delta)|$. For this, we use triangle and Cauchy-Schwartz inequality:
\[
\begin{split}
\left| \frac{1}{N} \sum_{n=1}^N \langle \widehat{C}(\delta) - \widehat{C}_{-n}(\delta), \widetilde{X}_n \otimes \widetilde{X}_n \rangle_\star \right| &\leq \frac{1}{N} \sum_{n=1}^N \left| \langle \widehat{C}(\delta) - \widehat{C}_{-n}(\delta), \widetilde{X}_n \otimes \widetilde{X}_n \rangle_\star \right| \\
&\leq \frac{1}{N} \sum_{n=1}^N \verti{\widehat{C}(\delta) - \widehat{C}_{-n}(\delta)}_\star \| \widetilde{X}_n \otimes \widetilde{X}_n \|_\star.
\end{split}
\]
Now, since $\verti{\cdot}_\star \leq \verti{\cdot}_2$, the remainder of the proof is exactly the same as the end of the proof of Proposition \ref{prop:obj_consistency}.
\end{description}

Now we can prove the Theorem itself. Using the parallelogram law, we have
\[
\verti{\widehat{A}^K(\widehat{\delta}) - A}_2^2 \leq 4 \left[ \verti{\widehat{A}^K(\widehat{\delta}) - A^K(\widehat{\delta})}_2^2 + \verti{A^K(\widehat{\delta}) - A^K}_2^2 + \verti{A^K - A}_2^2 \right].
\]
The first term in the brackets is bounded by Claim 2, while the last term in the brackets correspond to the bias and can be treated the same as in the proof of Theorem \ref{thm:rates}. It remains to show that the middle term in the brackets is equal to zero for all sufficiently large $N$. But this the same way as the first paragraph of the proof of Theorem \ref{thm:rates_adaptive}.
\end{proof}

The perturbation bounds \cite{bosq2012} yield again the adaptive rates for the eigenvalues and eigenfunctions, just as in the proof of Theorem \ref{thm:rates}.

\subsection{Simulation Study: Details and Additional Results}\label{app:H}

As described in the main paper, we generate data for the simulation study as a superposition of two independent processes, one with a separable covariance $\mathbf A_1 \ct \mathbf A_2$ and other with a banded covariance $\mathbf B$. For the separable part, we set in the main paper both $\mathbf A_1$ and $\mathbf A_2$ as rank-7 covariances with linearly decaying eigenvalues and shifted Legendre polynomials, resulting in a covariance depicted in Figure \ref{fig:setup_separable} (left). Here in the appendices, we show additional results, where $\mathbf A_1$ and $\mathbf A_2$ are set as the covariance of Wiener process depicted in Figure \ref{fig:setup_separable} (right). As will be explained later, the Wiener case is simpler than the Legendre case, because the Wiener covariance decays slower away from the diagonal.

\begin{figure}[!t]
   \centering
   \begin{tabular}{ccc}
   \includegraphics[width=0.4\textwidth]{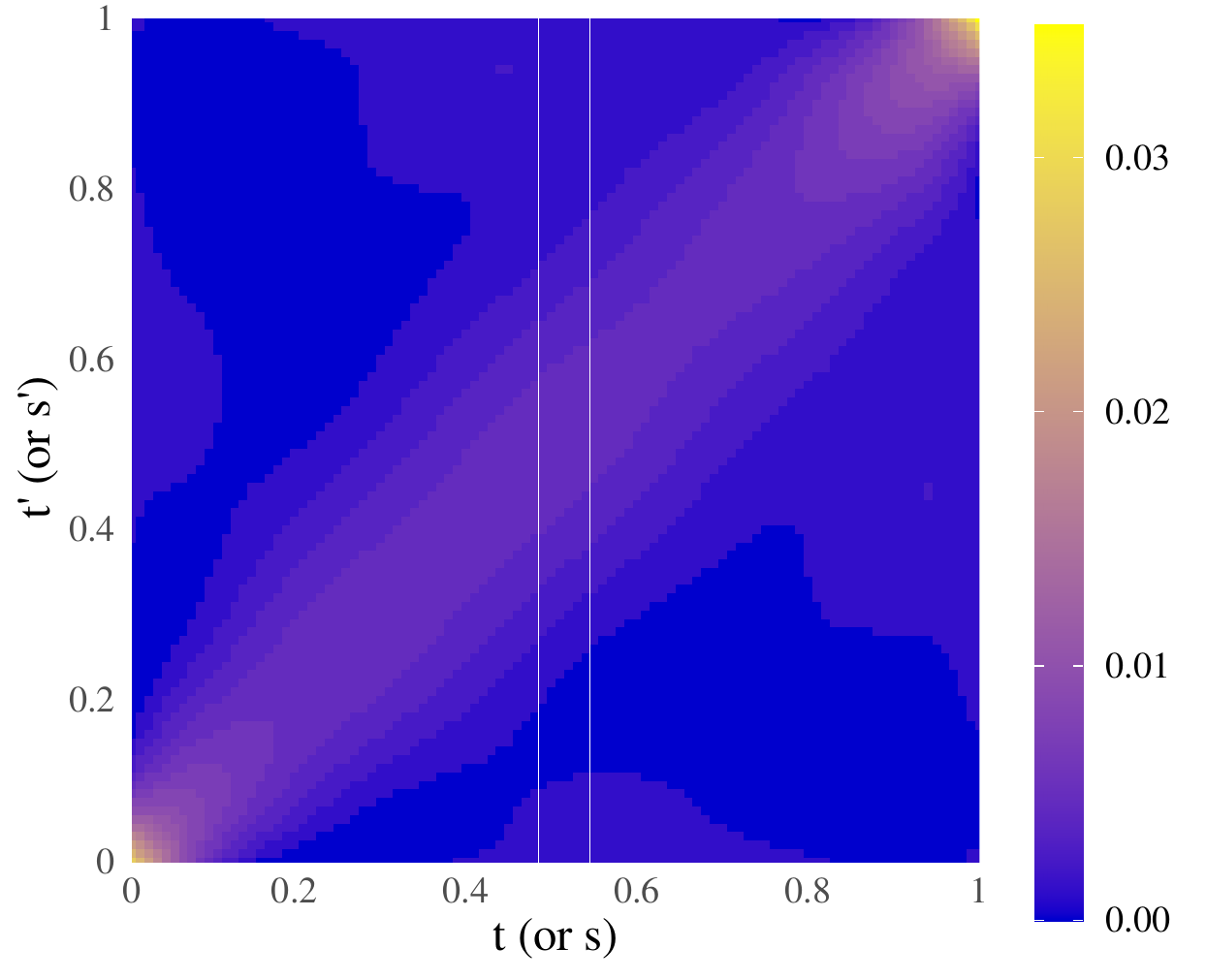} &
   \includegraphics[width=0.4\textwidth]{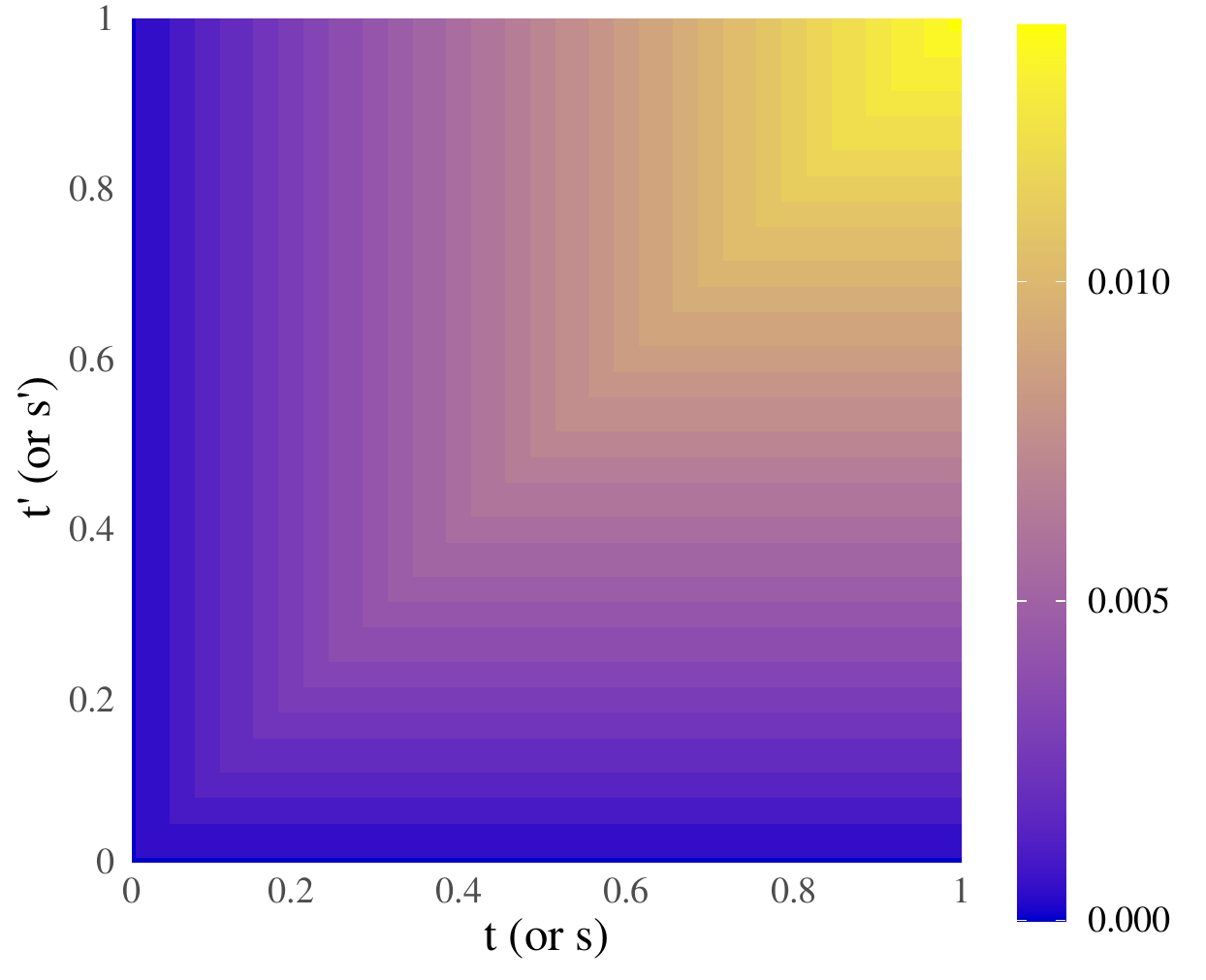}
   \end{tabular}  
   \caption{The two choices for the separable constituents of the separable-plus-banded model: the Legendre covariance (\emph{left}) and the Wiener covariance (\emph{right}).}
    \label{fig:setup_separable} 
\end{figure}

\begin{figure}[!b]
   \centering
   \begin{tabular}{ccc}
   \includegraphics[width=0.4\textwidth]{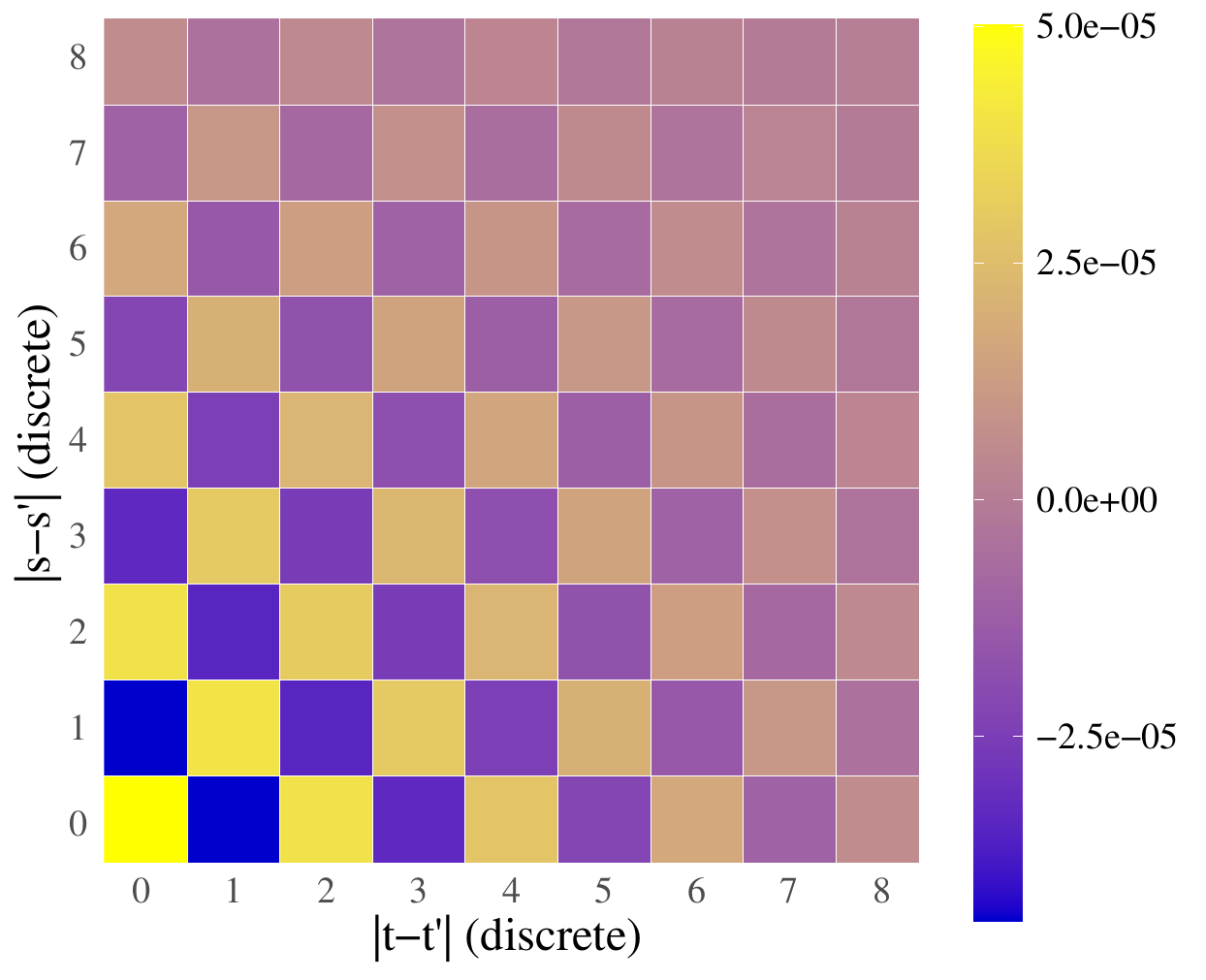} &
   \includegraphics[width=0.4\textwidth]{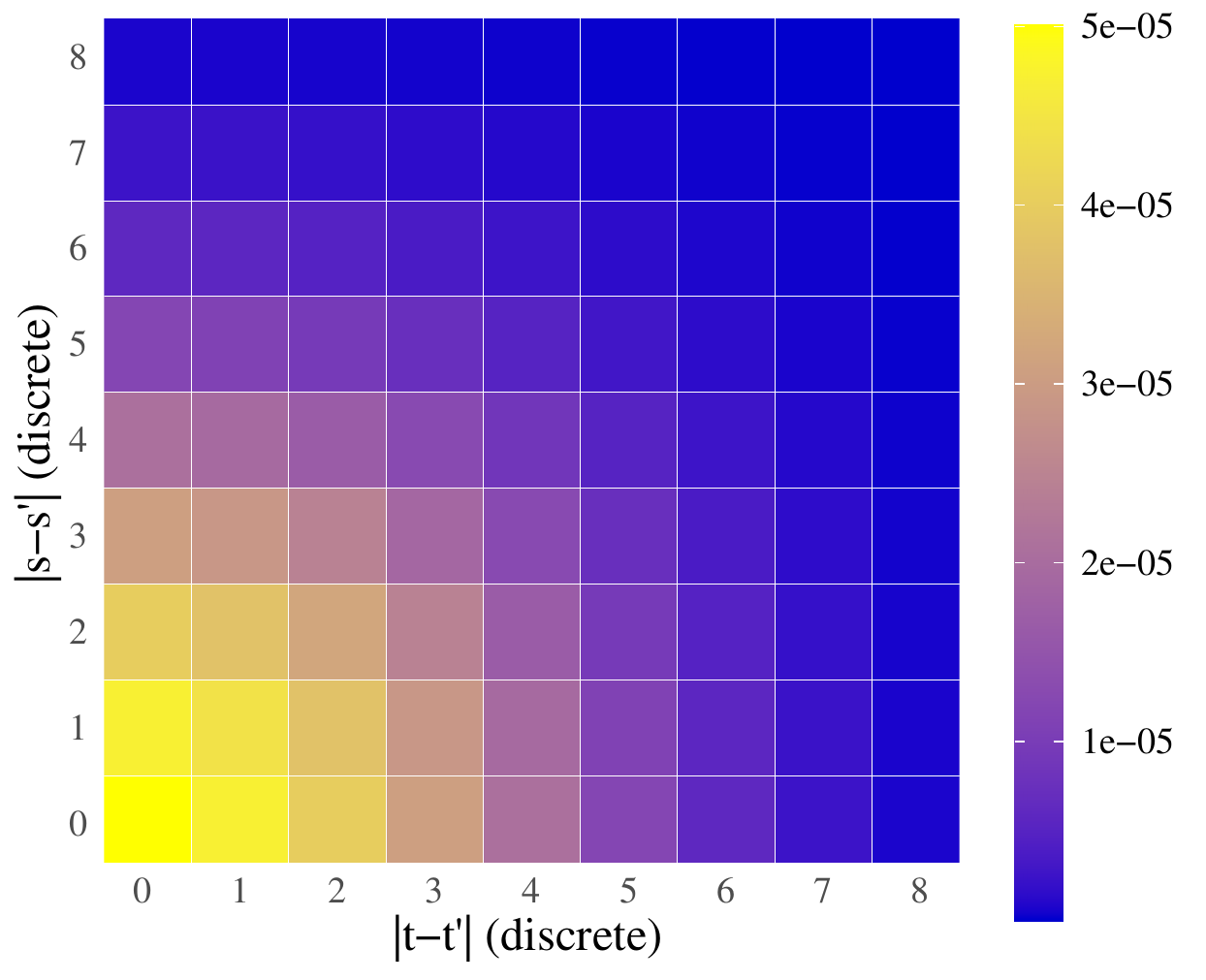}
   \end{tabular}  
   \caption{The two choices for symbol of the banded part of the separable-plus-banded model: the signed case (\emph{left}) and the Epanechnikov case (\emph{right}).}
    \label{fig:setup_banded} 
\end{figure}

The banded process $\mathbf W \in \R^{K \times K}$ is created by space-time averaging of white noise entries. For a fixed odd bandwidth $d = 2p + 1 \in \N$ and grid size $K$, let $\epsilon_{k,l}$, $k,l = 1 - d, \ldots, K + d$ be i.i.d.~$\mathcal{N}(0,1)$ entries. Then we set $\mathbf W[i,j] = \sum_{k=i-p}^{i+p} \sum_{l=j-p}^{j+p} q_{k,l} \epsilon_{k,l}$ for every $i,j=1,\ldots,K$. Here, $\mathbf Q = \big(q_{k,l} \big) \in \R^{d \times d}$ is the averaging filter. Regardless of how the filter is chosen, the resulting covariance of $\mathbf W$ will be stationary and banded by $d$. We choose the filter as either $q_{k,l} = (-1)^{|k-l|}$, leading to $\mathbf B$ with its symbol depicted in Figure \ref{fig:setup_banded} (left), or $q_{k,l} = \frac{9}{16} \left(1 - \frac{|k|}{p+1} \right) \left(1- \frac{|l|}{p+1}\right)$, i.e. the outer product of two Epanechnikov kernels, leading to $\mathbf B$ with its symbol depicted in Figure \ref{fig:setup_banded} (right). Again, the Epanechnikov case will turn out to be easier compared to the other choice of the filter (called the \emph{signed} case).

We have shown results for the Legendre-signed case (i.e. Legendre covariance as the separable part of the process and the signed covariance for the banded part of the process) in Figure \ref{fig:buf} of the main paper. We show results for the remaining scenarios (Legendre-Epanechnikov at the top, Wiener-signed in the middle, and Wiener-Epanechnikov at the bottom) in Figure \ref{fig:buf2}. All the scenarios exhibit qualitatively similar behavior, which is described in the main paper, and the quantitative differences can be attributed to different shapes of the underlying covariances.

\begin{figure}[!t]
   \advance\leftskip-0.3cm
   \begin{tabular}{ccc}
   \includegraphics[width=0.32\textwidth]{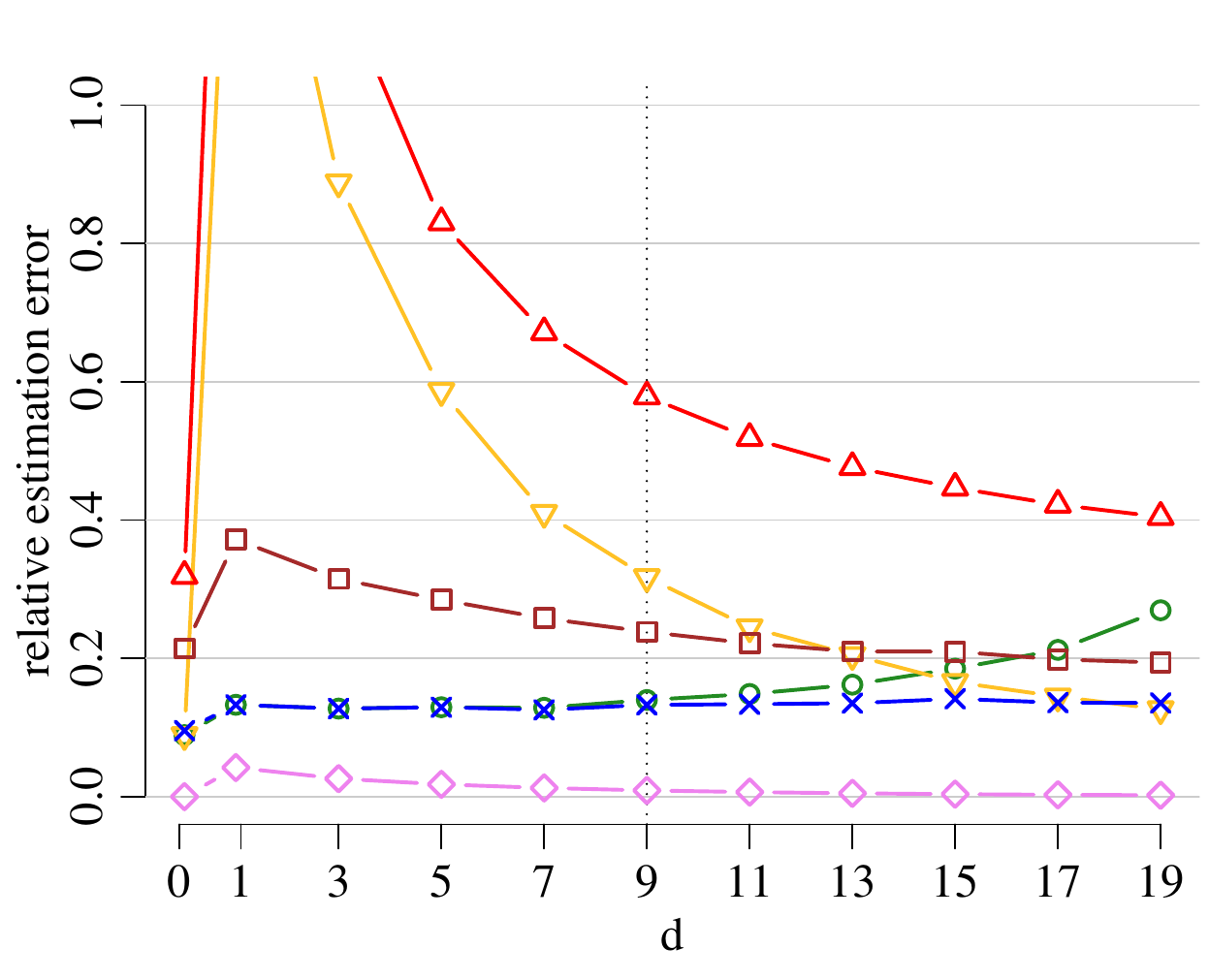} &
   \includegraphics[width=0.32\textwidth]{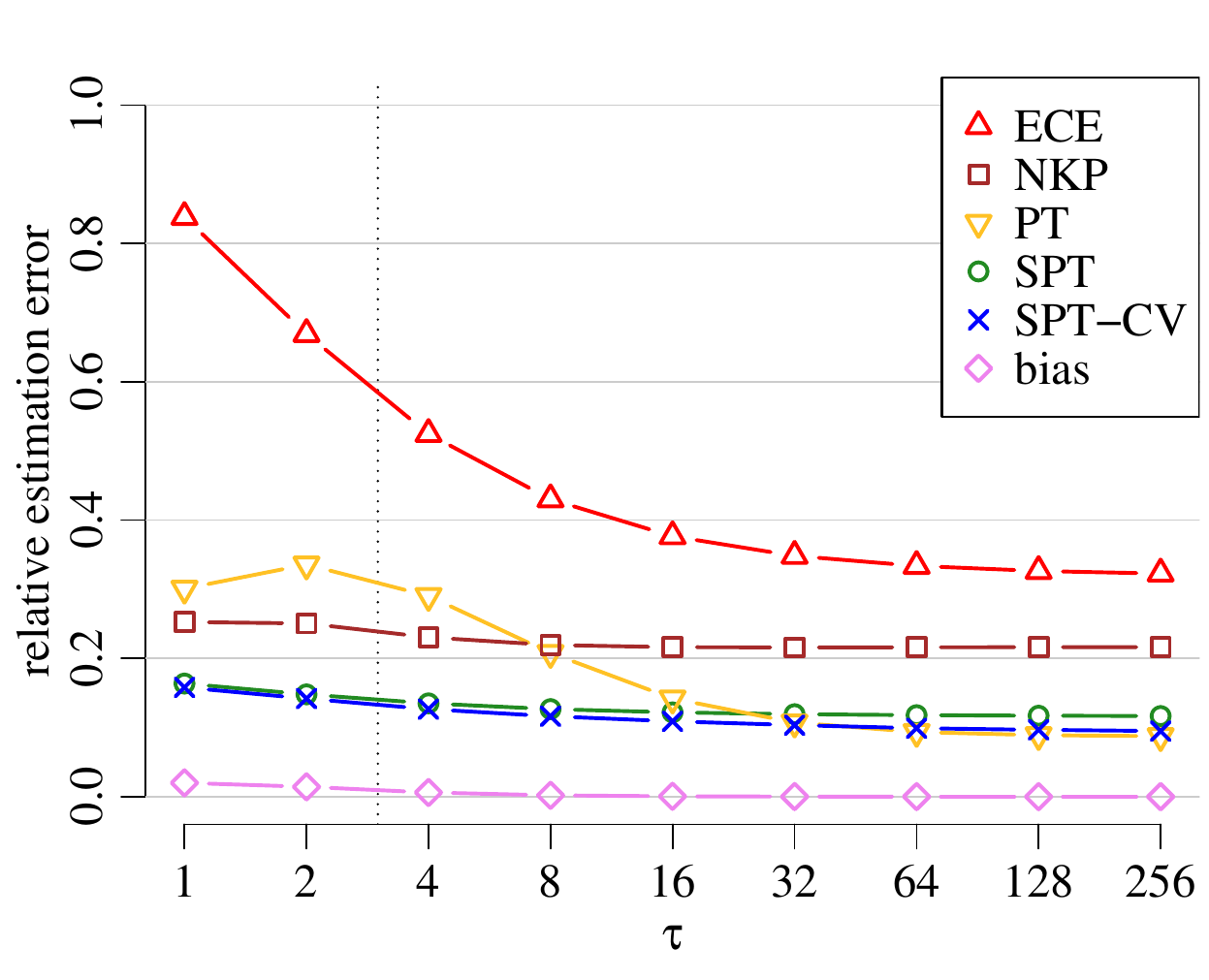} &
   \includegraphics[width=0.32\textwidth]{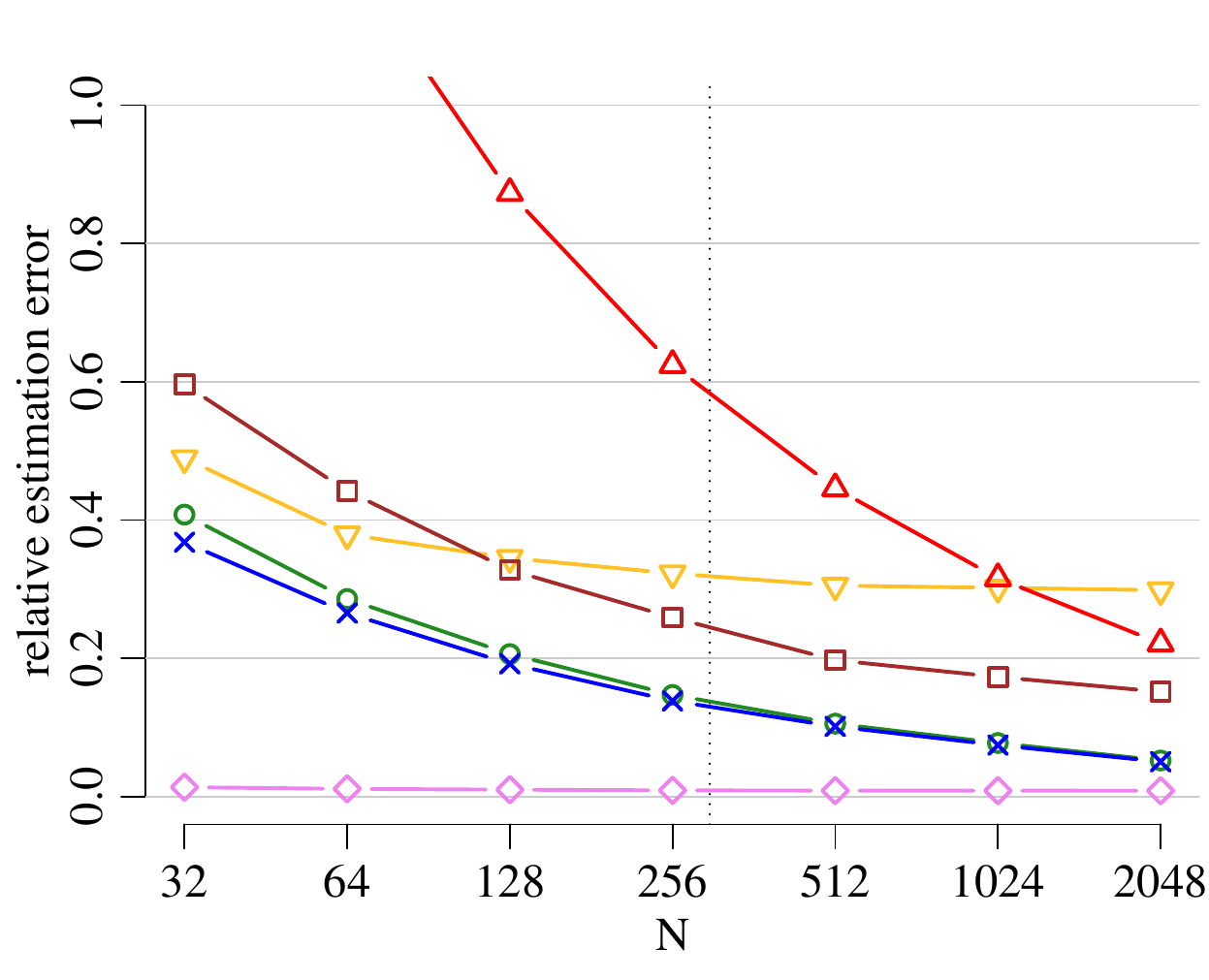} \\
   \includegraphics[width=0.32\textwidth]{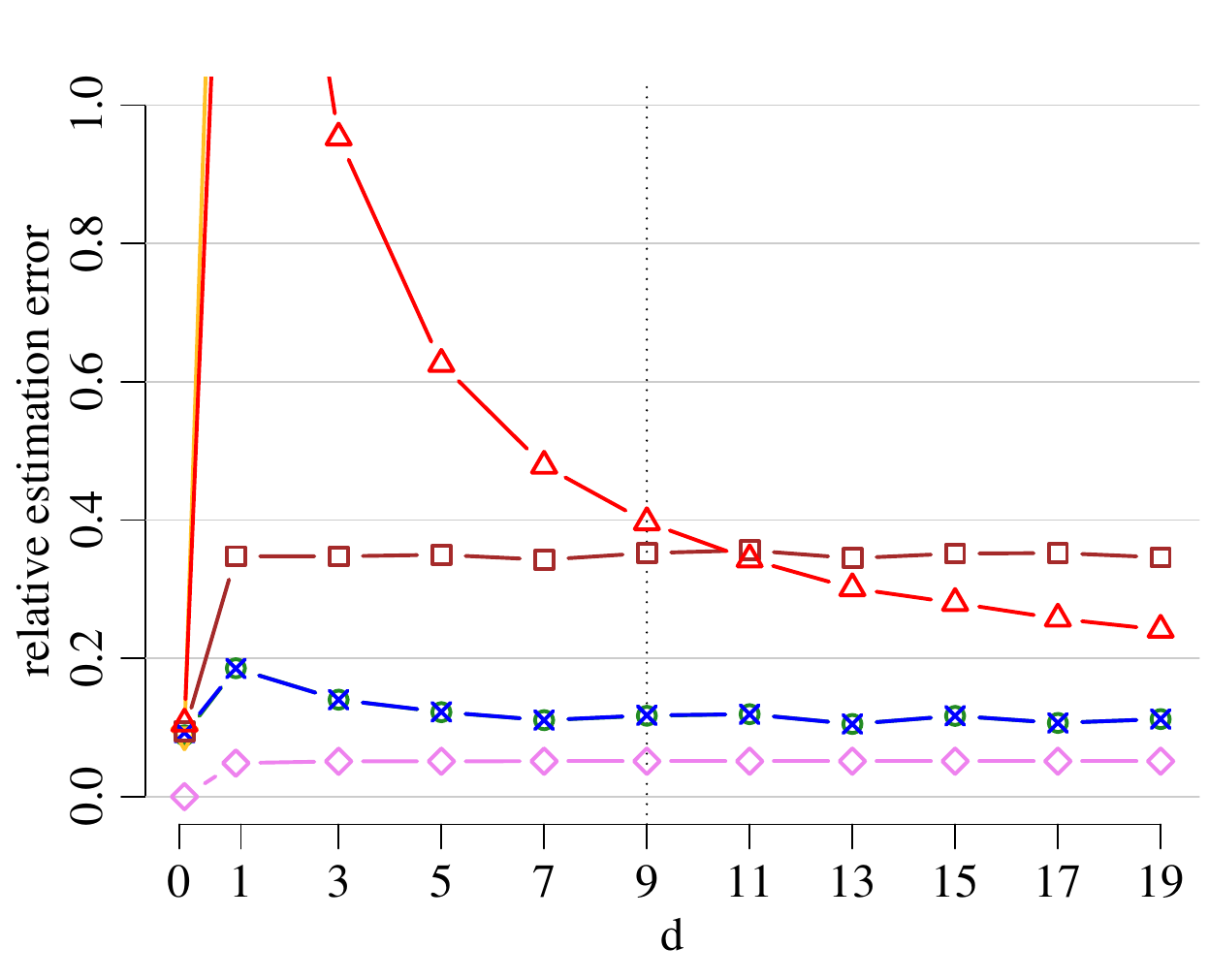} &
   \includegraphics[width=0.32\textwidth]{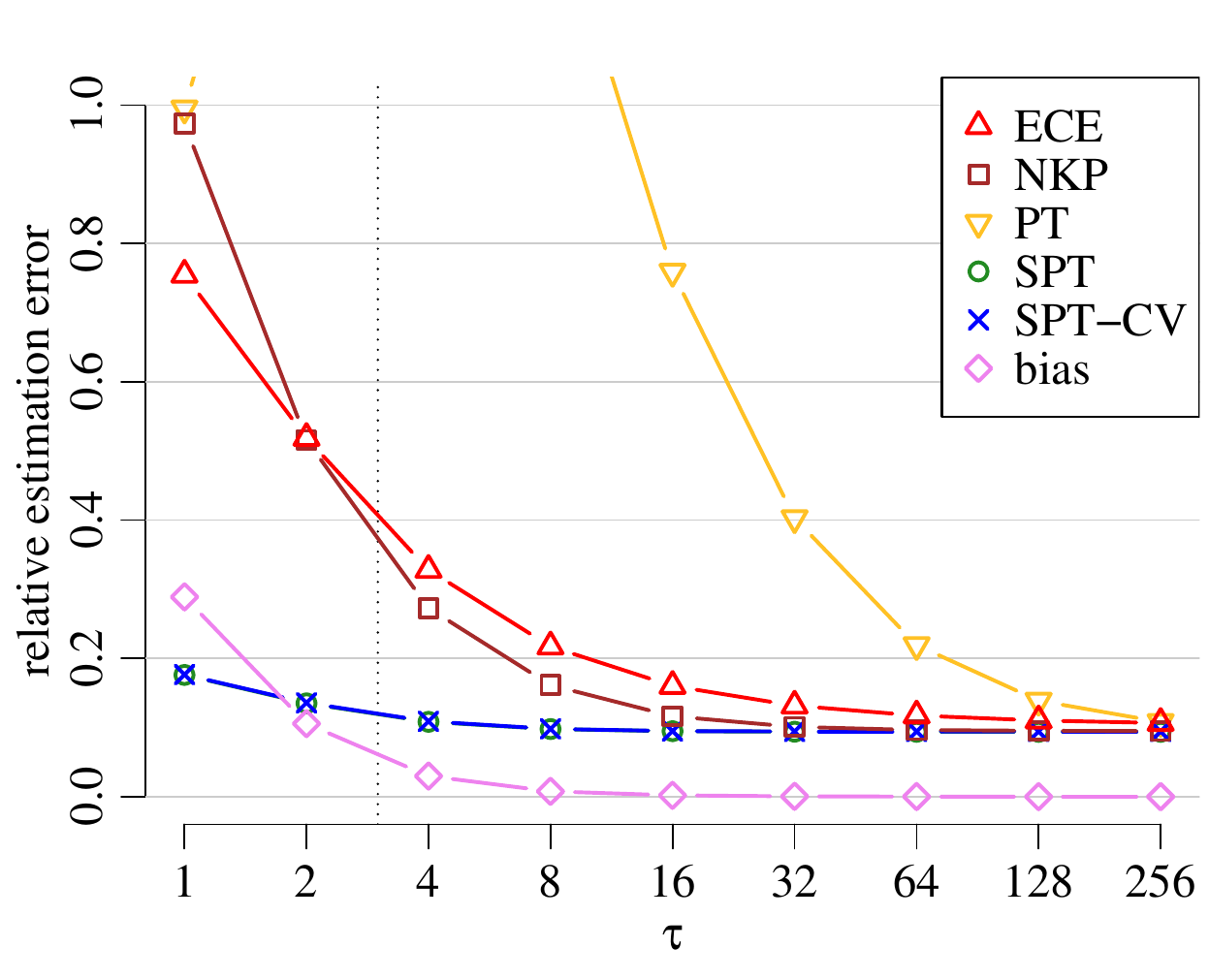} &
   \includegraphics[width=0.32\textwidth]{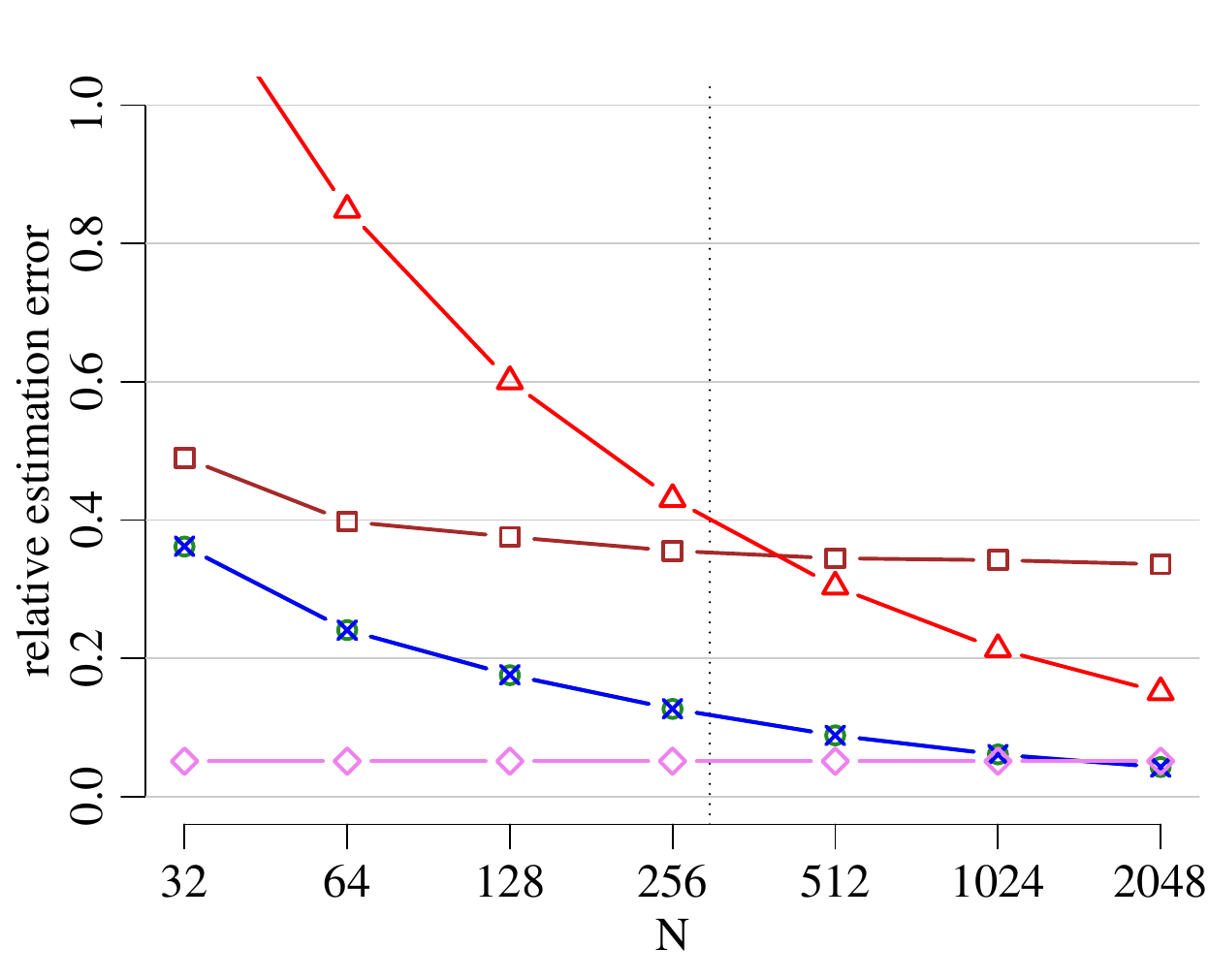} \\
   \includegraphics[width=0.32\textwidth]{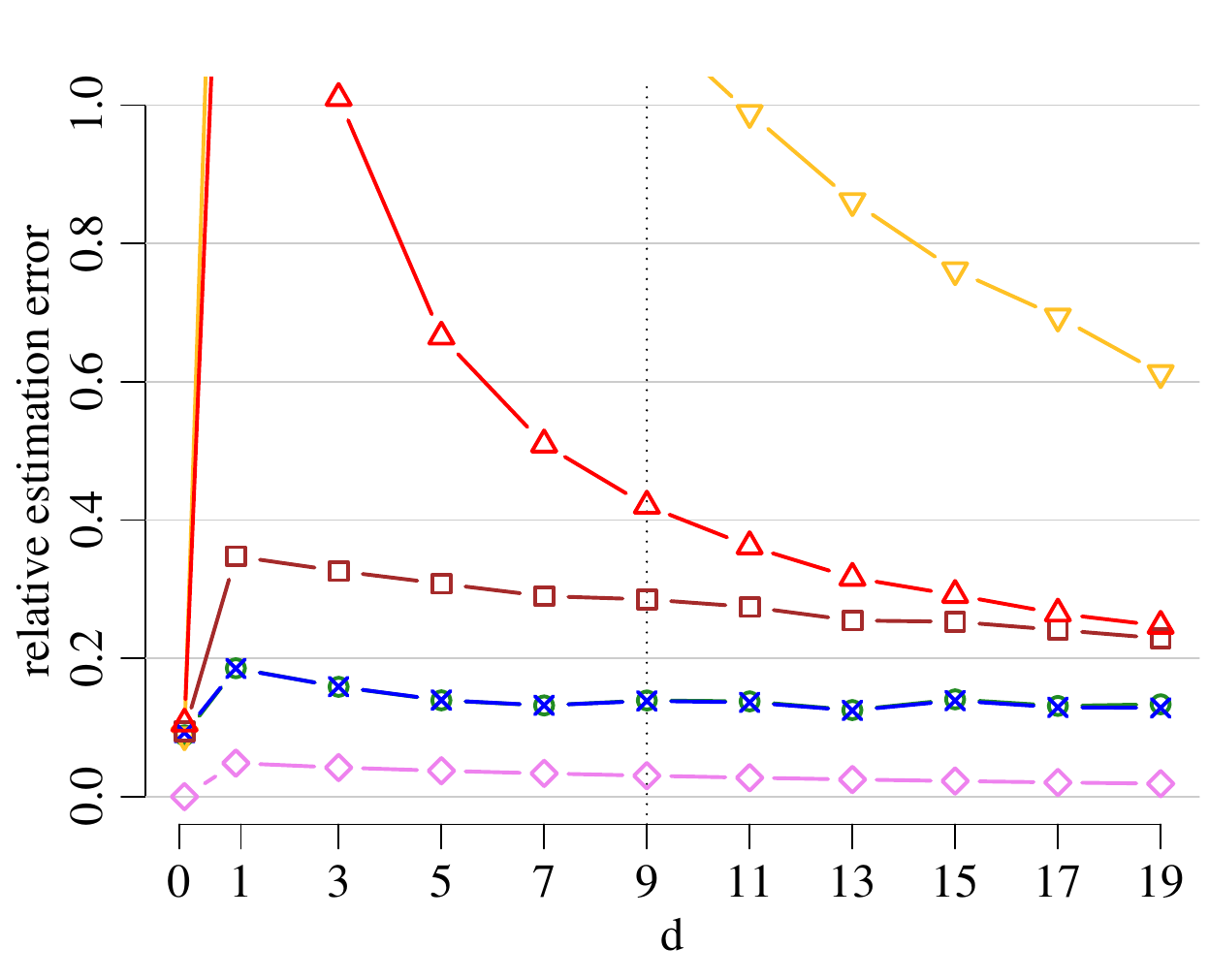} &
   \includegraphics[width=0.32\textwidth]{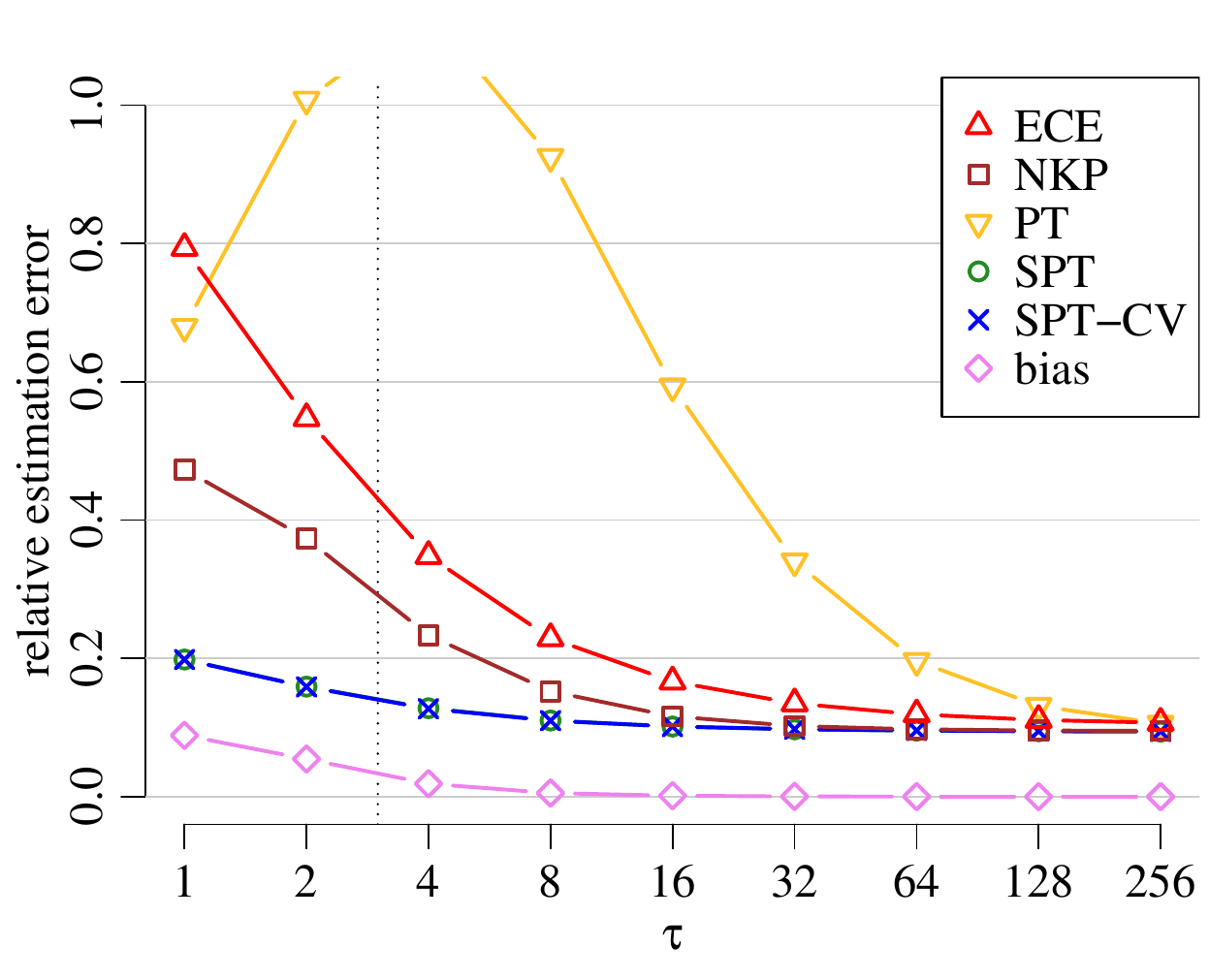} &
   \includegraphics[width=0.32\textwidth]{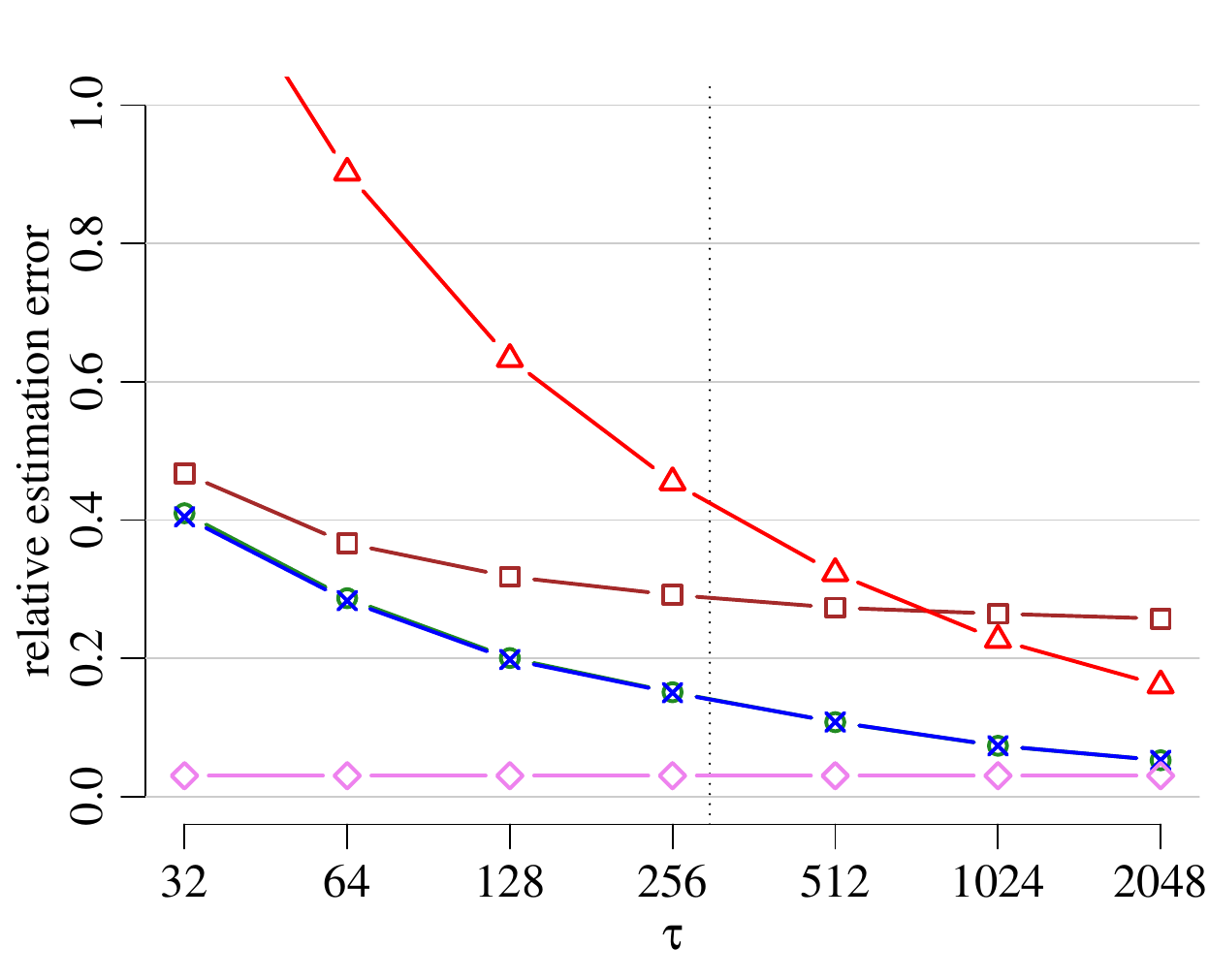} 
   \end{tabular}  
   \caption{Analogous to Figure \ref{fig:buf} of the main paper, i.e. relative estimation errors with varying bandwidth $d$ (\emph{left column}), signal-to-noise ratio $\tau$ (\emph{middle column}), or sample size $N$ (\emph{right column}). The \emph{top row} corresponds to the Legendre-Epanechnikov scenario, the \emph{middle row} depicts the Brownian-signed scenario, and the  \emph{bottom row} captures the Brownian-Epanechnikov case. The black dotted vertical lines show where the active parameter is fixed for the remaining two plots (i.e. $d=9$, $\tau=3$, and $N=300$), i.e. all the plots are roughly the same at the black dotted vertical cuts.}
    \label{fig:buf2} 
\end{figure}

Firstly, note that the Wiener case seems to be slightly easier than the Legendre case, despite the fact the former covariance is non-differentiable while the latter is analytic. Higher-order smoothness of the covariances does not play a role -- this should be expected since our theoretical development does not make any such assumptions. On the other hand, difficulty of the problem is governed by how fast $A_1$ and $A_2$ decay away from the diagonal. This is reflected in the theory by the assumption of non-zero shifted traces $\Tr{}^d(A_1)$ and $\Tr{}^d(A_2)$. While the shifted traces being non-zero suffices for asymptotic purposes, the finite-sample performance of our methodology is poor if the shifted traces past the true bandwidth are very small. This is quite natural: if $\mathbf B$ covers almost all the mass of the separable component, the latter cannot be estimated reliably.
This happens more easily in the Legendre case, because the Legendre covariances are more diagonally concentrated than the Wiener covariance.

Secondly, the signed choice for the banded part seems to make the problem harder than the Epanechnikov choice. While this cannot be visualized, one can imagine that the shape of $\mathbf B$ mimics the shape of the separable part better with the Epanechnikov choice. One can also observe this by looking at the ``bias'' error curves, notice that the bias is generally smaller in the Epanechnikov case compared to the signed case. Hence in the Epanechnikov case, the covariance can be approximated with the assumption of separability much better, making the problem easier.

Finally, we noted in the main paper that choosing a smaller bandwidth can sometimes be beneficial. This happens mainly when relatively large true bandwidth leads to only mild amount of non-separability, as happens in the right part of the top-left plot in Figure \ref{fig:buf2}. Focusing specifically at $d=15$ in the top-left plot in Figure \ref{fig:buf2}, we see an instance where performance of the proposed methodology with adaptively chosen bandwidth outperforms both the separable model and the separable-plus-banded model with the true (oracle) choice of the bandwidth. This is because, as suggested by Figure \ref{fig:setup_banded} (right), the effective bandwidth is in this case smaller than the true bandwidth (because the symbol of $\mathbf B$ decays fast away from the diagonal), while not completely ignoring the banded part is still beneficial.

\subsection{Goodness-of-fit Testing}\label{app:I}

In this section, we develop a testing procedure to check validity of the separable-plus-banded model, generalizing the bootstrap separability test of \cite{aston2017}.

We begin by reviewing the seminal test of \cite{aston2017}. For a covariance $C \in \mathcal{S}_2(\mathcal{H})$ with $\mathcal{H} = \mathcal{H}_1 \otimes \mathcal{H}_2$, a separable proxy is given by
\[
C_1 \ct C_2 = \frac{\Tr{1}(C) \ct \Tr{2}(C)}{\Tr{}(C)},
\]
where $\Tr{1}$ and $\Tr{2}$ are partial traces, i.e. shifted partial traces with the zero shifts. Plugging in the empirical covariance estimator, we obtain
\[
\widehat{C}_1 \ct \widehat{C}_2 = \frac{\Tr{1}(\widehat{C}_N) \ct \Tr{2}(\widehat{C}_N)}{\Tr{}(\widehat{C}_N)}
\]
a separable estimator of the covariance. Testing for separability is now based on the following operator
\begin{equation}\label{eq:distance_to_sep}
D_N = \widehat{C}_N  -\widehat{C}_1 \ct \widehat{C}_2.
\end{equation}
Under the hypothesis of separability, the norm of $D_N$ (a distance to separability) converges to zero as $N \to \infty$.

While a test can be based directly on the asymptotic distribution of $D_N$ (given as a special case of Theorem 1 of the main paper), such a test would require full calculation of the empirical covariance, and even worse calculation of the asymptotic variance, which is an eight-dimensional structure. Hence \cite{aston2017} propose to test separability only on a subspace of $\mathcal{S}_2(\mathcal{H})$ and use bootstrap to avoid calculation of the asymptotic variance. Namely, let $\mathcal{U} = \mathrm{span}\{ u_1, \ldots, u_m \} \subset \mathcal{H}$, then $\mathcal{U} \otimes \mathcal{U}$ determines a subspace of $\mathcal{S}_2(\mathcal{H})$ via isometry. Let $T_{\mathcal{U} \otimes \mathcal{U}}$ denote the orthogonal projection to the subspace $\mathcal{U} \otimes \mathcal{U}$. Then we have
\begin{equation}\label{eq:test_stat}
\verti{T_{\mathcal{U} \otimes \mathcal{U}} D_N}_2^2 = \sum_{r=1}^m \langle u_r , D_N u_r \rangle^2.
\end{equation}
The most natural choice of $\mathcal{U}$ is given by the eigenfunctions of the separable estimator. There are two reasons for this. Firstly, since we constrain the subspace on which separability will be tested, it makes sense to focus on the subspace on which further analysis (e.g. PCA) will likely be performed. Secondly, using the separable eigenfunctions allows for a fast calculation of the test statistic. Let $\widehat{C}_1 = \sum \widehat{\lambda}_j \widehat{e}_j \otimes \widehat{e}_j$ and $\widehat{C}_2 = \sum \widehat{\gamma}_j \widehat{f}_j \otimes \widehat{f}_j$ be the eigendecompositions and let $\mathcal{U} = \mathrm{span}\{ \widehat{e}_i \otimes \widehat{f}_j; i=1,\ldots,I, j=1,\ldots,J\}$. Then we have
\[
\verti{T_{\mathcal{U} \otimes \mathcal{U}} D_N}_2^2 = \sum_{i=1}^I \sum_{j=1}^J \langle \widehat{e}_i \otimes \widehat{f}_j , D_N \; \widehat{e}_i \otimes \widehat{f}_j \rangle^2 = \sum_{i=1}^I \sum_{j=1}^J  \left(\frac{1}{N} \sum_{n=1}^N \langle X_n, \widehat{e}_i \otimes \widehat{f}_j \rangle^2 - \widehat{\lambda}_j \widehat{\gamma}_j \right)^2 .
\]
As for the bootstrap, Aston et al. \cite{aston2017} proposed to approximate the distribution of $\verti{T_{\mathcal{U} \otimes \mathcal{U}} D_N}_2^2$ by
\begin{equation}\label{eq:boot_stat_sep}
\verti{T_{\mathcal{U} \otimes \mathcal{U}} (D_N - D_N^\star) }_2^2,
\end{equation}
where $D_N^\star$ is the distance-to-separability operator calculated based on a bootstrap sample $\{ X_1^\star, \ldots, X_N^\star \}$ drawn from the set $\{ X_1, \ldots, X_N \}$ with replacement. The reason for using bootstrap statistic \eqref{eq:boot_stat_sep} instead of simply $\verti{T_{\mathcal{U} \otimes \mathcal{U}} (D_N^\star) }_2^2$ is that the latter would approximate the distribution of $\verti{T_{\mathcal{U} \otimes \mathcal{U}} D_N}_2^2$ under the true $C$, i.e. not necessarily under the null, which is that the $C$ is separable.

It is natural to modify the separability test described above by changing the definition of the distance-to-separability \eqref{eq:distance_to_sep}. From now on, let
\begin{equation}\label{eq:distance}
D_N = \widehat{C}_N  - \widehat{A}_1 \ct \widehat{A}_2 - \widehat{B},
\end{equation}
where $\widehat{A}_1$, $\widehat{A}_2$ and $\widehat{B}$ are the estimators proposed in the main paper \eqref{eq:estimator_A} and \eqref{eq:estimator_B}. The test statistic is still taken as \eqref{eq:test_stat}, where $\mathcal{U}$ is still given by the eigenfunctions of the separable part $\widehat{A}_1 \ct \widehat{A}_2$. We still approximate the distribution of \eqref{eq:distance} by the bootstrap statistic
\begin{equation}\label{eq:boot_stat}
\verti{T_{\mathcal{U} \otimes \mathcal{U}} (D_N - D_N^\star) }_2^2,
\end{equation}
where $D_N^\star$ is calculated like $D_N$ in \eqref{eq:distance} from a bootstrapped sample. Drawing e.g. $10^3$ bootstrap samples, the bootstrapped $p$-value is given by
\[
\frac{1}{10^3+1} \sum_{m=1}^{10^3} \mathds{1}_{\left[ \verti{T_{\mathcal{U} \otimes \mathcal{U}} (D_N - D_{N,m}^\star) }_2^2 > \verti{T_{\mathcal{U} \otimes \mathcal{U}} D_N}_2^2 \right]},
\]
where $D_{N,m}$ is calculated like in \eqref{eq:distance} from the $m$-th bootstrapped sample. The statistic can be evaluated efficiently using the linear structure:
\[
\verti{T_{\mathcal{U} \otimes \mathcal{U}} D_N}_2^2 = \sum_{i=1}^I \sum_{j=1}^J  \left(\frac{1}{N} \sum_{n=1}^N \langle X_n, \widehat{e}_i \otimes \widehat{f}_j \rangle^2 - \widehat{\lambda}_j \widehat{\gamma}_j - \langle \widehat{e}_i \otimes \widehat{f}_j, \widehat{B} \; \widehat{e}_i \otimes \widehat{f}_j \rangle \right)^2 ,
\]
where $\widehat{B} \widehat{e}_i \otimes \widehat{f}_j$ is calculated using the fast Fourier transform. To implement this test, we modified the codes available in the \textsf{covsep} package \cite{tavakoli2016}.

The procedure outlined above allows for goodness-of-fit testing of the separable-plus-banded model using the ideas of \cite{aston2017}. It can be vaguely though of as testing whether separable model holds outside of a band.

\bibliographystyle{imsart-nameyear}
\bibliography{biblio}

\end{document}